\newcommand{\F}[1]{\mathcal{F}}
\DeclareSymbolFontAlphabet{\mathbbl}{bbold}
\newtheorem{remark}{Remark}[section]
\newtheorem{observation}{Observation}[section]
\newtheorem{claim}{Claim}[section]
\newtheorem{algo}{Algorithm}[section]
\crefname{algo}{algorithm}{algorithms}
\Crefname{algo}{Algorithm}{Algorithms}
\crefname{problem}{problem}{problems}
\Crefname{problem}{Problem}{Problems}
\crefname{observation}{observation}{observations}
\Crefname{observation}{Observation}{Observations}
\crefname{invariant}{invariant}{invariants}
\Crefname{invariant}{Invariant}{Invariants}
\crefname{claim}{claim}{claims}
\Crefname{claim}{Claim}{Claims}
\newcommand{\ceil}[1]{\left\lceil #1 \right\rceil}
\newcommand{\floor}[1]{\left\lfloor #1 \right\rfloor}
\newcommand{\union}{\cup}
\newcommand{\inter}{\cap}
\newcommand{\card}[1]{\left|#1\right|}
\newcommand{\C}{\mathcal{C}}
\newcommand{\R}{\mathbb{R}}
\newcommand{\N}{\mathbb{N}}
\newcommand{\1}{\mathbbl{1}}
\newcommand{\vol}{\textbf{\textup{vol}}}
\newcommand{\invol}{\textbf{\textup{vol}}^{\textup{in}}}
\newcommand{\poly}{\mathrm{poly}}
\newcommand{\D}{\mathcal D}
\newcommand{\boundary}{\partial}
\renewcommand{\P}{\mathcal{P}}
\DeclareMathOperator*{\argmin}{arg\,min}
\newcommand{\PAR}[1]{\left( #1 \right)}
\newcommand{\lmin}{\lambda_{\min}}
\newcommand{\lmax}{\lambda_{\max}}
\newcommand{\eps}{\epsilon}
\newcommand{\Proba}{\mathbb P}
\newcommand{\pref}[1]{Property~(\ref{#1})}
\newcommand{\sref}[1]{Step~(\ref{#1})}
\newcommand{\alg}{\mathcal A}
\newcommand\elabel[1]{\label{#1}\addtocounter{equation}{1}\tag{\theequation}}
\newcommand{\proper}{proper }
\newcommand{\cross}{\times}
\newcommand{\justaflag}{}
\newcommand\flag[1]{%
    \leavevmode\marginpar{%
        \raisebox{\dimexpr-\totalheight+\ht\strutbox\relax}%
        [\dimexpr\ht\strutbox+17mm][\dp\strutbox]{\expandafter\includegraphics[width=0.01cm]{#1}}%
}}
\begin{document}

\title{Deterministic and Exact Fully-dynamic Minimum Cut\\of Superpolylogarithmic Size in Subpolynomial Time}

\author{Antoine El-Hayek\thanks{Institute of Science and Technology Austria, Klosterneuburg, Austria. email: \tt{antoine.el-hayek@ist.ac.at}}
\and Monika Henzinger\thanks{Institute of Science and Technology Austria, Klosterneuburg, Austria. email: \tt{monika.henzinger@ist.ac.at}}
\and Jason Li\thanks{Carnegie Mellon University. email: \tt{jmli@cs.cmu.edu}}
}

\date{}

\maketitle

\begin{abstract}

We present an exact fully-dynamic minimum cut algorithm that runs in $n^{o(1)}$ deterministic update time when the minimum cut size is at most $2^{\Theta(\log^{3/4-c}n)}$ for any $c>0$, improving on the previous algorithm of Jin, Sun, and Thorup~(SODA 2024) whose minimum cut size limit is $(\log n)^{o(1)}$. Combined with graph sparsification, we obtain the first $(1+\epsilon)$-approximate fully-dynamic minimum cut algorithm on \emph{weighted} graphs, for any $\epsilon\ge2^{-\Theta(\log^{3/4-c}n)}$, in $n^{o(1)}$ randomized update time.

Our main technical contribution is a deterministic local minimum cut algorithm, which replaces the randomized LocalKCut procedure from El-Hayek, Henzinger, and Li~(SODA 2025).

\end{abstract}

\section{Introduction}

We consider the minimum cut problem in an undirected, unweighted graph: find a set of edges of minimum size whose removal disconnects the graph. This is a basic problem in combinatorial optimization with a rich body of work, and its study has led to the discovery of many fundamental concepts in graph algorithms, such as randomized contractions~\cite{karger1996new} and graph sparsification~\cite{DBLP:conf/soda/Karger94}. The randomized running time was famously settled by Karger~\cite{karger2000minimum}, who also posed as an open problem whether a fast deterministic algorithm exists. That problem proved notoriously difficult and resisted progress for 20~years, until a recent sequence of works~\cite{kawarabayashi2015deterministic,li2020deterministic,li2021deterministic,DBLP:conf/soda/HenzingerLRW24} finally obtained a deterministic algorithm that is optimal up to polylogarithmic factors.

However, on dynamic graphs undergoing edge insertions and deletions, this problem is much less understood. For a long time, the state of the art algorithm for $(1+o(1))$-approximate minimum cut was due to Thorup~\cite{thorup2007fully}, which is Monte-Carlo randomized and runs in $\tilde{O}(\sqrt n)$ worst-case update time. Recently, El-Hayek, Henzinger, and Li~\cite{DBLP:conf/soda/HenzingerLRW24} improved the running time to $n^{o(1)}$ amortized update time~\cite{DBLP:conf/soda/El-HayekH025}, also by a Monte-Carlo randomized algorithm. In comparison, the fastest deterministic approximation algorithm is the $\sqrt{2+o(1)}$-approximate algorithm of Thorup and Karger~\cite{thorup2000dynamic}, which runs in $O(\lambda^2\textup{polylog}(n))$ time where $\lambda$ is the minimum cut of the graph, but is only capable of outputting the (approximate) cut-size of the minimum cut. For larger approximations,  Goranci, Räcke, Saranurak and Tan~\cite{expanderhierarchy}, then van den Brand, Chen, Kyng, Liu, Meierhans, Gutenberg, and Sachdeva~\cite{DBLP:conf/focs/Brand0KLMGS24} gave an $n^{o(1)}$-approximate algorithm that runs in $n^{o(1)}$ worst-case deterministic update time, in unweighted and weighted graphs respectively. For exact minimum cut, Jin, Sun, and Thorup~\cite{DBLP:conf/soda/JinST24} obtained $n^{o(1)}$ worst-case deterministic update time when the minimum cut-size is $(\log n)^{o(1)}$, while Goranci, Henzinger, Nanongkai, Saranurak, Thorup, and Wulff-Nilsen~\cite{DBLP:conf/soda/GoranciHNSTW23} obtained the first sublinear-time algorithm for all cut-sizes, which takes $\tilde{O}(n)$ worst-case randomized time or $\tilde{O}(m^{30/31})$ amortized deterministic time, which was further improved by de Vos and Christiansen to $O(m^{11/12})$ update time with a deterministic algorithm. Overall, the randomized running times are notably better than the deterministic ones, which suggests that deterministic minimum cut is notoriously difficult even in the dynamic setting.

In this paper, we significantly advance the state of the art for exact algorithms, obtaining an algorithm in $n^{o(1)}$ amortized deterministic update time for minimum cut-sizes up to $2^{\Theta(\log^{3/4-c}n)}$ for any $c>0$. This is a significant leap over the $(\log n)^{o(1)}$ cut-size limit of Jin, Sun, and Thorup~\cite{DBLP:conf/soda/JinST24}, at a cost of amortized instead of worst-case update time. Most notably, due to the increased minimum cut-size that our algorithm can handle, we can use the standard approach of sparsifying the graph to reduce the minimum cut-size to $O(\frac{\log n}{\epsilon^2})$ and then applying our algorithm on the resulting graph. This leads to the first 
$(1+o(1))$-approximate dynamic minimum cut algorithm even on \emph{weighted} graphs in $n^{o(1)}$ update time.

To construct our deterministic algorithm, we replace the randomized LocalKCut procedure of~\cite{DBLP:conf/soda/El-HayekH025} with an entirely new algorithm based on tree packing and color coding, based on the techniques of Lokshtanov, Saurabh, and Surianarayanan~\cite{lokshtanov2022parameterized}. This new deterministic LocalKCut procedure is one of our main technical contributions. Furthermore, to increase the minimum cut-size bound to $2^{\Theta(\log^{3/4-c}n)}$ for any $c>0$, we adapt an idea from the static setting~\cite{DBLP:conf/soda/HenzingerLRW24} to the dynamic setting.

\textbf{Our main result and its applications.} We next describe our main result as well as its applications. Since our main result is about \emph{proper} cuts, we need to generalize the minimum cut problem to the minimum \emph{proper} cut problem. 

\begin{definition}[Minimum Proper Cut]
Let $G=(V,E)$ be a graph, and $F\subset E$ be a subset of edges. 
$F$ is said to be a \emph{proper} cut of $G$ if the removal of $F$ strictly increases the number of connected components of $G$, and is minimal for this property.    
Equivalently, a set $U\subset V$ of nodes is said to be a proper cut if it is connected and has at least one outgoing edge.
\end{definition}
Thus, a minimum proper cut is the smallest non-zero cut in a potentially disconnected graph. 
Our data structure will work recursively, where in each level of the recursion only a subset of the cuts is analyzed. To do so, certain edges have been removed from the graph, which might lead to a disconnected subgraph on this level. Instead of starting a new data structure for each connected component, which is what is usually done in this case, we maintain \emph{one} data structure per level, representing the subgraph on this level. This is crucial for achieving our improvements, but in turn we need for each level an algorithm that is able to maintain a minimum \emph{proper} cut. 

More specifically, our main technical contribution is a deterministic algorithm that solves the dynamic minimum proper cut problem for specific parameter ranges:
\begin{restatable}[Bounded Dynamic Minimum Proper Cut]{problem}{mainproblem}\label{pb:boundeddynamic}
Let $G$ be an unweighted fully dynamic graph with $n$ nodes under edge updates, and let $\lmin < \lmax \in \N$ be two values with $\lmax \le 1.2 \lmin = 2^{O(\log^{3/4-c} n)}$ for some $c>0$. Assume that the minimum proper cut of $G$ is at least $\lmin$ at all times.
    The \emph{Bounded Dynamic Minimum Proper Cut} problem is to maintain a data structure that, depending on the value of the minimum proper cut $\lambda$, gives the following output:
    \begin{enumerate}[noitemsep]
    \item If $\lmin \le  \lambda \le \lmax$, outputs the (exact) value of the minimum proper cut. Moreover, a partition achieving the minimum proper cut should be stored implicitly.
    \item If $ \lambda > \lmax$, outputs ``$ \lambda > \lmax$''.
\end{enumerate}
\end{restatable}

\begin{theorem}\label{thm:boundeddyn}
    There exists a deterministic algorithm that solves \Cref{pb:boundeddynamic} in $n^{o(1)}$ update time.
\end{theorem}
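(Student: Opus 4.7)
The plan is to follow the recursive framework of El-Hayek, Henzinger, and Li~\cite{DBLP:conf/soda/El-HayekH025}, replacing the randomized \textsc{LocalKCut} subroutine with a deterministic one and adapting the static-setting idea of~\cite{DBLP:conf/soda/HenzingerLRW24} to broaden the admissible cut-size range. That framework maintains, at each level of a recursion, one data structure per subgraph (rather than one per connected component) so that the minimum \emph{proper} cut on that subgraph is tracked; all randomness enters through \textsc{LocalKCut} and the cut-size is capped at $(\log n)^{o(1)}$, so once we produce a deterministic replacement that works up to $\lmax = 2^{O(\log^{3/4-c} n)}$ the theorem follows.

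\textbf{Step 1: a deterministic \textsc{LocalKCut}.} Starting from a node $v$ and a parameter $k\le\lmax$, compute a Gabow-style edge-disjoint spanning tree packing of size $\Theta(\lmin)$ inside the ball currently explored around $v$. By Nash--Williams/Edmonds, any proper $k$-cut crosses at most $2k/\lmin$ edges of all but an $O(k/\lmin)$ fraction of the packed trees. Next, apply the deterministic color-coding machinery of Lokshtanov, Saurabh, and Surianarayanan~\cite{lokshtanov2022parameterized} (explicit $(n,k)$-perfect hash / splitter families) to isolate the $O(k)$ crossing edges inside each surviving tree in subpolynomial time, reducing the problem on each colored tree to a trivially solvable tree-cut problem. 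Iterating over all (tree, coloring) pairs enumerates every candidate local cut; correctness is inherited from the packing guarantees, and the running time is local (polynomial in $k$ and the touched volume) times $n^{o(1)}$ because the splitter family has subpolynomial size in the relevant parameter regime.

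\textbf{Step 2: pushing $\lmax$ up to $2^{\Theta(\log^{3/4-c} n)}$.} The static algorithm of~\cite{DBLP:conf/soda/HenzingerLRW24} tolerates larger $\lambda$ by layering an expander hierarchy with $\Theta(\log^{1/4+c} n)$ levels, each level being cheap enough that their product stays $n^{o(1)}$. We port this to the dynamic setting by maintaining a dynamic expander decomposition under deterministic expander pruning and invoking the new \textsc{LocalKCut} at each level of the hierarchy. Since the new \textsc{LocalKCut} now tolerates $k$ as large as $\lmax$, each local call still runs in $n^{o(1)}$ amortized time, and the hierarchy depth multiplied by the per-level cost of $2^{O(\log^{3/4-c} n)}$ still yields $n^{o(1)}$ overall update time.

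\textbf{Main obstacle.} The hardest step is Step~1: producing a \emph{local} deterministic enumeration of $k$-cuts whose running time depends only on the explored volume and $k$, not on $n$. Naively derandomizing color coding via splitters already exceeds $n^{o(1)}$ once $k$ is polylogarithmic, so the splitter family has to be constructed locally and of size $2^{O(k^{1-\Omega(1)})}$, which is exactly what the relation $k = 2^{O(\log^{3/4-c} n)}$ permits. A secondary delicate point is that \textsc{LocalKCut} must return proper cuts on a possibly disconnected level-subgraph; this is handled by restricting both the tree packing and the color coding to the component reached by the local exploration, and by certifying ``no small proper cut'' whenever the local ball closes without producing one.
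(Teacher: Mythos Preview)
Your plan has the right shape but two genuine gaps.

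\textbf{Gap in Step 1 (parameter analysis of color coding).} You correctly note that a cut of size $k\le\lmax$ crosses only $2k/\lmin=O(1)$ edges of a good packed tree, since $\lmax\le 1.2\lmin$. But then you speak of isolating ``the $O(k)$ crossing edges'' and of needing a splitter family of size $2^{O(k^{1-\Omega(1)})}$. No such family is known, and none is needed. The point is that the \emph{small} parameter in the splitter is the number of crossing tree edges, namely $2\beta=O(1)$, not $\lmax$. The paper uses the Chitnis et al.\ construction (family size $2^{O(\min(a,b)\log(a+b))}$) twice: once with $(a,b)=(2\beta,\nu)$ on tree edges and once with $(a,b)=(2\beta-1,\lmax)$ on non-tree edges, giving families of size $\nu^{O(1)}$ and $\lmax^{O(1)}$, both $n^{o(1)}$. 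The BFS then follows blue tree edges and green non-tree edges. Your description conflates the cut size $\lmax$ with the number of tree crossings and so ends up chasing an unnecessary (and unavailable) splitter bound. Also, for this to apply you must first guarantee the target cut is a $\beta$-approximate mincut \emph{inside the cluster}; the paper does this by first peeling off all cuts of size $<\lmax/\beta$ (which are automatically boundary-sparse), a step your plan omits.

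\textbf{Gap in Step 2 (the $\lmax$ improvement does not come from the hierarchy).} Derandomizing \textsc{LocalKCut} alone does not push $\lmax$ up to $2^{\Theta(\log^{3/4-c}n)}$. In the framework of~\cite{DBLP:conf/soda/El-HayekH025}, cutting clusters along $(1-\delta)$-boundary-sparse cuts in arbitrary order produces up to $M\cdot 2^{O(1/\delta)}$ inter-cluster edges, which forces $\delta=\Omega(1/\log^{3/4}n)$ so that $\phi$ dominates, and hence $\lmax=O(\poly\log n)$ at best. The paper's actual mechanism for the larger $\lmax$ is a new \emph{fragmenting} subroutine (adapted from~\cite{DBLP:conf/soda/HenzingerLRW24}) that, on clusters of boundary $O(\lmax)$, carefully orders the boundary-sparse cuts so that the inter-cluster edge count drops to $\tilde O(M/\delta^3)$. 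Only then can one afford $\delta=2^{-O(\log^{3/4-c}n)}$ and thus exactness up to $\lmax=2^{\Theta(\log^{3/4-c}n)}$. Your Step~2 (``port the static hierarchy, invoke the new \textsc{LocalKCut} at each level'') skips this entirely, and without it the recourse analysis at depth $O(\log^{1/4}n)$ blows up long before you reach the claimed $\lmax$.
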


Before we delve into the main techniques we developed to prove this theorem, we first discuss its implications. 
The first one is an exact algorithm for dynamic minimum cut for $\lambda < 2^{\Theta(\log^{3/4-c} n)}$ for any $c>0$:

\begin{theorem}[Fully Dynamic Exact Mincut]\label{thm:exact}
    There exists a deterministic algorithm that solves the minimum cut problem \emph{exactly} on a fully dynamic unweighted graph in $n^{o(1)}$ update time, as long as the value of the minimum cut is $\lambda < 2^{\Theta(\log^{3/4-c} n)}$ for some $c>0$. 
    It outputs the value of the minimum cut and stores a partition achieving the minimum cut implicitly.
\end{theorem}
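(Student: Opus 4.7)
The plan is to reduce exact fully-dynamic min cut in the specified regime to $n^{o(1)}$ parallel instances of \cref{pb:boundeddynamic}. Handle the disconnected case separately using a deterministic dynamic connectivity data structure of $n^{o(1)}$ update time, returning min cut $0$ whenever $G$ is disconnected. For a connected $G$ the min cut equals the min proper cut, so it suffices to maintain the latter. Cover the admissible range $[1, 2^{\Theta(\log^{3/4-c} n)}]$ by a geometric sequence of windows: take $k := \Theta(\log^{3/4-c} n) = n^{o(1)}$ and, for each $i \in \{0, 1, \ldots, k\}$, set $\lmin^{(i)} := \lceil (1.1)^i \rceil$ and $\lmax^{(i)} := \lceil (1.1)^{i+1} \rceil$, so that $\lmax^{(i)} \le 1.2\,\lmin^{(i)}$ as required and consecutive windows tile the range. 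Instantiate from \cref{thm:boundeddyn} a data structure $\D_i$ with parameters $(\lmin^{(i)}, \lmax^{(i)})$.

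On every edge update, forward it to the connectivity structure and to each $\D_i$. To answer the current min cut, first check connectivity; if disconnected, return $0$. Otherwise scan $i = 0, 1, \ldots, k$ in order and return the value reported by the first $\D_i$ whose output is a concrete cut-size (not ``$\lambda > \lmax^{(i)}$''). Since at any moment the true $\lambda$ lies in a unique window $[\lmin^{(j)}, \lmax^{(j)}]$, every $\D_i$ with $i < j$ correctly outputs ``$\lambda > \lmax^{(i)}$'' and $\D_j$ outputs the exact value $\lambda$, provided each $\D_i$'s at-all-times precondition has held throughout its lifetime.

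The main obstacle is exactly that at-all-times precondition: a deletion sequence can drop the current min proper cut below $\lmin^{(i)}$, permanently invalidating $\D_i$, after which later insertions may raise $\lambda$ back into $\D_i$'s window. I would mark $\D_i$ as \emph{stale} at the first violation and rebuild it from scratch on the current graph only when a query actually requires $\D_i$'s value (i.e., when every valid $\D_0, \ldots, \D_{i-1}$ reports ``$\lambda > \lmax^{(\cdot)}$'' but $\D_i$ is stale). The subtle step is to keep the amortized rebuild cost within the $n^{o(1)}$ per-update budget; I expect this to follow from (i) the standard de-amortized background-construction technique, assembling a replacement $\D_i'$ over the updates between the stale event and the moment $\lambda$ first re-enters $[\lmin^{(i)}, \lmax^{(i)}]$, and (ii) the fact that initializing a single $\D_i$ only requires edges that can participate in cuts of size at most $\lmax^{(i)}$, so a sparsifier whose size scales with $\lmax^{(i)}$ rather than $m$ can be fed into \cref{thm:boundeddyn}. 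Adding these amortized rebuild costs to the $k \cdot n^{o(1)} = n^{o(1)}$ cost of per-update forwarding and the $n^{o(1)}$ cost of dynamic connectivity yields the claimed deterministic $n^{o(1)}$ amortized update time; the partition achieving the min cut is simply the one stored by the reporting $\D_j$.
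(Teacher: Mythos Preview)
Your overall reduction is exactly the paper's: run a deterministic dynamic connectivity structure, tile the range $[1,2^{\Theta(\log^{3/4-c}n)}]$ by geometric windows, and run one instance of \Cref{thm:boundeddyn} per window, reporting the answer of the lowest window that does not say ``$\lambda>\lmax$''. You also correctly isolate the only nontrivial point, the ``$\lambda\ge\lmin$ at all times'' precondition of \Cref{pb:boundeddynamic}.

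Where your plan diverges from the paper is in how to handle that precondition, and this is where there is a genuine gap. Rebuilding $\D_i$ from scratch costs the full preprocessing time of \Cref{thm:boundeddyn}, which is $m^{1+o(1)}$. Neither of your two amortization ideas closes this: for (i), the number of updates between the moment $\D_i$ goes stale and the moment $\lambda$ re-enters window $i$ can be $O(1)$ (delete one edge, insert one edge), so background construction has nothing to spread the cost over; for (ii), a deterministic sparsifier preserving all cuts of size at most $\lmax^{(i)}$ (Nagamochi--Ibaraki) still has $\Theta(\lmax^{(i)}\cdot n)$ edges, so preprocessing on it is $n^{1+o(1)}$, not $n^{o(1)}$. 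There is no known deterministic sparsifier of size independent of $n$ here.

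The paper sidesteps rebuilding entirely with a much simpler device: it \emph{delays} updates to $\D_i$. Instance $\D_i$ is only brought up to date when all lower instances $\D_0,\ldots,\D_{i-1}$ currently report ``$\lambda>\lmax$''; otherwise $\D_i$ sits on an out-of-date copy of the graph. When $\D_i$ must be synchronized, the pending updates since its ``last update time'' are replayed with \emph{all insertions first, then all deletions}. The graph $\D_i$ last saw had minimum proper cut $\ge\lmin^{(i)}$ (that is why $\D_i$ was updated then), and the current graph also does (that is why $\D_i$ is being updated now); after applying all insertions we are in a supergraph of the current graph, so every intermediate graph during the deletion phase contains the current graph and hence has minimum proper cut $\ge\lmin^{(i)}$. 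Thus the precondition of \Cref{pb:boundeddynamic} holds throughout the replay, and $\D_i$ never needs to be rebuilt. Each original edge update is replayed at most once per instance, so the total work is $(\text{number of windows})\cdot n^{o(1)}=n^{o(1)}$ amortized per update.
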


To prove \Cref{thm:exact} from \Cref{thm:boundeddyn}, simply run a connectivity algorithm and additionally $O(\log n)$ many instances of the algorithm from \Cref{thm:boundeddyn}, one per range where $\lmin = 1.2^i$ and $\lmax = 1.2^{i+1}$, for all ranges, i.e., all values of $i\in[0,\lceil 2\log n \rceil ]$. 
Note that each instance is built for the whole graph, whether it is connected or disconnected, there is \emph{not} one instance per connected component and range -- in fact, we ensure, as we discuss below, that the input for every instance is connected at all times, by delaying updates to the different instances to ensure connectivity guarantees. 
Whenever the connectivity algorithm returns that the graph is disconnected, simply output that the minimum cut is $0$. Otherwise, we need to find the minimum proper cut, which is equal to the minimum cut.
To ensure that the minimum proper cut is at least $\lmin$ for every instance, we give first the initial graph to the instance with smallest index $i=0$.
If that instance returns ``$\lambda > \lmax$'', this ensures that $\lambda \ge 1.2^{i+1}$ which is the $\lmin$ of the next instance. We thus give the graph to the next instance, and do that iteratively until one instance returns a value for the minimum proper cut that is in its range.
We then store for each instance an integer ``Last update time'', that is either $0$ if it has been instantiated, or $-1$ otherwise.
To handle an update, we proceed similarly, updating the connectivity algorithm first, and if it returns that the graph is connected, updating or instantiating all the instances in increasing order of indices, and stopping when one instance returns ``$\lambda > \lmax$''.
During that phase, when considering an instance, we know that before updating it, either the instance has not been instantiated, or the graph it is considering -- which might be out of date -- satisfies $\lambda \ge \lmin$. 
In the first case, since we know from the instance with lower index that $\lambda \ge \lmin$ in the current graph, we can instantiate the instance with the current graph. 
In the latter case, we know, since the graph is connected (from the connectivity algorithm),
that handling all edge insertions then all edge deletions since the last update to the instance, will bring the instance up to date, and ensure that $\lambda\ge \lmin$ throughout.
We then update the variable ``Last update time'' accordingly.
To answer the query, we report the value of the one algorithm that outputs a value inside its range and is up to date.
We update the instances in this way as we have to guarantee that the minimum proper cut in any instance is at least $\lmin$ at all time.

\paragraph{A note on the maximum value of $\lambda$, $\lmin$ and $\lmax$:} It is worth discussing briefly why we require $\lmin = 2^{O(\log^{3/4-c} n)}$ for any $c>0$, as opposed to having an algorithm valid for all values of $\lambda$ that runs in time $\poly (\lmin) \cdot n^{o(1)}$. As we discuss below, our algorithm is in fact a $(1+\delta)$-approximate algorithm, for $\delta = 2^{-O(\log^{3/4-c} n)}$. Said differently, our algorithm can only be exact for values of $\lmax$ that are  less than $1/\delta$. We cannot decrease our approximation ratio, as our algorithm is recursive, and the graph size (number of edges in the graph)  and recourse for each recursive call depends on $\delta$ and $\lmax$: more precisely, the recourse per level is $\tilde O(\frac{\rho\lmax}{\delta^3})$ with $\rho = 2^{O(\sqrt{\log n})}$ for any $c>0$, and after a recursive depth of $r=\sqrt[4]{\log n}$, the goal is to have a recourse of $2^{O(\log^{1-c}n)} = n^{o(1)}$. 
Thus, making $\delta$ too small or $\lmax$ too high would make the overall recourse too high. 

Note also that our algorithm relies on the dynamic expander decomposition of Goranci, Räcke, Saranurak and Tan~\cite{expanderhierarchy}. This algorithm maintains an expander hierarchy with expander parameter $\phi=2^{-\Theta(\log^{3/4}n)}$ and recourse $\rho=2^{\Theta(\log^{1/2}n)}$. Our cluster decomposition achieves then the same depth as their expander hierarchy, a depth of $O(\sqrt[4]{\log n})$, which is a direct consequence of the choice of the value of $\phi$, as the number of edges is multiplied by $\phi$ from a level to the next, and $m\phi^{a\sqrt[4]{\log n}}=1$ for a constant $a$ that is large enough. Importantly, their recourse per level is truly smaller than $\frac 1 \phi$, and hence, after $O(\sqrt[4]{\log n})$ many levels, the recourse is $\rho^{O(\sqrt[4]{\log n})} = n^{o(1)}$.
One could hope for other dynamic expander decompositions with different tradeoffs for $\phi$ and $\rho$. In that case, our algorithm would adapt very effectively: For example, let $\gamma>0$ be a real value and assume that we have an algorithm that maintains an expander decomposition for $\phi = n^{-O(\gamma)}$.
The depth of recursion would then be $O(\frac 1 \gamma)$, and assume the recourse per level is $\rho$ for a new value of $\rho$. Then our recourse per level would be $\tilde O(\frac{\rho\lmax}{\delta^3})$, and over $O(\frac 1 \gamma)$ levels, this would add up to  $\left(\frac{\rho\lmax}{\delta}\right)^{O(\frac 1 \gamma)}$. Each update requiring $\left(\frac{\rho\lmax}{\phi\delta}\right)^{O(1)}$ time to be processed, this would yield a total update time of $\left(\frac{\rho\lmax}{\delta}\right)^{O(\frac 1 \gamma)}\frac 1 \phi = \left(\frac{\rho\lmax}{\delta}\right)^{O(\frac 1 \gamma)}n^{\gamma}$. This could be leveraged to get different tradeoffs between running time and largest acceptable $\lmax$.

The second implication of our algorithm is an approximation algorithm for the dynamic minimum cut problem, where there are no restrictions on the size of $\lambda$ nor on the edges being unweighted. Note that this is the first algorithm that maintains a $(1+\eps)$-approximate minimum cut in \emph{weighted} graphs:

\begin{theorem}[Fully Dynamic Approximate Mincut]\label{thm:weighted}
    Let $G_w$ be a weighted fully dynamic graph with $n$ nodes under edge updates, and let $\eps \ge 2^{-\Theta(\log^{3/4-c} n)}$ for any $c>0$.
    There exists a dynamic algorithm that maintains a $(1+\eps)$-approximation of the value of the minimum cut in $n^{o(1)}$ update time. Moreover, a partition achieving the minimum cut is stored implicitly.

    The algorithm is randomized and works against an oblivious adversary, and is correct with high probability.
\end{theorem}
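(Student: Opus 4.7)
The plan is to reduce the weighted problem to the unweighted exact case of \Cref{thm:exact} via a dynamic cut sparsifier that produces an unweighted multigraph.

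First, I would maintain a $(1+\eps)$-cut sparsifier $H$ of $G_w$ as an unweighted multigraph using a standard dynamic cut sparsifier (e.g.\ an Abraham--Durfee--Koutis--Krinninger--Peng-style construction, or its subsequent improvements), whose guarantees hold with high probability against an oblivious adversary. Such data structures run in $\poly(\log n, 1/\eps)$ time per update and perform $\poly(\log n, 1/\eps)$ edge insertions/deletions in $H$ per update to $G_w$. Since $\eps \ge 2^{-\Theta(\log^{3/4-c}n)}$ gives $1/\eps = n^{o(1)}$, both the update time and the recourse for maintaining $H$ are $n^{o(1)}$, and a standard Benczur--Karger-type bound yields a minimum cut of $H$ of order $O(\eps^{-2}\log n) = 2^{\Theta(\log^{3/4-c}n)}$. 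For general weights, I would bucket edges into weight classes of geometrically increasing weight, run one sparsifier per class, and take the union; with polynomially bounded weight ratios this adds an $O(\log n)$ overhead and keeps the complexity in $n^{o(1)}$.

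Second, I would feed the update stream of $H$ into the deterministic algorithm of \Cref{thm:exact}. Because the minimum cut of $H$ lies within its admissible range, that algorithm maintains the exact minimum cut of $H$ together with an implicit partition in $n^{o(1)}$ deterministic time per update of $H$. Composing the two layers: each update to $G_w$ triggers $n^{o(1)}$ updates to $H$, each processed in $n^{o(1)}$ time, for a total of $n^{o(1)}$ update time on $G_w$. By the sparsifier guarantee, the value returned by \Cref{thm:exact} on $H$ approximates the minimum cut of $G_w$ within a $(1+\eps)$ factor, and the stored partition transfers back verbatim. Randomization and the oblivious-adversary restriction come entirely from the sparsifier (\Cref{thm:exact} is deterministic), and high-probability correctness over the whole update sequence follows by a union bound.

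The main obstacle will be plugging in a dynamic cut sparsifier with exactly the right interface: it must output an unweighted multigraph (or admit a cheap conversion to one), achieve a minimum cut of $2^{\Theta(\log^{3/4-c}n)}$ in the regime $1/\eps = n^{o(1)}$, and have $n^{o(1)}$ recourse per update so that the composition with \Cref{thm:exact} remains $n^{o(1)}$. All three are standard for the sparsifiers cited above in our parameter regime, but verifying the precise conversion from a (possibly weighted) spectral sparsifier to an unweighted cut sparsifier with $n^{o(1)}$ recourse requires careful bookkeeping.
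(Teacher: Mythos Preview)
There is a genuine gap in your reduction. A $(1+\eps)$-cut sparsifier $H$ of $G_w$, whether obtained from Abraham--Durfee--Koutis--Krinninger--Peng or any Bencz\'ur--Karger-style construction, \emph{preserves} cut values up to a $(1\pm\eps)$ factor; it does not scale them down. In particular the minimum cut of $H$ is $(1\pm\eps)\lambda$, not $O(\eps^{-2}\log n)$, so there is no reason it falls into the admissible range of \Cref{thm:exact}. The bound $O(\eps^{-2}\log n)$ you invoke is not a property of cut sparsifiers but of Karger's skeleton sampling (\Cref{thm:kargersparsify}), where one samples each edge with probability $p=\Theta(\log n/(\eps^{2}\lambda))$ and obtains an unweighted multigraph whose cuts are all roughly $p$ times the original cuts. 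That construction requires knowing $\lambda$ to set $p$; this is precisely why the paper cannot use a single black-box sparsifier.

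The paper's actual reduction is closer to your intuition but handles this issue explicitly: it runs $O(\log(n^{2}U/L))$ parallel instances, one per geometric guess $\lambda_i=L\cdot 1.1^{i}$ of the minimum cut. For each $i$ it samples each weighted edge $e$ as a capped binomial $Y(e)\le\lmax+1$ with parameter $p_i=\Theta(\log n/(\eps^{2}\lambda_i))$ (after rounding real weights to integers at the right granularity), obtaining an unweighted multigraph $G_i$ whose minimum cut lies in $[\lmin,\lmax]$ with $\lmax=O(\eps^{-2}\log n)$ exactly when the guess is correct. Because each edge of $G$ contributes at most $\lmax+1=n^{o(1)}$ parallel edges to $G_i$, an edge update in $G$ causes $n^{o(1)}$ updates in each $G_i$, and \Cref{thm:boundeddyn} is applied to each $G_i$. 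The instance with the correct guess returns a value in range, which after rescaling by $1/(p_i x_i)$ gives the $(1+\eps)$-approximation. Your high-level plan of ``sample down, then apply the exact algorithm'' is right, but the sampling must be Karger's per-guess scheme rather than a generic dynamic sparsifier, and the multiple guesses are essential.
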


Let us quickly discuss how we obtain \Cref{thm:weighted} from \Cref{thm:boundeddyn}.
Our first step is to sparsify the graph as to ensure that the minimum cut  value is reduced to at most $O(\frac {\log n} {\eps ^2}) < 2^{\Theta(\log^{3/4-c} n)}$ for some $c>0$ using the edge-sampling approach of Karger~\cite{kargersparse}.
It samples each edge with a probability $p$ aptly chosen as described in the following lemma:

\begin{restatable}[Theorem 2.1 and Corollary 2.1 of~\cite{kargersparse}]{theorem}{kargersparse}\label{thm:kargersparsify}
    Let $G$ be any graph with minimum cut $\lambda$, let $\eps >0$,  and let $p \ge \frac {54 \ln n} {\eps^2 \lambda}$.
    Let $G(p)$ be the graph obtained from $G$ by sampling each edge independently with probability $p$.
    Then with high probability, the minimum cut $S$ in $G(p)$ will correspond to a $(1+\eps)$-minimum cut of $G$.
    Moreover, also with high probability, the value of \emph{any} cut $S$ in $G(p)$ has value no less than $(1-\eps)$ and no more than $(1+\eps)$ times its expected value. 
\end{restatable}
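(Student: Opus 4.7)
My plan is to combine a Chernoff concentration inequality applied to each individual cut with Karger's classical counting bound on near-minimum cuts, and then take a union bound over all cuts of $G$. The stronger (second) statement in fact implies the first, so I would aim to establish the second directly and then derive the first as a consequence.

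For the concentration step, I would fix an arbitrary cut $S$ of value $c(S) \ge \lambda$ in $G$. Since $c_{G(p)}(S)$ is a sum of independent Bernoulli$(p)$ random variables with mean $p\,c(S)$, a standard multiplicative Chernoff bound yields a failure probability of the form $2\exp(-\Theta(\eps^2 p\, c(S)))$. Plugging in $p \ge 54 \ln n /(\eps^2 \lambda)$ makes this at most roughly $n^{-\Theta(c(S)/\lambda)}$; the constant $54$ is calibrated precisely to make the exponent large enough to absorb the counting factor in the next step.

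Next I would invoke Karger's theorem that the number of cuts of value at most $\alpha\lambda$ is $O(n^{2\alpha})$ for every $\alpha \ge 1$. Partitioning all cuts according to the integer $k$ with $c(S)/\lambda \in [k, k+1)$ and applying a union bound gives a geometric sum of the form $\sum_{k\ge 1} n^{O(k)} \cdot n^{-\Omega(k)}$, which under the chosen value of $p$ telescopes to $n^{-\Omega(1)}$. This simultaneously shows that every cut of $G(p)$ is within a $(1\pm\eps)$-factor of its expectation, establishing the second part of the theorem.

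Finally, I would deduce the first statement from the second: letting $S^*$ be a minimum cut in $G(p)$ and $S_0$ a minimum cut of $G$, the sandwich $(1-\eps) p\, c(S^*) \le c_{G(p)}(S^*) \le c_{G(p)}(S_0) \le (1+\eps) p\lambda$ forces $c(S^*) \le \tfrac{1+\eps}{1-\eps}\lambda = (1+O(\eps))\lambda$, which gives a $(1+\eps)$-approximate minimum cut after rescaling $\eps$ by a constant. The main obstacle I expect is the calibration of constants: since Karger's bound grows like $n^{2\alpha}$ while the Chernoff tail decays only like $n^{-\Theta(\alpha)}$ in the approximation ratio $\alpha = c(S)/\lambda$, the value of $p$ must be large enough that the tail dominates the count \emph{uniformly} over all $\alpha$, and the choice $p \ge 54\ln n/(\eps^2 \lambda)$ is exactly what ensures the union-bound series is geometrically convergent.
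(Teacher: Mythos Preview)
The paper does not actually prove this statement; it is quoted verbatim as Theorem~2.1 and Corollary~2.1 of Karger's sparsification paper and used as a black box. Your outline is precisely the standard argument from that reference (Chernoff on each cut, Karger's $O(n^{2\alpha})$ bound on $\alpha$-approximate mincuts, then a stratified union bound), so there is nothing to compare and your plan is correct.
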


We adapt this sparsifying technique to weighted graphs, and sparsify the graph for different values of estimates of $\lambda$ $\lambda_0, \dots, \lambda_{\log_{1.1}(n^2\cdot \frac U L)}$ where $\lambda_i = L\cdot 1.1^i$ for every $i \in [\log_{1.1}(n^2\cdot \frac U L)]$.
We then run the algorithm from \Cref{thm:boundeddyn} on each sparsified graph, and return the value of the algorithm with the correct estimate. 

\textbf{Organization.}
The paper is organized as follows: in \Cref{sec:prelims}, we formally define the problem and introduce definitions and notations. 
In \Cref{sec:technical overview}, we give a technical overview of our work.
In \Cref{sec:localkcut}, we present our deterministic LocalKCut. 
In \Cref{sec:fragmenting}, we present our fragmenting algorithm. 
In \Cref{sec:hierarchy}, we present the Cluster decomposition and Hierarchy.
In \Cref{sec:mirrorcuts}, we discuss how to maintain mirror cuts, a building block of our data structure.
In \Cref{sec:static}, we present a static algorithm for the minimum proper cut problem for a specific range, and make this algorithm dynamic in \Cref{sec:dynamic}.
In \Cref{sec:datastructure}, we present the last details needed for our data structure.
And finally, in \Cref{sec:weighted}, we discuss how to deal with weighted graphs.

\section{Preliminaries}\label{sec:prelims}
\paragraph{Problem Definition.}
We are given an undirected unweighted graph $G=(V,E)$, where $V$ is the set of vertices, and $E$ is the set of edges. Every vertex subset $\emptyset \subsetneq S \subsetneq V$ is also called a \emph{cut} of $G$. We denote by $\vol(S)$ the sum of the degrees of the nodes in $S$.
We define, for every $T \subseteq V$, a cost associated with $T$, that we call the \emph{boundary-size} of $T$, or equivalently, the \emph{(cut-)size} of $T$:
$$\boundary T = \card{E(T, V\setminus T)}$$

Typically, we will use the term ``cut-size'' when $T$ is a candidate to be the minimum cut, while we will prefer the term ``boundary-size'' when $T$ is a cluster in our cluster decomposition (which is the internal data structure that we introduce later on), and thus $\boundary T$ is a value that we do not wish to compare (yet) to the cut-sizes of other cuts.

A \emph{minimum proper cut} of $G$ is a set $MC(G) \subsetneq V$ that minimizes the cut-size, while still being non-zero: $MC(G) \in \argmin_{\varnothing \subsetneq T \subsetneq V, \boundary T \neq 0}\PAR{\boundary T}$. 
Note that there are dynamic  algorithms with polylogarithmic time per operation that maintain the connected components of a graph (See e.g. Holm, de Lichtenberg, and Thorup~\cite{DBLP:journals/jacm/HolmLT01}). However, we do not rely on them. Our algorithm detects connected components itself.

We study the problem in the \emph{dynamic setting under edge updates}, that is, at every time step, an edge can be either inserted or deleted.

\begin{remark}
    We allow parallel edges. 
All our running times that include the number of edges count their multiplicities. For example, in a graph with two nodes and 3 parallel edges between them, we have that $m=3$.
\end{remark}

\paragraph{Definitions and Notations.}

For any sets $A,B \subseteq V$, with $A\inter B = \varnothing$, we use $E(A,B)$ to denote the set of edges going from $A$ to $B$ and $w(A,B)$ to denote  $\card{E(A,B)}$.

For any subset $C \subseteq V$, $G/C$ is the graph where the set $C$ is contracted to a single node, adding parallel edges if necessary. No self-loops are allowed.

Let $H=(V_H, E_H)$ with $V_H \subseteq V$ and $E_H \subseteq E$, be a subgraph of $G$.
We define  the boundary   $\boundary_H T$ of a set $T \subseteq V_H$ to be the boundary of $T$ in $H$. If $E_H$ is not defined explicitly, by convention we mean $H$ is the subgraph of $G$ induced by $V_H$. We also identify $V_H$ and $H$.
     
A \emph{cluster} is a set of vertices that are ``highly connected''. It is defined in more detail below.

\begin{definition}[Crossing and uncrossing]
 We say that a cut $S$ \emph{crosses} cluster $C$ if both $S\inter C$ and $C \setminus S$ are nonempty.
A cut $S'$ \emph{uncrosses a cut $S$ in a cluster $C$} if $S$ crosses $C$ and $S'$ either equals $S\union C$ or $S \setminus C$.
\end{definition}

Our algorithm only uncrosses a cut $S$ in a cluster $C$ when $S\inter C$ is not $(1-\delta)$-boundary sparse (see Definition~\ref{def:sparse}) in $C$, which will imply that $S'$ is a $(1+O(\delta))$-approximation of $S$.

\begin{definition}
For any subset $U \subseteq V$, $G[U]$ is the subgraph induced by $U$, and $G[U]^r$ for $r \in \R$ is the subgraph induced by $U$ where for every 
boundary edge $\{v, x\}, v\in U, x \notin U$ we add $\ceil{r}$ self-loops to $v$.
\end{definition}
\begin{definition}
    The conductance of a cut $U \subsetneq V$ in the graph $G$ is $$\phi_G(U) = \frac{w(U, V\setminus U)}{\min\{\vol_G(U), \vol_G(V\setminus U)\}}.$$

A graph $G$ is a $\phi$-expander if every cut in $G$ has conductance at least $\phi$.
\end{definition}
\begin{definition}[$(\alpha, \phi)$-Expander, Definition 4.1 of~\cite{expanderhierarchy}]
    For a graph $G = (V, E)$ and parameters $\alpha, \phi \in (0,1)$ and $H \in \R$,
a subgraph $U\subset V$ is $(\alpha, \phi)$-boundary-linked expander with slack $H$ in $ G$ if the graph $ G[U ]^{\frac \alpha \phi}$ is a $\phi/H$-expander.
\end{definition}
We will maintain a partition of the graph into vertex-disjoint expanders, called \emph{expander decomposition}.
An \emph{inter-expander} edge is an edge whose endpoints belong two different expanders. We will further maintain a \emph{pre-cluster decomposition} and a \emph{cluster decomposition} that is a refinement of the expander decomposition. In the pre-cluster decomposition, each \emph{cluster} is a set of vertices and any edge  that lies between two different clusters is an \emph{inter-cluster} edge.
A cluster decomposition is a refinement of the pre-cluster decomposition, and any edge that lies between two clusters of the cluster decomposition but inside a cluster of the pre-cluster decomposition is a ``fragmented'' edge.

To compute and maintain the expander decomposition under edge insertions and deletions we use the following result from Goranci, Räcke, Saranurak and Tan~\cite{expanderhierarchy}.
In this theorem, the term  \emph{contracted graph} refers to the graph obtained from the original graph by merging all nodes in an expander into one (for each expander).
\emph{Recourse} is the number of changes made to the contracted graph, which correspond to the number of edges that switch from being 
intra-cluster edges 
to inter-cluster edges or vice-versa.

\begin{theorem}[Dynamic Expander Decomposition, Lemma 7.3 of~\cite{expanderhierarchy}]\label{thm:expanderdecomposition}
Let $\phi' = 2^{-\Theta (\log^{3/4}n)}$. Suppose a graph $G$ initially contains $m$ edges and undergoes a
sequence of at most $O(\frac {m\phi}{\rho})$ adaptive updates such that $V (G) \le n$ and $E(G) \le m$ always hold.
Then there exists an algorithm that maintains an $(\alpha, \phi')$-expander decomposition with slack $38^h = 2^{-O(\log^{1/2}n)}$ and its contracted graph with the following properties:
\begin{enumerate}[noitemsep]
 \item update time: $\Tilde O (\frac {\psi 38^{2h}}{\phi ^2})$
\item preprocessing time: $\Tilde O(\frac m \phi)$
\item initial volume of the contracted graph (after preprocessing): $\Tilde O(\phi m)$
\item amortized recourse (number of updates to the contracted graph): $O(\rho)$
\end{enumerate}
\end{theorem}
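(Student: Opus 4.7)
The plan is to build on the standard blueprint for dynamic expander decomposition via expander pruning, and then layer a hierarchical construction to reduce the contracted graph's volume and to control recourse. First I would compute the initial decomposition statically: invoke a boundary-linked variant of the Saranurak--Wang cut-matching expander decomposition to produce in $\tilde O(m/\phi')$ time a partition where every cluster $U$ satisfies that $G[U]^{\alpha/\phi'}$ is a $\phi'/38$-expander. Standard analysis gives that the total number of inter-cluster edges is $\tilde O(\phi' m)$, matching the claimed initial volume. This handles the preprocessing bullets.

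Next I would handle updates by combining two well-known ingredients. Edge insertions that go between two different clusters are added directly to the contracted graph and do not damage any single expander. Edge insertions inside a cluster only improve the conductance of that cluster, so they are absorbed for free (up to bookkeeping). Edge deletions are the substantive case: I would invoke an expander pruning data structure (in the spirit of Saranurak--Wang or Nanongkai--Saranurak--Wulff-Nilsen, adapted to boundary-linkedness) on every affected cluster. Pruning identifies, in $\tilde O(\psi 38^{2h}/\phi'^2)$ time per unit of deleted volume, a set $P$ of small volume to expel from the cluster, so that the remaining part is still an $(\alpha, \phi'/38)$-boundary-linked expander with a slightly worse slack. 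The expelled vertices become ``orphans'' and are pushed to a buffer to be re-decomposed.

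The hierarchical part is what yields the $38^h$ slack and the amortized recourse. I would repeat the whole construction recursively on the contracted graph: level $i+1$ is a boundary-linked expander decomposition of level $i$'s contracted graph. A batched rebuild scheme is used at each level, so that whenever the buffer of orphans or the accumulated pruned volume at level $i$ exceeds a threshold that is a constant fraction of the level's volume, the entire level (and all above it) is rebuilt from scratch using the static algorithm. The depth $h$ is chosen so that the volume shrinks by a factor of $\phi'$ at each level and becomes $O(1)$ after $h = \Theta(\log^{1/4} n)$ levels; the slack multiplies by a factor of $38$ per level, giving total slack $38^h = 2^{O(\log^{1/2} n)}$.

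The main obstacle, and where the bulk of the real work lies, is the amortized recourse analysis: one must show that the combined effect of expander pruning on lower levels and batched rebuilds on higher levels produces $O(\rho)$ amortized changes per update in the top-level contracted graph, despite the cascading rebuilds. The key invariant is that a rebuild at level $i$ is triggered only after $\Omega(\phi' \cdot \vol_i)$ updates reach that level, so the cost and recourse of the rebuild, when spread back over the updates that triggered it, amortize to $\tilde O(1/\phi')$ per update per level; multiplying through the $h$ levels and rolling in the slack factor gives the stated bounds. Combining this with the standard argument that pruning itself changes only a near-optimal fraction of affected edges' statuses finally yields the $O(\rho)$ amortized recourse and the $\tilde O(\psi 38^{2h}/\phi'^2)$ worst-case update time.
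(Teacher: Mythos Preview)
This theorem is not proved in the paper at all: it is quoted verbatim as Lemma~7.3 of Goranci, R\"acke, Saranurak, and Tan~\cite{expanderhierarchy} and used as a black box throughout. There is therefore no ``paper's own proof'' to compare your proposal against.

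That said, your sketch is a reasonable high-level summary of how the cited work actually establishes the result (static boundary-linked decomposition, expander pruning under deletions, hierarchical batched rebuilds with slack compounding as $38^h$). One inaccuracy: you set the depth $h$ by asking the volume to shrink to $O(1)$, but in the cited construction the hierarchy depth is governed by the slack/recourse tradeoff, not by driving the top-level volume to a constant; the present paper builds its \emph{own} recursion of depth $O(\log^{1/4} n)$ on top of this black box. For the purposes of this paper, though, none of that matters: the correct ``proof'' here is simply the citation.
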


Since the amortized recourse is $O(\rho)$, and the algorithm can handle $O(\frac {m\phi} \rho)$ updates, this shows that overall the number of edges that are initially or became inter-cluster edges during the given sequence of updates is $O(m\phi)$.

\paragraph{Choice of parameter values.}
Throughout the paper we assume $c>0$, $\delta = 2^{O(\log^{3/4 -c } n)}\le 0.04$. We set $\lmin=2^{\Theta (\log^{3/4-c}n)}$ and $\lmax\le 1.2 \lmin$, $H=38^h = 2^{-O(\log^{1/2}n)}$, $\phi =\frac {\phi'} {38^h}= 2^{-\Theta (\log^{3/4}n)}$, $\rho= 2^{\Theta(\log ^{1/2}n)}$ and $ \alpha=\frac 1 {\poly \log n}.$

\section{Technical Overview}
\label{sec:technical overview}

\subsection{Overview of \cite{DBLP:conf/soda/El-HayekH025}}

As our result builds up on the work of El-Hayek, Henzinger and Li~\cite{DBLP:conf/soda/El-HayekH025}, we start by giving an overview of their techniques first.
The core of their algorithm is a fully-dynamic subroutine that takes as input two parameters, $\lmin$ and $\lmax$, and is able to maintain the value of the minimum cut if that value falls in the range $[\lmin, \lmax]$. If it falls outside of this range, then it simply outputs ``$\lambda<\lmin$'' or ``$\lambda > \lmax$'' accordingly.

Let $G$ be a graph. The starting point of the algorithm is the expander decomposition of Goranci, Räcke, Saranurak and Tan~\cite{expanderhierarchy}.
Then, the key point is to realize that any cut $S$ of cut-size at most $\lmax$ in $G$  leads to a cut of cut-size at most $\lmax$ in the graph induced by any of the expanders in the expander decomposition of $G$. 
Recall that for any $\phi$-expander $C$, we have $\frac {w(C\inter S, C\setminus S)}{\min \{\vol(C\inter S), \vol(C\setminus S)\}} \ge \phi$, which implies an upper bound of $\lmax/\phi$ on the volume of the smaller side of the cut in the expander. 
Finding these cuts of small volume can be efficiently done using the algorithm from Nalam and Saranurak~\cite{DBLP:journals/jacm/HolmLT01}, the so-called \emph{LocalKCut algorithm}.
Thus, refining the expander decomposition by repeatedly cutting expanders along cuts of cut-size at most $\lmax$ and of volume at most $\frac \lmax \phi$ ensures that  if the minimum cut has value at most $\lmax$, it does not cut any cluster obtained from that refined decomposition.

The problem with this approach, however, is that cutting along all such cuts can lead to a decomposition that is ``too fine-grained'': In the extreme case, each vertex forms its own cluster, meaning that the contracted graph is identical to the original graph.
This is problematic, as the algorithm then contracts each cluster in the decomposition and calls itself recursively on the contracted graph, that, for efficiency reasons, needs to be ``sufficiently smaller'' than the original graph.
To guarantee that the contracted graph is indeed ``small enough'', the idea of~\cite{DBLP:conf/soda/El-HayekH025}  is to filter the cuts returned by the LocalKCut algorithm and
to only cut along cuts that are $(1-\delta)$-boundary sparse, a concept introduced by Henzinger, Li, Rao, and Wang~\cite{DBLP:conf/soda/HenzingerLRW24}:

\begin{definition}[Boundary-Sparse. Def. 2.3 of~\cite{DBLP:conf/soda/HenzingerLRW24}]\label{def:sparse}
    For a set $C \subsetneq V$ and parameter $\delta \leq 1$, a cut $U \subsetneq C$ is $(1-\delta)$-boundary sparse in $C$ iff 
    $$
    w(U, C\setminus U)< (1-\delta) \min\{w(U, V\setminus C), w(C\setminus U, V\setminus C)\}
    $$
\end{definition}

Intuitively, the cut $U\subsetneq C$ is boundary sparse\footnote{Boundary sparseness indicates \emph{sparseness in comparison to the boundary}.} in $C$ if the number of edges connecting it to the rest of $C$ is a $(1-\delta)$ factor smaller than the number of edges that connect it to $V\setminus C$ and, for symmetry, $C\setminus U$  also fulfills this property. 
Thus, if the minimum cut $S$ is such that $S\inter C = U$, one can ``uncross'' $S$ into $S\setminus C$ or $S\union C$, whichever results in a cut of smaller cut-size, and the resulting near-minimum cut will only gain a $(1+O(\delta))$ factor in cut-size.
Therefore, from the expander decomposition, if one decomposes the expanders further along cuts that have volume at most $\frac \lmax \phi$, cut-size at most $\lmax$ and are $(1-\delta)$-boundary sparse, then in most cases the minimum cut can be approximated by a cut that does not cross any cluster. 
The only case where this does not happen is when the minimum cut is contained in at most two expanders. 
In that case, one can show that this cut can be retrieved by looking at the minimum cut of small volume in the so-called \emph{mirror clusters}, where a mirror cluster of a cluster $C$ is obtained from $G$  by contracting all nodes outside of $C$ into one node.

Therefore, the algorithm maintains the following invariant for its cluster decomposition: at no point in time does there exist a cut in a cluster that (1) has  volume at most $\frac \lmax \phi$ in the cluster and (2) cut-size at most $\lmax$ and (3) is $(1-\delta)$-boundary sparse.
It also maintains the mirror clusters and the cuts of small volume in it.

\subsection{Our techniques}
To prove \Cref{thm:boundeddyn} we (1) derandomize the LocalKCut algorithm and (2) improve the approximation ratio of the algorithm in~\cite{DBLP:conf/soda/El-HayekH025} which we described above.

Throughout this paper, $\lmax$ is set to be at most $2^{O(\log^{3/4-c} n)}$ with $c>0$. As discussed below, this is because the recourse of our algorithm is dependent on $\lmax$. Over $r=\sqrt[4]{\log n}$ many levels of recursion, this is at most $\lmax^{O(\sqrt[4]{\log n})}$. However to achieve an $n^{o(1)}$ update time, this value  has to be $n^{o(1)}$, which in turn requires that $\lmax$ is at most $2^{O(\log^{3/4-c} n)}$ with $c>0$.

\subsubsection{A deterministic LocalKCut algorithm}
The input of the LocalKCut  algorithm is a vertex $v$, a volume $\nu$ and a cut-size $\lmax$, and a cluster $C$ and its goal is to return all cuts in $C$ that contain $v$, are of volume (in $G[C]$) at most $\nu$, are of cut-size at most $\lmax$, and are connected.
The LocalKCut algorithm by Nalam and Saranurak~\cite{kragerlocal} heavily relies on randomization, as at its core, the algorithm starts at $v$, and randomly explores the graph checking at each step if the set of all vertices visited until that point form a cut that satisfies the desired conditions or not. 
They show that with a ``high enough'' probability, any cut that satisfies the desired conditions is found in this way, and thus, running this algorithm enough times guarantees that all such cuts are found whp.

We, thus, have to develop a whole new technique to derandomize it. For that, we rely on the \emph{greedy tree packing technique} developed by Karger~\cite{karger2000minimum}.
The tree packing technique works as follows: start with the graph itself, assign to each edge a weight of $0$, and create an empty set of trees, called the \emph{tree packing}.
Then iteratively find a minimum spanning tree of the graph, add it to the tree packing, and increase the weight of every edge in that tree by 1. 

Karger showed that if we choose enough trees in the tree-packing, then for every minimum cut $S$, there must be a tree that \emph{$2$-respects} $S$, that is, the edges of the minimum cut intersect the edges of the tree at most twice. We show an immediate generalization of this result: for any $\beta$-approximation of the minimum cut, there exists a tree that $2\beta$-respects the cut for a suitable integer constant $\beta$.
Assume for the moment that any
cut in $G$ of cut-size at most $\lmax$ is a $\beta$-approximation of the minimum cut.

We discuss below how to guarantee this condition for the graphs on which LocalKCut is executed.

Now we implement a deterministic version of LocalKCut inspired by techniques of Lokshtanov, Saurabh, and Surianarayanan~\cite{lokshtanov2022parameterized} for approximate minimum $k$-cut as follows:
A \emph{red-blue edge coloring} is a coloring of the edges where every edge is either red or blue, but not both.
We assume for the moment that we are given a family $\mathcal{F}_1$ of red-blue edge colorings and also a family $\mathcal{F}_2$ of green-yellow edge colorings and discuss later how we find and maintain such colorings efficiently. The first family $\mathcal{F}_1$ consists of a collection of red-blue edge colorings  such that for any two disjoint edge sets $F_r,F_b$ with $|F_r|\le 2\beta$ and $|F_b|\le\nu$, there exists a coloring in $\mathcal{F}_1$ with all edges in $F_r$ red and all edges in $F_b$ blue. The second family $\mathcal{F}_2$ consists of a collection of green-yellow edge colorings such that for any two disjoint edge sets $F_g,F_y$ with $|F_g|\le 2\beta-1$ and $|F_y|\le\lambda_{\max}$, there is a coloring  in $\mathcal{F}_2$ with all edges in $F_g$ green and all edges in $F_y$ yellow. We use the term \emph{pair of colorings} to denote a pair of colorings consisting of one coloring of $\mathcal{F}_1$ and one of $\mathcal{F}_2$.

The algorithm maintains during the sequence of edge updates a greedy tree packing of $G$, and the two edge coloring families $\F__1$ and $\F__2$.

Then, whenever LocalKCut is started on a vertex $v$ it performs, for every pair of colorings and every tree $T$, a breadth-first search {\em using only the blue edges that belong to $T$ and the green edges that do not belong to $T$.} Red edges and yellow edges are ignored. If the BFS visits nodes of total volume more than $\nu$, LocalKCut stops the current BFS, and moves to the next pair of colorings. If the BFS cannot visit any new vertices and is still below the volume threshold of $\nu$, LocalKCut checks whether the resulting cut induced by the visited vertices fulfills the desired conditions on cut-size and volume (i.e.~has edge capacity at most $\lmax$ and volume at most $\nu$) and if so, it adds the cut to the collection of cuts to be returned. It then moves on to the next pairs of colorings. 

We claim that this deterministic version of LocalKCut finds every cut $S$ that contains $v$, is connected, has volume at most $\nu$, and cut-size at most $\lmax$ using the following argument and setting $\beta \ge 8$, as long as $S$ is a $\beta$-approximation of the minimum cut in $C$.

As, by assumption, $S$ is $\beta$-approximation of the minimum cut in $C$, we are guaranteed that there exists a tree $T$ in our tree packing that $2\beta$-respects $S$, i.e., at most $2\beta$ of the edges of $T$ cross the cut induced by $S$. Choose $F_r$ to be these cut edges and choose $F_b$ to be all the edges of $T$ with both endpoints in $S$. By the conditions that $S$ fulfills, it follows that $|F_r| \le 2 \beta$ and $|F_b| \le \nu$. Hence there exists a blue-red coloring in $\mathcal{F}_1$ where all edges in $F_r$ are red, and all edges in $F_b$ are blue. 
As the edges in $F_b$ are edges of $T$ (that spans the cluster $C$) and they are connected by at most $2\beta$ (red) tree edges to the rest of $T$, they can form   at most $2\beta$ distinct connected components. As $S$ is connected, these components can be connected by at most $2\beta -1$ edges of $G$. Let $F_g$ be a minimum-cardinality set of edges that connect these components.  Then $|F_g| \le 2\beta-1$ and all these edges have both endpoints in $S$. Furthermore, let $F_y= E(S,V\setminus S).$ Note that $|F_y| \le \lmax$. 
Therefore, there exists a green-yellow coloring in $\mathcal{F}_2$ where all the edges in $F_y$ are yellow and all edges in $F_g$ are green.
As a result it holds that $F_b \cup F_g$ spans $S$ and all tree edges leaving $S$ are red and and all non-tree edges leaving $S$ are yellow. 
Recall that the algorithm explores the graph in BFS fashion using only the blue tree edges and the green non-tree edges for all trees in our tree packing and all pairs of edge colorings in $\mathcal{F}_1 \cross \mathcal{F}_2$. Thus, when our algorithm described above runs with that particular tree and these particular colorings it is guaranteed to find $S$ as $F_b \cup F_g$ spans $S$ and none of the tree edges leaving $S$ are blue and none of the non-tree edges leaving $S$ are green. 

In the above argument we used the fact that every cut in $G$ of cut-size at most $\lmax$ is a $\beta$-approximation of the minimum cut. 
Note that our algorithm in~\cite{DBLP:conf/soda/El-HayekH025} only runs LocalKCut on either clusters or mirror clusters, and not on arbitrary graphs $G$.
Therefore, if we can ensure that the minimum cut-size is at least $\frac \lmax \beta$ in every cluster and mirror cluster, and then run our deterministic LocalKCut algorithm on every cluster and mirror cluster, we are guaranteed to find all cuts that we need.

To ensure this property we note that if a cut in a cluster has cut-size smaller than $\frac \lmax \beta$ with $\beta > 3$, and if the graph has minimum cut at least $\lmin$, then the cut must be $(1-\delta)$-boundary sparse in the cluster. 
Note that the correctness proof for the cluster decomposition~\cite{DBLP:conf/soda/El-HayekH025} (and we will we will also base our correctness proof on it) does not place any conditions on cutting clusters other than requiring that the cuts are $(1-\delta)$-boundary-sparse. Thus, our algorithms is allowed to cut the cluster along any cut of cut-size smaller than $\frac \lmax \beta$, and do so repeatedly until no such cut remains.
After recursively cutting all these small cuts the cluster has the desired property and we can run our deterministic LocalKCut routine on it.

The question remains on how to efficiently (1) find such cuts and (2) how many colorings are needed and how to find and maintain them dynamically.

(1) We need determine new cuts that fulfill the requirements of our LocalKCut algorithm after each edge update and if an update splits a custer, we need  to run our LocalKCut algorithm recursively on the new clusters. To do so we maintain a LocalKCut data structure that allows to quickly output all desired cuts after an edge update.
However, after an update  a cluster $C$ might be split into two clusters that are not connected with each other in $C$, %
rendering the LocalKCut data structure for this cluster useless since it only works for $\beta$-approximations of the minimum cut in that cluster, which is now $0$. We could create a separate LocalKCut data structure for each cluster, creating new LocalKCut data structures whenever a cluster is split into two, but this proves to be too slow.

Instead, our solution is to not run and maintain one LocalKCut data structure per cluster, but instead build one LocalKCut data structure \emph{for all clusters}, i.e., for the whole graph. Thus, to split a cluster it suffices to simply delete from the data structure the edges that connect the two new clusters.
This also works fine with our constructions in general, although one needs to be careful.
For example, the greedy tree packing for our LocalKCut data structure now needs to be replaced by a greedy \emph{forest} packing, which is a greedy tree packing on each of the connected components.
The resulting LocalKCut data structure maintains a subgraph $G'$ of $G$ and implements the following operations:

\begin{definition}
    Let $G$ be a graph that is updated by a sequence of edge insertions and deletions. A LocalKCut data structure parametrized by $ \nu \in \N, s\in \N, \beta \in \N$ maintains a subgraph $G'$ of $G$ and implements the following operations:
    
(i) LocalKCut query($v$): Return all connected cuts in $G'$ containing $v$ of volume (in $G'$) at most $\nu$ and cut-size in $G'$ at most $s$, that are a $\beta$-approximation of the minimum proper cut of $G'$.

(ii) DeleteEdge($e$): Remove the edge $e$ from $G'$.

(iii) InsertEdge($e$): Add the edge $e$ to $G'$.
\end{definition}

For our LocalKCut data structure the time for each LocalKCut query is $\tilde O(\beta^2 (\lmax\nu  )^ {O(\beta)}) $. The amortized time for every DeleteEdge and InsertEdge operations is $\tilde{O}\left(\beta^4( \lambda_{\max}\nu)^{O(\beta)}\right)$. Using $\beta =O(1)$ and $\lambda, \nu = n^{o(1)}$ we get that every operation takes amortized time $n^{o(1)}$.

(2) Let us now discuss how to find the red-blue and yellow-green colorings. 
For that, we use the following lemma which was shown by
Chitnis, Cygan, Hajiaghayi, Pilipczuk and Pilipczuk~\cite{chitnis2016designing} and that guarantees the existence of a family of colorings of a set of elements and gives a fast algorithm for finding them. In our case each element is an edge.
\begin{restatable}[Lemma 1.1 of~\cite{chitnis2016designing}]{lemma}{howmanycolorings}\label{lem:colorings}
Given a set $U$ of cardinality $n$, and integers $0 \le a,\, b \le n$, one can in time $2^{
O(\min(a,b) \log(a+b))}n \log n$ construct
a family $\mathcal F$ of at most $2^{
O(\min(a,b) \log(a+b))} \log n$ subsets of $U$, such that the following holds: for any sets $A, B \subseteq U$,
$A \cap B = \emptyset$, $|A| \le a,\, |B| \le b$, there exists a set $S \in\mathcal F$ with $A \subseteq S$ and $B \cap S = \emptyset$.
\end{restatable}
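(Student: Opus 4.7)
Assume without loss of generality $a \le b$, so $\min(a,b) = a$. The plan is the classical \emph{universe reduction plus brute-force enumeration} scheme: first compress $U$ down to a small universe $[q]$ with $q = (a+b)^2$ by a small family of hash functions that is injective on every $(a+b)$-subset, and then enumerate all candidate images of size at most $a$.

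For the first step, I would invoke the splitter construction of Naor, Schulman and Srinivasan to build a family $\mathcal{H}$ of functions $h : U \to [(a+b)^2]$ of size $(a+b)^{O(1)} \log n$, with the property that for every $X \subseteq U$ with $|X| \le a+b$, some $h \in \mathcal{H}$ is injective on $X$. The essential point is that by taking image size $(a+b)^2$ rather than $a+b$, the splitter has only polynomial (not exponential) dependence on $a+b$, and can be listed explicitly in $(a+b)^{O(1)} n \log n$ time.

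For the second step, for every $h \in \mathcal{H}$ and every $T \subseteq [(a+b)^2]$ with $|T| \le a$, I would add $S_{h,T} := h^{-1}(T)$ to $\mathcal{F}$. The total size satisfies
\[
|\mathcal{F}| \;\le\; |\mathcal{H}| \cdot \sum_{i=0}^{a} \binom{(a+b)^2}{i} \;\le\; (a+b)^{O(1)} \log n \cdot (a+1)(a+b)^{2a} \;=\; 2^{O(a \log(a+b))} \log n,
\]
matching the target bound $2^{O(\min(a,b) \log(a+b))} \log n$; enumerating the pairs $(h,T)$ and writing out each $S_{h,T}$ runs in $2^{O(a \log(a+b))} n \log n$ time. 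Correctness is immediate: given disjoint $A, B \subseteq U$ with $|A| \le a$ and $|B| \le b$, apply the splitter to $X := A \cup B$ (of size at most $a+b$) to obtain some $h \in \mathcal H$ injective on $X$; setting $T := h(A)$ gives $|T| = |A| \le a$, and injectivity of $h$ on $X$ together with $A \cap B = \emptyset$ forces $h(B) \cap T = \emptyset$, so $A \subseteq S_{h,T}$ and $B \cap S_{h,T} = \emptyset$, as required.

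The subtle point, and the main obstacle, is the choice of image size $(a+b)^2$ instead of $a+b$ in the hashing step. A genuine $(n, a+b)$-perfect hash family to $[a+b]$ requires size at least $e^{a+b} \log n$ by the standard probabilistic lower bound, which would overwhelm the target bound whenever $a \ll b$. The NSS construction circumvents this by allowing quadratic slack in the image, trading the exponential factor $e^{a+b}$ for the polynomial factor $(a+b)^{O(1)}$; the enumeration step pays only $\binom{(a+b)^2}{a} \le (a+b)^{2a} = 2^{O(a \log(a+b))}$ for this slack, which is well within the target. Everything else (listing the pairs $(h,T)$ and the disjointness check) is routine bookkeeping.
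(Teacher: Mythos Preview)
The paper does not prove this lemma at all: it is quoted verbatim as Lemma~1.1 of Chitnis, Cygan, Hajiaghayi, Pilipczuk and Pilipczuk and used as a black box. Your proposal is correct and is essentially the original argument from that reference (hash $U$ into $[(a+b)^2]$ via an NSS splitter family of size $(a+b)^{O(1)}\log n$, then enumerate all $\le a$-subsets of the image); there is nothing to compare against in the present paper.
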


In our case, $U$ is the set of edges, and $a$ and $b$ are respectively $2\beta$ and $\nu$ for the red-blue coloring, and $2\beta-1$ and $\lmax$ for the green-yellow coloring. 
Plugging in $\lmax, \nu = n^{o(1)}$ and $\beta = O(1)$ we have that at most $n^{o(1)}$ many of these colorings are necessary to ensure the properties that we need.

Another feature of this construction is that it is remarkably robust against updates: indeed, if an edge is deleted, the colorings do not have to change, and the properties we require still hold. 
If another edge is then inserted, then it can take over the colors of the deleted edge, and the properties still hold, as it is a set-theoretic result, and not a graph-theoretic one. 
Therefore, when computing colorings, one can do so for $2m$ many edges, and assign only the first $m$ colors of a coloring to the existing edges, sparing the $m$ other ones for potential edge insertions. 
These colorings will thus be valid until the number of edges in the graph doubles. When this happens we can afford to compute new family of colorings. 
Overall, we spend $m^{1+o(1)}$ time to compute these colorings, that we can amortize over at least $m$ updates, for an update time of at most $n^{o(1)}$.

\subsubsection{Getting a better approximation ratio}

In~\cite{DBLP:conf/soda/El-HayekH025}, the approximation ratio is $\left(1+\Theta(\frac 1 {\sqrt{\log n}})\right)$. 
As a consequence, if we want to use this algorithm to compute an exact answer, the minimum cut can only be as large as $O(\sqrt{\log n})$, as otherwise the approximation ratio is not small enough to guarantee that the answer is exact.

The reason for such a weak approximation ratio is that cutting along  $(1-\delta)$-boundary sparse cuts is done in an arbitrary order.
As a result, we only achieve an upper bound on the total number of inter-cluster edges created between any two rebuilds of the expander decomposition of roughly $M2^{O(\frac 1 \delta)}$, where $M$ is the number of inter-expander edges in the expander decomposition, which is typically $m\phi = m 2^{-\Theta(\log^{3/4} n)}$.
We need $\phi$ to dominate $2^{O(\frac 1 \delta)}$ to ensure that the total number of inter-cluster edges can still be bounded by  %
$m 2^{-\Theta(\log^{3/4} n)}$
and, thus, in~\cite{DBLP:conf/soda/El-HayekH025} $\delta$ cannot be truly smaller than $\Omega(\frac 1 { {\log^{3/4} n}})$. 
In this work we aim for $\delta= 2^{-O(\log ^{3/4-c}n)}$ with $c>0$, and thus need a cluster decomposition with at most $O(\frac  M {\poly (\delta)})$ many inter-cluster edges. 
Since $M=O(m\phi)$, the $\phi$ factor dominates the $\frac 1 {\poly(\delta)}$ factor, as $\phi = 2^{-\Theta(\log^{3/4} n)}$ and $\delta= 2^{-O(\log ^{3/4-c}n)}$.
We are, thus, able to show an approximation ratio of $1 + 2^{-O(\log ^{3/4-c}n)}$, which implies that our new algorithm gives an exact solution for $\lmax=2^{O(\log^{3/4-c} n)}$ with $c>0$, improving on the result of~\cite{DBLP:conf/soda/El-HayekH025}.

We now explain how to achieve a better approximation ratio.
Henzinger, Li, Rao, and Wang, who presented the first near-linear time deterministic \emph{static} exact minimum cut algorithm in weighted graphs~\cite{DBLP:conf/soda/HenzingerLRW24}, also cut clusters along $(1-\delta)$-boundary sparse cut. They are able to get the desired $O(\frac M {\poly (\delta)})$ upper bound on the number of inter-cluster edges by performing the cuts in a different manner than~\cite{DBLP:conf/soda/El-HayekH025}: If the cluster has high boundary-size (say, at least $3\lmax$), they cut the cluster along any boundary sparse cuts;
if, however, the cluster has small boundary-size, then they call a subroutine, which we name \emph{fragmenting}, which carefully selects the order in which the $(1-\delta)$-boundary sparse cuts are performed, and outputs which edges need to be cut to get the decomposition -- that is, which edges need to become inter-cluster edges.
Fragmenting runs in time polynomial in the volume of the cluster that is decomposed, but this is not a problem for them as the clusters they consider have only volume polylogarithmic in $n$.

We cannot run this algorithm in a black-box manner for the following reasons: (a) The cluster we want to fragment might have large volume as they might be the expanders of the expander decomposition, and (b) their algorithm is static and not dynamic.
Thus, we instead adapt their fragmenting algorithm  to our needs, solving both problems simultaneously: we give a static deterministic algorithm that runs in $n^{o(1)}$ time on clusters of boundary-size $O(\lmax)=n^{o(1)}$ and has output of size $\tilde O(\frac \lmax{\poly (\delta)})=n^{o(1)}$.
This is fast enough so that we can simply rerun it after every update on each affected cluster. Running the algorithm from scratch after every update makes it dynamic: Since its output is of size $\tilde O(\frac \lmax {\poly (\delta)})$, which corresponds to the cut-size of the cuts along which the cluster has to be partitioned, the recourse is $\tilde O(\frac \lmax {\poly (\delta)})$. 
Its static running time of $n^{o(1)}$ translates then in a  $n^{o(1)}$ time per update on clusters of boundary-size at most $O(\lmax)$.

To speed up the algorithm we rely on two observations: the first one is that, for clusters $C$ of small boundary-size, that is, $\boundary C = O(\lmax)$, the running time of the fragmenting algorithm is dominated by the time needed to find the set of all $(1-\delta)$-boundary sparse cuts in the cluster.
We can replace this step by running our version of LocalKCut instead, only looking at cuts of small volume.
Since any $(1-\delta)$-boundary sparse cut must have at least one boundary edge, it thus suffices to run LocalKCut from every node incident to a boundary edge. This takes $n^{o(1)}$ time per LocalKCut query. As the number of queries is $O(\lmax)=n^{o(1)}$ exploring only volume $\nu = n^{o(1)}$ in $C$, this leads to a $\lmax n^{o(1)}=n^{o(1)}$ overall running time for the fragmenting algorithm.

The second observation is that the number of edges that our fragmenting algorithm cuts, that is, the number of inter-cluster edges that it creates, is $\tilde O(\frac {\boundary C} {\delta ^2})$, as we can show that it creates at most $\tilde O(\frac 1 {\delta^2})$ many clusters of cut-size at most $\boundary C=O(\lmax)$ each. This number increases the recourse of our cluster decomposition.
As long as $\delta = 2^{-O(\log ^{3/4-c}n)}$ and $\lmax = 2^{O(\log ^{3/4-c}n)}$, the additive increase in the recourse is still dominated by $\frac 1 \phi$, and thus we can afford such a recourse increase for the cluster decomposition.
We can therefore run our fragmenting algorithm
from scratch on every cluster that is affected by an update of the dynamic expander decomposition as both the running time and the recourse stay sufficiently bounded. 

 Overall, to maintain our cluster decomposition, we run three dynamic algorithms back-to-back: the dynamic expander decomposition, the pre-cluster decomposition that cuts along $(1-\delta)$-boundary-sparse cuts arbitrarily as long as the clusters have boundary-size at least $\Omega(\lmax)$, and the fragmenting algorithm that only runs on clusters of boundary-size $O(\lmax)$ and that chooses which  $(1-\delta)$-boundary-sparse cuts to cut and in which order.
The expander decomposition has recourse $\rho = 2^{O(\sqrt{\log n})}$, the pre-cluster decomposition has recourse $O(\frac 1 \delta)$ for each update caused by the expander decomposition, while the fragmenting algorithm has recourse $\tilde O(\frac \lmax {\delta^2})$ for each update caused by any of the prior two decomposition algorithms. 
Overall we thus get a recourse which is a product of these three factors, i.e., of $\tilde O(\frac {\rho \lmax} {\delta^3}) = 2^{O(\log^{3/4-c} n)}$ per level of recursion. As we show that the recursive depth is at most $O(\sqrt[4]{\log n})$, this yields a total recourse of $2^{O(\log^{1-c} n)}$, which we can process in $n^{o(1)}$ time per change, thus $n^{o(1)}$ overall time. 

Hence, we get an algorithm that in $n^{o(1)}$ time per update maintains a cluster hierarchy, that is, a collection of graphs associated to a cluster decomposition each, where each graph is obained from the previous graph by contracting each cluster. As in~\cite{DBLP:conf/soda/El-HayekH025}, we show that any minimum proper cut is well-approximated by a cut of small volume in a graph of this hierarchy. We thus maintain additionally, for each vertex $v$, the best proper cut of small volume in which $v$ is contained if there exists one small enough, store them in a heap data structure, and simply return the cut of smallest cut-size among all these cuts to obtain our result and return an approximate minimum cut.

\section{A deterministic LocalKCut data structure}
\label{sec:localkcut}

In this section, we present a deterministic LocalKCut data structure that achieves the following bounds:

\begin{theorem}\label{thm:deterministic-dynamic}
For any parameters $\lambda_{\max},\nu, \beta\in \N_+$ with $\lmax < \nu$, there is a deterministic, fully dynamic LocalKCut data structure that outputs, on each LocalKCut query specified by a vertex $v\in V$, a collection $\mathcal S$ of cuts $S\subseteq V$ containing $v$ such that
 \begin{enumerate}
 \item Each cut $S\subseteq V$ satisfies $\partial S\le\lambda_{\max}$, $\textbf{\textup{vol}}(S)\le\nu$, and $G[S]$ is connected.
 \item $\mathcal S$ contains all $\beta$-approximate mincuts $S\subseteq V$ satisfying condition~(1).
 \end{enumerate}
The data structure runs in $\tilde O( m \beta^2 (\lambda_{\max}\nu)^{O(\beta)})$ preprocessing time,  $\tilde{O}\left(\beta^4( \lambda_{\max}\nu)^{O(\beta)}\right)$ amortized update time and $\tilde O(\beta^2 (\lmax\nu  )^ {O(\beta)}) $ query time.
\end{theorem}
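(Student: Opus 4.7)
My plan is to implement the scheme sketched in the technical overview. At preprocessing I build (i) a greedy \emph{forest} packing $\mathcal T$ of $G$ (grown independently inside each connected component), of size $N_{\mathcal T}=\poly(\beta,\log n)$ chosen large enough that our generalization of Karger's theorem applies; and (ii) two edge-coloring families obtained from \Cref{lem:colorings}: a red/blue family $\mathcal F_1$ with parameters $(a,b)=(2\beta,\nu)$, and a green/yellow family $\mathcal F_2$ with parameters $(2\beta-1,\lambda_{\max})$. By \Cref{lem:colorings} their sizes are $|\mathcal F_1|=\nu^{O(\beta)}\log m$ and $|\mathcal F_2|=\lambda_{\max}^{O(\beta)}\log m$, each built in time $\tilde O(m)\cdot|\mathcal F_i|$. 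The data structure stores $G$, $\mathcal T$, and, for each edge, its color under every coloring.

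A LocalKCut query at $v$ iterates over every triple $(T,\chi_1,\chi_2)\in\mathcal T\times\mathcal F_1\times\mathcal F_2$ and runs a BFS from $v$ that is allowed to use only the edges which are either \emph{blue tree edges} of $T$ under $\chi_1$ or \emph{green non-tree edges} of $T$ under $\chi_2$; each BFS is aborted once its explored volume exceeds $\nu$, and otherwise the visited set $S$ is reported if $\partial S\le\lambda_{\max}$. Condition~(1) of the theorem is enforced by the closing check. For condition~(2), fix a $\beta$-approximate mincut $S$ meeting~(1). Our generalization of Karger's tree-packing bound supplies a tree $T^\star\in\mathcal T$ with $|E(T^\star)\cap\partial S|\le 2\beta$; set $F_r:=E(T^\star)\cap\partial S$ and $F_b:=E(T^\star)\cap E(G[S])$, and note $|F_b|\le|S|-1\le\vol(S)\le\nu$. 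By \Cref{lem:colorings} some $\chi_1^\star\in\mathcal F_1$ makes $F_r$ red and $F_b$ blue. Removing the $\le 2\beta$ edges of $F_r$ from $T^\star$ splits $T^\star[S]$ into at most $2\beta$ pieces, and since $G[S]$ is connected at most $2\beta-1$ non-tree edges inside $S$ suffice to reconnect them; collect these as $F_g$, and let $F_y:=\partial S\setminus E(T^\star)$ (of size $\le\lambda_{\max}$). By \Cref{lem:colorings} some $\chi_2^\star\in\mathcal F_2$ makes $F_g$ green and $F_y$ yellow. Under $(T^\star,\chi_1^\star,\chi_2^\star)$, the allowed edges $F_b\cup F_g$ span $S$, while every edge of $\partial S$ is either red (tree) or yellow (non-tree) and hence blocked; so the BFS visits exactly $S$ and the closing check records it.

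For dynamic maintenance, I exploit that \Cref{lem:colorings} is purely set-theoretic: deletions cannot break the covering property, and I build the coloring families once for a universe of capacity $2m$, so insertions are absorbed by assigning newly arrived edges to the unused color slots. I rebuild $\mathcal T$ and $(\mathcal F_1,\mathcal F_2)$ from scratch whenever $m$ doubles or halves, paying $\tilde O(m\beta^2(\lambda_{\max}\nu)^{O(\beta)})$ per rebuild; amortized over the $\Omega(m)$ updates between rebuilds this yields the claimed amortized update bound, with the extra polynomial-in-$\beta$ factor absorbed into the $\beta^4$ of the statement to cover the bookkeeping needed to keep edge-to-color pointers and the tree/non-tree status consistent on each update. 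A query performs $N_{\mathcal T}\cdot|\mathcal F_1|\cdot|\mathcal F_2|=\tilde O(\beta^2(\lambda_{\max}\nu)^{O(\beta)})$ BFSes of cost $O(\nu)$, matching the query bound. The main technical obstacle is the quantitative generalization of Karger's greedy tree-packing theorem that I invoked: I must prove that $\poly(\beta,\log n)$ greedy iterations already force every $\beta$-approximate minimum proper cut to be $2\beta$-respected by some tree of $\mathcal T$. A secondary subtlety is that the input may become disconnected during the update sequence, which forces the statement to be formulated per connected component and motivates maintaining a forest packing rather than a tree packing; once these two points are settled, the rest of the argument is assembly of \Cref{lem:colorings} with the BFS procedure above.
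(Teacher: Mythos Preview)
Your correctness argument and the query procedure are essentially identical to the paper's, and your handling of the coloring families (building them on a universe of size $2m$, reassigning freed color slots, rebuilding on doubling) is exactly right. There is, however, a genuine gap in the dynamic maintenance of the forest packing~$\mathcal T$.

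You propose to rebuild $\mathcal T$ only when $m$ doubles or halves, but between rebuilds the graph changes and the packing becomes stale: after a deletion, a deleted edge may lie in several of your trees, so they no longer span the current components; after an insertion, the packing is no longer greedy with respect to the new edge loads, and the $(1+\epsilon)$-approximation guarantee you need for \Cref{thm:respects} is lost. The sentence about ``keeping the tree/non-tree status consistent on each update'' acknowledges something must be done per update but does not say what. The paper closes this gap by maintaining the greedy forest packing \emph{dynamically}: it runs a dynamic minimum-spanning-forest data structure for each of the $\tilde O(\lambda_{\max}\beta^2)$ layers, and proves (via a load-monotonicity argument) that a single edge update changes at most $i$ edges in the $i$-th forest, giving $\tilde O(\lambda_{\max}^2\beta^4)$ amortized update time for the packing. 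This is where the $\beta^4$ in the update bound actually comes from.

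A smaller issue: the number of trees required for a greedy packing to be $(1+\epsilon)$-approximate is $\Theta(\lambda\log m/\epsilon^2)$, hence $\tilde O(\lambda_{\max}\beta^2)$, not $\poly(\beta,\log n)$ as you wrote. The $\lambda_{\max}$ factor is ultimately absorbed into $(\lambda_{\max}\nu)^{O(\beta)}$, but you should state it correctly since it also drives the update cost above.
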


The goal is to set $\beta=O(1)$, $\lmax =n^{o(1)}$ and $\nu = n^{o(1)}$ so that the preprocessing time is $m^{1+o(1)}$ and the update and query time are $n^{o(1)}$.

\subsection{Description of the algorithm and proof of correctness}
The full description of the algorithm can be found in \Cref{alg:detlocalkcut}, but we present it step by step here. We begin with some preliminaries from~\cite{thorup2000dynamic}.

A \emph{tree packing} is an assignment of weights to spanning trees of $G$ so that each edge $e$ gets \emph{load}
\[ \ell(e)=\sum_{T\ni e}w(T)\le1 .\]
The value of the tree packing is $\sum_Tw(T)$. We let $\tau$ denote the value of the maximum tree packing of $G$. The classic Nash-Williams' theorem relates $\tau$ to the mincut $\lambda$ of the graph.
\begin{theorem}[Nash-Williams~\cite{nash1961edge}]
$\lambda/2\le\tau\le\lambda$.
\end{theorem}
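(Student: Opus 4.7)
The plan is to establish the two inequalities separately, with the upper bound being elementary and the lower bound invoking the fractional version of Nash-Williams' classical tree-packing theorem. For the upper bound $\tau \le \lambda$, I would fix a minimum cut $S \subsetneq V$ with $\partial S = \lambda$ and observe that every spanning tree $T$ must contain at least one edge of $E(S, V\setminus S)$, since otherwise $T$ would live entirely inside $G[S] \cup G[V\setminus S]$ and fail to connect the two sides of the cut. For any tree packing $\{w(T)\}$ this gives
$$\sum_{e \in E(S, V\setminus S)} \ell(e) \;\ge\; \sum_{T} w(T),$$
and the left-hand side is at most $|E(S, V\setminus S)| = \lambda$ because each load $\ell(e) \le 1$. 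Taking the supremum over all valid packings yields $\tau \le \lambda$.

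For the lower bound $\tau \ge \lambda/2$, the plan is to invoke the fractional Nash-Williams characterization
$$\tau \;=\; \min_{P} \frac{|E(P)|}{|P|-1},$$
where $P$ ranges over partitions of $V$ into at least two nonempty parts and $E(P)$ is the set of inter-part edges. Given any such partition $P = \{V_1, \ldots, V_p\}$ with $p \ge 2$, each $V_i$ is a proper cut of $G$, so $\partial V_i \ge \lambda$. Since each inter-part edge contributes to exactly two of these boundaries, $2|E(P)| = \sum_i \partial V_i \ge p\lambda$, and therefore $|E(P)|/(p-1) \ge p\lambda/(2(p-1)) \ge \lambda/2$. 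Minimizing over $P$ then delivers $\tau \ge \lambda/2$.

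The only genuine obstacle is the fractional Nash-Williams duality itself, which I would cite as a black box rather than reprove; its derivation via matroid union or LP duality is classical and orthogonal to the present paper's contributions. Everything else reduces to a one-line cut-counting argument on each side, and the cut accounting works unchanged in the presence of parallel edges since both $\lambda$ and $|E(P)|$ are counted with multiplicity.
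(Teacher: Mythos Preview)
Your proof is correct and follows the standard route. Note, however, that the paper does not actually prove this theorem: it is stated as a classical result and attributed to Nash-Williams~\cite{nash1961edge} without argument, serving only as a black box input to \Cref{thm:respects}. So there is no ``paper's own proof'' to compare against here; your write-up supplies what the paper deliberately omits, and the argument you give (elementary cut-counting for $\tau \le \lambda$, the fractional Nash-Williams partition formula plus the inequality $p/(2(p-1)) \ge 1/2$ for $\tau \ge \lambda/2$) is exactly the textbook derivation.
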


A tree packing is \emph{$(1+\epsilon)$-approximate} if $\sum_Tw(T)\ge\tau/(1+\epsilon)$. In order to use the same parameters as prior work, we slightly abuse the notation in this section: The $\epsilon$ in this section is independent from the $\epsilon$ from all other sections. \
\begin{theorem}[Generalisation of \cite{karger2000minimum}]\label{thm:respects}
For any $\beta$-approximate mincut and any $(1+\epsilon)$-approximate tree packing, there is a tree in the packing that $\lfloor2(1+\epsilon)\beta\rfloor$-respects the $\beta$-approximate mincut.
\end{theorem}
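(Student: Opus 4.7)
The plan is to mimic Karger's original averaging argument, observing that his proof only used two facts about the minimum cut: an upper bound on the number of edges in the cut, and the lower bound $\tau \ge \lambda/2$ from Nash-Williams. The generalization simply inflates the cut size bound by $\beta$ and the tree-packing weight loss by $(1+\epsilon)$, which multiply together to give the stated $\lfloor 2(1+\epsilon)\beta\rfloor$.

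Concretely, let $C$ denote the edge set of the $\beta$-approximate mincut, so $|C|\le\beta\lambda$. First I would bound the total load on $C$: since every edge $e$ has load $\ell(e)=\sum_{T\ni e}w(T)\le 1$, summing over $e\in C$ yields
\[
\sum_{e\in C}\ell(e)\;\le\;|C|\;\le\;\beta\lambda.
\]
Second, I would rewrite the same quantity by swapping the order of summation:
\[
\sum_{e\in C}\ell(e)\;=\;\sum_{e\in C}\sum_{T\ni e}w(T)\;=\;\sum_{T}w(T)\,|T\cap C|.
\]
Thus $\sum_T w(T)\,|T\cap C|\le\beta\lambda$.

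Next, I would use the two hypotheses on the tree packing. By assumption the packing is $(1+\epsilon)$-approximate, so $\sum_T w(T)\ge\tau/(1+\epsilon)$, and by Nash-Williams $\tau\ge\lambda/2$. Dividing, the $w$-weighted average of $|T\cap C|$ over the trees in the packing satisfies
\[
\frac{\sum_T w(T)\,|T\cap C|}{\sum_T w(T)}\;\le\;\frac{\beta\lambda}{\lambda/(2(1+\epsilon))}\;=\;2(1+\epsilon)\beta.
\]
Since $|T\cap C|$ is a nonnegative integer, if every tree with $w(T)>0$ satisfied $|T\cap C|\ge\lfloor 2(1+\epsilon)\beta\rfloor+1$, the weighted average would strictly exceed $2(1+\epsilon)\beta$, a contradiction. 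Hence some tree $T$ with $w(T)>0$ has $|T\cap C|\le\lfloor 2(1+\epsilon)\beta\rfloor$, i.e.\ it $\lfloor 2(1+\epsilon)\beta\rfloor$-respects $C$.

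There is essentially no technical obstacle here; the proof is a direct averaging argument. The only subtlety worth flagging is the integer-rounding step, where one must argue at the level of a particular tree in the support of the packing rather than merely of the average, but this is immediate because $|T\cap C|$ takes integer values. The choice of the floor (rather than ceiling) in the statement matches the tightest bound one can get from this averaging.
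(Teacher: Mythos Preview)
Your proof is correct and is essentially the same argument as the paper's: the paper phrases it probabilistically (sample a tree with probability proportional to $w(T)$ and bound the expected number of cut edges), while you phrase it as a direct weighted average, but the two are line-for-line equivalent. Your handling of the floor via the integrality of $|T\cap C|$ is exactly what the paper's ``there exists a tree $T\in\mathcal T$ that $\lfloor2(1+\epsilon)\beta\rfloor$-respects'' step is doing implicitly.
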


\begin{proof}
Since the tree packing $\mathcal T$ is $(1+\epsilon)$-approximate, we have $\sum_Tw(T)\ge\tau/(1+\epsilon)$.
If we sample a random tree $T\in\mathcal T$ with probability proportional to $w(T)$, then the probability that the sampled tree contains a given edge $e$ is
\[ \frac{\sum_{T\ni e}w(T)}{\sum_Tw(T)} \le \frac1{\sum_Tw(T)}\le\frac1{\tau/(1+\epsilon)}. \]
For a given $\beta$-approximate mincut, the expected number of edges in the sampled tree is at most
\[ \beta\lambda\cdot\frac1{\tau/(1+\epsilon)}=(1+\epsilon)\beta\lambda/\tau ,\]
which is at most $2(1+\epsilon)\beta$ by Nash-Williams' theorem. It follows that there exists a tree $T\in\mathcal T$ that $\lfloor2(1+\epsilon)\beta\rfloor$-respects the $\beta$-approximate mincut.
\end{proof}

Our goal now is to have good approximate tree packings. An easy way to get an approximate tree packing is by looking at  \emph{Greedy Tree Packings}, popularized by Karger and Thorup~\cite{thorup2000dynamic, thorup2007fully}. 
Let $H$ be a connected unweighted graph.
A greedy tree packing with $k$ trees is a packing obtained using the following procedure: start with $H$ with weights $0$ on all edges. Take a minimum spanning tree on $H$, add it to the packing, then increase the weights of all edges included in that tree. Repeat the instructions in the former sentence until $k$ trees are in the packing.

\begin{theorem}[Lemma 5 of~\cite{thorup2007fully}]
    A greedy tree packing with at least $6\lambda \log m /\epsilon^2$ trees is $(1+\epsilon)$-approximate.
\end{theorem}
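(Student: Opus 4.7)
The plan is to analyze the greedy procedure as an instance of the multiplicative weights framework, which is the standard route for LP-packing problems. After $k := \lceil 6\lambda \log m / \epsilon^2 \rceil$ iterations, I want to show that $L_k := \max_e \ell_k(e) \le (1+\epsilon)\,k/\tau$; scaling the uniform weight $1$ on each chosen tree by $1/L_k$ then yields a feasible tree packing of total value $k/L_k \ge \tau/(1+\epsilon)$, which is the claim.

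The core argument uses the potential $\Phi_i := \sum_e \exp(\eta\,\ell_i(e))$ for a parameter $\eta = \Theta(\epsilon/\lambda)$ to be tuned at the end, with $\Phi_0 = m$. In iteration $i$, the chosen tree $T_i$ raises the load of each of its $n-1$ edges by one, so
\[
\Phi_i - \Phi_{i-1} \;=\; (e^\eta - 1)\sum_{e \in T_i} \exp(\eta\,\ell_{i-1}(e)).
\]
If $T_i$ were the minimum-weight spanning tree under the exponential weights $\exp(\eta\ell_{i-1})$, an averaging argument against any optimum fractional tree packing $\{w^*(T)\}$ would bound the inner sum by $\Phi_{i-1}/\tau$: expanding $\sum_T w^*(T) \sum_{e \in T} \exp(\eta \ell_{i-1}(e))$, swapping sums, and using $\sum_{T \ni e} w^*(T) \le 1$ gives an average tree-weight of at most $\Phi_{i-1}/\tau$, and the minimum over trees is no larger. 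Hence $\Phi_i \le \Phi_{i-1}(1 + O(\eta/\tau))$, so $\Phi_k \le m \exp(O(\eta k/\tau))$, and combining with $\exp(\eta L_k) \le \Phi_k$ yields $L_k \le (\log m)/\eta + O(k/\tau)$. Balancing the two terms with $k = \Omega(\lambda \log m / \epsilon^2)$ and using Nash-Williams' $\tau \ge \lambda/2$ to pass from $\tau$ to $\lambda$ in the constant produces $L_k \le (1+\epsilon)\,k/\tau$.

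The main obstacle is that the greedy procedure picks $T_i$ as the minimum spanning tree with respect to the \emph{linear} loads $\ell_{i-1}$, not with respect to the exponential weights that drive the potential. I would close this gap by a self-referential induction: assuming $\eta L_{i-1} = O(\epsilon)$, the map $x \mapsto \exp(\eta x)$ is within a $(1 \pm O(\epsilon))$ multiplicative factor of an affine function on $[0,L_{i-1}]$, so the linear MST is within a $(1 + O(\epsilon))$ factor of the exponential MST in total weight. This only inflates the per-step increment of $\Phi$ by another $(1+O(\epsilon))$ factor, which can be absorbed into the final bound by re-tuning $\eta$. The delicate part is closing this induction cleanly: one must argue simultaneously that the linear MST is a near-optimal exponential MST \emph{and} that the loads never exceed the inductive bound, which is handled by picking $\eta$ conservatively and verifying consistency at the end.
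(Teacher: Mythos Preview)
The paper does not prove this statement; it simply cites it as Lemma~5 of Thorup~\cite{thorup2007fully}. So there is no paper proof to compare against, only the original source.

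Your overall multiplicative-weights outline is the right one and matches Thorup's analysis. However, the ``main obstacle'' you identify is not an obstacle at all, and your proposed self-referential induction is an unnecessary detour. Spanning trees are the bases of the graphic matroid, so the minimum spanning tree depends only on the \emph{relative order} of the edge weights (Kruskal's algorithm sorts edges and adds greedily). Since $x \mapsto \exp(\eta x)$ (or $x \mapsto (1+\eta)^x$) is strictly increasing, the MST with respect to the linear loads $\ell_{i-1}$ is \emph{exactly} the MST with respect to the exponential weights $\exp(\eta\,\ell_{i-1})$. Hence the tree $T_i$ chosen by the greedy procedure genuinely minimizes $\sum_{e\in T}\exp(\eta\,\ell_{i-1}(e))$ over all spanning trees, and your averaging argument against the optimal fractional packing applies directly with no approximation loss at this step. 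Once you drop the induction and use this observation, the rest of your sketch goes through cleanly and recovers Thorup's bound.
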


We modify this slightly and consider a \emph{Greedy Forest Packing}, where we take the exact same procedure as for greedy tree packings, but the graph $G$ does not have to be connected, and the minimum spanning tree is replaced by a minimum spanning forest, that is, a forest where each tree is a minimum spanning tree of its connected component in $G$.

The idea here is that a greedy forest packing is essentially a greedy tree packing on each of the connected components of $G$, which has the nice properties discussed above that we will harness.

We first discuss how to maintain a greedy forest packing dynamically:

\begin{theorem}[Theorem 8 of \cite{holm2001poly}]\label{thm:dynforest}
    There is a fully dynamic deterministic Minimum Spanning Forest algorithm that for a graph with $n$ vertices, starting with no edges, maintains a minimum spanning forest in $O(\log^4 n)$ amortized time per edge insertion or deletion.
\end{theorem}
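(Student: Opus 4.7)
The plan is to implement the classical level-based dynamic graph technique due to Holm, Lichtenberg, and Thorup. I would assign each edge $e\in E$ a \emph{level} $\ell(e)\in\{0,1,\ldots,\lfloor\log_2 n\rfloor\}$, initially $0$, and maintain two global invariants: (i) the MSF $F$ is a minimum spanning forest of the graph restricted to edges at level $\ge 0$; (ii) for each level $i$, the sub-forest $F_i\subseteq F$ consisting of the tree edges of level $\ge i$ satisfies that every component of $F_i$ has at most $n/2^i$ vertices. For each level $i$, maintain an Euler-tour tree on $F_i$ augmented so that (a) non-tree edges of level $i$ are stored at their endpoints inside the ET-tree, and (b) each tree node supports subtree-min over the weights of tree edges and, separately, subtree-presence queries for non-tree edges. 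Each ET-tree operation costs $O(\log n)$.

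I would handle the three operations as follows. \emph{Insertion of $e=(u,v)$ with weight $w$:} if $u,v$ lie in different components of $F_0$, add $e$ as a new tree edge at level $0$ (one link in every ET-tree of level $0$); otherwise find the heaviest tree edge $e'$ on the $u$–$v$ path in $F_0$ using the weight-augmented ET-tree, and if $w<w(e')$ replace $e'$ by $e$ in $F$, demoting $e'$ to a non-tree edge at level $0$. \emph{Deletion of a non-tree edge:} remove it from the level-$\ell(e)$ ET-tree. \emph{Deletion of a tree edge $e$ with $\ell(e)=\ell$:} this splits one component of each $F_i$ with $i\le\ell$ into two pieces $T_1^{(i)},T_2^{(i)}$; for $i=\ell,\ell+1,\ldots$, pick the smaller of $T_1^{(i)},T_2^{(i)}$ and enumerate its level-$i$ non-tree edges one by one. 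Each enumerated edge is either (a) a genuine replacement (its other endpoint lies on the opposite side), kept as a candidate, or (b) entirely inside the smaller side, in which case we promote its level to $i+1$. After all level-$i$ edges have been examined, if some candidate exists take the minimum-weight one, link it into $F$, and stop; otherwise continue at level $i+1$.

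The correctness of invariant~(ii) under promotion is the key point: the smaller side of the split has at most half the size of the original level-$i$ component, which by induction has size $\le n/2^i$, so after promotion it sits in a component of $F_{i+1}$ of size at most $n/2^{i+1}$. For amortization, each edge's level can only grow, so there are at most $m\log n$ promotions in total, each costing $O(\log n)$ ET-tree work; together with the deterministic $O(\log n)$ ET-tree cost of every other bookkeeping step, the connectivity version would already give $O(\log^2 n)$ amortized time. The extra two logarithmic factors come from what is needed to turn a spanning forest algorithm into a \emph{minimum} spanning forest algorithm: to find the heaviest cycle-edge on insertion, and to extract the \emph{minimum}-weight replacement on deletion, we store the non-tree edges incident to each ET-tree node in an auxiliary weight-sorted balanced BST, and we weight-augment the ET-tree to support subtree-max on tree-edge weights, both of which add another $O(\log n)$ factor per query and, via a slightly more intricate potential argument, another $O(\log n)$ amortized factor on top of the promotion cost.

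The main obstacle is making the MSF-specific bookkeeping fit inside the amortized scheme: the spanning-forest analysis charges promotions cleanly to level increases, but introducing weights forces us to re-sort the non-tree edges of a vertex every time the vertex's containing ET-tree is restructured, and to re-evaluate subtree minima after every link/cut. Proving that the combined cost of these auxiliary operations is also dominated by $O(m\log^3 n)$ globally — so that it amortizes to $O(\log^4 n)$ per update — is the technically delicate step and is precisely where the extra logarithmic factors of Holm–Lichtenberg–Thorup are spent.
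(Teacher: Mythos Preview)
The paper does not prove this theorem at all: it is quoted verbatim as Theorem~8 of Holm, de~Lichtenberg, and Thorup and used as a black box. There is therefore no ``paper's own proof'' to compare your attempt against; you have gone well beyond what the paper does by trying to reconstruct the HLT argument.

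As for the sketch itself, there is a genuine error in the deletion routine. When a tree edge $e$ with $\ell(e)=\ell$ is removed, the replacement search in HLT proceeds \emph{downward} through the levels, $i=\ell,\ell-1,\ldots,0$, not upward as you wrote. A replacement for a level-$\ell$ tree edge must have level at most $\ell$ (both its endpoints were in the same $F_\ell$ component before deletion), so looking at levels $\ell+1,\ell+2,\ldots$ would never find it. Relatedly, the MSF variant of HLT also promotes the \emph{tree} edges of the smaller side at each level before enumerating non-tree edges, and enumerates the level-$i$ non-tree edges in increasing weight order so that the first reconnecting edge is automatically the minimum-weight replacement; your ``collect candidates, then take the minimum'' description misses the invariant that makes this greedy step correct. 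Finally, you yourself flag the amortized bound as ``the technically delicate step'' without actually carrying it out, so the proposal is a plan rather than a proof.
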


As discussed in Section 3.2 of~\cite{thorup2000dynamic}, by adding priorities to the edges, one can ensure that there is a unique minimum spanning forest in each graph.
This leads to the following lemma, which we prove similarly to~\cite{thorup2000dynamic}:

\begin{lemma}
    Each insertion or deletion of an edge changes at most $i$ edges in the $i$-th forest.
\end{lemma}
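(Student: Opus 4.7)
The proof is by induction on $i$. For the base case $i=1$, the first forest $F_1$ is the unique MSF of $G$ with all weights zero (broken by edge priorities); a single edge insertion or deletion changes the MSF by at most one swap, so at most one new edge enters $F_1$, establishing the base.

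For the inductive step, $F_i$ is the unique MSF of $G$ under the integer load $\ell_{i-1}(e) := |\{j < i : e \in F_j\}|$ with priority tie-breaking. After the update, $\ell_{i-1}$ differs from the pre-update load only at edges whose membership in some earlier $F_j$ ($j < i$) flipped, and each such edge has its load changed by $\pm 1$. A naive argument, charging one MSF swap to each weight change, yields an $O(i^2)$ cascade, which is insufficient. The linear-in-$i$ bound comes from the observation that the greedy construction \emph{pairs} the weight changes: each swap in $F_j$ simultaneously decrements the load at one edge (the one leaving $F_j$) and increments the load at another edge (the one entering $F_j$), and the two edges lie on a common cycle in $G$. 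Such a paired change can induce at most one additional swap at level $j+1$ rather than two.

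I would formalise this via an injective charging: each edge entering $F_i' \setminus F_i$ is attributed either to the original updated edge $e_0$ or to exactly one swap in $F_{i-1}$, via the cycle exchange connecting the pre- and post-update forests at the previous level. Injectivity of this charging, combined with the inductive bound of $i-1$ new edges in $F_{i-1}$ and the single slack unit coming from $e_0$, yields at most $i$ new edges in $F_i$, as claimed.

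The main obstacle will be establishing injectivity of the charging, i.e.\ verifying that two distinct swaps at level $i$ cannot be attributed to the same swap at level $i-1$. I expect this to rest on the matroid exchange property of the graphic matroid together with the uniqueness of the MSF under priority tie-breaking: these two facts together force the cascading weight changes to propagate as a single linear chain of exchanges rather than branching into multiple simultaneous swaps at the next level. A cleaner alternative route, which I would also consider, is to define an explicit ``difference forest'' between the pre- and post-update MSF sequences and show by direct combinatorial inspection that its structure at level $i$ contains at most $i$ non-trivial exchange edges.
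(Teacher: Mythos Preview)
Your approach via level-by-level charging differs from the paper's, and the gap you yourself identify---injectivity of the charging from level-$i$ swaps back to level-$(i{-}1)$ swaps---is a real obstacle, not a detail to fill in. A swap in $F_j$ simultaneously lowers one weight and raises another for the purposes of $F_{j+1}$; the two affected edges lie on a common cycle of $G$, but $F_{j+1}$ is a \emph{different} forest, so there is no reason these two unit changes cannot trigger two separate exchanges there. Your hoped-for linear cascade can in principle branch, and the matroid-exchange and uniqueness facts you invoke do not obviously prevent this.

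The paper avoids charging altogether by proving a \emph{monotonicity} property of the cumulative loads. For an edge deletion that does not disconnect the graph, one shows by induction on $j$ that $\ell_j^{\mathrm{old}}(e)\le \ell_j^{\mathrm{new}}(e)$ for every surviving edge $e$. The induction step is a short three-case analysis on whether a given edge $f$ lies in the $(j{+}1)$-th forest before and after; the only nontrivial case is when $f$ leaves the forest, where the fact that all competing weights only went up forces $f$'s own weight to have strictly increased at level $j$, compensating for the lost contribution at level $j{+}1$. Once monotonicity holds, the count is immediate: the total load $\sum_{e\neq(u,v)}\ell_j(e)$ over surviving edges increases by exactly $\ell_j^{\mathrm{old}}(u,v)\le j$, and since every individual change is a nonnegative integer, at most $j$ edges can have strictly larger $\ell_j$. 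That bounds the number of weight updates fed to the $j$-th MSF data structure. (If the deletion disconnects a component, the greedy packing restricted to each piece was already optimal, so only the deleted edge itself changes in every forest.)

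The idea you are missing is this monotonicity of cumulative loads; it collapses your branching analysis into a two-line counting argument on the total load.
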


\begin{proof}
    We will consider only edge deletions, as the edge insertions are treated symmetrically.

    Let $(u,v)$ be the deleted edge.
    \begin{itemize}
        \item If the edge deletion disconnects a component, then the edge load drops to $0$.
        On each connected component of $G$, the tree packing does not change, as otherwise the forest packing before the deletion was not greedy.
        We have thus only $1$ edge change on the $i$-th forest, for each $i$.
        \item If the edge deletion does not disconnect a component:  For any edge $e$, we define $\ell_i(e)$ to be the number of forests with index smaller or equal to $i$ that contain $e$. This is the \emph{load} of $e$ after forst $i$. 
    We distinguish $\ell_i^{old}(e)$ from $\ell_i^{new}(e)$, the former being the load of $e$ before the edge deletion, and the latter being the load of $e$ after the edge deletion.
    
    Then we prove by induction that, for every $j \ge 1$:
        $$
        \ell_j^{old} (e) \le \ell_j^{new(e)} \qquad \forall e \in E
        $$
        that is, the edge deletion only increases the loads for all $j$.

    This, coupled to the fact that for each $j$, the total load of the edges apart from $(u,v)$ is equal to $j\cdot(n-\text{number of components of } G)$, proves that at most $j$ edges $e$ satisfy $\ell_j^{old}(e)<\ell_j^{new}(e)$.
    \begin{itemize}
        \item Base case: for $j=1$, if an edge is deleted within a component, this edge is replaced by another one in the acyclic matroid.
        \item Induction step: Assume we have that $        \ell_j^{old} (e) \le \ell_j^{new(e)} \quad \forall e \in E $, and we want to show that $\ell_{j+1}^{old} (e) \le \ell_{j+1}^{new(e)} \quad \forall e \in E $.
        There are three cases:
        \begin{itemize}
            \item Either $f=(u,v)$ is in the $j+1$-th forest after the edge deletion.
            Then $\ell_{j+1}^{new}(f) = \ell_{j}^{new}(f)+1 \ge \ell_{j}^{old}(f)+1 \ge \ell_{j+1}^{old}(f)$,
            \item or  $f=(u,v)$ was not in the $j+1$-th forest before the edge deletion. Then $\ell_{j+1}^{old}(f) = \ell_{j}^{old}(f) \le \ell_{j}^{new}(f) \le \ell_{j+1}^{new}(f)$,
            \item or $f=(u,v)$ was in the $j+1$-th forest before the deletion and is not in the forest after the deletion.
            This means that before the edge deletions, $f$ was in the minimum spanning forest in $G$ with weights $\ell_j^{old}$, but is not anymore after the deletions with weights $\ell_j^{new}$.
            Since, by induction, none of the weights have decreased between $\ell_j^{old}$ and $\ell_j^{new}$, this means that the load on $f$ has strictly increased: $\ell_{j}^{new} \ge \ell_{j}^{old}+1$.
            Hence: $\ell_{j+1}^{new}(f) = \ell_{j}^{new}(f) \ge \ell_{j}^{old}(f)+1 = \ell_{j+1}^{old}(f)$.
        \end{itemize}
    \end{itemize}
    \end{itemize}
\end{proof}

This, in turn, ensures that we can maintain a dynamic greedy forest packing with $k$ trees in $\tilde O (k^2)$ time per update operations

\begin{theorem}[Direct Generalization of Theorem~4 of \cite{thorup2000dynamic}]\label{thm:dynamic-greedy-tree-packing}
For any parameter $\lambda_{\max}$, there is a fully dynamic algorithm that maintains a greedy forest packing of $\tilde{O}(\lambda_{\max}/\epsilon^2)$ trees in $\tilde{O}(\lambda_{\max}^2/\epsilon^4)$ amortized time per update. Whenever the graph has edge connectivity $\le\lambda_{\max}$, the tree packing is $(1+\epsilon)$-approximate. Whenever the edge connectivity of the graph is $> \lambda_{\max}$, the tree packing has no guarantees.
\end{theorem}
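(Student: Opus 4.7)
The plan is to maintain the packing layer by layer, using one dynamic Minimum Spanning Forest (MSF) data structure of \Cref{thm:dynforest} per layer, and to bound the total work by combining the per-layer recourse bound with the cost of each MSF operation.

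First, I would fix the number of forests to be $k=\lceil 6\lambda_{\max}\log m/\epsilon^2\rceil=\tilde O(\lambda_{\max}/\epsilon^2)$, which is the threshold needed by the greedy-tree-packing guarantee cited just above (applied component-wise on each connected component of $G$). I would maintain $k$ instances $\mathcal M_1,\ldots,\mathcal M_k$ of the dynamic MSF structure of \Cref{thm:dynforest}. Instance $\mathcal M_1$ operates on $G$ with weights $\ell_0\equiv 0$ (broken by a fixed priority on edges as in~\cite{thorup2000dynamic} to ensure uniqueness); instance $\mathcal M_i$ for $i\ge 2$ operates on $G$ with the weight of each edge $e$ set to its current load $\ell_{i-1}(e)$ after the first $i-1$ forests, again broken by the fixed priority. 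The current forest packing is then defined as $F_1,\ldots,F_k$ where $F_i$ is the MSF stored in $\mathcal M_i$. Uniqueness of each MSF plus the inductive definition of the weights makes this exactly a greedy forest packing in the sense introduced before the theorem.

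Next I would process each edge update to $G$ by cascading through the layers. An insertion or deletion of an edge $e$ in $G$ is first forwarded to $\mathcal M_1$, which at cost $O(\log^4 n)$ amortized returns the set of edges that enter or leave $F_1$; each such change alters $\ell_1$ on exactly one edge by $\pm 1$, and I forward each alteration to $\mathcal M_2$ as a weight change (a deletion followed by an insertion of that edge). I repeat this layer by layer up to $\mathcal M_k$. By the lemma immediately preceding the theorem, a single update to $G$ causes at most $i$ changes to $F_i$. Hence the total number of MSF operations triggered at layer $i$ per original update is $O(i)$, and summing over $i=1,\ldots,k$ gives $O(k^2)$ MSF operations in total, each costing $O(\log^4 n)$ amortized by \Cref{thm:dynforest}. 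The overall amortized time per update is therefore $O(k^2\log^4 n)=\tilde O(\lambda_{\max}^2/\epsilon^4)$, as required.

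For correctness of the approximation, I would invoke the greedy-tree-packing bound (Lemma 5 of~\cite{thorup2007fully}) on each connected component of $G$: as long as the minimum cut of $G$ is at most $\lambda_{\max}$, the chosen $k\ge 6\lambda_{\max}\log m/\epsilon^2$ suffices to make the greedy packing $(1+\epsilon)$-approximate on every component simultaneously, hence on $G$ as a whole. When the edge connectivity exceeds $\lambda_{\max}$, no guarantee is claimed. The main obstacle, which the previous lemma already handles, is controlling the cascade at higher levels: a priori a single update could ripple unboundedly through the $k$ MSF structures, but the monotonicity of loads proved by induction there caps the recourse at layer $i$ by $i$, which is exactly what makes the $O(k^2)$ bound go through.
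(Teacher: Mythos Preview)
Your proposal is correct and follows precisely the approach the paper intends: it only sketches the argument in one sentence (``This, in turn, ensures that we can maintain a dynamic greedy forest packing with $k$ trees in $\tilde O(k^2)$ time per update operations'') and then states the theorem as a direct generalization of Thorup's result, relying on the preceding recourse lemma and \Cref{thm:dynforest} exactly as you do. Your write-up simply fills in the details of that sketch.
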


Our dynamic algorithm calls \Cref{thm:dynamic-greedy-tree-packing} with parameters $\lambda_{\max}$ and $\epsilon=\frac 1 {3\beta}$. Remember that we want to find all cuts containing $v$ that have volume smaller than $\nu$, cut-size smaller than $\lmax$ and are $\beta$-approximations of the minimum cut. By \Cref{thm:respects}, this cut $2\beta$-respects a tree in the forest packing. Therefore, it suffices to implement the following dynamic data structure, and run the queries on each tree currently in the packing.

\begin{lemma}\label{lem:given-tree}
For any parameters $\lambda_{\max},\nu, \beta\in \N_+$, there is a fully dynamic algorithm that outputs, on each query specified by a vertex $v\in V$ and a spanning tree $T$ of the connected component of $v$, the collection of all cuts $S\subseteq V$ satisfying $\partial S\le\lambda_{\max}$, $\textbf{\textup{vol}}(S)\le\nu$, $G[S]$ is connected, and $S$ $2\beta$-respects the tree $T$.
The algorithm runs in $\tilde{O}(\lambda_{\max}\nu)^{O(\beta)}$ time per update and query.
\end{lemma}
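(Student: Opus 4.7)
The plan is to realize the deterministic derandomization sketched in Section~\ref{sec:technical overview}: I precompute two coloring families obtained from Lemma~\ref{lem:colorings}, and answer a query $(v,T)$ by running a BFS from $v$ for every pair of colorings. Specifically, I would invoke Lemma~\ref{lem:colorings} with $a=2\beta$, $b=\nu$ to obtain a family $\mathcal{F}_1$ of at most $(\lambda_{\max}\nu)^{O(\beta)}\log m$ red/blue colorings of $E(G)$, and with $a=2\beta-1$, $b=\lambda_{\max}$ to obtain a family $\mathcal{F}_2$ of at most $(\lambda_{\max}\nu)^{O(\beta)}\log m$ green/yellow colorings. On a query, for each pair $(\chi_1,\chi_2)\in\mathcal{F}_1\times\mathcal{F}_2$ the algorithm runs a BFS from $v$ that is allowed to traverse only edges that are either in $T$ and blue under $\chi_1$, or not in $T$ and green under $\chi_2$, aborting whenever the visited volume exceeds $\nu$. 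Every BFS that halts within budget yields a connected set $S\ni v$, and I would output $S$ iff $\partial S\le\lambda_{\max}$ (which can be checked by scanning the incident edges in $\tilde O(\nu)$ time).

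\textbf{Correctness.} For any $S$ satisfying the four conditions of the lemma, let $F_r\subseteq T$ be the tree edges crossing $S$ and $F_b\subseteq T$ the tree edges with both endpoints in $S$; then $|F_r|\le 2\beta$ by the $2\beta$-respect hypothesis and $|F_b|\le\vol(S)/2\le\nu$. By the defining property of $\mathcal{F}_1$, some $\chi_1\in\mathcal{F}_1$ colors every edge of $F_r$ red and every edge of $F_b$ blue. Since $G[S]$ is connected and $v\in S$, the set $S$ is contained in the connected component spanned by $T$. Removing $F_r$ from $T$ splits it into $|F_r|+1$ subtrees, and (unless $\partial S=0$, in which case $S$ is a whole component and the statement is trivial) at least one of these subtrees lies in $V\setminus S$, so $S$ is a union of at most $|F_r|\le 2\beta$ of them. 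Because $G[S]$ is connected, I can choose a set $F_g$ of at most $2\beta-1$ non-tree edges of $G[S]$ whose addition to $F_b$ connects $S$ into one piece. Setting $F_y=E(S,V\setminus S)$ so that $|F_y|\le\lambda_{\max}$, the defining property of $\mathcal{F}_2$ yields a $\chi_2$ that makes every edge of $F_g$ green and every edge of $F_y$ yellow. Under $(\chi_1,\chi_2)$, every BFS-admissible edge has both endpoints in $S$ (tree edges leaving $S$ are red, non-tree edges leaving $S$ are yellow), while $F_b\cup F_g$ spans $S$, so the BFS rooted at $v$ visits exactly $S$ and reports it.

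\textbf{Running time and dynamic maintenance.} The number of BFS calls per query is $|\mathcal{F}_1|\cdot|\mathcal{F}_2|=(\lambda_{\max}\nu)^{O(\beta)}$ up to polylog factors, and each BFS together with the boundary check runs in $\tilde O(\nu)$ time, giving the claimed query bound. To make the data structure dynamic, I would instantiate both families of Lemma~\ref{lem:colorings} over an abstract universe of $\Theta(m)$ \emph{slots} rather than over the current edge set; each present edge is assigned a distinct slot, and the coloring of an edge in a family member is determined by whether its slot lies in the corresponding subset. An insertion assigns the new edge a free slot and records its color in each family member (cost $\tilde O((\lambda_{\max}\nu)^{O(\beta)})$), a deletion frees the slot. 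The families remain valid as long as the edge count does not exceed the universe size; when it does, I rebuild both families from scratch in $\tilde O(m(\lambda_{\max}\nu)^{O(\beta)})$ time and double the universe, which amortizes to $\tilde O((\lambda_{\max}\nu)^{O(\beta)})$ per update over the $\Omega(m)$ updates needed to trigger a rebuild. The main delicate point is the correctness argument above, where Lemma~\ref{lem:colorings} must be invoked twice in order to isolate, first, the at-most $2\beta$ tree edges crossing $S$ and, second, the at-most $2\beta-1$ additional non-tree edges needed to certify that $G[S]$ is connected.
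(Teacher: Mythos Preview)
Your proposal is correct and follows essentially the same approach as the paper: two coloring families from Lemma~\ref{lem:colorings} with parameters $(2\beta,\nu)$ and $(2\beta-1,\lambda_{\max})$, a bounded-volume search from $v$ for every pair of colorings using blue tree edges and green non-tree edges, and dynamic maintenance via slot reuse with periodic rebuilding when the edge count doubles. The only cosmetic differences are that the paper uses DFS rather than BFS and bounds $|F_b|$ by $|S|-1$ rather than $\vol(S)/2$; your explicit handling of the $\partial S=0$ case and the observation that $F_g$ necessarily consists of non-tree edges are correct refinements.
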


More precisely, the proof of Theorem~\ref{thm:deterministic-dynamic} proceeds as follows: Given a query vertex $v\in V$, our dynamic algorithm runs \Cref{lem:given-tree} on the queries $(v,T)$ for each $T$ currently in the packing, and takes the union of all sets $\mathcal S$ returned by the queries. The final running time is $\tilde{O}(\lambda_{\max})$ many calls, one per tree, each taking $\tilde{O}(\lambda_{\max}\nu)^{O(\beta)}$ time, which is $\tilde{O}(\lambda_{\max}\nu)^{O(\beta)}$ overall.

It remains to prove \Cref{lem:given-tree}. The algorithm is inspired by the fixed-parameter tractable (FPT) algorithm for approximate minimum $k$-cut~\cite{lokshtanov2022parameterized}.  The algorithm maintains two families of $2$-colorings of edges of the graph. The first family consists of red-blue edge colorings such that for any disjoint sets $F_r,F_b$ of edges with $|F_r|\le2\beta$ and $|F_b|\le\nu$, there is a coloring with all edges in $F_r$ red and all edges in $F_b$ blue. We discuss the size and maintenance of this family later on. The second family consists of green-yellow edge colorings such that for any disjoint sets $F_g,F_y$ of edges with $|F_g|\le 2\beta -1$ and $|F_y|\le\lambda_{\max}$, there is a coloring with all edges in $F_g$ green and all edges in $F_y$ yellow. 

\textit{Algorithm for answering a LocalKCut Query.}
The algorithm iterates over all pairs of colorings, one red-blue coloring and one green-yellow coloring. For each pair of colorings, the algorithm runs a depth-first search in the subgraph $H$ consisting of blue edges in the tree $T$ and green edges from the graph that are not in $T$.
If this depth-first search for this pair of colorings encounters more than $\nu$ edges, then it terminates early, spending time  $O(\nu)$.
Let $S\subseteq V$ be the set of vertices visited by the search. If $S$ satisfies  $\partial S\le\lambda_{\max}$, $\textbf{\textup{vol}}(S)\le\nu$, $G[S]$ is connected, and $S$ $2\beta$-respects the tree $T$, then add it to the final collection $\mathcal S$ then proceeds to the next pair of colorings.

\sloppy

\textit{Correctness.}
We claim that any set $S$ satisfying the above requirements is added to $\mathcal S$ for some pair of colorings. To prove this, it suffices to show that $S$ is precisely the connected component containing $v$ in the subgraph $H$ on some pair of colorings, so that the depth-first search visits exactly the vertices in $S$. 
Since $|S|\le\textbf{\textup{vol}}(S)\le\nu$, there are at most $|S|-1\le\nu-1$ many tree edges in $G[S]$, and since $S$ $2\beta$-respects $T$, there are at most $2\beta$ tree edges in $\partial S$. So there is a red-blue coloring with all tree edges in $G[S]$ colored blue, and all tree edges in $\partial S$ colored red. Moreover, since $S$ $2\beta$-respects $T$, the tree edges in $G[S]$ form at most $2\beta$ connected components.
Let  $1 \le j<2\beta$ be the number of connected components formed by the tree edges in $G[S]$. Then these connected components are connected in $G$ by exactly $j-1$ non-tree edges $\mathcal E$  because $G[S]$ is connected. There is a green-yellow coloring such that the edges $\mathcal E$ (if $\mathcal E \not= \emptyset$)  are green  and all edges in $\partial S$ are yellow (of which there are at most $\lambda_{\max}$). 
In the iteration where these two colorings are considered, by construction $S$ is the connected component containing $v$ in the graph $H$, and we are done.

It remains to bound the number of colorings. We use the lemma below.

\howmanycolorings*

At the beginning of the algorithm, and at every point where the number of edges double compared to the last recomputation, we set $U$ to be a universe of elements of size at most $2m$ (where $m$ is the number of edges at this point in time) and associate each current edge of the graph with a unique element in $U$. On each edge insertion, associate the new edge to an unassigned element in $U$. If there are no unassigned elements remaining, then recompute $U$ to be double the size. This way, the time spent recomputing is amortized among the edge insertions.

For the family of red-blue colorings, we set $a=2\beta$ and $b=\nu$ and map each set $S\in\mathcal F$ to the coloring that colors all edges with associated element in $S$ red and all other edges blue. This family has size $(2\beta)^{O(\log\nu)}\log n=\nu^{O(\beta)}\log n$. For the family of green-yellow colorings, we set $a=2\beta-1$ and $b=\lambda_{\max}$ and map each set $S\in\mathcal F$ to the coloring that colors all edges with associated element in $S$ green and all other edges yellow. This family has size $(2\beta)^{O(\log\lambda_{\max})}\log n=\lambda_{\max}^{O(\beta)}\log n$.

The pseudocode of the algorithm is given next.

\begin{algo}[Deterministic LocalKCut]\label{alg:detlocalkcut}
    Input:
    \begin{enumerate}
        \item A dynamic unweighted graph $G$
        \item A maximum cut-size $\lmax \in \N_+$
        \item A maximum volume $\nu \in \N_+$, $\nu > \lmax$
        \item An approximation ratio $\beta \in \N_+$, $\beta < \lmax$.
    \end{enumerate}
    Preprocessing and maintained variables:
    \begin{enumerate}
        \item \label{detlocalkcutpre1} Set $\epsilon = \frac 1 {3\beta}$
        \item \label{detlocalkcutpre2} Start an instance of \Cref{thm:dynamic-greedy-tree-packing} -- a greedy forest packing with $6 \lmax \log m /\epsilon ^2$ forests.
        \item \label{detlocalkcutpre3} Create red-blue colorings of $2m$ edges with $a=2\beta$ and $b=\nu$ using \Cref{lem:colorings}.
        \item \label{detlocalkcutpre4} Create yellow-green colorings of $2m$ the edges with $a=2\beta$, $b=\lmax$ using \Cref{lem:colorings}.
    \end{enumerate}
    Processing an edge insertion:
    \begin{enumerate}
        \item \label{detlocalkcutins1} Feed it to \Cref{thm:dynamic-greedy-tree-packing}, updating the greedy forest packing.
        \item \label{detlocalkcutins2} Give it an unused color for each of the colorings. If there are no unused colors, create new colorings of double the size.
    \end{enumerate}
     Processing an edge deletion:
    \begin{enumerate}
        \item Give it to \Cref{thm:dynamic-greedy-tree-packing}, updating the greedy forest packing.
        \item Free its colors in each coloring to make it available for a future edge insertion.
    \end{enumerate}
    Processing a request on node $v$:
    \begin{enumerate}
        \item $\mathcal S \leftarrow \varnothing$
        \item For each forest, red-blue and yellow-green colorings:
        \begin{enumerate}
            \item Do a DFS starting at $v$, exploring edges that are blue if they are in the forest, yellow if not in the tree, capped at $\nu$ volume visited. 
            \item If the DFS terminates before the cap, let $S$ be the set of explored vertices. If $\boundary S \le \lmax$, add $S$ to $\mathcal S$.
        \end{enumerate}
        \item return $\mathcal S$.
    \end{enumerate}
    
\end{algo}

\subsection{Running time analysis}

\begin{lemma}
    The preprocessing time of \Cref{alg:detlocalkcut} is $\tilde O( m \beta^2 (\lambda_{\max}\nu)^{O(\beta)})$.
\end{lemma}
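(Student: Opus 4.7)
The preprocessing consists of four steps (lines~\ref{detlocalkcutpre1}--\ref{detlocalkcutpre4} of Algorithm~\ref{alg:detlocalkcut}), and the strategy is simply to bound each one and sum. Step~\ref{detlocalkcutpre1} is a single assignment, costing $O(1)$.

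For step~\ref{detlocalkcutpre2}, the greedy forest packing of \Cref{thm:dynamic-greedy-tree-packing} is instantiated with parameter $\epsilon = 1/(3\beta)$, giving $\tilde O(\lambda_{\max}/\epsilon^2) = \tilde O(\lambda_{\max}\beta^2)$ forests. I would build it from an empty graph by feeding it the $m$ initial edges one by one. By \Cref{thm:dynamic-greedy-tree-packing} each update costs $\tilde O(\lambda_{\max}^2/\epsilon^4) = \tilde O(\lambda_{\max}^2\beta^4)$ amortized time, so this stage takes $\tilde O(m \lambda_{\max}^2 \beta^4)$ total.

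For steps~\ref{detlocalkcutpre3} and~\ref{detlocalkcutpre4}, I invoke \Cref{lem:colorings} on a universe of size $2m$. For the red-blue family I take $a = 2\beta$ and $b = \nu$, and since $\beta < \lambda_{\max} < \nu$ we have $\min(a,b) = 2\beta$, giving construction time
\[
2^{O(\beta \log(\beta + \nu))}\cdot 2m \log(2m) \;=\; \nu^{O(\beta)}\, m \log m .
\]
Likewise for the yellow-green family with $a = 2\beta$ and $b = \lambda_{\max}$, since $\beta < \lambda_{\max}$ the construction time is $\lambda_{\max}^{O(\beta)}\, m \log m$.

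Summing the four bounds,
\[
O(1) + \tilde O(m\lambda_{\max}^2\beta^4) + \nu^{O(\beta)} m\log m + \lambda_{\max}^{O(\beta)} m\log m ,
\]
and absorbing the polynomial $\lambda_{\max}^2\beta^2$ factor into $(\lambda_{\max}\nu)^{O(\beta)}$ (which dominates it for $\beta\ge 1$), the sum collapses to $\tilde O\!\bigl(m \beta^2 (\lambda_{\max}\nu)^{O(\beta)}\bigr)$, as claimed. There is no real obstacle here beyond citing the correct ingredients and verifying that the parameter regime $\beta < \lambda_{\max} < \nu$ makes $\min(a,b) = 2\beta$ in both invocations of \Cref{lem:colorings}.
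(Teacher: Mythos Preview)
Your proposal is correct and follows essentially the same approach as the paper: bound each of the four preprocessing steps and sum. The only minor difference is in Step~\ref{detlocalkcutpre2}: the paper initializes the packing by inserting each of the $m$ edges into each of the $\tilde O(\lambda_{\max}\beta^2)$ dynamic MSF instances directly (via \Cref{thm:dynforest}), obtaining the slightly tighter $\tilde O(m\lambda_{\max}\beta^2)$, whereas you feed the edges through the black-box amortized bound of \Cref{thm:dynamic-greedy-tree-packing} and get $\tilde O(m\lambda_{\max}^2\beta^4)$; both are absorbed into the stated bound.
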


\begin{proof}
    Step~\ref{detlocalkcutpre1} of preprocessing takes constant time. Step~\ref{detlocalkcutpre2} takes $O(m \cdot \log^4m \cdot 6\lmax \log m /\epsilon^2)$. Indeed, we initialize the forests by inserting each edge one by one in the first instance of \Cref{thm:dynforest}, then on to the second one, and so on until the last one. Each edge inserted takes $O(\log^4 n)$ time, and there are $O(m)$ many edge, each inserted into $6\lmax \log m /\epsilon^2$ many trees. Setting $\epsilon$ to $\frac 1 {3 \beta}$ we get that Step~\ref{detlocalkcutpre2} takes $\tilde O(m \lmax \beta^2)$.
    Step~\ref{detlocalkcutpre3} takes, by \Cref{lem:colorings}, $2^{O(\min(2\beta, \nu)\log (2\beta + \nu))} m\log m = \tilde O( m \nu ^ {O(\beta)}) $, where we use $\beta < \nu$.  Step~\ref{detlocalkcutpre4} takes, by \Cref{lem:colorings}, $2^{O(\min(2\beta, \lmax)\log (2\beta + \lmax))}m\log m = \tilde O( m\lmax ^ {O(\beta)}) $, where we use $\beta < \lmax$. 
\end{proof}

\begin{lemma}
    The amortized update time of \Cref{alg:detlocalkcut} on an update is $\tilde{O}\left(\beta^4( \lambda_{\max}\nu)^{O(\beta)}\right)$.
\end{lemma}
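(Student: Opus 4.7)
The plan is to decompose the per-update cost of \Cref{alg:detlocalkcut} into two independent contributions: (i) the cost of feeding the update into the dynamic greedy forest packing of \Cref{thm:dynamic-greedy-tree-packing}, and (ii) the cost of maintaining the two coloring families produced by \Cref{lem:colorings}. The lemma's bound will then follow by adding these two contributions.

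For contribution (i), we apply \Cref{thm:dynamic-greedy-tree-packing} directly with parameters $\lambda_{\max}$ and $\epsilon = \frac{1}{3\beta}$, which gives amortized cost $\tilde O(\lambda_{\max}^2/\epsilon^4) = \tilde O(\lambda_{\max}^2 \beta^4)$ per update. This accounts for the $\beta^4$ factor and part of the $\lambda_{\max}$ factor in the target bound, and nothing further is needed here beyond quoting the theorem.

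For contribution (ii), the work splits into two sub-pieces: per-update bookkeeping on colors, and periodic recomputation of the coloring families. On every insertion we assign the new edge an unused element of $U$ and record its color (red/blue and green/yellow) in every coloring; on every deletion we free the corresponding element. The total number of colorings is at most $\nu^{O(\beta)} \log n$ (red-blue) plus $\lambda_{\max}^{O(\beta)} \log n$ (green-yellow), i.e. $\tilde O((\lambda_{\max}\nu)^{O(\beta)})$, so this bookkeeping costs $\tilde O((\lambda_{\max}\nu)^{O(\beta)})$ per update. Whenever the universe $U$ is exhausted, the algorithm doubles $|U|$ and recomputes both families from scratch; by the preprocessing analysis the recomputation costs $\tilde O(m (\lambda_{\max}\nu)^{O(\beta)})$ when the current edge count is $m$. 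A standard doubling argument amortizes this over the $\Omega(m)$ insertions that occurred since the last doubling, giving $\tilde O((\lambda_{\max}\nu)^{O(\beta)})$ amortized per insertion (and $0$ per deletion, which can only decrease the edge count).

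Summing (i) and (ii) yields amortized update time $\tilde O(\lambda_{\max}^2 \beta^4) + \tilde O((\lambda_{\max}\nu)^{O(\beta)}) = \tilde O(\beta^4 (\lambda_{\max}\nu)^{O(\beta)})$, since $\lambda_{\max}^2 = O((\lambda_{\max}\nu)^{O(\beta)})$ for $\beta \ge 1$. The only subtle point is the doubling argument: one must check that the assignment of fresh unused elements on insertion, and the freeing of elements on deletion, keeps the number of live elements at most $|U|$ between recomputations and that at least $|U|/2$ insertions occur between consecutive doublings, both of which are immediate from the doubling schedule described in preprocessing. No new combinatorial ideas are required; this is an accounting argument on top of results already established.
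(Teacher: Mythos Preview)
Your proof is correct and follows essentially the same approach as the paper: split the update cost into the greedy forest packing update (yielding the $\tilde O(\lambda_{\max}^2\beta^4)$ term via \Cref{thm:dynamic-greedy-tree-packing} with $\epsilon=\frac{1}{3\beta}$) and the coloring maintenance, with the periodic recomputation amortized over the $\Omega(m)$ insertions that must occur before the universe $U$ is exhausted. The only minor difference is that the paper bounds the non-rebuild coloring bookkeeping by $O(1)$ (just assign the new edge an unused element of $U$, which implicitly determines its color in every coloring), whereas you bound it by the number of colorings $\tilde O((\lambda_{\max}\nu)^{O(\beta)})$; both are valid and lead to the same final bound.
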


\begin{proof}
    Let us first consider an edge insertion. 
    Step~\ref{detlocalkcutins1} takes $O( \lmax^2 \beta^4)$ by \Cref{thm:dynamic-greedy-tree-packing}, where $\epsilon=\frac {1}{3\beta}$. Step~\ref{detlocalkcutins2} takes $O(1)$ time if we do not compute new colorings, but $\tilde O( m(\lmax \nu) ^ {O(\beta)}) $ if we need to compute new ones, as discussed in the previous proof. However, if it is the case, it means the number of edges have been doubled since the last time colorings were computed, and thus we can amortize this recomputation over the last $\frac m 2$ edges. This yields an amortized update time of $\tilde O((\lmax \nu) ^ {O(\beta)}) $.

    The case for edge deletions is similar (and easier as we do not need to recompute any colorings).
\end{proof}

\begin{lemma}
    The query time of \Cref{alg:detlocalkcut} is $\tilde O(\beta^2 (\lmax\nu  )^ {O(\beta)}) $.
\end{lemma}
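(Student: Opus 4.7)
The plan is to simply bound the cost of each iteration of the outer loop in the query routine and multiply by the number of iterations. The body of the query iterates over all triples (forest, red-blue coloring, yellow-green coloring), and within each iteration performs one DFS from $v$ that is explicitly truncated at $\nu$ edges explored, followed by a boundary-size check on the visited set $S$.

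First I would count the number of triples. By Step~\ref{detlocalkcutpre2} and \Cref{thm:dynamic-greedy-tree-packing}, the greedy forest packing contains $\tilde O(\lmax/\epsilon^2)=\tilde O(\lmax\beta^2)$ forests. By \Cref{lem:colorings} applied as in Steps~\ref{detlocalkcutpre3}--\ref{detlocalkcutpre4}, the red-blue family has size $2^{O(\min(2\beta,\nu)\log(2\beta+\nu))}\log n=\nu^{O(\beta)}\log n$ and the yellow-green family has size $\lmax^{O(\beta)}\log n$ (using $\beta<\lmax<\nu$). Multiplying, the total number of triples is $\tilde O(\beta^2\lmax\cdot\nu^{O(\beta)}\cdot\lmax^{O(\beta)})=\tilde O(\beta^2(\lmax\nu)^{O(\beta)})$.

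Next I would bound the work per triple. The DFS is aborted as soon as it visits more than $\nu$ edges, so it runs in $O(\nu)$ time. If it terminates early with a set $S$, then $\vol(S)\le\nu$, so enumerating the edges incident to $S$ to verify $\boundary S\le\lmax$ also costs $O(\nu)$ time. Thus each triple contributes $O(\nu)$ work, which is subsumed by the $(\lmax\nu)^{O(\beta)}$ factor.

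Combining these two bounds and absorbing the extra $\nu$ into $(\lmax\nu)^{O(\beta)}$ yields a total query time of $\tilde O(\beta^2(\lmax\nu)^{O(\beta)})$, as claimed. There is no substantive obstacle here; the only point that requires a small amount of care is that the DFS must be implemented with the explicit volume cutoff (as specified in the algorithm) so that a single iteration cannot exceed $O(\nu)$ time even when the relevant subgraph is large.
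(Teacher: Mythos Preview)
Your proof is correct and follows essentially the same approach as the paper: count the triples (forests $\times$ red-blue colorings $\times$ yellow-green colorings) using \Cref{thm:dynamic-greedy-tree-packing} and \Cref{lem:colorings}, then bound each DFS by $O(\nu)$ via the volume cap. Your version is slightly more explicit about the $O(\nu)$ cost of verifying $\boundary S\le\lmax$, but otherwise the arguments are identical.
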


\begin{proof}
    For every forest in the forest packing, red-blue coloring and yellow-green coloring, we run a DFS that can explore nodes of at most $O(\nu)$ volume. Hence, every DFS takes $O(\nu)$ time. There are $\tilde O(\beta^2 \lmax)$ many forests, $2^{O(\min(2\beta, \nu)\log (2\beta + \nu))} = \tilde O(\nu ^ {O(\beta)}) $ many red-blue colorings, and $2^{O(\min(2\beta, \lmax)\log (2\beta + \lmax))}= \tilde O(\lmax ^ {O(\beta)}) $ many yellow-green colorings, which means we run overall $\tilde O(\beta^2 (\lmax\nu  )^ {O(\beta)}) $ many DFS. 
\end{proof}

\section{Fragmenting a Cluster}
\label{sec:fragmenting}

In this section, we discuss how to fragment a cluster. Note that this algorithm is static. This algorithm and its analysis closely follow Lemma 6.4 of~\cite{DBLP:conf/soda/HenzingerLRW24}.
For this section, we assume that the data structure we present in \Cref{sec:datastructure} supports our cluster decomposition.

\begin{theorem}[Fragmenting, inspired by Lemma 6.4 of~\cite{DBLP:conf/soda/HenzingerLRW24}]\label{thm:fragmenting}
    Let $G$ be a graph, and $C$ a cluster in that graph. Let $\lmax, \lmin, \nu, \delta, \beta$ be parameters and assume that:
    \begin{enumerate}
        \item The minimum cut of $G[C]$ is at least $\lmin \ge \frac \lmax \beta$
        \item $\boundary C \le 6 \lmax$.
        \item an instance of the LocalKCut data structure (\Cref{thm:deterministic-dynamic}) is running on the intracluster edges of $G$ with parameters $\lmax, \nu, \beta$.
    \end{enumerate}
    Then, there exists an algorithm that decomposes $C$ into a disjoint union of clusters $C_1, \dots, C_k$ for some $k \in \N$, such that:
    \begin{enumerate}
        \item \label{fragprop1} For any set $S \subseteq C$ with $\vol(S) \le \nu$ and $\boundary_{G[C]}S \le \lmax$, there exists a partition $\mathcal A$ of $\{C_1, \dots, C_k\} $ such that for each element $A \in \mathcal A$ (which is a subset of $\{C_1, \dots, C_k\} $), the set $S \inter \left(\union_{A' \in A} A'\right)$ is not $(1-\delta)$-boundary sparse in $\union_{A' \in A} A'$.
        \item \label{fragprop2} There are at most $O(\delta^{-2}\log^{O(1)}\card C)$ many clusters, and each cluster satisfies $\boundary C_i \le \boundary C$.
    \end{enumerate}
    The algorithm runs in $\tilde O( \delta^{-2}\beta^2(\nu \lmax)^{O(\beta)})$ time.
\end{theorem}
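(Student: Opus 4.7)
I would adapt the static fragmenting procedure of Lemma~6.4 of~\cite{DBLP:conf/soda/HenzingerLRW24} to the present regime in which $\boundary C = O(\lmax) = n^{o(1)}$, and replace their internal cut enumeration with queries to our deterministic \textsc{LocalKCut} data structure of \Cref{thm:deterministic-dynamic}. The algorithm has two phases: an \emph{enumeration} phase that collects every candidate cut in $C$, followed by a \emph{refinement} phase that iteratively splits $C$ along $(1-\delta)$-boundary sparse cuts in a carefully chosen order.

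\textbf{Enumeration and refinement.} Any $(1-\delta)$-boundary sparse cut $U \subsetneq C$ must have a boundary edge incident to both $U$ and $C\setminus U$, otherwise the right-hand side of the sparseness inequality vanishes. Hence, running \textsc{LocalKCut} from every vertex incident to a boundary edge of $C$ returns every cut $U\subseteq C$ with $\vol(U)\le\nu$ and $\boundary_{G[C]}U\le\lmax$: the assumption $\lmin\ge\lmax/\beta$ on the minimum cut of $G[C]$ ensures that each such $U$ is a $\beta$-approximate mincut and is therefore returned by \Cref{thm:deterministic-dynamic}. Since $\boundary C\le 6\lmax$, at most $O(\lmax)$ queries are issued, giving enumeration time $\tilde{O}(\beta^2(\nu\lmax)^{O(\beta)})$ after absorbing the $\lmax$ factor. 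I would then iteratively pick a surviving $(1-\delta)$-boundary sparse cut $U$ inside some current piece $C'$, split $C'$ into $\{U, C'\setminus U\}$, and re-filter. The sparseness inequality $w(U, C'\setminus U) < (1-\delta)\min(w(U, V\setminus C'), w(C'\setminus U, V\setminus C'))$ directly yields $\boundary U, \boundary(C'\setminus U) \le \boundary C'$, so the second half of property~(\ref{fragprop2}) holds by induction. A concave potential $\Phi = \sum_i f(\boundary C_i)$ chosen (as in HLRW) so that each sparse split drops $\Phi$ by a multiplicative factor $1 - \Omega(\delta)$, combined with $\Phi \le O(\lmax \log \lmax)$ initially, then bounds the number of final pieces by $k = O(\delta^{-2}\log^{O(1)}|C|)$.

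\textbf{Property~(\ref{fragprop1}) and main obstacle.} For a given $S\subseteq C$ with $\vol(S)\le\nu$ and $\boundary_{G[C]}S\le\lmax$, I would construct $\mathcal A$ greedily over the laminar family of cuts produced by the refinement: starting from the trivial partition into singletons, whenever $S$ restricted to a block $A$ is $(1-\delta)$-boundary sparse in $\union_{A'\in A}A'$, merge $A$ with the neighboring block corresponding to the most recent laminar cut that crosses $S$ in that region. Termination follows from the laminar structure, and correctness follows because every such merge corresponds to uncrossing $S$ along a cut that the algorithm itself examined --- so either the merge respects the laminar family of performed splits, or $S$ was never sparse there in the first place. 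The main obstacle is making this inductive argument precise: one must verify that the volume and boundary constraints on the uncrossed sets are preserved throughout the greedy process, and that the potential-driven order of splits does not invalidate the merge-to-non-sparsity invariant. This closely follows the case analysis of HLRW's Lemma~6.4. The final running time is dominated by enumeration plus $O(\delta^{-2}\log^{O(1)}|C|)$ split/refilter iterations, each handled by a localized reuse of the \textsc{LocalKCut} data structure, yielding the claimed $\tilde{O}(\delta^{-2}\beta^2(\nu\lmax)^{O(\beta)})$.
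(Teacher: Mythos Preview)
Your enumeration phase matches the paper exactly: \textsc{LocalKCut} is queried from each vertex incident to a boundary edge of $C$, and the assumption $\lmin\ge\lmax/\beta$ guarantees every relevant cut is a $\beta$-approximate mincut and hence is returned.

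The genuine gap is in the refinement phase and the cluster-count bound. You write ``iteratively pick a surviving $(1-\delta)$-boundary sparse cut'' without specifying \emph{which} one, and then appeal to a concave potential $\Phi=\sum_i f(\boundary C_i)$ dropping by a $1-\Omega(\delta)$ factor per split. This does not work: when the chosen sparse cut has size $\ge 0.4\lmax$ (which is the generic case here since $\lmin\le\boundary_{G[C']}D\le\lmax$), the boundary of each child drops by only $\Theta(\delta\lmin)$, so the recursion can be $\Theta(1/\delta)$ levels deep and a naive argument gives $2^{\Theta(1/\delta)}$ pieces. No single sum-potential controls this. The paper's \textsc{Trim} procedure avoids the blowup by a two-tier selection rule: first prefer any cut $D$ with $\boundary_{G[C']}D\le 0.4\lmax$, and otherwise take the \emph{most unbalanced} remaining cut (minimizing $\min\{|D|,|C'\setminus D|\}$). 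The unbalanced choice buys Observation~6.8 (every surviving cut crossing the small side also crosses the large side), which in turn fuels a three-parameter recursion $f(b,c,d)$ on boundary, maximum internal cut-size, and cardinality; the $\log d$ factor enters precisely because the small side has $|D'|\le|C'|/2$. The $\delta^{-2}\log^{O(1)}|C|$ bound then comes from a three-case analysis on the value of $c$, not from a single potential.

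For Property~(\ref{fragprop1}), the paper does not build $\mathcal A$ by greedy merging from singletons. Instead, since $S$ itself is found in the enumeration phase, $S\cap C'$ remains in the working collection $\mathcal C'$ along the recursion until the pruning step removes it --- and at that moment $S\cap C'$ is, by definition of pruning, not $(1-\delta)$-boundary sparse in $C'$. Taking those $C'$ (the nodes of the recursion tree where $S$ is pruned) as the blocks of $\mathcal A$ gives the partition directly, by a clean bottom-up induction on the recursion tree. Your ``merge until non-sparse'' approach would need to be tied to this laminar recursion structure to go through, which is essentially what the paper does from the other direction.
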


\begin{definition}
    Let $A$ be a set, $\C$ a set of subsets of $A$ and $S\subseteq A$ a subset of $A$. Then the intersection $\C\inter S$ of $\C$ and $S$ is the set of subsets of $S$: $\C \inter S = \{T\inter S: T\in \C\}$.
\end{definition}

\begin{algo}\label{alg:fragmenting}
    \item Input:
    \begin{enumerate}
        \item a graph $G$ with a partition $\P$
        \item a cluster $C \in \P$
        \item access to a LocalKCut instance on $G\setminus \{e\in E(G): \textup{label}(e) = \textup{``intercluster''}\}$
        \item A maximum cut value $\lmax$, a maximum volume $\nu$, an approximation ratio $\beta$, with $\lmax \le \nu$.
    \end{enumerate}
    \item Algorithm:
    \begin{enumerate}
        \item \label{frag1} Run a LocalKCut query on every vertex $v$ in the set $C\inter \mathcal N(\mathcal N(C))$ of vertices in $C$ incident to boundary edges of $C$, which outputs $\C_v=\{U_1, \dots, U_\ell\}$, a collection of cuts returned for $v$.
        \item \label{frag2} Let $\C$ be the collection of cuts obtained by taking the union of the collection of cuts returned by LocalKCut: $\C=\union_{v\in \mathcal N(C)} \C_v$.
        \item \label{frag3} \textsc{Trim}$(C, \C)$
    \end{enumerate}
    \item  \textsc{Trim}$(C', \C')$
    \begin{enumerate}
        \item \label{trim0} Remove from $\C'$ all cuts that are not $(1-\delta)$-boundary sparse in $C'$ (we call this the \emph{pruning} step).
        \item \label{trim1} If there is a $D\in \C'$ that crosses $C'$ such that $\boundary_{G{[C']}}D \le 0.4\lmax$, remove $C'$ from $\P$, add $C'\setminus D$ and $D$ to $\P$ and recursively call \textsc{Trim}$(C'\setminus D, \C'\inter (C'\setminus D))$ and \textsc{Trim}$(D, \C'\inter D)$.
        \item \label{trim2} Otherwise, if $\C' \neq \varnothing$, choose the set $D\in \C'$ with minimum $\min\{ \card {D}, \card{C'\setminus D}\}$ (in other words, $D$ is the most unbalanced -- cardinality wise -- cut in $\C'$). Remove $C'$ from $\P$, add $C'\setminus D$ and $D$ to $\P$ and recursively call \textsc{Trim}$(C'\setminus D, \C'\inter (C'\setminus D))$ and \textsc{Trim}$(D, \C'\inter D)$.
    \end{enumerate}
\end{algo}

\subsection{Correctness}
\begin{proof}[Proof of \pref{fragprop1}]
    Consider a set $S \subseteq C$ with $\vol(S) \le \nu$ and $\boundary _{G[C]}S \le \lmax$. 
    
    {\em Case 1:} If $S=\varnothing$ or $S=C$, then \pref{fragprop1} is trivial because $\varnothing$ and $C$ are always non-$(1-\delta)$-boundary-sparse in $C$, and we can set $\mathcal A = \{\{C_1, \dots, C_k\}\}$.

    {\em Case 2:}
    If $w(S, V\setminus C) = 0$, then similarly, $S$ is trivially non-$(1-\delta)$-boundary-sparse in $C$, and we can set $\mathcal A =\{ \{C_1, \dots, C_k\}\}$.

    {\em Case 3:}
    Thus we are left to consider the case $S \neq \varnothing$,$S\subsetneq C$, and $w(S, V\setminus C) >0$. 
    Since $\boundary _{G[C]}S \le \lmax$ and $w(S, V\setminus C) >0$, there exists an edge $(u,v)$ with $u\in S\inter C, v \in V\setminus C$.  Querying LocalKCut on $v$, as \textsc{Trim} does in the first step, finds $S$, by \Cref{thm:deterministic-dynamic}. 
    Therefore, we have that $S \in \C$, the collection of cuts found at the root instance \textsc{Trim}$(C, \C)$. 
    Moreover, the property $S\inter C' \in \C'$ holds recursively along the recursion tree as long as Step~\ref{trim0} of \textsc{Trim} does not apply. 
    
    We now prove \pref{fragprop1} for $S'=S\inter C' \in \C'$ by bottom-up induction on the recursion tree of the \textsc{Trim} subroutine, starting at the calls where $S'=S\inter C'$ is pruned from $\C'$ in Step~\ref{trim0}. 
    As base cases, we consider 
    instances of \textsc{Trim} with first parameter $C'$ where $S$ is removed from $\C'$ in the initial pruning step of \textsc{Trim}.
    Since $S'$ is removed from $\C'$, it is non-$(1-\delta)$-boundary sparse in $C'$, that is, we can set  $\mathcal A = \{\{C_i \subseteq C'\}, \{C_1, \dots, C_k\}\setminus \{C_i \subseteq C'\}\}$.

    For the induction step, we consider sets $S'$,  where $S'$ survives the initial pruning step of \textsc{Trim}.
    Then $\C' \neq \varnothing$ after Step~\ref{trim0}.
    This implies that $C'$ is replaced by $D$ and $C'\setminus D$ for some $D \subsetneq C'$.
    By induction on the recursive instances \textsc{Trim}$(C'\setminus D, \C'\inter (C'\setminus D))$ and \textsc{Trim}$(D, \C'\inter D)$, there exist two partitions $\P_D$ and $\P_{C'\setminus D}$ of $\{C_1, \dots C_k\}$, the first of $\{C_1, \dots C_k\} \inter D\subsetneq \{C_1, \dots C_k\}$ and the latter of $\{C_1, \dots C_k\} \inter C'\setminus D\subsetneq \{C_1, \dots C_k\}$, where for each collection $A \in \P_D\union  \P_{C'\setminus D}$, such that
    $S \inter \left(\union_{A' \in A} A'\right)$ is not $(1-\delta)$-boundary sparse in $\union_{A' \in A} A'$.
    Thus, we simply set $\mathcal A = \P_D \union \P_{C'\setminus D}$.
\end{proof}

The rest of this subsection is devoted to prove \pref{fragprop2}, focusing on the \textsc{Trim} subroutine.
We start by showing that the boundary-size of the clusters only gets smaller as we run \textsc{Trim}.

\begin{observation}\label{observation 6.5}
    If \textsc{Trim} decomposes $C'$ into $D$ and $C'\setminus D$ in \sref{trim1}, then $\boundary D \le \boundary C' - 0.04\lmin$ and $\boundary (C'\setminus D) \le \boundary C' - 0.04 \lmin$.
\end{observation}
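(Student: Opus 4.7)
The plan is to partition the edges incident to $C'$ into the three quantities $w(D,C'\setminus D)$, $w(D,V\setminus C')$, and $w(C'\setminus D,V\setminus C')$, and then to exploit the global lower bound $\lmin$ on proper cut sizes of $G$ against the \sref{trim1} upper bound on $w(D,C'\setminus D)$. First I would translate the trigger condition: \sref{trim1} is entered only when $\boundary_{G[C']}D = w(D,C'\setminus D) \le 0.4\lmax$, and combining this with the parameter regime $\lmax \le 1.2\lmin$ from \Cref{pb:boundeddynamic} yields $w(D,C'\setminus D) \le 0.48\lmin$.

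Next I would establish the two lower bounds $w(D,V\setminus C') \ge 0.52\lmin$ and $w(C'\setminus D,V\setminus C') \ge 0.52\lmin$. For the first, $D$ is connected by the LocalKCut guarantee (\Cref{thm:deterministic-dynamic}) and a nonempty proper subset of $V$ (\sref{trim1} requires $D$ to cross $C'$, and $G$ is connected in the intended invocation of the data structure), so $D$ is a proper cut of $G$ with $\boundary D \ge \lmin$; expanding $\boundary D = w(D,C'\setminus D) + w(D,V\setminus C')$ then gives $w(D,V\setminus C') \ge \lmin - 0.48\lmin = 0.52\lmin$. For the second, the main obstacle is that $C'\setminus D$ need not be connected, so it is not immediately a proper cut. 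I would handle this by decomposing $C'\setminus D$ into the connected components $A_1,\dots,A_k$ of $G[C'\setminus D]$; any $G$-edge joining distinct $A_i$ would have both endpoints in $C'\setminus D$ and therefore lie in $G[C'\setminus D]$, contradicting that the $A_i$ are distinct components. Hence $\boundary(C'\setminus D)=\sum_i \boundary A_i$, and each $A_i$ is a connected, nonempty, proper subset of $V$ (since $G$ is connected), hence a proper cut of $G$ with $\boundary A_i \ge \lmin$. Therefore $\boundary(C'\setminus D) \ge \lmin$, and the same subtraction yields $w(C'\setminus D,V\setminus C') \ge 0.52\lmin$.

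Finally I would combine these through the identity $\boundary C' = w(D,V\setminus C') + w(C'\setminus D,V\setminus C')$: subtracting $\boundary D = w(D,C'\setminus D) + w(D,V\setminus C')$ gives $\boundary C' - \boundary D = w(C'\setminus D,V\setminus C') - w(D,C'\setminus D) \ge 0.52\lmin - 0.48\lmin = 0.04\lmin$, and the symmetric computation $\boundary C' - \boundary(C'\setminus D) = w(D,V\setminus C') - w(D,C'\setminus D) \ge 0.04\lmin$ completes the proof.
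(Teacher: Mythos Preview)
Your proof is correct and follows the same arithmetic as the paper's: both exploit $w(D,C'\setminus D)\le 0.4\lmax\le 0.48\lmin$ together with $\boundary D,\boundary(C'\setminus D)\ge\lmin$ to conclude via $\boundary C'-\boundary D = w(C'\setminus D,V\setminus C') - w(D,C'\setminus D)$ and its symmetric counterpart. The paper simply writes ``by symmetry'' for the second inequality, whereas you explicitly justify $\boundary(C'\setminus D)\ge\lmin$ by decomposing into connected components---a point the paper glosses over.

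One small caveat: your justification that $D$ itself is connected (``by the LocalKCut guarantee'') is not quite right in recursive calls of \textsc{Trim}, since elements of $\C'$ there are intersections $U\cap C'$ of original LocalKCut outputs with the current cluster, and such intersections need not be connected. However, the very same component-decomposition argument you give for $C'\setminus D$ applies verbatim to $D$ (each component is a nonempty proper connected subset of the connected cluster $C$, hence a proper cut of $G$), so this is easily patched and does not affect the validity of your approach.
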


\begin{proof}
    Recall that $\lmax = 1.2 \min$. 
    In \sref{trim1} we have that
    \begin{multline*}
        w(D, V\setminus C') = \boundary D - w(D, C'\setminus D) \ge \lambda - 0.4\lmax \ge \lmin -0.4 \lmax = 0.52 \lmin\\
        \Rightarrow \boundary (C'\setminus D) = \boundary C' - w(D, V\setminus C') + w(D, C'\setminus D) \le \boundary C' - 0.52 \lmin + 0.4 \lmax \le \boundary C' - 0.04\lmin
    \end{multline*}
    and by symmetry, we also obtain $\boundary D \le \boundary C' - 0.04 \lmin$.
    In other words, the boundary-size decreases by at least $0.04\lmin$ on both recursive instances.
\end{proof}

\begin{observation}\label{observation 6.6}
    If \textsc{Trim} decomposes in \sref{trim2} into $D$ and $C'\setminus D$, then $\boundary D \le 0.4 \delta \lmin$ and $\boundary (C'\setminus D) \le 0.4 \delta \lmin$.
\end{observation}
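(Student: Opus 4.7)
The plan is to derive the claimed bounds from the $(1-\delta)$-boundary sparseness of the chosen cut $D\in\C'$ --- which it satisfies because it survived the pruning in \sref{trim0} --- combined with the structural consequences of being in branch~\ref{trim2} rather than \ref{trim1}. First, sparseness of $D$ in $C'$ yields the two chained inequalities
\[
w(D,V\setminus C') \;\ge\; \frac{w(D,C'\setminus D)}{1-\delta}, \qquad w(C'\setminus D,V\setminus C') \;\ge\; \frac{w(D,C'\setminus D)}{1-\delta},
\]
so that $\boundary C' = w(D,V\setminus C') + w(C'\setminus D,V\setminus C') \ge \tfrac{2\, w(D,C'\setminus D)}{1-\delta}$. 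Rewriting $\boundary D = w(D,V\setminus C') + w(D,C'\setminus D)$ and $\boundary(C'\setminus D) = w(C'\setminus D,V\setminus C') + w(D,C'\setminus D)$ in terms of $\boundary C'$ and $w(D,C'\setminus D)$ then lets me compare each side's boundary against $\boundary C'$ symmetrically.

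Next, I would invoke the algorithmic context. Because \sref{trim1} does not fire, every $D'\in\C'$ satisfies $w(D',C'\setminus D')>0.4\lmax$. The selection rule of \sref{trim2} picks $D$ minimizing $\min\{|D|,|C'\setminus D|\}$, so in particular one side of the cut is as "thin" as possible among the remaining sparse cuts. I would combine these structural facts with the precondition $\boundary C\le 6\lmax$ of \Cref{thm:fragmenting} (which recursively bounds $\boundary C'$ along the call tree) and the lower bound $\lmin$ on the minimum cut of $G[C]$; these together with the $(1-\delta)$-sparseness slack and the relation $\lmax\le 1.2\lmin$ produce the numerical constant $0.4\delta\lmin$.

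The main obstacle I anticipate is a numerical tension: because \sref{trim1} did not apply, we have $\boundary D \ge w(D,C'\setminus D) > 0.4\lmax \ge 0.4\lmin$, which is much larger than $0.4\delta\lmin$ once $\delta\le 0.04$. For the stated absolute bounds $\boundary D\le 0.4\delta\lmin$ and $\boundary(C'\setminus D)\le 0.4\delta\lmin$ to hold, the proof must rely on an additional invariant --- not spelled out in the \textsc{Trim} pseudocode itself --- guaranteeing that the cluster $C'$ at the moment it reaches \sref{trim2} already has very small $\boundary C'$, so that sparseness propagates this smallness to both sides of the cut. I would therefore begin the proof by isolating exactly which invariant on $\boundary C'$ is in force when \sref{trim2} is entered (for instance, whether the recursion reaches \sref{trim2} only after enough \sref{trim1} cuts have driven $\boundary C'$ below a threshold of order $\delta\lmin$), and then carry out the sparseness-based arithmetic sketched above to obtain the $0.4\delta\lmin$ bound for $\boundary D$ and, by the symmetric sparseness inequality, for $\boundary(C'\setminus D)$.
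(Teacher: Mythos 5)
You correctly set up the boundary-size identities and correctly identify the decisive numerical tension: because \sref{trim1} does not fire, $w(D,C'\setminus D)>0.4\lmax$, so $\boundary D>0.4\lmax\ge 0.4\lmin$, which is far larger than $0.4\delta\lmin$ once $\delta\le 0.04$. That observation is sharp and shows the statement as printed cannot literally be true as an absolute bound.

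Where you go wrong is in the proposed resolution. There is no invariant that drives $\boundary C'$ down to order $\delta\lmin$ before \sref{trim2} is reached, and there cannot be one: the fragmenting theorem assumes the minimum cut of $G[C]$ is at least $\lmin$, so every cluster $C'$ appearing in the recursion satisfies $\boundary C'\ge\lmin$ (this is stated explicitly as the first base case for $f(b,c,d)$). Searching for such an invariant is a dead end. The correct reading is that the observation contains a typo and should be a \emph{decrease} bound, exactly parallel to Observation~6.5: $\boundary D\le\boundary C'-0.4\delta\lmin$ and $\boundary(C'\setminus D)\le\boundary C'-0.4\delta\lmin$. This is the form used everywhere downstream --- in the sentence immediately following the two observations ("the boundary-size ... decreases by at least $\min\{0.04\lmin,\,0.4\delta\lmin\}$") and in Observation~6.9's second case. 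And it is exactly what your own setup proves once you push it through: from $\boundary(C'\setminus D)=\boundary C'-w(D,V\setminus C')+w(D,C'\setminus D)$ and sparseness $w(D,V\setminus C')\ge w(D,C'\setminus D)/(1-\delta)$ you get $\boundary(C'\setminus D)\le\boundary C'-\frac{\delta}{1-\delta}w(D,C'\setminus D)\le\boundary C'-\delta\cdot 0.4\lmax\le\boundary C'-0.4\delta\lmin$, with the symmetric bound for $\boundary D$. So you had all the ingredients for the paper's proof; the gap is not recognizing the typo and instead chasing an invariant that does not exist.
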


\begin{proof}
    If the algorithm decomposes in \sref{trim2}, then since $D$ is $(1-\delta)$-boundary sparse in $C'$, 
    \begin{multline*}
        \boundary (C'\setminus D) = \boundary C' - w(D, V\setminus C') + w(D, C'\setminus D)\\
        \le \boundary C' -\frac 1 {1-\delta} w(D, C'\setminus D) + w(D, C'\setminus D) \le \boundary C' -\delta w(D, C'\setminus D) \le \boundary C' - 0.4 \delta \lmax
    \end{multline*}
    where we used that $w(D, C'\setminus D) \ge 0.4\lmax$ since we are not in \sref{trim1}. By symmetry, $\boundary D \le \boundary C' - 0.4\delta \lmax \le  \boundary C' - 0.4\delta \lmin$.
\end{proof}

The two observations together show that the boundary-size of the cluster $C'$ \textsc{Trim} is called on decreases by at least $\min\{0.04 \lmin, 0.4\delta\lmin\} $ on both recursive instances $D$ and $C' \inter D$.

We now upper bound the number of clusters by tracking the following three parameters of a cluster $C'$ as \textsc{Trim} is called on it:
\begin{enumerate}
    \item The boundary-size $\boundary C'$,
    \item The maximum cut-size within $C'$ of a cut in $\C'$ after the pruning step (Step~\ref{trim0} of \textsc{Trim}), defined by
    $$
    c(C') = \max_{\substack{D\in \C'\\ \text{after}\\\text{pruning}}} \boundary _{G[C']}D
    $$
    which we define as $0$ if $\C' = \varnothing$,
    \item The cluster cardinality $\card{C'}$.
\end{enumerate}

We will give an upper bound on the number of clusters parametrized by those three parameters, by casing on the value of the maximum cut-size within $C'$ and a double induction on the other two parameters.

We claim that the first two parameters are at most $6\lmax$ and $\lmax$ respectively.
The first parameter is always at most $6\lmax$ since originally $\boundary C \le 6 \lmax$ by assumption, and we have shown in the two observations above that boundary-size never increases upon recursion. 
The second parameter is at most $\lmax$ originally (since it is the parameter given to LocalKCut) and it never increases upon recursion since the value $\boundary _{G[X]}D$ decreases as $X$ gets replaced by smaller (inclusion-wise) and smaller clusters upon recursion and $\C'$ can only lose elements as one walks down the recursion tree.

Let $f(b,c,d)$ be the maximum number of clusters in the final decomposition 
starting from a cluster $C'$ with $\boundary C' \le b$, $c(C') \le c$ and $\card{C'} \le d$. 
Our goal is to show that 
$$
f(b,c,d) \le O\left(\left(\frac b {\delta \lmin}\right) ^2 (\log d)^{O( b/ \lmin)}\right).
$$

We recursively bound $f(b,c,d)$ by stepping through each case of the \textsc{Trim} algorithm. 
To do so, we first make two observations.

\begin{observation}\label{observation 6.7}
    If \textsc{Trim} decomposes in \sref{trim1} into $D$ and $C'\setminus D$, then the maximum number of clusters in the final decomposition of each recursive instance is upper bounded by $f(b-0.04\lmin, c, d)$, i.e., we have $f(b,c,d) = 2f(b-0.04\lmin, c, d)$.
\end{observation}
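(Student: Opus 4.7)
The plan is to verify, cluster by cluster, that all three tracked parameters $(b,c,d)$ are at most $(b-0.04\lmin,c,d)$ on each of the two recursive calls, and then simply add the two counts.

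First, I would apply \Cref{observation 6.5} directly to obtain
\[
\boundary D \le \boundary C' - 0.04\lmin \le b - 0.04\lmin
\qquad\text{and}\qquad
\boundary(C'\setminus D) \le b - 0.04\lmin,
\]
so the first parameter drops by at least $0.04\lmin$ on each child.

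Next, I would check the other two parameters. Cardinality is immediate: $|D|\le|C'|\le d$ and $|C'\setminus D|\le|C'|\le d$, since $D\subsetneq C'$. For the second parameter, I would argue that for every set $E \in \C'$ the corresponding element $E\cap D$ of the collection $\C'\cap D$ passed to the recursive call satisfies $\boundary_{G[D]}(E\cap D)\le\boundary_{G[C']}E$. Indeed, any edge of $G[D]$ between $E\cap D$ and $D\setminus E$ has both endpoints in $C'$, one in $E$ and the other in $C'\setminus E$, hence contributes to $\boundary_{G[C']}E$. Taking the maximum over $\C'\cap D$ and noting that the pruning step of the child call can only remove elements, we get $c(D)\le c(C')\le c$, and by the identical argument $c(C'\setminus D)\le c$.

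Finally, by the definition of $f$, the recursive call on $D$ yields at most $f(\boundary D, c(D), |D|)\le f(b-0.04\lmin, c, d)$ clusters in the final decomposition, and similarly for $C'\setminus D$. Since the final decomposition of $C'$ is the disjoint union of these two, we conclude
\[
f(b,c,d) \le f(b-0.04\lmin,c,d) + f(b-0.04\lmin,c,d) = 2\, f(b-0.04\lmin,c,d),
\]
which is exactly the claimed inequality. The only place requiring any care is the monotonicity of $c$, since the transition from $G[C']$ to $G[D]$ both shrinks the cluster and removes the edges leaving $D$; the small edge-counting argument above handles this cleanly, so I expect no genuine obstacle in the proof.
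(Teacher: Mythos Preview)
Your proof is correct and follows the same approach as the paper. The paper's own proof is a one-liner invoking \Cref{observation 6.5} for the boundary drop, because the monotonicity of the other two parameters ($c(C')$ never increasing and $|C'|$ never increasing) is already argued in the text preceding the observation; you simply re-derive those monotonicities inside the proof, which is fine.
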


\begin{proof}
    This observation follows directly from \Cref{observation 6.5} as it shows that $\boundary D, \boundary (C'\setminus D) \le \boundary C' - 0.04\lmin \le b-0.04 \lmin$.
\end{proof}

\begin{observation}\label{observation 6.8}
    If \textsc{Trim} decomposes in \sref{trim2}, let $D' = D$ if $\card D = \min\{\card D, \card{ C'\setminus D}$ and $D'=C'\setminus D$ otherwise. 
    Then we must have $\card {D'} \le \card {C'} /2$. 
    Moreover, all sets $U \in \C'$ that cross $D'$ must also cross $C'\setminus {D'}$.
\end{observation}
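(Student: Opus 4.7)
The first claim is immediate from the definition: since $D'$ is chosen as the smaller (in cardinality) of the two sides of the partition $\{D, C'\setminus D\}$ of $C'$, we always have $|D'|\le|C'|/2$.

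For the second claim, the plan is to use the fact that \sref{trim2} picks $D$ to minimize $\min\{|D|,|C'\setminus D|\}$ over $\C'$. Phrased in terms of $D'$, this means that for every $U\in\C'$ we have $|D'|\le\min\{|U|,|C'\setminus U|\}$. I would then suppose that $U\in\C'$ crosses $D'$ (so both $U\cap D'$ and $D'\setminus U$ are nonempty) and show that both $U\cap(C'\setminus D')$ and $(C'\setminus D')\setminus U$ are nonempty, each by a short contradiction.

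For the first nonemptiness, assume $U\cap(C'\setminus D')=\varnothing$. Then $U\subseteq D'$, and since $D'\setminus U\neq\varnothing$ the inclusion is strict, giving $|U|<|D'|$. But then $\min\{|U|,|C'\setminus U|\}\le|U|<|D'|$, contradicting the minimality of $D$. For the second, assume $(C'\setminus D')\setminus U=\varnothing$, i.e.\ $C'\setminus D'\subseteq U$. Because $U\cap D'\neq\varnothing$, this inclusion is strict, so $|U|>|C'\setminus D'|=|C'|-|D'|$, which rearranges to $|C'\setminus U|<|D'|$. Again $\min\{|U|,|C'\setminus U|\}<|D'|$, contradicting the minimality of $D$. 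Hence $U$ crosses $C'\setminus D'$.

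No real obstacle is expected here; the whole argument is a direct unpacking of the tie-breaking rule in \sref{trim2}, and the two symmetric contradictions are the only computation involved. The only thing to be careful about is distinguishing the non-strict $\min$-minimality (which yields $|D'|\le|U|$ and $|D'|\le|C'\setminus U|$) from the strict inequalities $|U|<|D'|$ and $|C'\setminus U|<|D'|$ that the contradiction uses; the gap is supplied by the strictness coming from $U$ crossing $D'$.
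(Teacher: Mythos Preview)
Your proof is correct and follows essentially the same approach as the paper's: both use the minimality of $D$ in \sref{trim2} and split into the two cases $U\subsetneq D'$ and $C'\setminus D'\subsetneq U$ (equivalently $C'\setminus U\subsetneq D'$) to derive a contradiction. Your write-up is slightly more explicit about where the strict inequalities come from, but the argument is the same.
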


\begin{proof}
    The first statement is trivial.
    For the second statement, recall that $D \in \C'$ when this step is executed.
    Suppose for contradiction that there exists $U \in \C'$ that crosses $D'$ but not $C'\setminus D'$.
    Note that this implies that $U \ne D'$ and $U \ne C' \setminus D'$.
    Furthermore it either implies that $U \subsetneq D'$ or $C' \setminus D' \subsetneq U$. The latter is equivalent to $C'\setminus U \subsetneq D'$. 
    This contradicts that $D$ is chosen in \sref{trim2} to minimize $\min\{\card D, \card{C'\setminus D}\}$, as $U$ would have smaller cardinality.
\end{proof}

\begin{observation}\label{observation 6.9}
    If \textsc{Trim} decomposes in \sref{trim2}, we obtain the recursive bound
    $$
    f(b,c,d) \le \max\{1,\quad 2f(b-0.04\lmin, c, d) + f(b,c,d/2),  \quad f(b, 0.6c, d)+f(b-0.4\delta\lmin, c,d)\}
    $$
\end{observation}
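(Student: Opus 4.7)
The plan is to case on the behavior of the \textsc{Trim} subroutine starting from $C'$, assigning one argument of the $\max$ to each case. The trivial argument $f(b,c,d) \le 1$ covers the degenerate situation where the pruning in \sref{trim0} empties $\C'$: the algorithm then returns just the cluster $C'$ and the recursion halts.

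In the remaining cases we are in \sref{trim2}, so $\C' \ne \varnothing$ after pruning and $D \in \C'$ is chosen to minimize $\min\{\card{D}, \card{C'\setminus D}\}$. Let $D'$ denote the smaller side; then $\card{D'} \le d/2$ by \Cref{observation 6.8}, and both recursive instances already have boundary-size at most $b - 0.4\delta\lmin$ by \Cref{observation 6.6}. I would then split into two sub-cases according to whether the recursive call on the larger side $C'\setminus D'$ triggers \sref{trim1} in its immediate next invocation.

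In the first sub-case, the recursive call on $C'\setminus D'$ applies \sref{trim1}, so by \Cref{observation 6.7} its subtree produces at most $2f(b - 0.04\lmin, c, d)$ clusters; combined with the bound $f(b, c, d/2)$ on the $D'$ side from its halved cardinality (\Cref{observation 6.8}), this yields the second argument $2f(b-0.04\lmin, c, d) + f(b,c,d/2)$. In the second sub-case \sref{trim1} is not triggered on the $C'\setminus D'$ side, and I would claim that the maximum-cut-size parameter on the $D'$ side strictly drops from $c$ to $0.6c$ after its pruning step. Together with the boundary decrease on $C'\setminus D'$ from \Cref{observation 6.6}, this yields the third argument $f(b, 0.6c, d) + f(b-0.4\delta\lmin, c, d)$.

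The main obstacle is justifying the $0.6c$ bound in the second sub-case. The plan there is to show that any $U \in \C'$ whose restriction $U \cap D'$ survives pruning in $D'$ (i.e., remains $(1-\delta)$-boundary sparse in $D'$) must satisfy $\boundary_{G[D']}(U\cap D') \le 0.6c$. The argument will combine (i) the decomposition of $\boundary_{G[C']}U \le c$ into contributions inside $D'$, inside $C'\setminus D'$, and crossing the $(D', C'\setminus D')$ boundary; (ii) the fact that $D$ is itself $(1-\delta)$-boundary sparse with $w(D, C'\setminus D) > 0.4 \lmax$ (since \sref{trim1} was not applied at $C'$, so $\boundary_{G[C']}D > 0.4\lmax$); and (iii) the balance property $\min\{|U|, |C'\setminus U|\} \ge |D'|$ inherited from $D$ being the most unbalanced cut in $\C'$. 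The constant $0.6$ should then emerge by comparing $c \le \lmax = 1.2\lmin$ against the lower bound forced on $w(D, C'\setminus D)$ and the boundary of $D$ into $V\setminus C'$.
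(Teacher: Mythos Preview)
Your case split and the handling of the first sub-case are essentially right and mirror the paper's argument in contrapositive form. The gap is in your plan for the $0.6c$ bound in the second sub-case, which is the crux of the observation.

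The paper cases directly on whether $c(D') \ge 0.6\,c(C')$, not on whether $C'\setminus D'$ triggers \sref{trim1}. If $c(D') \ge 0.6\,c(C')$, take a witness $U$ realizing this; the \emph{second} statement of \Cref{observation 6.8} guarantees that $U$ also crosses $C'\setminus D'$, and then the elementary inequality
\[
\boundary_{G[D']}U + \boundary_{G[C'\setminus D']}U \le \boundary_{G[C']}U \le c(C')
\]
forces $\boundary_{G[C'\setminus D']}U \le 0.4\,c(C') \le 0.4\lmax$, so the recursive instance on $C'\setminus D'$ enters \sref{trim1} and one obtains $2f(b-0.04\lmin,c,d)+f(b,c,d/2)$. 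In the complementary case $c(D') < 0.6\,c(C')$ the bound $f(b,0.6c,d)$ on the $D'$ side is immediate, and \Cref{observation 6.6} supplies $f(b-0.4\delta\lmin,c,d)$ on the $C'\setminus D'$ side. That is the entire argument.

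Your proposed ingredients (ii) and (iii)---the boundary-sparseness of $D$, the lower bound $w(D,C'\setminus D)>0.4\lmax$, and the cardinality inequality $\min\{|U|,|C'\setminus U|\}\ge |D'|$---do not bear on $\boundary_{G[D']}U$ for a generic $U$: they concern the specific cut $D$ or vertex counts, not the edge count of $U$ inside $D'$. Likewise the constant $0.6$ is not a byproduct of $\lmax=1.2\lmin$; it is simply $1-0.4$ in the displayed additivity. The single idea missing from your outline is the crossing half of \Cref{observation 6.8}, which is precisely what lets one transfer cut-size from the $D'$ side to the $C'\setminus D'$ side and makes the $0.6c$ bound fall out in one line.
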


\begin{proof}
    Suppose that \textsc{Trim} decomposes in \sref{trim2} into $D'$ and $C'\setminus D'$, with $D'$ defined as in \Cref{observation 6.8}. 
    By that observation, we have that $\card {D'} \le \card {C'} /2$ and moreover, all sets $U\in \C'$ that cross $D'$ must also cross $C'\setminus D'$.
    We case on whether $c(D') \ge 0.6 c(C')$, i.e., whether there exists $U \in \C'$ that crosses $D'$ with $\boundary_{G[D']}U \ge 0.6 c(C')$.
    \begin{enumerate}
        \item If such a set $U$ exists, then from $\boundary _{G[D']}U + \boundary _{G[C'\setminus D']}U\le \boundary _{G[C']}U$, we obtain $\boundary _{G[C'\setminus D']}U\le  \boundary _{G[C']}U - \boundary _{G[D']}U \le  c(C') - 0.6 c(C') \le 0.4 c(C')$.
        Since $U$ also crosses $C' \setminus D'$, it follows that the recursive instance $C'\setminus D'$ must go through \sref{trim1}. 
        Let $U_1, U_2$ be the decomposition of $C'\setminus D'$.
        Then by \Cref{observation 6.5}, $\boundary U_1\le \boundary (C'\setminus D') -0.04\lmin \le \boundary C'-0.04 \lmin$ and
        $\boundary U_2 \le \boundary (C'\setminus D') -0.04\lmin \le \boundary C'-0.04 \lmin$. 
        Also by \Cref{observation 6.8}
        $|D'| \le |C'|/2 \le d/2$.
By induction on clusters $D', U_1, U_2$, the total number of clusters at the end is at most
        $$
        2f(b-0.04\lmin, c,b) + f(b,c,d/2)
        $$
        \item Otherwise, we have $c(D') < 0.6 c(C') \le c/2$. By \Cref{observation 6.6}, we have that $\boundary (C'\setminus D') \le \boundary C' - 0.4\delta \lmin$.
        By induction on $D'$ and $C'\setminus D'$, the total number of clusters at the end is at most 
        $$
        f(b,0.6 c, d) + f(b-0.4\delta \lmin, c,d)
        $$
    \end{enumerate}
    Overall, if the algorithm decomposes by \sref{trim2}, we obtain the recursive bound

  $$
    f(b,c,d) \le \max\{1,\quad 2f(b-0.04\lmin, c, d) + f(b,c,d/2),  \quad f(b, 0.6c, d)+f(b-0.4\delta\lmin, c,d)\}
    $$
\end{proof}

Next, we state the base cases for our induction that bounds $f(b,c,d)$.
\begin{enumerate}
    \item Since we always have $\boundary C' \ge \lambda \ge \lmin$, we have the vacuous bound $f(b,c,d) = 0$ for $b< \lmin$.
    \item If $\C' = \varnothing$, no further recursion is necessary. Thus, $f(b,0,d)=1$.
    \item If $\card {C'} = 1$, the cluster cannot be cut any further. Thus, $f(b,c,1) = 1$.
\end{enumerate}

We now bound $f(b,c,d)$ using a case analysis depending on the value of the second parameter. 
As shown by \Cref{observation 6.7} and \Cref{observation 6.9}, the value of the second parameter never increases. 
Furthermore, if it drops to at most $0.4 \lmax$ on a set $C'$, then the algorithm always executes \sref{trim1} on subsets of $C'$. 
Thus, we can bound this setting first.

\paragraph{Case 1.} If $c(C') \le 0.4 \lmax$, then \textsc{Trim} either takes \sref{trim1} or does nothing (in the case $\C'=\varnothing)$, so
$$
f(b,c,d) \le \max\{1, 2f(b-0.04\lmin, c,d)\} \quad \text{for} \quad c\le 0.4\lmax
$$

Every recursive call with first parameter $D$ still has $c(D)\le 0.4\lmin $, so all recursive calls made in the subtree of the call with first parameter $C'$ is made in \sref{trim1}. 
We bound $f(b,c,d)$ by bounding the number of recursive calls of \textsc{Trim} as each call increases the number of clusters in $\mathcal{P}$ by 1.
The number of recursive calls in \sref{trim1} doubles for at most $ \ceil{\frac {b-\lmin} {0.04 \lmin}} \le \frac {b}{ {0.04 \lmin} }\le 25 \frac b \lmin$ levels of recursion and $2^x \ge 1$ for all positive $x$, so it holds that $f(b,c,d) \le 2^{25b/\lmin}$.

\paragraph{Case 2.} For $c \in (0.4\lmax, 0.61 \lmax]$, \textsc{Trim} either takes \sref{trim1} or \sref{trim2}, or does nothing.
Note that by \Cref{observation 6.7} and \Cref{observation 6.9}, in both cases we have 
 $$
    f(b,c,d) \le \max\{1,\quad 2f(b-0.04\lmin, c, d) + f(b,c,d/2),  \quad f(b, 0.6c, d)+f(b-0.4\delta\lmin, c,d)\}
    $$

We show the following bound by double induction on $b$ and $d$.

\begin{equation} \label{eq3}
f(b,c,d) \le \frac b {0.4 \delta \lmin} (2+2\log d)^{25b/\lmin}.
\end{equation}

The base case is trivially true if either $b<\lmin$ or $d=1$. 
For the induction step, note that $0.6 c \le 0.6\cdot 0.61 \lmax \le 0.4 \lmax$, so that a call on a set with second parameter $0.6c$ guarantees that \sref{trim1} is executed on this call and we can apply the bound from Case 1.
Now by \Cref{observation 6.9} it holds that:
\begin{alignat*}{2}
    f(b,c,d) &\le \max \{ &&1\\
    & &&2f(b-0.04 \lmin, c, d) + f(b,c,d/2), \elabel{eq2}\\
    & &&f(b,0.6c,d)+f(b-0.4\delta \lmin,c,d)\} \elabel{eq4}\\
    \intertext{We apply the induction hypothesis (\Cref{eq3}) to both terms of \Cref{eq2} and the second term of \Cref{eq4}, while we apply the bound of Case 1 to the first term of \Cref{eq4}:}
    f(b,c,d) &\le \max \{&&1,\\
     & && 2\cdot \frac b {0.4 \delta \lmin} (2+2\log d)^{25(b-0.04\lmin)/\lmin} + \frac b {0.4 \delta \lmin} (2+2\log \frac d 2)^{25b/\lmin},\elabel{eq5}\\
     & && 2^{25b/\lmin} + \frac {b-0.4\delta \lmin}{0.4 \delta \lmin}(2+2\log d)^{25b/\lmin}\}\elabel{eq6}\\
     \intertext{We develop the exponent in the first term of \Cref{eq5}, and develop the logarithm in its second term. We upper bound $2$ by $2+2\log d$ in \Cref{eq6}:}
     &\le \max \{&&1,\\
     & && 2\cdot \frac b {0.4 \delta \lmin} (2+2\log d)^{25b/\lmin-1} + \frac b {0.4 \delta \lmin} (2+2\log d )^{25b/\lmin-1}\cdot (2+2\log d-2),\\
     & && (2+2\log d)2^{25b/\lmin} + \frac {b-0.4\delta \lmin}{0.4 \delta \lmin}(2+2\log d)^{25b/\lmin}\}\\
     &=&& \frac b {0.4 \delta \lmin}(2+2\log d)^{25b/\lmin}.
\end{alignat*}
where the last equality comes from the fact that $\frac b {0.4 \delta \lmin}(2+2\log d)^{25b/\lmin} \ge \frac{b}{\delta \lmin} 2 ^{25b/\lmin} \ge \frac 1 \delta 2^{25} \ge 1$.
\paragraph{Case 3.} For $c \in [0.61 \lmax, \lmax]$, \textsc{Trim} either takes \sref{trim1} or \sref{trim2}, or does nothing.
Note that by \Cref{observation 6.7} and \Cref{observation 6.9}, in both cases we have 
 $$
    f(b,c,d) \le \max\{1,\quad 2f(b-0.04\lmin, c, d) + f(b,c,d/2),  \quad f(b, 0.6c, d)+f(b-0.4\delta\lmin, c,d)\}
    $$ 
    
    We show the bound
\begin{equation}\label{eq10}
f(b,c,d) \le \left(\frac b {0.4 \delta \lmin}\right)^2 (2+2\log d)^{25b/\lmin}
\end{equation}
by induction on $b$ and $d$.

The base case is trivially true if either $b < \lmin$ or $d=1$. 
For the induction step, note that $0.6c \le 0.6 \lmax$, which implies that a call on a set with second parameter $0.6c$ guarantees that the bound from Case 2 can be applied to this call.
Now by induction on $b \ge \lmin$ and $d>1$, it holds that

\begin{alignat*}{2}
    f(b,c,d) &\le \max \{ &&1\\
    & &&2f(b-0.04 \lmin, c, d) + f(b,c,d/2), \elabel{eq12}\\
    & &&f(b,0.6c,d)+f(b-0.4\delta \lmin,c,d)\} \elabel{eq14}\\
    \intertext{We apply the induction hypothesis (\Cref{eq10}) to both terms of \Cref{eq12} and the second term of \Cref{eq14}, while we apply the bound of Case 2 to the first term of \Cref{eq14}:}
    &\le \max \{&&1,\\
     & && 2\cdot \left(\frac b {0.4 \delta \lmin}\right)^2 (2+2\log d)^{25(b-0.04\lmin)/\lmin} + \left(\frac b {0.4 \delta \lmin} \right)^2(2+2\log \frac d 2)^{25b/\lmin},\elabel{eq15}\\
     & && \frac {b}{0.4 \delta \lmin }(2+2\log d)^{25b/\lmin} + \left(\frac {b-0.4\delta \lmin}{0.4 \delta \lmin}\right)^2(2+2\log d)^{25b/\lmin}\}\elabel{eq16}\\
     \intertext{We develop the exponent in the first term of \Cref{eq15}, and develop the logarithm in its second term. We upper bound $b-0.4\delta \lmin$ by $b$ in \Cref{eq16}:}
     &\le \max \{&&1,\\
     & && 2\cdot \left(\frac b {0.4 \delta \lmin}\right)^2 (2+2\log d)^{25b/\lmin-1}  \\ & && +\qquad \left(\frac b {0.4 \delta \lmin} \right)^2(2+2\log d)^{25b/\lmin-1}(2+2\log d-2),\\
     & && \frac {b}{0.4 \delta \lmin }(2+2\log d)^{25b/\lmin} + \left(\frac {b-0.4\delta \lmin}{0.4 \delta \lmin}\right)\left(\frac {b \lmin}{0.4 \delta \lmin}\right)(2+2\log d)^{25b/\lmin}\}\\
     &=&&  \left(\frac b {0.4 \delta \lmin}\right)^2 (2+2\log d)^{25b/\lmin}.
\end{alignat*}

Since for the set $C$ on which the algorithm was run we know that $b\le 6\lmax$, $c\le \lmax$, $d\le \card {C}$ and $\lmax \le 1.2 \lmin$, we obtain the desired bound $O(\delta^{-2}\log^{O(1)}\card C)$ on the number of clusters, which concludes the proof of \pref{fragprop2}.

\subsection{Running Time analysis}
\begin{lemma}
    The running time of \Cref{alg:fragmenting} is $\tilde O( \delta^{-2}\beta^2(\nu \lmax)^{O(\beta)})$.
\end{lemma}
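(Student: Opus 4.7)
The plan is to split the running time into two contributions, the LocalKCut phase (Steps~\ref{frag1}--\ref{frag2}) and the \textsc{Trim} phase (Step~\ref{frag3}), and bound each separately.

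For the LocalKCut phase, I would first bound the number of queries by noting that every vertex in $C\inter \mathcal{N}(\mathcal{N}(C))$ is incident to at least one boundary edge of $C$, so at most $\boundary C\le 6\lmax$ queries are issued. By \Cref{thm:deterministic-dynamic}, each query takes time $\tilde O(\beta^2(\lmax\nu)^{O(\beta)})$, which is also an upper bound on the number of cuts it returns. Hence Step~\ref{frag1} costs $\tilde O(\lmax\beta^2(\lmax\nu)^{O(\beta)})$ in total and yields a collection $\C$ with $|\C|\le\tilde O(\lmax\beta^2(\lmax\nu)^{O(\beta)})$, every cut of which has volume at most $\nu$ and cut-size at most $\lmax$.

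For the \textsc{Trim} phase, the key input is \pref{fragprop2}, which ensures the recursion produces at most $O(\delta^{-2}\log^{O(1)}|C|)=\tilde O(\delta^{-2})$ final clusters. Since \textsc{Trim} is a binary recursion whose leaves correspond one-to-one with the final clusters, the total number of recursive calls is of the same order $\tilde O(\delta^{-2})$. Within a single call on $(C',\C')$, the work decomposes into: (a) the pruning step that tests each cut in $\C'$ for $(1-\delta)$-boundary-sparseness in $C'$; (b) a scan of $\C'$ for either a cut $D$ with $\boundary_{G[C']}D\le 0.4\lmax$ or the most unbalanced one; and (c) forming the restricted collections $\C'\inter D$ and $\C'\inter(C'\setminus D)$ passed to the recursive calls. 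Because each $D\in\C$ originates from a LocalKCut output with $\vol(D)\le\nu$, each of these per-cut operations can be carried out in $O(\nu)$ time by representing cuts as explicit lists of their (at most $\nu$) vertices, testing membership in $C'$ via a hash set, and computing $\boundary_{G[C']}D$, $w(D,C'\setminus D)$ and $w(D,V\setminus C')$ by scanning only the edges incident to $D$. A single call therefore costs $O(|\C|\cdot\nu)$, and summing over all $\tilde O(\delta^{-2})$ calls gives a \textsc{Trim}-phase total of $\tilde O(\delta^{-2}\cdot|\C|\cdot\nu)=\tilde O(\delta^{-2}\beta^2(\lmax\nu)^{O(\beta)})$.

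Summing the two phases and absorbing the $\lmax$ and $\nu$ factors into $(\lmax\nu)^{O(\beta)}$ yields the claimed bound $\tilde O(\delta^{-2}\beta^2(\lmax\nu)^{O(\beta)})$. The main obstacle I expect is justifying that sparsity tests and intersections inside \textsc{Trim} really cost $O(\nu)$ per cut rather than time depending on $|C|$: sparsity is a property involving the ambient graph $G$ and not merely the cluster $C'$, and the collections $\C'\inter D$ could, in the worst case, duplicate cuts across the two recursive branches. Both issues resolve cleanly because every cut in $\C$ is a LocalKCut output and so fits within a volume-$\nu$ ball around its vertices; it is exactly this volume guarantee that localizes all the relevant edge-counting and set operations.
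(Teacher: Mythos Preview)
Your proposal is correct and follows essentially the same approach as the paper: bound the LocalKCut phase by $O(\lmax)$ queries of cost $\tilde O(\beta^2(\lmax\nu)^{O(\beta)})$ each, then use \pref{fragprop2} to bound the number of \textsc{Trim} calls by $\tilde O(\delta^{-2})$ and argue each call costs $O(|\C|\cdot\nu)$ via per-cut $O(\nu)$-time operations. The paper invokes the data structure of \Cref{thm:datastructure} for the $O(\nu)$ boundary-sparseness check, whereas you spell out an equivalent direct implementation; otherwise the arguments coincide.
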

\begin{proof}

    \sref{frag1} of \Cref{alg:fragmenting} makes $O(\lmax)$ many requests to LocalKCut. 
    By \Cref{thm:deterministic-dynamic}, this takes $\tilde O( \beta ^2(\nu \lmax)^{O(\beta)})$ time.
    In particular, we know that we have at most $\tilde O( \beta ^2(\nu \lmax)^{O(\beta)})$ many cuts in $\C$ in \sref{frag2} of \Cref{alg:fragmenting}, each of volume $\nu$.

    It remains to bound the time for \sref{frag3} of \Cref{alg:fragmenting}. Since we proved in the correctness proof that the algorithm added at most $O(\delta^{-2} \log^{O(1)} \card C)$ many clusters to $\mathcal{P}$ at termination, the recursion tree is binary  and each recursive call increases the number of clusters in $\mathcal{P}$ by 1, we know that \textsc{Trim} is run at most $O(\delta^{-2} \log^{O(1)} \card C)$ times. 
    It thus suffices to show that a run of \textsc{Trim} takes $\tilde O( \beta ^2(\nu \lmax)^{O(\beta)})$ time.

    In \sref{trim0}, for every cut in $\C'$, it takes $O(\nu)$ time to check whether a cut is boundary sparse by \Cref{thm:datastructure}, and thus $\tilde O( \beta ^2(\nu \lmax)^{O(\beta)})$ to check all cuts.
    Similarly, in \sref{trim1}, it takes $O(\nu)$ time for each cut to decide if a cut satisfies the if condition, $\tilde O( \beta ^2(\nu \lmax)^{O(\beta)})$ overall. 
    Removing $C'$ and adding $D$ and $C'\setminus D$ takes $O(\nu)$ time, as this is encoded by simply changing the labels of the boundary edges to inter-cluster. 
    Creating $\C' \inter D$ and $\C' \inter (C'\setminus D)$ takes $\tilde O( \beta ^2(\nu \lmax)^{O(\beta)})$ time, as there are $\tilde O( \beta ^2(\nu \lmax)^{O(\beta)})$ cuts of volume $\nu$ each to intersect with another set.

    Similarly, in \sref{trim2}, it takes $O(\nu)$ time for each cut to find its cardinality and the one of its complement, thus $\tilde O( \beta ^2(\nu \lmax)^{O(\beta)})$ overall to find the minimum. 
    Removing $C'$ and adding $D$ and $C'\setminus D$ takes $O(\nu)$ time, as this is encoded by simply changing the labels of the boundary edges to inter-cluster. 
    Creating $\C' \inter D$ and $\C' \inter (C'\setminus D)$ takes $\tilde O( \beta ^2(\nu \lmax)^{O(\beta)})$ time, as there are $\tilde O( \beta ^2(\nu \lmax)^{O(\beta)})$ cuts of volume $\nu$ each to intersect with another set.    
\end{proof}

\section{The Cluster Decomposition and the Cluster Hierarchy}
\label{sec:hierarchy}

In this section, we lay out the main definitions and data structures that we maintain throughout our algorithm. 
The central object we maintain is the cluster hierarchy, a hierarchy of cluster decompositions.

\paragraph{Choice of parameter values.}
We remind the reader that we have $c>0$, $\delta = 2^{O(\log^{3/4 -c } n)}\le 0.04$. We set $\lmin=2^{\Theta (\log^{3/4-c}n)}$ and $\lmax\le 1.2 \lmin$, $H=38^h = 2^{-O(\log^{1/2}n)}$, $\phi =\frac {\phi'} {38^h}= 2^{-\Theta (\log^{3/4}n)}$, $\rho= 2^{\Theta(\log ^{1/2}n)}$ and $ \alpha=\frac 1 {\poly \log n}.$

\subsection{Cluster Decomposition}
\begin{definition}[Pre-Cluster Decomposition]\label{def:clusterdecomposition}
 Let $\alpha, \phi, \delta, H \in \R$, $\lmax, \in \N$.
 An $(\alpha, \phi, \lmax, H, \delta)$-\emph{ pre-cluster decomposition} of a graph $G$ is a partition $\{C_j\}_{j \in [\ell]}$ of the vertex set of $G$, for some $\ell \in \N_+$. 
 Moreover, that partition satisfies, for each $j \in [\ell]$:
    \begin{enumerate}[label=\roman*., noitemsep]
    \item $C_j$ is connected.
        \item \label{inanexpander} $C_j$ is contained in an $(\alpha, \phi'=\phi H)$-boundary linked expander.
        \item $C_j$ satisfies either:
        \begin{enumerate}
            \item \label{easycondition} it contains no $(1-\delta)$-boundary sparse cut $S$ with $\boundary_G(S) \le \lmax$ and $\vol(S) \le \frac \lmax \phi$, or
            \item \label{hardcondition} it satisfies $\boundary C_j \le 6\lmax$.
        \end{enumerate}
    \end{enumerate}
\end{definition}

\begin{definition}[Cluster Decomposition]
    Let $\alpha, \phi, \delta \in \R$, $\lmax, \lmin \in \N$.
 An $(\alpha, \phi, \lmax, H, \delta)$-\emph{cluster decomposition} of a graph $G$ is a partition $\{C_j\}_{j \in [\ell]}$ of the vertex set of $G$, for some $\ell \in \N_+$, obtained from an $(\alpha, \phi, \lmax, H, \delta)$-pre-cluster decomposition, where each cluster $C_j$ that does not satisfy Condition~\ref{easycondition} is fragmented using \Cref{alg:fragmenting} (see \Cref{thm:fragmenting}).
\end{definition}
As $\alpha$, $\phi$, $\lmax$, $H$ and $\delta$ do not change, we simply use \emph{cluster decomposition} to denote an $(\alpha, \phi, \lmax,  H, \delta)$-cluster decomposition in the following.

The main point of the cluster decomposition is that it can approximate most minimum \proper cuts, that is, in most cases, if $S$ is a minimum \proper cut in $G$ of cut-size $\lambda$ with $\lmin \le \lambda \le \lmax$, then we can uncross $S$ by a cut $S'$ that consists of the union of some clusters, and that is an approximate minimum \proper cut.

Let us now talk about the corner cases. 
In fact, given a cluster decomposition we can always uncross a cut $S$ into a set $S'$.
However, $S'$ is not guaranteed to be a proper cut, that is, we could have $\boundary S'=0$.
In this case, we show $S$ is a \emph{local cut} in a \emph{mirror cluster}, defined next.

\begin{definition}
    A \emph{local cut} is a cut of volume at most $4\frac \lmax \phi$.
\end{definition}

Local cuts are central in our analysis, as they are easier to find and maintain than non-local cuts.
To deal with the local cuts, we introduce \emph{mirror clusters}, one for each cluster.

\begin{restatable}[Mirror Clusters, Graphs and Cuts]{definition}{mirrorclusters}\label{def:mirrorclusters}
    The {\em mirror cluster} $C'$ of a cluster $C$ in graph $G$ is the graph $G/(V\setminus C)$, that is, the graph where all nodes outside of $C$ have been contracted into one node.

    A {\em mirror graph} $G_{\P_1}$ of $G$ according to partition $\P_1$ is the graph that is the collection of all mirror clusters of the clusters in $\P_1$.

    A {\em mirror cut} $S_v$ of a vertex $v \in V$ of a graph $G$, given a partition $\P_1$, a volume $\nu$ and a maximum cut value $\lmax$, is defined as follows. Let $C_v \in \P_1$ be the cluster of $G$ such that $v \in C_v$. 
    Then $S_v$ is a proper cut of smallest cut-value among proper local cuts $S \subsetneq C_v$ of cut-value at most $\lmax$ and that contain $v$. Note that many cuts can be mirror cuts of $v$, and that such a cut may also not exist.
\end{restatable}

Mirror clusters have the nice property that all local cuts can be found by running LocalKCut on them.
We can thus state the main property of a cluster decomposition:

\begin{restatable}[Cluster Decomposition Uncrossing]{proposition}{clusterdecomposition}\label{lem:clusterdecomposition}
    Let $\{C_j\}_{j \in [\ell]}$ be an $(\alpha, \phi, \lmax, H, \delta)$-cluster decomposition of a graph $G$.
    If $\lmin \le {\lambda}\le \lmax$, then every minimum \proper cut $S$ of value ${\lambda}$ can be $(1+2\delta)$-approximated by a cut $S'$, that can be one of two cases:
    \begin{enumerate}[noitemsep]
        \item $S'$ is a cut of $G$ and $S'$ crosses no cluster.
        \item There exists a cluster $C_j$ such that $S'$ is a  local cut in the mirror cluster of $C_j$.
    \end{enumerate}
\end{restatable}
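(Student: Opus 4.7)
The plan is to take the minimum proper cut $S$ with $\lmin \le \partial S \le \lmax$ and iteratively uncross it at each cluster it crosses, producing a cut $S'$ that either crosses no cluster (Case~1) or, when an uncrossing step fails for some cluster $C_j$, can be identified with a local cut in the mirror cluster of $C_j$ (Case~2). For each cluster $C_j$ that $S$ crosses, let $U_j = S \cap C_j$ and let $\tilde U_j$ denote the side of smaller volume among $U_j$ and $C_j \setminus U_j$. Because $C_j$ is contained in an $(\alpha, \phi H)$-boundary-linked expander and the $G$-boundary of $\tilde U_j$ is at most $\partial S \le \lmax$, the conductance bound forces $\vol(\tilde U_j) = O(H\lmax/\phi)$; thus $\tilde U_j$ is a small-volume cut in $C_j$ whose cut-size is at most $\lmax$.

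For each such $C_j$, if $\tilde U_j$ is $(1-\delta)$-boundary sparse in $C_j$, I would replace $S$ by whichever of $S \cup C_j$ and $S \setminus C_j$ is cheaper. A direct calculation shows this step increases the cut-size by at most $\delta \cdot \min\{w(\tilde U_j, V\setminus C_j), w(C_j \setminus \tilde U_j, V \setminus C_j)\}$. The edges witnessing this minimum lie in $\partial S$ and are disjoint across distinct $C_j$, so the cumulative additive loss over all uncrossings is at most $\delta \cdot \partial S$, which gives $\partial S' \le (1+2\delta) \partial S$ after absorbing one additional $(1+\delta)$ slack for the final comparison. For clusters satisfying \pref{hardcondition} of \Cref{def:clusterdecomposition}, $C_j$ has already been fragmented by \Cref{thm:fragmenting}; in that case, \pref{fragprop1} supplies a partition $\mathcal A$ of the sub-clusters such that $S \cap \bigcup_{A' \in A} A'$ is not $(1-\delta)$-boundary sparse in $\bigcup_{A' \in A} A'$ for every $A \in \mathcal A$, so the uncrossings are carried out at the granularity of fragment-unions with the same amortized accounting.

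If after all uncrossings $S'$ crosses no cluster, Case~1 is established. Otherwise, some $C_j$ is still crossed by $S'$, and the corresponding $\tilde U_j$ is not $(1-\delta)$-boundary sparse in $C_j$; for clusters satisfying \pref{easycondition}, combining this with $\vol(\tilde U_j) \le \lmax/\phi$ and $\partial_G \tilde U_j \le \lmax$ contradicts the invariant, so the only way a crossing can persist is that $S' \cap C_j$ has volume at most $4\lmax/\phi$ in $C_j$. Contracting $V \setminus C_j$ into a single node yields the mirror cluster of $C_j$, in which $S'$ appears as a proper cut of volume at most $4\lmax/\phi$, i.e., a local cut, proving Case~2. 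The main obstacle will be the careful interleaving of uncrossings across pre-clusters and fragment-unions so that the global additive budget $\delta \cdot \partial S$ is respected uniformly, and tracking how the slack factor $H$ in the boundary-linked expander definition together with the two-sided conductance inequality contribute only the constant $4$ to the final local-cut volume bound.
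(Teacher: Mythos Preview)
Your proposal has the role of $(1-\delta)$-boundary sparseness inverted, and this breaks the argument. The pre-cluster invariant (Condition~\ref{easycondition} of \Cref{def:clusterdecomposition}) guarantees that the relevant $\tilde U_j$ is \emph{not} $(1-\delta)$-boundary sparse in $C_j$; that is exactly what makes uncrossing cheap, since non-sparseness gives $w(\tilde U_j, C_j\setminus \tilde U_j)\ge (1-\delta)\min\{w(\tilde U_j,V\setminus C_j),w(C_j\setminus\tilde U_j,V\setminus C_j)\}$, and hence the replacement $S\mapsto S\cup C_j$ or $S\setminus C_j$ increases the cut by at most a $\delta/(1-\delta)$ fraction of $w(\tilde U_j,C_j\setminus\tilde U_j)$. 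You state the opposite: you uncross when $\tilde U_j$ \emph{is} boundary sparse (where no such bound is available), and you claim a contradiction with the invariant when it is \emph{not} boundary sparse (but the invariant forbids sparse cuts, not non-sparse ones). Likewise, your ``otherwise'' branch is misidentified: if you uncross at every crossed cluster, no cluster remains crossed; the failure mode that forces Case~2 is that $S'$ becomes improper ($\partial S'=0$), not that some $C_j$ is still crossed.

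Even with the direction fixed, your iterative-amortization plan differs from the paper's route. The paper first proves, using the minimality of $S$, that $S$ can cross at most \emph{two} pre-clusters of $\P_1$ (\Cref{lem:atmosttwo}): if three were crossed, pigeonhole gives one $C_i$ with $w(S\cap C_i,C_i\setminus S)\le\lambda/3$, and non-boundary-sparseness then exhibits a proper cut strictly smaller than $S$. With at most two pre-clusters in play, a \emph{single} uncrossing at the cheaper one yields the $(1+2\delta)$ bound directly (\Cref{lem:3.6,lem:3.7}), and if that single step makes $S'$ improper, \Cref{cor:insidecut} places $S$ (or its complement) as a local cut in the corresponding mirror cluster. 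Your amortized sum $\sum_j w(\tilde U_j,C_j\setminus\tilde U_j)\le\partial S$ is a nice observation and could in principle replace the ``at most two'' lemma, but you would still need to handle the improper case correctly: detect the first uncrossing step at which $S'$ collapses, and argue that the cut \emph{just before} that step lies inside a single pre-cluster with cut-size at most $(1+2\delta)\lmax$, then invoke \Cref{cor:insidecut}. As written, that step is missing and the inverted sparseness condition makes the calculations invalid.
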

We prove this proposition in the rest of this section.

We let $\{C_j\}_{j \in [\ell]}$ be a cluster decomposition as defined in \Cref{lem:clusterdecomposition}, and first show that any minimum cut can cross at most two clusters using the following lemmata.

\begin{lemma}\label{lem:smallvolume}
  Let $\{C_j\}_{j \in [\ell]}$ be an $(\alpha, \phi, \lmax, H, \delta)$-cluster decomposition of a graph $G$ with $\alpha > \phi$.
    Let $S$ be a cut of $G$, and let $C_j$ be a cluster that $S$ crosses.
    Let $X$ equal $ S\inter C_j$ if $\vol(S\inter C_j) \le \frac {\vol (C_j)} 2 $, and let $X$ equal $C_j \setminus S$ otherwise.
    We have that $\vol(X) \le \frac {\boundary S} \phi$.

    In particular, if $S$ is a minimum proper cut of cut-size at most $\lmax$, we have that $\vol(X) \le \frac {\lmax} \phi$.
\end{lemma}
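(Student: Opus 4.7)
The plan is to apply the expander condition to a carefully chosen cut of the boundary-linked expander containing $C_j$, whose cut-edges all lie in $\boundary S$ and whose smaller side dominates $\vol(X)$. By Property~\ref{inanexpander} of \Cref{def:clusterdecomposition}, there exists $U \supseteq C_j$ such that $G[U]^{\alpha/\phi'}$ is a $\phi$-expander. I will apply expansion not to the natural cut $(X, U \setminus X)$ but to the cut $(S \inter U,\, U \setminus S)$ of $U$. Both sides are nonempty because $S$ crosses $C_j \subseteq U$, and every edge crossing this cut in $G$ has one endpoint in $S$ and one in $V \setminus S$, both inside $U$, and hence lies in $\boundary S$. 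Since self-loops do not contribute to cut-sizes,
$$ w_{G[U]^{\alpha/\phi'}}(S \inter U,\, U \setminus S) \;=\; w_G(S \inter U,\, U \setminus S) \;\le\; \boundary S. $$

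The expander condition then gives $\boundary S \ge \phi \min\{\vol_{G[U]^{\alpha/\phi'}}(S \inter U),\, \vol_{G[U]^{\alpha/\phi'}}(U \setminus S)\}$. A short degree calculation shows $\vol_{G[U]^{\alpha/\phi'}}(Y) \ge \vol_G(Y)$ for every $Y \subseteq U$, since the $\lceil\alpha/\phi'\rceil \ge 1$ self-loops added for each edge leaving $U$ contribute at least as much to the degrees as truncating those edges removes. It then remains to check, by a short case analysis on the two possible values of $X$, that $\vol_G(X)$ is at most both $\vol_G(S \inter U)$ and $\vol_G(U \setminus S)$. Since $S \inter U \supseteq S \inter C_j$ and $U \setminus S \supseteq C_j \setminus S$, and $X$ is by definition the smaller-volume side of the partition of $C_j$ into $S \inter C_j$ and $C_j \setminus S$, both $\vol_G(S \inter C_j) \ge \vol_G(X)$ and $\vol_G(C_j \setminus S) \ge \vol_G(X)$ hold (one by definition, the other from $\vol_G(X) \le \vol_G(C_j)/2$). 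Combining these inequalities yields $\vol_G(X) \le \boundary S / \phi$, and the ``in particular'' statement follows because $\boundary S \le \lmax$ when $S$ is a minimum \proper cut of size at most $\lmax$.

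The one subtlety, and what I expect to be the main pitfall, is that expansion must be applied to $(S \inter U,\, U \setminus S)$ rather than to the seemingly more natural cut $(X, U \setminus X)$. The latter would not work: edges from $X = S \inter C_j$ to $U \setminus C_j$ whose far endpoint happens to lie in $S$ (and symmetrically in the other case) contribute to $w_G(X, U \setminus X)$ but not to $\boundary S$, so the desired upper bound by $\boundary S$ would fail. Replacing $X$ by $S \inter U$ removes exactly these cluster-boundary edges whose endpoints land on the same side of $S$, while preserving a cut of $U$ whose smaller side still dominates $\vol(X)$ by the tie-breaking definition of $X$.
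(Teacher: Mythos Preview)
Your proof is correct and follows essentially the same approach as the paper: both pass to the containing boundary-linked expander $U$ (the paper calls it $C$), apply the $\phi$-expansion of $G[U]^{\alpha/\phi'}$ to the cut $(S\cap U,\,U\setminus S)$, and then observe that $\vol_G(X)$ is dominated by the smaller side via $S\cap C_j\subseteq S\cap U$ and $C_j\setminus S\subseteq U\setminus S$. Your explicit remark on why one must use the cut at the level of $U$ rather than $(X,U\setminus X)$, and your justification that $\vol_{G[U]^{\alpha/\phi'}}(Y)\ge\vol_G(Y)$ via $\lceil\alpha/\phi'\rceil\ge 1$, are points the paper states more tersely but uses identically.
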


\begin{proof}
    As $C_j$ is a cluster in the expander decomposition, $C_j$ is contained (or equal to) a cluster in the pre-cluster decomposition, which itself is contained in an $(\alpha,\phi')$-boundary linked expander as per Condition~\ref{inanexpander} of \Cref{def:clusterdecomposition}. Let $C$ be this expander.
    All volumes in this proof are the volumes of the sets in $G[C]^{\frac \alpha {\phi'}}$, which are an upper bound of the volumes in $G$. 
    We use $G[C]^{\frac \alpha {\phi'}}$ as this is the graph on which we have the guarantee of being a $\phi$-expander as per \Cref{thm:expanderdecomposition}.
    
    Let $X' = S\inter C$ if $\vol(S\inter C) \le \frac {\vol (C)} 2 $, and $X'= C \setminus S$ otherwise.
    Since $G[C]^{\frac \alpha {\phi'}}$ is a $\phi$-expander, we have that $\vol(X') \le \frac {w(X, C\setminus X)} \phi \le \frac {\boundary S} \phi$.
    Therefore, $\vol(X' \inter C_j) \le \frac {\boundary S} \phi$.
    Recall that $\vol(X) = \min\{\vol(S \inter C_j), \vol(C_j \setminus S)\}$.
    If $X' = S\inter C$ then
    $\vol(X) \le \vol(S \cap C_j) \le \vol(S\inter C) = \vol(X') \le \frac {\boundary S} \phi. $
    If $X' = C \setminus S$ then
    $\vol(X) \le \vol(C_j \setminus S ) \le \vol(C \setminus S) = \vol(X') \le \frac {\boundary S} \phi. $
    Thus, the result follows.
\end{proof}

For the rest of this paper, we will thus simply call our $(\alpha, \phi')$-boundary linked expanders as $\phi$-expanders.

\begin{corollary}\label{cor:insidecut}
    Let $\P_1$ be a pre-cluster decomposition, and let $C_j$ be a cluster in that decomposition. 
    Let $S \subseteq C_j$ be a proper cut of $G$ of cut-size $\Lambda$. 
    Then there exists in the mirror cluster of $C_j$ a cut of cut-size $\boundary S$, of volume at most $3\frac \Lambda \phi$ (in $G$).
\end{corollary}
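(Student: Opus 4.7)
The plan is to explicitly construct the desired cut in the mirror cluster by either taking $S$ itself or, symmetrically, the complement of $S$ inside $C_j$ together with the contracted vertex $c^*$ representing $V\setminus C_j$. The argument splits into three cases depending on how $S$ sits inside $C_j$. Recall that the mirror cluster $G/(V\setminus C_j)$ preserves the degree of every vertex of $C_j$, so volumes of subsets of $C_j$ are identical in $G$ and in the mirror cluster, while $\deg(c^*) = \boundary C_j$.

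First, handle the degenerate case $S = C_j$ separately: take $T = \{c^*\}$. Then $\boundary T = \deg(c^*) = \boundary C_j = \boundary S = \Lambda$ and $\vol(T) = \Lambda \le 3\Lambda/\phi$. Otherwise $S \subsetneq C_j$, so $S$ crosses $C_j$, and \Cref{lem:smallvolume} applies: the smaller side $U$ of $S$ in $C_j$ (either $S$ itself or $C_j\setminus S$) satisfies $\vol(U) \le \boundary S/\phi = \Lambda/\phi$.

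If $U = S$, simply take $T = S$ in the mirror cluster. The boundary of $S$ in the mirror cluster consists of the edges $S$--$(C_j\setminus S)$ together with the edges from $S$ to $c^*$, which correspond to edges from $S$ to $V\setminus C_j$ in $G$; summed, these are exactly $\boundary_G S = \Lambda$. And $\vol(T) = \vol(S) = \vol(U) \le \Lambda/\phi \le 3\Lambda/\phi$. The only slightly more delicate case is $U = C_j \setminus S$, for which I will take $T = U \cup \{c^*\}$. The same boundary calculation gives $\boundary T = \Lambda$. For the volume, the key step is to bound $\deg(c^*) = \boundary C_j$: decomposing $\boundary C_j = w(S, V\setminus C_j) + w(C_j\setminus S, V\setminus C_j)$, the first term is at most $\boundary S = \Lambda$ (it is part of $\boundary S$) and the second term is at most $\vol(C_j\setminus S) = \vol(U) \le \Lambda/\phi$. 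Hence $\boundary C_j \le \Lambda + \Lambda/\phi$ and
\[
\vol(T) \;=\; \vol(U) + \deg(c^*) \;\le\; \frac{\Lambda}{\phi} + \Lambda + \frac{\Lambda}{\phi} \;\le\; \frac{3\Lambda}{\phi},
\]
where the last inequality uses $\phi \le 1$, which gives $\Lambda \le \Lambda/\phi$.

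The main conceptual obstacle is the asymmetry: unlike in the subgraph $G[C_j]$, the mirror cluster adds the vertex $c^*$ whose degree $\boundary C_j$ is not a priori controlled by the pre-cluster decomposition (pre-clusters under Condition~\ref{easycondition} have no boundary-size bound). The trick that resolves this is the observation above that $\boundary C_j$ itself is bounded by $\boundary S + \vol(U)$ whenever the ``small side'' $U$ of $S$ in $C_j$ is the one away from $S$, so the cost of including $c^*$ is dominated by quantities we already control via the expander guarantee. This is what forces the constant $3$ in $3\Lambda/\phi$.
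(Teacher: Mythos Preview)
Your proof is correct and follows essentially the same approach as the paper: in both, the desired cut is $S$ when $\vol(S)\le\Lambda/\phi$, and $(C_j\setminus S)\cup\{c^*\}$ otherwise, with the key bound $\deg(c^*)=\boundary C_j\le w(S,V\setminus C_j)+w(C_j\setminus S,V\setminus C_j)\le\Lambda+\Lambda/\phi$. Your explicit treatment of the degenerate case $S=C_j$ is a minor addition over the paper's version (which silently relies on $\vol(\emptyset)=0$), and your write-up is a bit more careful about where \Cref{lem:smallvolume} applies, but the argument is the same.
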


\begin{proof}
    By \Cref{lem:smallvolume}, we have that either $S$ or $C_j \setminus S$ has volume at most $\frac \Lambda \phi$. In the case where $\vol(S) \le \frac \Lambda \phi$, the claim is trivial. 
    In the case where $\vol(C_j\setminus S)\le \frac \Lambda \phi$, note that in the mirror cluster, we have that $X=(C_j\setminus S)\union \{v\}$ has boundary-size equal to $\boundary S = \Lambda$, where $v$ is the vertex representing the contracted $G\setminus C_j$. 
    It remains to show that $\vol(X)\le 3\frac \Lambda \phi$. Note that $\vol(v) \le \vol(C_j\setminus S) + \boundary S \le 2\frac \Lambda \phi$, and thus $\vol(X) \le \vol(v)+\vol(C_j\setminus S) \le 3 \frac \Lambda \phi$.
\end{proof}

\begin{lemma}\label{lem:atmosttwo}
    Let $\P_1$ be an $(\alpha, \phi, \lmax, H, \delta)$-pre cluster decomposition, and $\P_2$ be its associated cluster decomposition. Let $f: \P_2\mapsto \P_1$ be the mapping that maps each cluster $C \in \P_2$ to the cluster it is contained in in $\P_1$.
    Let $S$ be a minimum \proper cut $S$ of cut-size at most $\lmax$. 
    Let $C'_1, \dots, C'_k\in \P_2$ be the clusters that $S$ crosses.
    Then $\card{\{f(C'_1), \dots, f(C'_k)\}} \le 2$.
\end{lemma}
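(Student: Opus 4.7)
The plan is a proof by contradiction. First observe that whenever $S$ crosses a cluster $C'_i\in\P_2$, it also crosses the containing pre-cluster $f(C'_i)\in\P_1$, since $S\cap f(C'_i)\supseteq S\cap C'_i\ne\emptyset$ and $f(C'_i)\setminus S\supseteq C'_i\setminus S\ne\emptyset$. Consequently $\{f(C'_1),\dots,f(C'_k)\}$ is exactly the collection of pre-clusters crossed by $S$, and it suffices to show that $S$ crosses at most two pre-clusters. So suppose, for contradiction, that $S$ crosses three distinct pre-clusters $C_{j_1},C_{j_2},C_{j_3}\in\P_1$.

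For each $i\in\{1,2,3\}$, apply \Cref{lem:smallvolume} to the cluster $C_{j_i}$ (which is contained in an $(\alpha,\phi H)$-boundary-linked expander by Condition~\ref{inanexpander} of \Cref{def:clusterdecomposition}): because $\boundary S\le\lmax$, one of $A_i:=S\cap C_{j_i}$ or $B_i:=C_{j_i}\setminus S$ has volume at most $\lmax/\phi$. Call this ``small side'' $X_i$. By the pigeonhole principle, two of $X_1,X_2,X_3$ lie on the same side of $S$; without loss of generality, $X_1=A_1$ and $X_2=A_2$, so $\vol A_1,\vol A_2\le\lmax/\phi$.

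The core step is to construct a proper cut of strictly smaller cut-size than $S$, contradicting its minimality. The natural candidate is $S':=S\setminus(A_1\cup A_2)$. This is a proper cut, since $S$ still crosses $C_{j_3}$ and $A_1\cup A_2$ is disjoint from $C_{j_3}$, so $S\cap C_{j_3}\subseteq S'$ and $C_{j_3}\setminus S\subseteq V\setminus S'$ remain nonempty and edge-connected across $S'$. A direct accounting of crossing edges yields
$$
\boundary S'-\boundary S \;=\; w\bigl(A_1\cup A_2,\;S\setminus(A_1\cup A_2)\bigr)\;-\;w\bigl(A_1\cup A_2,\;V\setminus S\bigr).
$$
Applying the minimality of $S$ to the proper cut $S\setminus A_i$ (for $i=1,2$, which is proper for the same reason) yields $w(A_i,S\setminus C_{j_i})\ge w(A_i,V\setminus S)$. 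The goal is to convert at least one of these into a \emph{strict} inequality using the structure of the pre-cluster decomposition: each $C_{j_i}$ satisfies either Condition~\ref{easycondition} (no $(1-\delta)$-boundary sparse cut of volume $\le\lmax/\phi$ and boundary $\le\lmax$, so $A_i$ is not boundary sparse in $C_{j_i}$) or Condition~\ref{hardcondition} ($\boundary C_{j_i}\le 6\lmax$), and in either case this produces a quantitative gap in the uncrossing identity.

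The main obstacle is the degenerate case in which $w(A_1,A_2)=0$ and all minimality inequalities are tight, so the first-order uncrossing does not strictly decrease the cut-size. The plan to handle this is to use the quantitative expansion of the containing expander together with the assumption $\lambda\ge\lmin$: since each $A_i$ is itself a proper cut of $G$, we have $\boundary_G A_i\ge\lmin$ and hence $\vol A_i\ge\lmin$; combining this with the $\phi$-expansion of the expander forces $w(A_i,B_i)\ge\phi\vol A_i$, and summing these contributions against $\boundary S\le\lmax$ and the forced equalities in the two uncrossing identities (applied one at a time, first at $C_{j_1}$, then at $C_{j_2}$, while keeping $C_{j_3}$ untouched) produces an incompatible system, yielding the desired contradiction. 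This mirrors the uncrossing case analysis in \cite{DBLP:conf/soda/El-HayekH025,DBLP:conf/soda/HenzingerLRW24}.
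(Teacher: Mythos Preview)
Your approach has a genuine gap, and it diverges from the paper's argument in a way that makes the gap hard to close.

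The paper does \emph{not} pigeonhole on which side of $S$ has small volume in each pre-cluster. Instead it pigeonholes on the internal crossing weights: from $\sum_i w(S\cap C_i,C_i\setminus S)\le\boundary S$ it picks a single pre-cluster $C_i$ with $w(S\cap C_i,C_i\setminus S)\le \boundary S/3$. It then uses the structure of the pre-cluster directly. If $C_i$ satisfies Condition~\ref{easycondition}, non-boundary-sparseness of $S\cap C_i$ (whose volume is $\le\lmax/\phi$ by \Cref{lem:smallvolume}) gives $\min\{w(S\cap C_i,V\setminus C_i),\,w(C_i\setminus S,V\setminus C_i)\}\le \frac{1}{1-\delta}\cdot\frac{\boundary S}{3}\le \frac{\boundary S}{2}$, and whichever of $S\cap C_i$ or $C_i\setminus S$ attains this minimum is a proper cut (since $C_i$ is connected) of size at most $\frac{\boundary S}{3}+\frac{\boundary S}{2}<\boundary S$, contradicting minimality. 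If $C_i$ satisfies Condition~\ref{hardcondition}, the paper invokes Property~\ref{fragprop1} of \Cref{thm:fragmenting} to obtain a union $A$ of fragmented sub-clusters in which $S\cap A$ is not $(1-\delta)$-boundary sparse, and repeats the same inequality chain with $A$ in place of $C_i$.

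Your route, removing $A_1\cup A_2$ from $S$, runs into the degenerate case you yourself flag: when $w(A_1,A_2)=0$ and both minimality inequalities $w(A_i,S\setminus A_i)\ge w(A_i,V\setminus S)$ are tight, you get $\boundary S'=\boundary S$, not strict. Your plan to break the tie via ``$w(A_i,B_i)\ge\phi\,\vol A_i$'' does not work as stated: $C_{j_i}$ is only \emph{contained in} an $(\alpha,\phi')$-boundary-linked expander (Condition~\ref{inanexpander}); it need not itself be a $\phi$-expander, so no such conductance lower bound is available on $G[C_{j_i}]$. Even if it were, this would bound $w(A_i,B_i)$ from below, which does not help turn the uncrossing inequality into a strict one. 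Moreover, you never say how Condition~\ref{hardcondition} (the fragmented case) would yield any ``quantitative gap'' at all; the paper needs \Cref{thm:fragmenting} here, and your proposal does not invoke it. The fix is to abandon the two-cluster removal and argue exactly as the paper does on a single pre-cluster with internal crossing weight at most $\boundary S/3$.
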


\begin{proof}
    Recall that $\delta \le \frac 1 3$.
    Assume by contradiction that $\card{\{f(C'_1), \dots, f(C'_k)\}}\ge 3$, and wlog assume the clusters $C_1$, $C_2$ and $C_3$ belong to $\{f(C'_1), \dots, f(C'_k)\}$.
    Since $S$ is a minimum \proper cut, we know that $w(S, V\setminus S)\le {\lmax}$.
    As the clusters are vertex disjoint, $w(S\inter C_1, C_1 \setminus S)+w(S\inter C_2, C_2 \setminus S)+w(S\inter C_3, C_3 \setminus S) \le w(S, V\setminus S) $ and thus there exists a $C_i$ such that $w(S\inter C_i, C_i \setminus S)\le \frac {w(S, V\setminus S)} 3$.

We have two cases. Either $C_i$ satisfies Condition~\ref{easycondition} of \Cref{def:clusterdecomposition}, which means $C_i$ contains no $(1-\delta)$-boundary sparse cut of volume at most $\frac \lmax \phi$ and of cut-size at most $\lmax$, or it satisfies Condition~\ref{hardcondition}. 
    
    \begin{enumerate}
        \item In the first case, since $C_i$ contains no $(1-\delta)$-boundary-sparse cuts of cut-size at most ${\lmax}$ and of volume at most $\frac \lmax \phi$, and as by \Cref{lem:smallvolume}, $S\inter C_i$ has volume at most $\frac \lmax \phi$, we have that $S\inter C_i$ is not $(1-\delta)$-boundary-sparse:
    $$
    \frac {w(S, V\setminus S)} 3 \ge w(S\inter C_i, C_i\setminus S) \ge (1-\delta) \min\{w(S\inter C_i, V\setminus C_i), w(V \setminus (S\inter C_i), V\setminus C_i)\}
    $$
    And hence either $w(S\inter C_i, V\setminus C_i) \le \frac {w(S, V\setminus S)} {3(1-\delta)} \le \frac {w(S, V\setminus S)} 2 $ and $S\inter C_i$ is a cut of cut-size at most $w(S\inter C_i, C_i\setminus S)+w(S\inter C_i, V\setminus C_i) \le \frac {w(S, V\setminus S)} 3 + \frac {w(S, V\setminus S)} 2$, a contradiction to $S$ being a minimum \proper cut (remember that $C_i$ is connected by definition of a pre-cluster decomposition and thus $S\inter C_i$ is a proper cut),
    or $w(C_i\setminus S, V\setminus C_i) \le \frac {w(S, V\setminus S)} {3(1-\delta)} \le \frac {w(S, V\setminus S)} 2 $ and $C_i\setminus S$ is a cut of cut-size at most $\frac {w(S, V\setminus S)} 3 + \frac {w(S, V\setminus S)} 2$, also a contradiction to $S$ being a minimum \proper cut.
    \item If $C_i$ satisfies Condition~\ref{hardcondition}, then let $\{A_1, \dots, A_\kappa\}\subsetneq \P_2$ be the partition of $C_i$ obtained by fragmenting $C_i$, and let $A_j\in \{C'_1, \dots, C'_k\} \inter \{A_1, \dots, A_\kappa\}$ be one of the fragmented clusters that $S$ crosses.
    Then $S\inter A_j$ is a proper cut.
    By \Cref{thm:fragmenting}, there is a subset of indices $Q\subseteq [\kappa]$ such that $A=\union_{q \in Q} A_q$ satisfies that $S\inter A$ is not $(1-\delta)$-boundary sparse in $A$.
    Similarly to the first case, we then have:
    $$
    \frac {w(S, V\setminus S)} 3 \ge w(S\inter C_i, C_i\setminus S)\ge w(S\inter A, A\setminus S)  \ge (1-\delta) \min\{w(S\inter A, V\setminus A), w(V \setminus (S\inter A), V\setminus A)\}
    $$
    And hence either $w(S\inter A, V\setminus A) \le \frac {w(S, V\setminus S)} {3(1-\delta)} \le \frac {w(S, V\setminus S)} 2 $ and $S\inter A$ is a cut of cut-size at most $w(S\inter A, A\setminus S)+w(S\inter A, V\setminus A) \le \frac {w(S, V\setminus S)} 3 + \frac {w(S, V\setminus S)} 2$, a contradiction to $S$ being a minimum \proper cut ($S\inter A$ must be a proper cut of $A$ as otherwise it cannot be non $(1-\delta)$-boundary-sparse in $A$),
    or $w(A\setminus S, V\setminus A) \le \frac {w(S, V\setminus S)} {3(1-\delta)} \le \frac {w(S, V\setminus S)} 2 $ and $A\setminus S$ is a cut of cut-size at most $\frac {w(S, V\setminus S)} 3 + \frac {w(S, V\setminus S)} 2$, also a contradiction to $S$ being a minimum \proper cut.    
    \end{enumerate}
\end{proof}

\begin{corollary}\label{cor:1}
 Let $\P_1$ be an $(\alpha, \phi, \lmax, H, \delta)$-pre-cluster decomposition, and $\P_2$ be its associated cluster decomposition. Let $f: \P_2\mapsto \P_1$ be the mapping that maps each cluster $C \in \P_2$ to the cluster that contains it in $\P_1$.
    Let $S$ be a minimum \proper cut $S$ of cut-size at most $\lmax$. 
     Let $C'_1, \dots, C'_k\in \P_2$ be the clusters $S$ crosses, if any.
    We have 3 cases:
    \begin{enumerate}[noitemsep, label=(\arabic*)]
    \item $\card{\{f(C'_1), \dots, f(C'_k)\}} =0 $.
    \item $\card{\{f(C'_1), \dots, f(C'_k)\}} =1$ \label{case1}
    \item $\card{\{f(C'_1), \dots, f(C'_k)\}} = 2$.  \label{case2}
    \end{enumerate}
\end{corollary}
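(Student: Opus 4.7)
The plan is to simply observe that \Cref{cor:1} is essentially a restatement of \Cref{lem:atmosttwo} broken into cases. That lemma already proves the key non-trivial bound: for any minimum proper cut $S$ of cut-size at most $\lmax$, the set $\{f(C'_1), \dots, f(C'_k)\}$ of pre-clusters that contain at least one element of $\{C'_1, \dots, C'_k\}$ has cardinality at most $2$. Since this cardinality is a non-negative integer, the only possible values are $0$, $1$, and $2$, which correspond exactly to the three cases listed.

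First I would invoke \Cref{lem:atmosttwo} to get $\card{\{f(C'_1), \dots, f(C'_k)\}} \le 2$. Then I would observe that by definition, $\{f(C'_1), \dots, f(C'_k)\}$ is a finite set of clusters of $\P_1$, so its cardinality is a non-negative integer. Combining these facts exhausts the three enumerated cases, completing the proof.

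There is no real obstacle here; the corollary is purely an organizational statement whose purpose is to set up a case analysis in later proofs (presumably to handle cases (1), (2), and (3) separately when showing the uncrossing guarantee of \Cref{lem:clusterdecomposition}). The actual content of the bound lies entirely in \Cref{lem:atmosttwo}, which uses the sparsity condition and the fragmenting guarantee from \Cref{thm:fragmenting} to derive a contradiction when three or more pre-clusters would be crossed. The proof of the corollary itself is therefore a one-line deduction.
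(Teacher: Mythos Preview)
Your proposal is correct and matches the paper's approach exactly. The paper does not even supply a separate proof for \Cref{cor:1}; it is stated as an immediate corollary of \Cref{lem:atmosttwo}, which gives $\card{\{f(C'_1), \dots, f(C'_k)\}} \le 2$, and the three cases are simply the three possible values of this non-negative integer.
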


If we are in Case~(1) of \Cref{cor:1}, then we trivially are in Case~1. of \Cref{lem:clusterdecomposition} by simply setting $S'=S$, as $\{f(C'_1), \dots, f(C'_k)\}$ being empty can only mean that $S$ crosses no cluster. If on the other hand we are in either case (2) or (3), we need to uncross $S$ to get $S'$. Next we show that in those two cases there exists  a cut $S'$ of cut cut-size at most $(1+2\delta) {\lambda} $ that crosses no cluster. The proofs exploit the definition of $(1-\delta)$-boundary sparseness and \Cref{lem:smallvolume}.

\begin{lemma}\label{lem:3.6}
    If we are in Case (2) of \Cref{cor:1}, then there exists a cut $S'$ that is a $(1+2 \delta)$-approximation of $S$, and satisfies either:
    \begin{enumerate}[label=\roman*., noitemsep]
        \item $\vol_{G/(V\setminus C_1)}(S') \le 4\lmax$, where $C_1$ is a cluster of $\P_1$, and $S'$ is a cut of $G/(V\setminus C_1)$, or \label{casei}
        \item $S' \subsetneq V$ crosses no cluster and $S'$ is a cut of $G$. \label{caseii}
    \end{enumerate}
\end{lemma}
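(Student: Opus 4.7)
The plan is to uncross the cut $S$ inside the pre-cluster $C_1\in\P_1$ (or inside its fragmented refinement in $\P_2$), producing a cut $S'$ that respects the cluster decomposition while paying at most a factor $(1+2\delta)$ in cut-size. By \Cref{lem:smallvolume}, after swapping $S$ with its complement if necessary, we may assume $A:=S\cap C_1$ has $\vol_G(A)\le\lmax/\phi$. Write $B=C_1\setminus S$, $D=S\setminus C_1$, $E=V\setminus(S\cup C_1)$; the two candidate uncrossings $S\setminus C_1=D$ and $S\cup C_1=A\cup B\cup D$ have boundary changes from $\partial S$ equal to $w(A,D)-w(A,B)-w(A,E)$ and $w(B,E)-w(A,B)-w(B,D)$, respectively.

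The next step is to split on whether $C_1$ satisfies the easy condition~\ref{easycondition} or the hard condition~\ref{hardcondition} of \Cref{def:clusterdecomposition}. In the easy case, $C_1$ is itself a cluster of $\P_2$, and the easy condition together with the volume bound on $A$ forces $A$ to not be $(1-\delta)$-boundary sparse in $C_1$, i.e., $w(A,B)\ge(1-\delta)\min\{w(A,V\setminus C_1),w(B,V\setminus C_1)\}$. Picking $S'=S\setminus C_1$ if the minimum is attained on the $A$ side and $S'=S\cup C_1$ otherwise, and substituting the inequality into the boundary-change formula, yields $\partial S'-\partial S\le\delta\,w(A,V\setminus C_1)\le\delta\,w(A,B)/(1-\delta)\le 2\delta\,\partial S$, using that the edges counted by $w(A,B)$ lie inside $\partial S$ and that $\delta\le 1/2$. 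If $S'\notin\{\emptyset,V\}$, then $S'$ crosses no cluster of $\P_2$, giving case~\ref{caseii}.

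In the hard case, \Cref{thm:fragmenting} applied to the set $A$ (whose volume is at most $\lmax/\phi$ and whose boundary inside $G[C_1]$ is at most $\lmax$) yields a partition $\mathcal A$ of the fragmented sub-clusters of $C_1$ such that for every block $\hat A=\bigcup_{A'\in A}A'$, the set $S\cap\hat A$ is not $(1-\delta)$-boundary sparse in $\hat A$. I would uncross block-by-block in arbitrary order, setting $S_i=S_{i-1}\,\Delta\,\hat A_i$ in the direction the not-sparse inequality for $S$ favors; vertex-disjointness of the blocks guarantees $S_{i-1}\cap\hat A_i=S\cap\hat A_i$, so these inequalities continue to apply. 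The per-step analysis from the easy case gives $\partial S_i-\partial S_{i-1}\le\delta\,w(S\cap\hat A_i,V\setminus\hat A_i)\le\delta\,w(S\cap\hat A_i,\hat A_i\setminus S)/(1-\delta)$. Since the blocks partition $C_1$, the edges counted by $w(S\cap\hat A_i,\hat A_i\setminus S)$ for distinct $i$ are pairwise disjoint subsets of $\partial S$, and telescoping yields the \emph{additive} bound $\partial S'-\partial S\le 2\delta\,\partial S$ rather than a multiplicative $(1+2\delta)^{|\mathcal A|}$ blow-up. Once more, $S'$ crosses no cluster of $\P_2$, giving case~\ref{caseii}.

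Finally, I would handle the degenerate subcase where the uncrossing collapses $S'$ to $\emptyset$ or to $V$, which happens precisely when $S\subseteq C_1$ or $V\setminus C_1\subseteq S$. In this regime $S$ (or its complement in $G$) is, as is, a proper cut of the mirror cluster $G/(V\setminus C_1)$ with unchanged cut-size. The in-$C_1$ side of this cut has volume at most $\lmax/\phi$ by \Cref{lem:smallvolume}; and if the contracted vertex $v^\star$ must be included, $\vol(v^\star)=\partial_G C_1\le\vol_G(C_1\setminus S)\le\lmax/\phi$ because no edges of $S$ cross out of $C_1$. Summing gives a mirror-cluster volume of at most $4\lmax/\phi$, placing us in case~\ref{casei}. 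The main obstacle will be establishing the additive (rather than multiplicative) character of the block-by-block uncrossings in the hard case, which hinges on the pairwise edge-disjointness of the block contributions $w(S\cap\hat A_i,\hat A_i\setminus S)$ inside $\partial S$.
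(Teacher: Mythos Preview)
Your easy case is essentially the paper's argument. The hard case, however, diverges: the paper performs a \emph{single} uncrossing with one block $A$ of the fragmenting partition (the one containing a crossed sub-cluster $A_j$) and then simply asserts ``we are in Case~\ref{caseii}''. You instead uncross block by block across the entire partition $\mathcal A$ supplied by \Cref{thm:fragmenting}. Your route is the more careful one: after a single uncrossing with $A\subsetneq C_1$ the resulting set may still cross sub-clusters lying in $C_1\setminus A$, so the paper's one-step claim is at best elliptic. Your block-by-block scheme genuinely yields an $S'$ that respects every $\P_2$-cluster, and your observation that the increments $w(S\cap\hat A_i,\hat A_i\setminus S)$ are edge-disjoint subsets of $\partial S$ is exactly what keeps the total loss additive (a single $(1+2\delta)$ factor) rather than compounding to $(1+2\delta)^{|\mathcal A|}$. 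The invariant $S_{i-1}\cap\hat A_i=S\cap\hat A_i$ that you isolate is precisely what makes the not-sparse inequalities for the \emph{original} $S$ applicable at every step.

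Two small repairs. First, the notation $S_i=S_{i-1}\,\Delta\,\hat A_i$ is not what you want (the symmetric difference would leave $S_i\cap\hat A_i=\hat A_i\setminus S$); write $S_i\in\{S_{i-1}\setminus\hat A_i,\ S_{i-1}\cup\hat A_i\}$ instead. Second, in the degenerate subcase your volume bound for $v^\star$ does not go through as stated: edges from $S$ to $V\setminus C_1$ may well exist (they sit in $\partial S$), so ``no edges of $S$ cross out of $C_1$'' is false, and $\partial_G C_1\le\vol_G(C_1\setminus S)$ is not a valid inequality. The clean fix is to invoke \Cref{cor:insidecut} directly, as the paper does, which hands you a mirror-cluster cut of cut-size $\partial S$ and volume at most $3\lmax/\phi$. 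You should also widen the degenerate subcase from $S'\in\{\emptyset,V\}$ to $\partial S'=0$; the paper handles this by assuming at the outset, without loss of generality, that the minimum proper cut $S$ is connected.
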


\begin{proof}
    Wlog, assume that $S$ is connected. Also, wlog, let $C_1$ be the unique element of $ \{f(C'_1), \dots, f(C'_k)\}$.
    We have two cases. Either $C_1$ satisfies Condition~\ref{easycondition}, which means $C_1$ contains no $(1-\delta)$-boundary sparse cut of volume at most $\frac \lmax \phi$ and of cut-size at most $\lmax$, or it satisfies Condition~\ref{hardcondition}.
\begin{enumerate}
    \item 
   Since $C_1$ contains no $(1-\delta)$-boundary sparse cut of volume at most $\frac \lmax \phi$ and of cut-size at most $\lmax$, and $S\inter C_1$ has volume at most $\frac \lmax \phi$ by \Cref{lem:smallvolume}, we have that $w(S\inter C_1, C_1 \setminus S) \ge (1-\delta)\min\{w(C_1 \inter S, V\setminus C_1), w(C_1 \setminus S, V\setminus C_1)\}$.
    Let $X$ be the set among $C_1 \inter S$, $C_1 \setminus S$ that satisfies 
    $w(X, V\setminus C_1) = \min\{w(C_1 \inter S, V\setminus C_1), w(C_1 \setminus S, V\setminus C_1)\}$.
    Note that $w(X, V\setminus C_1) \leq w(S\cap C_1, C_1\setminus S)/(1-\delta) = w(X,C_1 \setminus X)/(1-\delta)$, which implies that
    $w(X, V\setminus C_1) - w(X,C_1 \setminus X) \le \delta \cdot w(X, V\setminus C_1) \le (\delta/(1-\delta)) w(S\cap C_1, C_1\setminus S) \le (3\delta/2) \cdot w(S\cap C_1, C_1\setminus S) $.

    Let's now temporarily set $S' = S \union X$ if $X \not\subset S$, or $S' = S \setminus X$ otherwise, i.e., we ``uncross'' $S$.
    Then we have that $w(S', V\setminus S') \le w(S, V\setminus S) - w(X, C_1 \setminus X) + w(X, V\setminus C_1) \le 
    w(S, V\setminus S) +  3\delta/2 \cdot w(S\cap C_1, C_1\setminus S)  \le
    (1+3\delta/2) w(S, V\setminus S)$.
    
    If $S'$ is a \proper cut ($\boundary S' \neq 0$), we are in Case~\ref{caseii}

    If $\boundary S'=0$, let $K$ be the connected component of $G$ that contains $S$. Then either $S=X$ (and $S'=\emptyset$) or $S=K \setminus X$ (and $S'= K$).
    We conclude by \Cref{cor:insidecut} that we are in Case~\ref{casei}.

    \item If $C_1$ satisfies Condition~\ref{hardcondition}, then let $\{A_1, \dots, A_\kappa\}\subsetneq \P_2$ be the partition of $C_i$ obtained by fragmenting $C_i$, and let $A_j\in \{C'_1, \dots, C'_k\} \inter \{A_1, \dots, A_\kappa\}$ be one of the fragmented clusters that $S$ crosses.
    Then $S\inter A_j$ is a proper cut.
    By \Cref{thm:fragmenting}, there is a subset of indices $Q\subseteq [\kappa]$ such that $A=\union_{q \in Q} A_q$ satisfies that $S\inter A$ is not $(1-\delta)$-boundary sparse in $A$.

    Similarly to the first case, we have that $w(S\inter A, A \setminus S) \ge (1-\delta)\min\{w(A \inter S, V\setminus A), w(A \setminus S, V\setminus A)\}$.
    Let $X$ be the set among $A \inter S$, $A \setminus S$ that satisfies 
    $w(X, V\setminus A) = \min\{w(A \inter S, V\setminus A), w(A \setminus S, V\setminus A)\}$.
    Note that $w(X, V\setminus A) \leq w(S\cap A, A\setminus S)/(1-\delta) = w(X,A \setminus X)/(1-\delta)$, which implies that
    $w(X, V\setminus A) - w(X,A \setminus X) \le \delta \cdot w(X, V\setminus A) \le (\delta/(1-\delta)) w(S\cap A, A\setminus S) \le (3\delta/2) \cdot w(S\cap A, A\setminus S) $.

    Let's now temporarily set $S' = S \union X$ if $X \not\subset S$, or $S' = S \setminus X$ otherwise, i.e., we ``uncross'' $S$.
    Then we have that $w(S', V\setminus S') \le w(S, V\setminus S) - w(X, A \setminus X) + w(X, V\setminus A) \le 
    w(S, V\setminus S) +  3\delta/2 \cdot w(S\cap A, A\setminus S)  \le
    (1+3\delta/2) w(S, V\setminus S)$.
    
    If $S'$ is a \proper cut ($\boundary S' \neq 0$), we are in Case~\ref{caseii}

    If $\boundary S'=0$, let $K$ be the connected component of $G$ that contains $S$. Then either $S=X$ (and $S'=\emptyset$) or $S=K \setminus X$ (and $S'= K$).
    
    We conclude by \Cref{cor:insidecut} that we are in Case~\ref{casei}.

    \end{enumerate}
\end{proof}

\begin{lemma}\label{lem:3.7}
    If we are in Case 3 of~\Cref{cor:1}, then there exists a cut $S'$ of either $G/(V\setminus C_1)$ or $G/(V\setminus C_2)$ ($\{C_1, C_2\} =\{f(C'_1), \dots, f(C'_k)\}$) that is a $(1+\frac 3 4 \delta)$-approximation of $S$, and satisfies $\vol_{G/(V\setminus C_i)} (S') \le 4\lmax$ (where $j \in\{1,2\}$ depending on where $S'$ exists).
\end{lemma}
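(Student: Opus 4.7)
My plan mirrors the two-case structure of the proof of \Cref{lem:3.6}, with one quantitative improvement: because $S$ now crosses two pre-clusters rather than one, the smaller of $w(S\cap C_1,C_1\setminus S)$ and $w(S\cap C_2,C_2\setminus S)$ is at most $\lambda/2$ rather than $\lambda$, and this extra factor $\tfrac{1}{2}$ is what tightens the approximation slack from $\tfrac{3\delta}{2}$ to $\tfrac{3\delta}{4}$. I would start by relabeling so that $w(S\cap C_1,C_1\setminus S)\le w(S\cap C_2,C_2\setminus S)$; since $S$ crosses no $\P_1$-cluster outside $\{C_1,C_2\}$, the two quantities sum to at most $\boundary S=\lambda$ and hence $w(S\cap C_1,C_1\setminus S)\le\lambda/2$.

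Next I would case on whether $C_1$ satisfies Condition~\ref{easycondition} or Condition~\ref{hardcondition} of \Cref{def:clusterdecomposition}, exactly as in \Cref{lem:3.6}. In both cases this identifies a set $Y\subseteq C_1$ (equal to $C_1$ in the easycondition case, or to a union $A$ of fragmented sub-clusters provided by \Cref{thm:fragmenting} in the hardcondition case) inside which $S\cap Y$ is not $(1-\delta)$-boundary sparse, and the containment $Y\subseteq C_1$ preserves the halved bound $w(S\cap Y, Y\setminus S)\le w(S\cap C_1,C_1\setminus S)\le\lambda/2$. Letting $X$ be the side of $\{S\cap Y, Y\setminus S\}$ minimizing $w(\cdot, V\setminus Y)$, and setting $\tilde S:=S\cup X$ if $X\not\subseteq S$ and $\tilde S:=S\setminus X$ otherwise, the non-sparseness bound yields
\[
\boundary\tilde S \le \lambda + \frac{\delta}{1-\delta}\cdot w(S\cap Y, Y\setminus S) \le \lambda + \frac{\delta}{1-\delta}\cdot\frac{\lambda}{2} \le \left(1+\tfrac{3\delta}{4}\right)\lambda
\]
for $\delta\le 1/3$; this is the only step that invokes the halved intra-cluster boundary.

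Finally I would extract $S'$ as a cut of $G/(V\setminus C_2)$ by applying the construction of \Cref{cor:insidecut} to $T:=\tilde S\cap C_2=S\cap C_2$, which is a nonempty proper subset of $C_2$ since $S$ crosses $C_2$. \Cref{lem:smallvolume} guarantees $\min\{\vol_G(T),\vol_G(C_2\setminus T)\}\le \lmax/\phi$, and including the contracted vertex $v$ only on the smaller complement side gives $\vol_{G/(V\setminus C_2)}(S')\le 3\lmax/\phi$, well within the required bound. The mirror boundary $\boundary_{G/(V\setminus C_2)}(S')$ coincides with $\boundary_G$ of the corresponding cut $T$ or $T\cup(V\setminus C_2)$ in $G$. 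The main obstacle will be verifying that at least one of these two candidate mirror boundaries is at most $\boundary\tilde S\le(1+\tfrac{3\delta}{4})\lambda$: a naive projection of $\tilde S$ may increase the boundary, but after the uncrossing at $C_1$ the set $\tilde S$ crosses no $\P_1$-cluster in $V\setminus C_2$, so $\tilde S\cap(V\setminus C_2)$ is a union of whole $\P_1$-clusters, and the correct choice of the side of $v$---combined with the minimality of $S$ forcing both candidate mirror cuts to have boundary at least $\lambda$---should yield the required bound.
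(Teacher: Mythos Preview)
Your plan diverges from the paper at the crucial final step, and the obstacle you flag is a genuine gap rather than a routine verification.

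The paper does \emph{not} uncross $S$ at $C_1$ and then project to the mirror of $C_2$. Instead it works entirely inside $C_1$: with $X$ the side of $\{S\cap C_1,\,C_1\setminus S\}$ (or $\{S\cap A,\,A\setminus S\}$ in the hardcondition case) minimizing $w(\cdot,V\setminus C_1)$, it bounds $\boundary_G X$ \emph{directly}:
\[
\boundary_G X \;=\; w(X,V\setminus C_1)+w(X,C_1\setminus X)\;\le\;\Bigl(\tfrac{1}{1-\delta}+1\Bigr)\,w(X,C_1\setminus X)\;\le\;\tfrac{2-\delta}{2(1-\delta)}\,\lambda\;\le\;\bigl(1+\tfrac{3\delta}{4}\bigr)\lambda,
\]
using $w(X,C_1\setminus X)=w(S\cap C_1,C_1\setminus S)\le\lambda/2$ and non-sparseness. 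Then \Cref{cor:insidecut} applied to $X\subseteq C_1$ yields the local cut $S'$ in the mirror of $C_1$, not $C_2$. No uncrossed set $\tilde S$ is ever formed.

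Your route computes $\boundary\tilde S\le(1+\tfrac{3\delta}{4})\lambda$ correctly, but then needs to relate $\boundary\tilde S$ to $\boundary_G(S\cap C_2)$ or $\boundary_G(C_2\setminus S)$, and there is no such relation in general: edges between $\tilde S\cap(V\setminus C_2)$ and $C_2\setminus S$, and between $(V\setminus C_2)\setminus\tilde S$ and $S\cap C_2$, all contribute to $\boundary\tilde S$ but are redistributed arbitrarily when you collapse $V\setminus C_2$. Moreover, in the hardcondition case your claim that $\tilde S$ crosses no $\P_1$-cluster outside $C_2$ can fail, since uncrossing only at $A\subsetneq C_1$ may leave $\tilde S$ still crossing $C_1$. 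The fix is simply to drop the uncrossing and take $S'$ from $X$ inside the mirror of $C_1$.
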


\begin{proof}
Let ${\lambda}=w(S, V\setminus S)$.
    We have that $w(C_1\inter S, C_1 \setminus S) + w(C_2\inter S, C_2 \setminus S) \le w(S, V\setminus S) = {{\lambda}}$.
    Assume thus wlog that $w(C_1\inter S, C_1 \setminus S) \le \frac {{\lambda}} 2$.

    Again, we have two cases:
    \begin{enumerate}
        \item If $C_1$ satisfies Condition~\ref{easycondition}, it contains no $(1-\delta)$-boundary sparse cuts of volume at most $\frac \lmax \phi$ of cut-size at most $\lmax$, and thus we have that $S\inter C_1$ is not $(1-\delta)$-boundary sparse in $C_1$. Thus $w(S\inter C_1, C_1 \setminus S) \ge (1-\delta)\min\{w(C_1 \inter S, V\setminus C_1), w(C_1 \setminus S, V\setminus C_1)\}$.
    Let $X$ be the set among $C_1 \inter S$, $C_1 \setminus S$ that satisfies 
     $w(X, V\setminus C_1) = \min\{w(C_1 \inter S, V\setminus C_1), w(C_1 \setminus S, V\setminus C_1)\}$.
    It follows that 
    $w(X, C_1 \setminus X) = w(S\inter C_1, C_1 \setminus S) \ge (1-\delta)w(X, V\setminus C_1)$.
    We thus have that $w(X, V\setminus X) = w(X, V\setminus C_1) + w(X, C_1\setminus X) \le ((1/(1-\delta) + 1) w(X, C_1\setminus X) \le ({{2-\delta}\over {2(1-\delta)}}) {{\lambda}}  = (1+ {\delta \over 2-2\delta}) {\lambda} \le (1+3\delta/4){\lambda}$. 
    
    We conclude by \Cref{cor:insidecut}.

    \item  If $C_1$ satisfies Condition~\ref{hardcondition}, then let $\{A_1, \dots, A_\kappa\}\subsetneq \P_2$ be the partition of $C_1$ obtained by fragmenting $C_1$, and let $A_j\in \{C'_1, \dots, C'_k\} \inter \{A_1, \dots, A_\kappa\}$ be one of the fragmented clusters that $S$ crosses.
    Then $S\inter A_j$ is a proper cut.
    By \Cref{thm:fragmenting}, there is a subset of indices $Q\subseteq [\kappa]$ such that $A=\union_{q \in Q} A_q$ satisfies that $S\inter A$ is not $(1-\delta)$-boundary sparse in $A$, and $A_j \subseteq A$.

    Similarly to the first case, we have that $S\inter A$ is not $(1-\delta)$-boundary sparse in $A$. Thus $w(S\inter A, A \setminus S) \ge (1-\delta)\min\{w(A \inter S, V\setminus A), w(A \setminus S, V\setminus A)\}$.
    Let $X$ be the set among $A \inter S$, $A \setminus S$ that satisfies 
     $w(X, V\setminus A) = \min\{w(A \inter S, V\setminus A), w(A \setminus S, V\setminus A)\}$.
    It follows that 
    $w(X, A \setminus X) = w(S\inter A, A \setminus S) \ge (1-\delta)w(X, V\setminus A)$.
    We thus have that $w(X, V\setminus X) = w(X, V\setminus A) + w(X, A\setminus X) \le ((1/(1-\delta) + 1) w(X, A\setminus X) \le ({{2-\delta}\over {2(1-\delta)}}) {{\lambda}}  = (1+ {\delta \over 2-2\delta}) {\lambda} \le (1+3\delta/4){\lambda}$. 
    We conclude by \Cref{cor:insidecut} that we are in Case~\ref{casei}.

    \end{enumerate}
\end{proof}

Note that both lemmata show the existence of a \emph{\proper cut} $S'$, i.e., $S'$ is required to be both nonempty and different from $V$.
This concludes the proof of \Cref{lem:clusterdecomposition}.

\subsection{Cluster Hierarchy}

The $(\alpha, \phi, \lmax, H, \delta)$-cluster decomposition and the mirror clusters are  the building blocks of the cluster hierarchy, defined as follows:

\begin{definition}[Cluster Hierarchy]\label{def:clusterhierarchy}
    Let $\hbar$ be a positive integer. An $h$-cluster hierarchy is a set of graphs $G_1, \dots, G_\hbar$ with $G_1=G$
    with $\vol({G}_\hbar) \le \frac \lmax \phi$, together with $\hbar$ many $(\alpha, \phi, \lmax, H, \delta)$-cluster decompositions, one for each ${G}_j, j\in [\hbar]$, and a mirror cluster for each cluster of the pre-cluster decomposition of each graph. 
    Moreover, each ${G}_j, j>1$ is obtained from ${G}_{j-1}$ by contracting each cluster in the cluster decomposition.
\end{definition}

The main point of the $\hbar$-cluster hierarchy is to ensure that we only have to maintain local cuts and their cut-sizes, in other words, to transform any minimum cut into a local cut, if not on this level of the hierarchy then further down in the hierarchy.

We will use the $\hbar$-cluster hierarchy as follows: Let $S$ be a minimum proper cut of cut-size in $[\lmin, \lmax]$. 
Assume for the moment, that we have access to all local cuts.
By \Cref{lem:clusterdecomposition}  there exists in ${G}_1$ an approximate cut $S'_1$ that is either local or that crosses no cluster.
In the former case, we are done. In the latter, $S_1'$ induces a cut in ${G}_2$. 
Again by \Cref{lem:clusterdecomposition} there exists in ${G}_2$ an approximation $S'_2$ of $S_1'$ that is either local or that crosses no cluster.
Again, if it is local, we are done. If it is not, then we move on to ${G}_3$. 
Following this logic, we arrive the following proposition:

\begin{proposition}\label{prop:clusterhierarchy}
     Let $\hbar$ be a positive integer. In an $\hbar$-cluster hierarchy of $G$, there exists a local cut $S'$ in a mirror cluster $C$ of the pre-cluster decomposition of a collapsed graph $G_j$ or in the final collapsed graph, that is a $(1+2\delta)^\hbar$ approximation of the minimum proper cut $S$.
    Moreover, there exists a corresponding cut $S$ in $G$ such that $S$ and $S'$ have the same cut-size.
    If, in addition, $\delta < \frac 1 {2\boundary S}$, then $\boundary S = \boundary S'$, that is, $S'$ is a minumum proper cut.
\end{proposition}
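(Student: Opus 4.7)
The plan is to apply \Cref{lem:clusterdecomposition} iteratively, walking down the hierarchy level by level, and track how the approximation factor accumulates. Start with the minimum proper cut $S$ viewed as a cut of $G_1 = G$. Apply \Cref{lem:clusterdecomposition} at level $1$: this yields a $(1+2\delta)$-approximate cut $S'_1$ that is either (Case~A) a local cut in a mirror cluster of the pre-cluster decomposition of $G_1$, or (Case~B) a cut of $G_1$ crossing no cluster of the cluster decomposition of $G_1$. In Case~A we are done at level $1$; in Case~B, since $S'_1$ does not cross any cluster, contracting the clusters to form $G_2$ preserves the cut: $S'_1$ induces a cut $\tilde S_2$ of $G_2$ with $\boundary_{G_2} \tilde S_2 = \boundary_{G_1} S'_1$.

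I would then iterate: at level $j$, if we have a cut $\tilde S_j$ of $G_j$ that does not yet correspond to a local cut, apply \Cref{lem:clusterdecomposition} to $\tilde S_j$ in the graph $G_j$ (treating $\tilde S_j$ as a proper cut of $G_j$, which it is since contraction preserves cut-size), obtaining $S'_j$ whose cut-size is at most $(1+2\delta) \boundary_{G_j} \tilde S_j$. Again $S'_j$ either lies in a mirror cluster of $G_j$ (stop) or crosses no cluster and gets lifted to a cut $\tilde S_{j+1}$ of $G_{j+1}$ of the same cut-size. After $j$ iterations, the cut-size is at most $(1+2\delta)^j \boundary S$. Termination is forced by level $\hbar$: since $\vol(G_\hbar) \le \lmax/\phi < 4\lmax/\phi$, any cut of $G_\hbar$ is automatically local, so the ``crosses no cluster'' branch at level $\hbar$ already produces a local cut (in the final collapsed graph, as permitted by the statement).

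Next, I would verify the ``corresponding cut in $G$'' claim. A local cut $S'$ in a mirror cluster of $G_j$ is built from a vertex set that, after un-contracting both the mirror cluster identification (a vertex of $G_j$ representing $V(G_j)\setminus C$) and the cluster contractions accumulated through levels $1,\dots,j-1$, becomes a vertex subset of $G$. Because contraction to form mirror clusters and to form the collapsed graphs never deletes edges and never merges two sides of a cut that is being tracked, boundary-size is preserved at each lifting step; hence the corresponding cut in $G$ has the same cut-size as $S'$.

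Finally, for the exactness statement: $\boundary S'$ and $\boundary S$ are both positive integers and $\boundary S' \ge \boundary S$ because $S$ is a minimum proper cut while the lifted version of $S'$ is a proper cut of $G$. Using $\boundary S' \le (1+2\delta)^\hbar \boundary S$ together with the smallness of $\delta$ (applying $(1+2\delta)^\hbar \le 1 + O(\hbar \delta)$ for $\hbar\delta$ small, which holds under the parameter regime and the hypothesis on $\delta$), the gap $\boundary S' - \boundary S$ is strictly less than $1$, so integrality forces $\boundary S = \boundary S'$. The main obstacle I expect is the bookkeeping in the inductive step: making sure that $S'_j$ at each intermediate level really is a \emph{proper} cut of $G_j$ (so that \Cref{lem:clusterdecomposition} applies again), and that the lifting map from mirror clusters all the way back to $G$ preserves boundary-size rather than losing edges via some contraction; both hinge on the observation that the only edges ever contracted are internal to clusters that lie entirely on one side of the current candidate cut.
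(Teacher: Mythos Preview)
Your proposal is correct and follows exactly the paper's approach: iterate \Cref{lem:clusterdecomposition} down the hierarchy, pick up a $(1+2\delta)$ factor per level, terminate by the volume bound on $G_\hbar$, and lift back to $G$ by un-contraction (the paper's proof is in fact terser than yours and omits the integrality argument you spell out). One small wording issue worth fixing: \Cref{lem:clusterdecomposition} is stated only for \emph{minimum} proper cuts, so at level $j$ you should apply it to the minimum proper cut of $G_j$ (which lies between $\lambda$ and $\boundary_{G_j}\tilde S_j$, the latter because $\tilde S_j$ witnesses it) rather than to $\tilde S_j$ itself---the same $(1+2\delta)^j$ bound then follows, and the paper's own proof glosses over this point in the same way.
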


\begin{proof}
    It suffices to apply \Cref{lem:clusterdecomposition} iteratively starting at $G$ until we end up in Case 2.\ of that proposition. 
    If we never arrive in Case 2, it follows that the recursion reaches the graph ${G}_\hbar$, which has volume $\frac\lmax \phi$. Thus, all its cuts are local.
    Once a local cut is found, to find the corresponding cut in $G$, since all graphs (including mirror clusters) in the hierarchy are contractions of graphs from the level above, it suffices to uncontract the graph.
    Uncontracting all the way to the top gives the result.
    The approximation value comes from the fact that on each level, the cut-size of the cut increases by a factor of at most $(1+2\delta)$.
\end{proof}
The goal of further sections will thus be to compute and maintain such a cluster hierarchy.

\section{The Mirror Cuts Data Structure}
\label{sec:mirrorcuts}

In this section, we present the Mirror Cuts Data Structure, which maintains the mirror graph and mirror cuts under a sequence of batch edge updates.
A batch update is an update that includes many edge insertions and deletions at once.

It is necessary to consider batch updates, as the data structure is designed to handle the refinement of clusters, that is, replacing a cluster $C$ by $S\subsetneq C$ and $C\setminus S$. 
Typically, this is done by deleting all of the edges between $S$ and $C\setminus S$. 
Doing so one by one proves to not work, as our data structure relies on the LocalKCut algorithm, which is only guaranteed to be correct if at every request, one can ensure that the minimum cut of the cluster is large enough.
However, by deleting the edges one by one, the two clusters are only considered two different ones by the LocalKCut when they are not connected anymore, and thus $C$ is still a cluster, with minimum cut exactly $1$ just before the deletion of the last edge. This breaks our data structure.

Hence, we introduce batch updates, so all of the edge deletions can be forwarded to the LocaKCut algorithm \emph{before} any request to it is made.

The results in this section rely on the results of \Cref{sec:localkcut}, and we assume that the data structure we present in \Cref{sec:datastructure} supports our cluster decomposition.
First, we remind the reader about mirror clusters, mirror graphs and mirror cuts:

\mirrorclusters*

\begin{theorem}[Mirror Cuts]\label{thm:mirrorcuts}
    Let $G_{\P_1}=(V,E)$ be a fully-dynamic mirror graph under batch updates (updates with up to $m$ many insertions or deletions) of a graph $G$ and partition ${\P_1}$ and $\lmax, \nu, \beta \in \N_+$ satisfying $\beta\le \lmax \le \nu$, and no mirror cluster $C'\subseteq G_{\P_1}$ strictly contains a cut $S$ with $\boundary_{C'}S\le \frac \lmax \beta$. 

    There exists a data structure that maintains, for each vertex $v \in V$, one of its mirror cuts $S_v$.  
    It can return in constant time the value of the smallest mirror cut $S_{v_{\min}}$, a pointer to $v_{\min}$
    and in $\card{S_{v_{\min}}}$ pointers to the vertices of $S_{v_{\min}}$.
    
   The algorithm runs in $\Tilde O ((m+n)\beta^2(\lmax \nu)^{O(\beta)}) $ preprocessing time, $\Tilde O ( \beta^4(\lmax \nu)^{O(\beta)})$ update time per edge update, $\Tilde O (K\cdot \beta^4(\lmax \nu)^{O(\beta)})$ update time for an update with $K$ edge updates, and $O(1)$ request time. 
   
\end{theorem}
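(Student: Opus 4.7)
The plan is to layer the data structure on top of the LocalKCut data structure from \Cref{thm:deterministic-dynamic}, exploiting the hypothesis that every mirror cluster $C'$ has minimum proper cut strictly greater than $\lmax/\beta$. This guarantees that every proper cut in $C'$ of cut-size at most $\lmax$ is automatically a $\beta$-approximate mincut of $C'$, so a LocalKCut query at $v$ with parameters $(\lmax,\nu,\beta)$ enumerates every proper cut $S\subsetneq C_v$ that is local (volume at most $\nu\ge 4\lmax/\phi$), has cut-size at most $\lmax$, and contains $v$. Consequently, the candidate set for the mirror cut $S_v$ is exactly the output of one LocalKCut query, filtered by ``proper'' and ``contained in $C_v$,'' which can be checked in time proportional to the volume of the returned cuts.

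In preprocessing I would (i) explicitly construct the mirror graph $G_{\P_1}$ as a disjoint union of mirror clusters (each equipped with a single contracted vertex representing $V\setminus C$), (ii) feed each of its $O(m)$ edges into one LocalKCut instance configured with $\lmax,\nu,\beta$, (iii) for every $v\in V$ issue a LocalKCut query, filter the returned collection, select a smallest surviving cut as $S_v$, and (iv) insert the pair $(v,\partial S_v)$ into a min-heap keyed by $\partial S_v$. The witness vertex $v_{\min}$ and the value $\partial S_{v_{\min}}$ are then available in $O(1)$ from the heap root, and listing the vertices of $S_{v_{\min}}$ costs $|S_{v_{\min}}|$ by walking its stored representation. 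The cost is $\tilde O(m)$ LocalKCut insertions and $n$ LocalKCut queries, totaling $\tilde O((m+n)\beta^2(\lmax\nu)^{O(\beta)})$.

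For a batch update of $K$ edge insertions/deletions I would first forward all $K$ changes to the LocalKCut data structure \emph{before} issuing any query; this is exactly the reason for the batched interface, since it prevents the LocalKCut precondition on the minimum proper cut of $G'$ from being transiently violated while a cluster is being split. Then I would mark as dirty precisely those vertices whose mirror cut can have changed: the endpoints of the updated edges, and every vertex $w$ such that the old $S_w$ contains an endpoint of an updated edge either internally or on its boundary (these can be identified in $\tilde O(1)$ per edge using a reverse index from each edge to the $O(1)$ mirror cuts it participates in, maintained alongside the heap). Re-querying LocalKCut for each dirty vertex, recomputing $S_w$, and updating the heap entry then restores the invariant.

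The correctness claim to verify is that a non-dirty vertex $w$ retains the same mirror cut. If no updated edge has an endpoint in $S_w\cup\partial S_w$, then $\partial S_w$ is unchanged and $S_w$ remains a proper local cut of the same size; any strictly smaller candidate cut $S'\ni w$ must either have existed before (contradicting that $S_w$ was optimal) or contain an endpoint of an updated edge, in which case $S'$ is discovered when we re-query that endpoint, so the new minimum is still captured by the heap. The running times then follow: each edge update incurs the $\tilde O(\beta^4(\lmax\nu)^{O(\beta)})$ amortized LocalKCut update plus $O(1)$ LocalKCut queries at cost $\tilde O(\beta^2(\lmax\nu)^{O(\beta)})$ each, and a $K$-edge batch scales linearly. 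The main obstacle I foresee is in the batched regime: the dirty set must be kept truly local (size $O(K)$ on average), and one must argue carefully that the LocalKCut precondition is already reestablished after the $K$ deletions/insertions have been applied en bloc, so that the intermediate ``small mincut'' states created by individual edge deletions never produce an incorrect query answer, because no queries are issued mid-batch.
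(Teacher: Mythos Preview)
Your high-level plan matches the paper's: build a LocalKCut instance on the mirror graph, query it per vertex, store one mirror cut per vertex plus a heap, and push all edge changes of a batch into LocalKCut before issuing any queries. Two concrete points, however, are off.

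First, the sentence ``a reverse index from each edge to the $O(1)$ mirror cuts it participates in'' is false. A single vertex $u$ can lie in up to $\tilde O\bigl(\beta^2(\lmax\nu)^{O(\beta)}\bigr)$ stored mirror cuts, because every stored cut containing $u$ would be returned by a LocalKCut query at $u$, and that is the bound on the output size of one query. So the dirty set per updated edge has size $\tilde O\bigl(\beta^2(\lmax\nu)^{O(\beta)}\bigr)$, not $O(1)$. Your stated per-edge bound $\tilde O\bigl(\beta^4(\lmax\nu)^{O(\beta)}\bigr)$ survives this correction, but the bookkeeping argument does not.

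Second, and more importantly, your processing step as written does not maintain $S_w$ for \emph{all} $w$; it only recomputes $S_w$ for dirty $w$. Consider a deletion of $(u,v)$ and a non-dirty vertex $w$ whose old $S_w$ touches neither endpoint. A cut $S'\ni w$ with $u\in S'$, $v\notin S'$ may drop below $\partial S_w$ after the deletion. Your own correctness paragraph notes that $S'$ is discovered when you re-query at $u$, and concludes ``the new minimum is still captured by the heap.'' That conclusion is fine for the global minimum, but the theorem demands that the stored $S_w$ itself be a mirror cut of $w$, which it no longer is. The paper closes this gap by having the processing routine, after a LocalKCut query at $u$ returns a collection $\mathcal S$, iterate over every $S\in\mathcal S$ and every vertex $x\in S$, overwriting $S_x$ whenever $S$ beats the current $S_x$. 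Without this propagation step, the per-vertex invariant fails after deletions; with it, both correctness and the running-time bound go through exactly as you claim.
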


The idea is as follows. We run an instance of LocalKCut (\Cref{thm:deterministic-dynamic}) on $G_{\P_1}$, with parameters $\lmax, \nu, \beta$, updating it as updates arrive. As no mirror cluster $C'\subseteq G_{\P_1}$ strictly contains a cut $S$ with $\boundary_{C'}S\le \frac \lmax \beta$, any request on a node $v \in C'_v$ yields all cuts $S \subsetneq C'_v$ with $\vol(S) \le \nu$, $\boundary_{C'_v} S \le \lmax$ and $v \in S$.

When an edge is deleted, what can happen is that this edge used to contribute to a cut that was too large to be the only mirror cut of a node $v$, and thus was not necessarily the one maintained by the data structure, but it now needs to be one. This cut is easily found by formulating a request on both endpoints of that edge, and then can be stored at any vertex it includes for which it is a mirror cut.

When an edge is inserted, it might contribute to a cut that was a mirror cut, and thus maintained, but is not anymore after the insertion. In fact, there might be many such cuts, but we show below that there cannot be too many of them. We must then formulate a request to LocalKCut from all those vertices whose maintained cut has changed due to that edge insertion. 

Finally, we deal with set deletions by simply deleting all the edges in the set in a batch update.

\subsection{Data Structure and Algorithm Description}
We store these cuts using the following data structure: 
\begin{itemize}
    \item For each mirror cut, we keep pointers to all the vertices for which it is the mirror cut. We also keep pointers to all the vertices it contains. We moreover keep the cut-size of the cut.
    We delete the cut once it is not a mirror cut anymore, as described in \Cref{alg:deletions}.
    \item For each vertex in a cluster $C$ we keep a pointer to {\em one of its} mirror cuts and its corresponding cut-size. We also keep pointers to all the mirror cuts of other vertices that it is contained in.
    \item We also maintain a heap of the mirror cuts, prioritized by their cut-sizes.
\end{itemize}

\begin{algo}[Deleting a mirror cut]\label{alg:deletions}
    Whenever the mirror cut of a node $u$ changes from $S$ to $S'$, we remove $u$ from the list of nodes stored at $S$, for which $S$ is the mirror cut. 
    We check if this list becomes empty. 
    If it is the case, we delete all pointers from and to the nodes it contains and  then delete the mirror cut and update the heap accordingly.
\end{algo}

\begin{lemma}\label{lem:deletion}
    \Cref{alg:deletions} takes $O(\nu)$ time as a mirror cuts has volume $O(\nu)$.
\end{lemma}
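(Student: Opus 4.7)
The plan is to bound the work of \Cref{alg:deletions} by breaking it into its constituent steps and charging each to either a constant-time pointer operation or a linear traversal of the mirror cut $S$ being removed.

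First I would handle the deletion of $u$ from the list of nodes stored at $S$. Since this list is stored as a doubly linked list and $u$ holds a direct pointer to its node in that list (via its ``pointer to one of its mirror cuts'' stored in the data structure), this step takes $O(1)$ time, and likewise the emptiness check that follows. If the list is nonempty, we are done in constant time and the lemma holds trivially; the interesting case is when the list becomes empty and we must dismantle $S$.

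In that case, I would bound the cost of deleting all pointers to and from the vertices contained in $S$ by observing that mirror cuts are produced by the LocalKCut instance called with volume parameter $\nu$, so any mirror cut $S$ satisfies $|S|\le\vol(S)\le \nu$. For each vertex $v\in S$ we traverse its list of ``other mirror cuts it is contained in'' entries corresponding to $S$ and unlink them; by again storing cross-pointers between $S$'s vertex list and each vertex's ``contained in'' list, every such unlink is $O(1)$, giving $O(|S|)=O(\nu)$ total work. Finally, removing $S$ from the heap of mirror cuts takes $O(\log n)$ time, which is absorbed in $O(\nu)$ given our parameter choices (and in any event does not dominate).

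I do not foresee a real obstacle here; the only subtlety is ensuring that the data structure maintains the cross-pointers required to perform each unlink in $O(1)$ rather than by a search, which is a standard implementation detail already implicit in the description preceding the lemma. Summing the three contributions gives the claimed $O(\nu)$ bound.
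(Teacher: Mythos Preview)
Your proposal is correct and matches the paper's reasoning, which in fact does not give a separate proof at all: the justification ``as a mirror cut has volume $O(\nu)$'' is folded directly into the lemma statement. Your write-up simply spells out the implementation details (cross-pointers for $O(1)$ unlinks, the $|S|\le\nu$ bound coming from the LocalKCut volume parameter, and the heap update) that the paper leaves implicit.
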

The high-level description of the Mirror Cuts Data Structure is as follows:
During preprocessing of this data structure, we  build this data structure.
To process an edge insertion, we note that the edge insertion might increase the cut-size of the mirror cuts it is incident to.
We must therefore request LocalKCut on all the vertices whose stored cut has been affected by this insertion, as the stored cut might not be the mirror cut anymore.
Note that we can ensure that not many such nodes exist.
To process an edge deletion, we note that the deletion can decrease the cut-size of a cut that was not a mirror cut before the deletion, but becomes one after this deletion. 
We can find all such cuts by running LocalKCut from the endpoints of the deleted edge.

We next give the details of this algorithm. %
\begin{algo}[Mirror Cuts Data Structure]\label{alg:maintainmirrors}

   \item  Input:\begin{enumerate}
        \item A mirror graph $G_{\P_1}$ of graph $G$, where edges can be inserted and deleted in batches
        \item A maximum cut value $\lmax$
        \item A maximum volume value $\nu$
        \item An approximation ratio $\beta$
    \end{enumerate} 
    
    \item Data Structure: 
    \begin{enumerate}
        \item For every node, we maintain its Mirror Cut and the corresponding value.
        \item For every Mirror Cut, we maintain pointers from and to the vertices it contains in a sorted list.
        \item A LocalKCut data structure for $G_{\P_1}$ with parameters $\lmax, \nu, \beta$
        \item A heap on the vertices prioritized by the cut-size of the associated mirror cut
    \end{enumerate}

    \item Preprocessing: 
    \begin{enumerate}
        \item \label{cutspre1} Instantiate the LocalKCut data structure
        \item \label{cutspre2} Process every node using \Cref{alg:processing}.
        \item \label{cutspre3} Instantiate the heap
    \end{enumerate}

\justaflag

    \item Handling updates:
    \begin{enumerate}
    \item \label{jhkstep1} Update the LocalKCut data structure.
        \item \label{jhkstep2} For every insertion of an edge $(u,v)$:
 \begin{enumerate}
     \item \label{cutsins1} Update the value of all maintained mirror cuts that contain exactly one of its endpoints $u$ or $v$. Update the heap accordingly.
     \item \label{cutsins2} Mark for processing all the nodes $u$ whose maintained mirror cut has seen its value increase because of the update. 
 \end{enumerate}

 \item \label{jhkstep3} For every deletion of an edge $(u,v)$:
 \begin{enumerate}
     \item \label{cutsdel1} Update the value of all maintained mirror cuts that contain either $u$ or $v$ but not both. Update the heap accordingly.
     \item \label{cutsdel2} Mark for processing the endpoints of the deleted edge.
 \end{enumerate}
 \item \label{cutsprocess} Call \Cref{alg:processing} for all the nodes marked for processing.

    \end{enumerate}
 
\end{algo}

\begin{algo}[Processing]\label{alg:processing}
    Input: a vertex $v$.
    \begin{enumerate}
        \item Issue a LocalKCut query for $v$, which returns a set of cuts $\mathcal S$.
        \item  For every cut $S \in \mathcal S$ with $\boundary_{G_{\P_1}}S \neq 0$, for every $u \in S$, if $S$ is a smaller cut than the one maintained for $u$:
         \begin{enumerate}[label=(\alph*), noitemsep]
        \item Build a data structure for $S$ if $S$ was no mirror cut before the update. 
        \item Store $S$ as mirror cut of $u$.
        \item Remove the old mirror cut of $u$
    \end{enumerate}
    \end{enumerate}
\end{algo}

We first show the correctness of the above algorithm and then analyze its running time.

\subsection{Correctness}
\begin{lemma}\label{lem:correctmaintainmirrorcuts}
    The Mirror Cuts Data Structure (\Cref{alg:maintainmirrors}) maintains a mirror cut of $G$ for every vertex $v \in G$ according to partition $\P_1$.  
\end{lemma}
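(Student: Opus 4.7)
The plan is to verify, by induction on the sequence of operations, that after each update every vertex $v$ that admits a mirror cut stores one such cut. For the base case (after preprocessing), the hypothesis that no mirror cluster contains a proper cut of size at most $\lmax/\beta$ implies that every cut in $G_{\P_1}$ of cut-size at most $\lmax$ is automatically a $\beta$-approximate minimum proper cut. By \Cref{thm:deterministic-dynamic}, the LocalKCut query on $v$ therefore returns every connected cut containing $v$ of cut-size at most $\lmax$ and volume at most $\nu$. Since any mirror cut of $v$ can be assumed connected without loss of generality (replace it by the connected component of $v$, which is still a proper cut of $C_v$ of no larger cut-size, using that $C_v$ is itself connected), the LocalKCut output is a superset of $v$'s candidate mirror cuts. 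Preprocessing iterates \Cref{alg:processing} over all vertices, whose inner loop visits every vertex of every returned cut and overwrites the stored cut whenever a strictly smaller one is found, so every vertex ends up storing a minimum-cut-value candidate.

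For the inductive step on an insertion of $(u,v)$, I would use the monotonicity $\boundary_{\text{new}} S\ge\boundary_{\text{old}} S$, with equality unless $S$ contains exactly one of $u,v$. Fix a vertex $w$ with stored cut $S_w$. If $\boundary S_w$ is unchanged, then for every candidate cut $S'$ containing $w$ we have $\boundary_{\text{new}} S'\ge\boundary_{\text{old}} S'\ge\boundary_{\text{old}} S_w=\boundary_{\text{new}} S_w$, so $S_w$ is still optimal and no action is needed. If $\boundary S_w$ increased, the update step marks $w$ and calls \Cref{alg:processing}, whose LocalKCut query on $w$ retrieves (as in the base case) all candidate mirror cuts of $w$; the inner loop then installs the smallest. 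For a deletion $(u,v)$ the inequality is reversed, and a symmetric argument shows that the only cut that can become some vertex $w$'s new mirror cut is one whose cut-size strictly decreased, which forces it to contain $w$ together with exactly one of $u,v$. Consequently, every such cut appears in the LocalKCut output issued on $u$ or on $v$, both of which are marked for processing; since the inner loop of \Cref{alg:processing} inspects every vertex of every returned cut, $w$'s stored cut is updated correctly.

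The main obstacle I expect is the interaction between the cut-size-triggered reprocessing and the volume constraint used by LocalKCut: cut-size can only change when the updated edge crosses the cut, whereas the volume of a set can also change under insertions with both endpoints inside. I plan to dispatch this by fixing $\nu$ with sufficient slack above the local-cut volume bound $4\lmax/\phi$, so that a candidate which remains local after the update still lies within LocalKCut's reach, while any set whose volume exceeds the local threshold is by definition no longer a mirror cut. A secondary subtlety is the batch-update semantics: because the LocalKCut substructure is refreshed before any mirror cuts are recomputed, the subsequent queries see the already-updated graph, so the argument above applies directly to the post-batch state. This is precisely why batch updates are essential here: if edges were forwarded one by one, intermediate states could transiently violate the $\lmax/\beta$ mincut hypothesis on which the LocalKCut guarantee rests.
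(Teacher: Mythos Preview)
Your proposal is correct and follows essentially the same inductive strategy as the paper: a base case via preprocessing plus LocalKCut completeness, and an inductive step that splits on whether the stored cut's value changed (insertion) or whether a strictly better cut's value dropped (deletion), then uses the fact that the inner loop of \Cref{alg:processing} visits every vertex of every returned cut. Your writeup is in fact more careful than the paper's own proof on two points: you keep the tracked vertex and the edge endpoints notationally separate, and you explicitly flag the volume-versus-cut-size subtlety and the batch-update ordering, both of which the paper leaves implicit.
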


\begin{proof}
    We will prove this by induction on the number $t$ of batch updates. The base case is clear, as after preprocessing, after all nodes are processed, the minimum local cut around each node is found, as guaranteed by~\Cref{thm:deterministic-dynamic} and then stored.
    For the induction step, assume that after update $t$, for $t \ge 0$, we have that all cuts stored are mirror cuts.

   Let $u$ be a node and let $S$ be its maintained mirror cut after the batch of updates $t$. 
   Note that $S$ could still be a mirror cut of $u$ at time $t+1$, in which case there is nothing to do, and no processing can change $S$ into another cut. 
   Indeed, \Cref{alg:processing} will only overwrite it if it finds a smaller local cut, which is impossible.
   Assume next that there exists a local cut $S'$ such that $u \in S'$ and $\boundary _{t+1} S' < \boundary _{t+1} S$.
   By definition, we have that $\boundary _{t} S \le \boundary _t S'$.
   Then either we have that the cut value of $S$ has increased, or the one of $S'$ has decreased.
   In the first case, an edge insertion $(u,v)$ with $u \in S'$, $v \notin S'$ forces the processing of $u$. $S'$ is then found by processing $u$, as $S'$ has small volume and thus is found by the request to LocalKCut. In the second case, an edge deletion $(u,v)$ with $u \in S'$, $v \notin S'$ forces the processing of $u$, which finds $S'$.
   Both processing steps will update the cut associated with $u$.
\end{proof}

\subsection{Running time analysis}
\begin{lemma}\label{alg:vertexprocessing}
    Processing a vertex with \Cref{alg:processing} takes $\Tilde O (\beta^2(\lmax \nu)^{O(\beta)}) $ time.
\end{lemma}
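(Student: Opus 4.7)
The plan is to bound the cost of \Cref{alg:processing} by decomposing it into the LocalKCut query cost plus the cost of iterating through the returned cuts and updating the pointers maintained by the data structure. I expect the dominating term to come directly from the LocalKCut query time guaranteed by \Cref{thm:deterministic-dynamic}, and the rest to amount to $O(\nu)$-work per returned cut, which is already absorbed by that bound.

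First I would appeal to \Cref{thm:deterministic-dynamic}: the single query on $v$ costs $\tilde O(\beta^2(\lmax\nu)^{O(\beta)})$, so this much is paid even before we look at any of the returned cuts. Since every cut output by LocalKCut has volume at most $\nu$, the representation of each cut uses $O(\nu)$ words, and therefore the number of cuts in $\mathcal S$ is at most $\tilde O(\beta^2(\lmax\nu)^{O(\beta)})/\nu$, trivially bounded by $\tilde O(\beta^2(\lmax\nu)^{O(\beta)})$.

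Next I would analyze the inner loop of Step~2. For each cut $S\in\mathcal S$, checking $\boundary_{G_{\P_1}}S\ne 0$ is $O(1)$ (the boundary-size is returned by LocalKCut), and iterating over its at most $\nu$ vertices $u$ takes $O(\nu)$. For each such $u$, comparing the cut-size of $S$ with the cut-size of the currently stored mirror cut of $u$ is $O(1)$ (we keep this value at $u$), and updating the pointer at $u$ is $O(1)$. The only non-trivial step is the deletion of the old mirror cut of $u$ when it becomes unreferenced, but by \Cref{lem:deletion} this costs $O(\nu)$, amortizable against the $O(\nu)$ pointers that must be torn down. Finally, building the data structure for $S$ the first time it is promoted to a mirror cut—creating its list of vertex pointers and inserting it into the heap—also costs $\tilde O(\nu)$.

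Summing, the inner loop pays $\tilde O(\nu)$ per cut in $\mathcal S$, for a total of $\tilde O(\nu)\cdot \tilde O(\beta^2(\lmax\nu)^{O(\beta)}) = \tilde O(\beta^2(\lmax\nu)^{O(\beta)})$, which when added to the query cost yields the desired bound. No step looks like a real obstacle; the only subtlety is making sure the per-cut bookkeeping really is $O(\nu)$, which relies on storing the comparison key (the cached cut-size) at each vertex and using \Cref{lem:deletion} to charge the deletion of a superseded mirror cut.
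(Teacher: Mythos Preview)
Your proposal is correct and follows essentially the same approach as the paper: bound the LocalKCut query by \Cref{thm:deterministic-dynamic}, bound the number of returned cuts by the query time, and charge $O(\nu)$ work per vertex per cut plus $O(\nu)$ for each mirror-cut deletion via \Cref{lem:deletion}. The only minor imprecision is your per-cut bookkeeping bound of $\tilde O(\nu)$: each of the up to $\nu$ vertices in a cut $S$ may trigger a deletion of a \emph{distinct} old mirror cut at cost $O(\nu)$, giving $O(\nu^2)$ per cut rather than $O(\nu)$ (your amortization remark does not quite close this, since the deleted cuts need not have been created during this processing call). This does not affect the conclusion, as any polynomial-in-$\nu$ factor is absorbed into $(\lmax\nu)^{O(\beta)}$.
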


\begin{proof}
    Each time we process a node, we make a request to LocalKCut.
    By \Cref{thm:deterministic-dynamic}, each request takes $\Tilde O (\beta^2(\lmax \nu)^{O(\beta)}) $ time. It outputs at most $\Tilde O (\beta^2(\lmax \nu)^{O(\beta)}) $ many cuts, and each cut can update at most $\nu$ many vertices, which in turn could each  enforce the deletion of one maintained cut, which takes  $ O(\nu)$ time by \Cref{lem:deletion}. Hence, the deletions of maintained cuts can take time $\Tilde O (\beta^2(\lmax \nu)^{O(\beta)}) $.
    
    Moreover, each cut found by LocalKCut might need to be stored. In the worst case, each cut needs to be created in the Mirror Cuts data structure, and be linked to all its vertices, this takes time $O(\nu)$.
    Overall, the running time is $\Tilde O (\beta^2(\lmax \nu)^{O(\beta)}) $.
\end{proof}

\begin{corollary}
    Preprocessing of the Mirror Cuts Data Structure (\Cref{alg:maintainmirrors}) on a mirror graph $G_{\P_1}$ with $m$ edges and $n$ nodes takes $\Tilde O ((m+n)\beta^2(\lmax \nu)^{O(\beta)}) $ time.
\end{corollary}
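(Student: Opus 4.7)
The plan is to bound the running time of each of the three preprocessing steps of \Cref{alg:maintainmirrors} separately and then sum them. First I would note that \sref{cutspre1} is an invocation of the LocalKCut preprocessing of \Cref{thm:deterministic-dynamic} on the $m$-edge graph $G_{\P_1}$ with parameters $\lmax, \nu, \beta$. By that theorem, this costs $\tilde O(m \beta^2 (\lmax \nu)^{O(\beta)})$.

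Next, for \sref{cutspre2}, I would invoke \Cref{alg:vertexprocessing} once per vertex. Since there are $n$ vertices, and each invocation of \Cref{alg:processing} costs $\tilde O(\beta^2 (\lmax \nu)^{O(\beta)})$ as just shown, the total contribution is $\tilde O(n \beta^2 (\lmax \nu)^{O(\beta)})$.

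Finally, \sref{cutspre3} builds a heap over $n$ elements, which costs $O(n \log n)$ and is absorbed into the $\tilde O(\cdot)$ notation.  Summing the three bounds yields $\tilde O((m+n)\beta^2(\lmax \nu)^{O(\beta)})$ as claimed. There is no real obstacle here since each step's cost is obtained directly from a previously established lemma or theorem; the only minor care required is to notice that the per-vertex processing cost from \Cref{alg:vertexprocessing} already accounts for the cost of creating and linking the initial mirror-cut records (including the $O(\nu)$ bookkeeping per newly stored cut), so no separate accounting for data structure setup is needed.
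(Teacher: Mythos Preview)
Your proposal is correct and follows essentially the same approach as the paper's own proof: bounding each of the three preprocessing steps via \Cref{thm:deterministic-dynamic}, the per-vertex processing lemma, and the heap construction cost, respectively, then summing. The only difference is cosmetic (the paper states the heap cost as $\tilde O(n)$ rather than $O(n\log n)$).
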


\begin{proof}
    By~\Cref{thm:deterministic-dynamic}, Step~\ref{cutspre1} of preprocessing takes $\tilde O( m \beta^2 (\lambda_{\max}\nu)^{O(\beta)})$ time. 
    By~\Cref{alg:vertexprocessing}, Step~\ref{cutspre2} takes $\Tilde O (n\beta^2(\lmax \nu)^{O(\beta)}) $ time. Step~\ref{cutspre3} takes $\tilde O(n)$ time.
\end{proof}

\begin{corollary}
    Handling an edge deletion in Steps~\ref{jhkstep3} and~\ref{cutsprocess} in the Mirror Cuts Data Structure (\Cref{alg:maintainmirrors}) takes $\Tilde O (\beta^2(\lmax \nu)^{O(\beta)})$ time.
\end{corollary}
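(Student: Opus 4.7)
The plan is to decompose the cost of handling a single edge deletion (excluding \sref{jhkstep1}, whose cost is analyzed separately) into three contributions coming from \sref{cutsdel1}, \sref{cutsdel2}, and \sref{cutsprocess}, and bound each in turn.

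For \sref{cutsdel1}, the task is to decrement the stored cut-size of every maintained mirror cut that contains exactly one endpoint of the deleted edge $(u,v)$, and to reheapify. I would iterate over these cuts using the per-vertex pointer lists kept at $u$ and at $v$. The crux is bounding the number of maintained mirror cuts that contain a given vertex $u$: by construction, every such cut originated either in preprocessing or from a call to \Cref{alg:processing}, so it is connected, has volume at most $\nu$, and has cut-size at most $\lmax$, and at the time it was added it was a $\beta$-approximate mincut output by LocalKCut. By \Cref{thm:deterministic-dynamic}, a hypothetical LocalKCut query from $u$ on the current graph would return at most $\tilde O(\beta^2(\lmax\nu)^{O(\beta)})$ cuts; this caps the number of still-maintained mirror cuts around $u$ by the same quantity. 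Each value update plus heap decrease-key costs $O(\log n)$, giving a total of $\tilde O(\beta^2(\lmax\nu)^{O(\beta)})$ for \sref{cutsdel1}.

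Step \sref{cutsdel2} is trivially $O(1)$ since it only marks the two endpoints. For \sref{cutsprocess}, since only $u$ and $v$ are marked, we call \Cref{alg:processing} exactly twice; by \Cref{alg:vertexprocessing} each call costs $\tilde O(\beta^2(\lmax\nu)^{O(\beta)})$. Summing the three contributions yields the claimed $\tilde O(\beta^2(\lmax\nu)^{O(\beta)})$ bound.

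The main obstacle I anticipate is rigorously justifying the per-vertex bound in \sref{cutsdel1}: a mirror cut persists in the data structure as long as it remains designated for at least one vertex, so a priori one might worry about stale cuts accumulating around $u$. The resolution is to note that whenever \Cref{alg:processing} installs a strictly smaller mirror cut for some vertex $w$, the old cut of $w$ is replaced and, if no other vertex still points to it, is discarded via \Cref{alg:deletions}; combining this with the LocalKCut size/volume guarantees shows that the surviving maintained cuts around $u$ always fit inside the bounded collection that a LocalKCut query from $u$ would enumerate, which is what controls the iteration length.
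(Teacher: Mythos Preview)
Your proposal is correct and follows essentially the same approach as the paper: bound the number of maintained mirror cuts containing $u$ (resp.\ $v$) by the output size of a LocalKCut query at $u$, which is $\tilde O(\beta^2(\lmax\nu)^{O(\beta)})$, then observe that only two vertices are marked and apply the vertex-processing bound. Your extra paragraph on stale cuts is a welcome clarification of a point the paper leaves implicit, but the argument is the same.
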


\begin{proof}
Let $e = (u,v)$ be a deleted edge. 
We start by arguing that there are not many mirror cuts that need to be updated in Step~\ref{cutsdel1}.
Consider a mirror cut $S$ containing node $u$, that is, there exists a node $u'$ such that $S$ is the mirror cut of $u'$, and $u \in S$.
All of these cuts are found by a request on node $u$ to LocalKCut, which takes $\Tilde O (\beta^2(\lmax \nu)^{O(\beta)}) $ time, and thus there are at most $\Tilde O (\beta^2(\lmax \nu)^{O(\beta)}) $ such cuts.
Hence, deleting an edge can affect at most $\Tilde O (\beta^2(\lmax \nu)^{O(\beta)}) $ many cuts whose cut-size have to be updated. 
Finding those cuts in the data structure is trivial as one can access all cuts including $u$ or $v$ from the endpoints of the deleted edge and check in $\tilde O(1)$ whether they include both or only one vertex by looking at the sorted list of nodes in each cut, and thus Step~\ref{cutsdel1} takes $\Tilde O (\beta^2(\lmax \nu)^{O(\beta)}) $ time.

For Step~\ref{cutsdel2}, the deleted edge marks two vertices for processing in Step~\ref{cutsprocess}. Processing both endpoints of the deleted edge takes $\Tilde O (\beta^2(\lmax \nu)^{O(\beta)}) $ time by \Cref{alg:vertexprocessing}, hence the result follows.
\end{proof}

\begin{corollary}
    Handling an edge insertion in Steps~\ref{jhkstep2} and~\ref{cutsprocess}  in the Mirror Cuts Data Structure (\Cref{alg:maintainmirrors}) takes $\Tilde O (\beta^4(\lmax \nu)^{O(\beta)})$ time.
\end{corollary}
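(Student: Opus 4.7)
The plan is to analyze the cost of handling an edge insertion of $e=(u,v)$ by following the four sub-steps (Step~\ref{jhkstep1}, Step~\ref{cutsins1}, Step~\ref{cutsins2}, and the call to \Cref{alg:processing} in Step~\ref{cutsprocess}), and to bound each separately. Step~\ref{jhkstep1} is free: by \Cref{thm:deterministic-dynamic} the amortized update time of the underlying LocalKCut data structure is $\tilde O(\beta^4(\lmax\nu)^{O(\beta)})$, which already matches the target bound. The real task is to bound the combined cost of Steps~\ref{cutsins1}, \ref{cutsins2}, and~\ref{cutsprocess}, which is driven by the number of currently maintained mirror cuts that are affected by inserting $e$.

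The key structural observation is that every maintained mirror cut $S$ in the data structure satisfies $\vol(S)\le\nu$, $\boundary_{G_{\P_1}}S\le\lmax$ and contains at least one node for which it is the current mirror cut; in particular $S$ is a connected local cut. Therefore any maintained mirror cut $S$ whose cut-size changes because of the insertion of $e=(u,v)$ must contain $u$ or $v$ (but not both), and hence is a valid output of a LocalKCut query rooted at either $u$ or $v$. By \Cref{thm:deterministic-dynamic}, such a query returns at most $\tilde O(\beta^2(\lmax\nu)^{O(\beta)})$ cuts. These two queries therefore produce a superset of all maintained mirror cuts affected by the update, so the number of such cuts is $N:=\tilde O(\beta^2(\lmax\nu)^{O(\beta)})$. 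Locating each affected cut inside the data structure can be done in $\tilde O(1)$ by scanning the sorted lists of maintained cuts stored at $u$ and at $v$, and updating its cut-size and heap position takes $O(\log n)$ time. Hence Step~\ref{cutsins1} costs $\tilde O(N)=\tilde O(\beta^2(\lmax\nu)^{O(\beta)})$.

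For Steps~\ref{cutsins2} and~\ref{cutsprocess}, a node $w$ is marked for processing only if its \emph{own} maintained mirror cut has just increased in value, so $w$ lies in one of the $N$ affected maintained cuts. Since each affected maintained cut $S$ has $\vol(S)\le\nu$, the total number of vertices pointing to such cuts is at most $\nu\cdot N=\tilde O((\lmax\nu)^{O(\beta)}\cdot\beta^2)$, where the factor $\nu$ is absorbed into the $(\lmax\nu)^{O(\beta)}$ term. Processing one such vertex costs $\tilde O(\beta^2(\lmax\nu)^{O(\beta)})$ by \Cref{alg:vertexprocessing}, so the overall cost of Step~\ref{cutsprocess} is $\tilde O(\beta^4(\lmax\nu)^{O(\beta)})$. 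Summing the three contributions gives the claimed $\tilde O(\beta^4(\lmax\nu)^{O(\beta)})$ bound for an edge insertion.

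The main obstacle, as in the edge-deletion case, is proving the bound $N=\tilde O(\beta^2(\lmax\nu)^{O(\beta)})$ on the number of currently maintained cuts affected by the insertion. The argument above leverages the fact that every such cut is a valid LocalKCut output at $u$ or $v$, so the correctness guarantee of \Cref{thm:deterministic-dynamic} limits how many can exist simultaneously; this is the only nontrivial step, and everything else is a routine multiplication of per-cut work with the count $N$ and of per-node processing time with the number $\nu N$ of marked vertices.
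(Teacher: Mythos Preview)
Your proof is correct and follows essentially the same approach as the paper's: both bound the number of affected maintained mirror cuts by observing that each such cut contains $u$ or $v$ and hence would appear in the output of a LocalKCut query at that endpoint, giving $N=\tilde O(\beta^2(\lmax\nu)^{O(\beta)})$; both then bound the number of marked vertices by $\nu\cdot N$ and multiply by the per-vertex processing cost from \Cref{alg:vertexprocessing}. Your treatment is slightly more explicit about absorbing the extra $\nu$ factor and about the heap updates, but the structure of the argument is identical.
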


\begin{proof}
Let $e = (u,v)$ be an inserted edge.
We start by arguing that there are not many mirror cuts that need to be updated in Step~\ref{cutsins1}.
As in the case for edge deletions, consider a mirror cut $S$ containing node $u$, that is, there exists a node $u'$ such that $S$ is the mirror cut of $u'$, and $u \in S$.
All of these cuts would be found by a request on node $u$ to LocalKCut, which takes $\Tilde O (\beta^2(\lmax \nu)^{O(\beta)}) $ time, and thus there are at most $\Tilde O (\beta^2(\lmax \nu)^{O(\beta)}) $ such cuts.
Hence, inserting an edge can affect at most $\Tilde O (\beta^2(\lmax \nu)^{O(\beta)}) $ many cuts whose cut-size have to be updated. 
Finding those cuts in the data structure is trivial as one can access all cuts including $u$ or $v$ from the endpoints of the deleted edge and check in $\tilde O(1)$ whether they include both or only one vertex by looking at the sorted list of nodes in each cut, and thus Step~\ref{cutsins1} takes $\Tilde O (\beta^2(\lmax \nu)^{O(\beta)}) $ time.

Each affected cut in Step~\ref{cutsins1} induces the marking in Step~\ref{cutsins2} for processing in Step~\ref{cutsprocess} of all of the nodes it is the mirror cut of, which there might be $O(\nu)$ of. Hence, inserting an edge can affect at most $\Tilde O (\beta^2(\lmax \nu)^{O(\beta)}) $ many nodes that all need to be processed, which takes $\Tilde O (\beta^2(\lmax \nu)^{O(\beta)}) $ time each by \Cref{alg:vertexprocessing}, hence the result follows.
\end{proof}

\begin{lemma}
    Handling a batch of updates with $K$ edge updates takes  $\tilde{O}\left(K\cdot\beta^4( \lambda_{\max}\nu)^{O(\beta)}\right)$ amortized update time.
\end{lemma}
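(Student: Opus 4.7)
The plan is to show that the per-update analyses already established for single insertions and single deletions compose additively under batching, which suffices because the bookkeeping within a batch only helps (each vertex is processed at most once, even if marked multiple times). More precisely, I would argue that the work of a batch of $K$ edge updates decomposes into three chunks: (i) forwarding the $K$ updates to the LocalKCut data structure in Step~\ref{jhkstep1}; (ii) the local bookkeeping of Steps~\ref{cutsins1}, \ref{cutsins2}, \ref{cutsdel1}, \ref{cutsdel2} done per edge; and (iii) the single processing phase Step~\ref{cutsprocess} run after all $K$ edges have been processed.

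For chunk~(i), \Cref{thm:deterministic-dynamic} gives amortized update time $\tilde O(\beta^4(\lambda_{\max}\nu)^{O(\beta)})$ per edge update to LocalKCut, hence $\tilde O(K\beta^4(\lambda_{\max}\nu)^{O(\beta)})$ total for the batch. For chunk~(ii), the proofs of the two preceding corollaries already bound the work done in Steps~\ref{cutsins1}/\ref{cutsins2} for a single insertion by $\tilde O(\beta^2(\lambda_{\max}\nu)^{O(\beta)})$ and in Steps~\ref{cutsdel1}/\ref{cutsdel2} for a single deletion by $\tilde O(\beta^2(\lambda_{\max}\nu)^{O(\beta)})$, using the fact that the number of maintained mirror cuts containing a fixed endpoint is at most $\tilde O(\beta^2(\lambda_{\max}\nu)^{O(\beta)})$ (since they can all be recovered by a single LocalKCut query). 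Summing these $K$ bounds yields $\tilde O(K\beta^2(\lambda_{\max}\nu)^{O(\beta)})$.

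For chunk~(iii), I would observe that each single edge update marks at most $\tilde O(\beta^2(\lambda_{\max}\nu)^{O(\beta)})$ vertices for processing: an insertion can mark all $O(\nu)$ vertices of each of the $\tilde O(\beta^2(\lambda_{\max}\nu)^{O(\beta)})$ affected cuts, and the factor $\nu$ is absorbed into $(\lambda_{\max}\nu)^{O(\beta)}$; a deletion marks only its two endpoints. Across the batch, the total number of (distinct or not) markings is therefore $\tilde O(K\beta^2(\lambda_{\max}\nu)^{O(\beta)})$, and since Step~\ref{cutsprocess} processes each marked vertex at most once, the total processing cost is bounded by $\tilde O(K\beta^2(\lambda_{\max}\nu)^{O(\beta)})$ many calls to \Cref{alg:processing}, each taking $\tilde O(\beta^2(\lambda_{\max}\nu)^{O(\beta)})$ time by \Cref{alg:vertexprocessing}, giving $\tilde O(K\beta^4(\lambda_{\max}\nu)^{O(\beta)})$ total. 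Summing (i), (ii), (iii) proves the claim.

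The only subtle point I anticipate is justifying that batching does not cause any of the per-update correctness arguments to become circular or inflate costs. In particular, one should note that Step~\ref{jhkstep1} is performed first for the whole batch, so when Step~\ref{cutsprocess} later issues LocalKCut queries, the data structure already reflects the post-batch graph; and the analyses of chunks~(ii) and~(iii) only need worst-case bounds on the number of affected cuts/vertices per edge update, which hold regardless of whether the updates are applied one at a time or in a batch. No new amortization argument is needed beyond the one inherited from \Cref{thm:deterministic-dynamic}.
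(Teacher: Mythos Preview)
Your proposal is correct and follows essentially the same approach as the paper. The paper's proof is considerably terser: it simply notes that Step~\ref{jhkstep1} costs $\tilde O(\beta^4(\lambda_{\max}\nu)^{O(\beta)})$ per edge by \Cref{thm:deterministic-dynamic}, and that the per-edge costs of Steps~\ref{jhkstep2} and~\ref{jhkstep3} (which already include the processing in Step~\ref{cutsprocess}) are bounded by the two preceding corollaries; your decomposition into chunks~(i)--(iii) makes explicit what the paper leaves to the reader but adds nothing new.
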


\begin{proof}
    Step~\ref{jhkstep1} takes $\tilde{O}\left(\beta^4( \lambda_{\max}\nu)^{O(\beta)}\right)$ per edge by \Cref{thm:deterministic-dynamic}. 
    The times for Step~\ref{jhkstep2} and Step~\ref{jhkstep3} are discussed in the two corrolaries just above.
\end{proof}

\newcommand{\AAA}[0]{\textsc{ApproxMinCut} }
\newcommand{\UP}[0]{\textsc{UpdatePartition} }

\section{Static $(1 + \delta$)-Approximate Minimum Proper Cut Algorithm for a Restricted Minimum Cut Range}\label{sec:static}

For this section, we assume that the data structure we present in \Cref{sec:datastructure} supports our cluster decomposition.

Before diving into the dynamic algorithm, we present a simple static algorithm that we will dynamize in later sections.

\begin{theorem}\label{thm:static}
    Let $G$ be an unweighted undirected graph. Let $c>0$.
    Let $\lmin$ and $\lmax$ be parameters with $\lmax \le 1.2 \lmin = 2^{O(\log ^{3/4-c}n)}$, and assume that the minimum \proper cut of $G$ is at least $\lmin$.
    Let $\phi = 2^{-\Theta(\log ^{3/4}n)}$ and $\delta = 2^{-\Theta(\log^{3/4-c} n)}$.
    Then, there is an algorithm that, if $\vol (G) \neq O(\frac \lmax \phi)$:
\begin{enumerate}
    \item \label{item1} Computes a pre-cluster decomposition $\P_1$ of $G$.
    \item \label{item2} Computes a corresponding cluster decomposition $\P_2$
    \item \label{item3} Creates the mirror graph $G_{\P_1}$.
    \item \label{item7} Calls itself recursively on the contracted graph $G/\P_2$.
    \item \label{item4} If the minimum proper cut of $G$ is at most $\lmax$, returns a $(1+\delta)$-approximation of the minimum proper cut value.
    \item \label{item5} If the minimum proper cut of $G$ is at most $\lmax$, maintains pointers to the cut that achieves the minimum proper cut, and can return on request the nodes of the side $S$ of smaller volume in time $\tilde O(\card S)$ or the cut-edges in time $\lmax n^{o(1)}$.
    \item \label{item6} If the value of the minimum proper cut is strictly larger than $\lmax$, report ``$\lambda > \lmax$''.
\end{enumerate}
If $\vol(G)=O(\frac \lmax \phi)$, the algorithm simply returns the minimum proper cut.

This algorithm runs in $m^{1+o(1)}$ time.
\end{theorem}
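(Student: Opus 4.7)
The plan is to implement each of items (\ref{item1})--(\ref{item6}) in sequence, using the machinery developed earlier in the paper as black boxes. First, I would compute an $(\alpha,\phi')$-boundary-linked expander decomposition of $G$ via \Cref{thm:expanderdecomposition}, and then build the pre-cluster decomposition $\P_1$ by refining each expander $C$ as follows: instantiate \Cref{thm:deterministic-dynamic} with parameters $\lmax$, $\nu = \lmax/\phi$, $\beta = O(1)$; as long as a current cluster $C'$ has boundary-size $> 6\lmax$, query LocalKCut from every vertex incident to a boundary edge of $C'$, and if any returned cut is $(1-\delta)$-boundary sparse and has volume at most $\lmax/\phi$ and cut-size at most $\lmax$, replace $C'$ by its two sides. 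The process stops on each branch when either no such cut is returned (so $C'$ satisfies Condition~\ref{easycondition}) or $\partial C' \le 6\lmax$ (Condition~\ref{hardcondition}). To ensure the LocalKCut preconditions hold, I first recursively eliminate every cut of cut-size $< \lmax/\beta$ inside the cluster; such cuts are automatically $(1-\delta)$-boundary sparse because the mincut of $G$ is at least $\lmin \ge \lmax/\beta$, so cutting them is already consistent with the pre-cluster decomposition.

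Second, to obtain $\P_2$, I would apply the fragmenting algorithm of \Cref{thm:fragmenting} to every cluster of $\P_1$ that satisfies Condition~\ref{hardcondition}, keeping the others unchanged. Third, I would construct the mirror graph $G_{\P_1}$ and instantiate the Mirror Cuts Data Structure of \Cref{thm:mirrorcuts} on it. Fourth, I would recursively invoke the theorem on the contracted graph $G/\P_2$; the recursion bottoms out once $\vol(G)=O(\lmax/\phi)$, where the minimum proper cut can be found by brute force in $n^{o(1)}$ time.

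To answer the query, I appeal to \Cref{prop:clusterhierarchy}: any minimum proper cut of $G$ is $(1+2\delta)$-approximated either by a local cut in some mirror cluster at some level of the recursion, or by a cut that crosses no cluster at every level (in which case it is captured by the base case). Thus I would return the minimum among (a) the smallest mirror cut currently stored in the Mirror Cuts Data Structure, and (b) the value returned by the recursive call on $G/\P_2$; if both exceed $\lmax(1+\delta)$, I report ``$\lambda > \lmax$''. The realizing cut can be retrieved implicitly: a mirror-cut witness is already stored with pointers, giving the smaller side in $\tilde O(|S|)$ time, while a witness coming from recursion corresponds to a union of clusters of $\P_2$ and can be uncontracted level by level to enumerate boundary edges in $\lmax \cdot n^{o(1)}$ time. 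The accumulated approximation is $(1+2\delta)^{O(\sqrt[4]{\log n})}$, which for $\delta = 2^{-\Theta(\log^{3/4-c}n)}$ stays within $(1+\delta)$ after rescaling the per-level parameter by a $1/\sqrt[4]{\log n}$ factor.

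The main obstacle is to meet the $m^{1+o(1)}$ total time bound despite the recursion. The key accounting is that the contracted graph $G/\P_2$ has volume $\tilde O(m\phi/\poly(\delta))$: the expander decomposition contributes $\tilde O(m\phi)$ inter-expander edges, the boundary-sparse refinement adds only a $\poly(1/\delta)$ factor (since by \Cref{lem:clusterdecomposition} and \Cref{thm:fragmenting} the number of inter-cluster edges per small-boundary cluster is $\tilde O(\lmax/\delta^2)$), and fragmenting adds another $\tilde O(1/\delta^2)$ factor. With $\phi = 2^{-\Theta(\log^{3/4}n)}$ dominating $\poly(1/\delta) = 2^{O(\log^{3/4-c}n)}$, the volume shrinks by a factor $n^{-\Omega(1)/\sqrt[4]{\log n}}$ per level, so the recursion has depth $O(\sqrt[4]{\log n})$ and the summed volume across levels is $m \cdot n^{o(1)}$. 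Since each of \Cref{thm:expanderdecomposition}, \Cref{thm:deterministic-dynamic}, \Cref{thm:fragmenting}, and \Cref{thm:mirrorcuts} runs in time bounded by the current graph's volume times an $n^{o(1)}$ factor, summing over levels yields the stated $m^{1+o(1)}$ bound.
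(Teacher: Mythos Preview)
Your overall plan matches the paper's, but one step is left underspecified and this creates a real gap in your running-time argument. You say that to meet the LocalKCut precondition (minimum cut in the cluster at least $\lmax/\beta$) you will ``first recursively eliminate every cut of cut-size $< \lmax/\beta$ inside the cluster.'' You never explain \emph{how} these small cuts are found: LocalKCut cannot be used until its own precondition holds, so this is a chicken-and-egg problem. The paper's solution (Item~\ref{staticboringrecursivecall} of the \AAA data structure, invoked inside \textsc{UpdatePartition}) is to run the \emph{entire} static algorithm again on the graph with inter-cluster edges removed, once for each range $[\lceil 1.2^j\rceil,\lfloor 1.2^{j+1}\rfloor]$ with $1.2^j\le\lmax/8$; each such call in turn spawns further calls with still smaller $\lmax$, and so on. This introduces a second axis of recursion, orthogonal to the contraction recursion you account for.

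Your time analysis only tracks the contraction recursion of depth $O(\sqrt[4]{\log n})$. The same-level recursion contributes a branching factor that the paper bounds separately (\Cref{lem:countingcalls}, \Cref{lem:samelevel}, \Cref{lem:norecursivecalls}): an instance with label $i$ (where $\lmax\approx 1.2^{i}$) spawns at most $2^i$ instances on its own level, and across $O(\sqrt[4]{\log n})$ levels this gives $\lmax^{O(\sqrt[4]{\log n})}=2^{O(\log^{1-c}n)}=n^{o(1)}$ total calls. Without this argument the $m^{1+o(1)}$ bound is not established. A second, smaller gap: your claim that the refinement from the expander decomposition to $\P_1$ adds only a $\poly(1/\delta)$ factor in inter-cluster edges is not a consequence of \Cref{lem:clusterdecomposition} (which is a correctness statement) or of \Cref{thm:fragmenting} (which only covers the $\P_1\to\P_2$ step); it requires the potential-function argument of \Cref{lem:boundariessmaller} and \Cref{lem:numedges}.
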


The main idea is as follows: we first decompose the graph into expanders, then decompose the expanders further into smaller connected graphs, called \emph{clusters}, so that every minimum \proper cut $S$ in the graph can be uncrossed for each cluster that it cuts. 
This step is decomposed into two substeps, one is to compute the pre-cluster decomposition $\P_1$, and then compute the cluster decomposition $\P_2$. 
In the end, $S$ can be approximated either by a cut that consists of a union of (full) clusters or by a local cut.
We then contract the clusters and recurse on the resulting graph, as clustering only once reduces the number of edges by an $n^{o(1)}$ factor only, and thus we need to recurse multiple times before we get a graph of volume $n^{o(1)}$ on which we can run any static algorithm.

We describe below first the data structure that the algorithm maintains during its computation on the static graph. We then describe the algorithm, called \AAA algorithm, which basically instantiates the data structures it uses to compute the partition $\P_2$ and then calls itself recursively on the contracted graph.
The main work in initializing and updating these data structures is done in the subroutine \UP, which is presented afterwards. Note that we maintain one \AAA data structure for each $\lmax$, i.e., for each $i$.

Recall that $\alpha = \frac{1}{\poly \log n}$, $\phi = 2^{-\Theta(\log^{3/4} n)}$ and that $\delta \le \frac 1 {2\lmax}$ with $ \delta \le 0.04$, which holds by our above choice of $c>0$, $\delta = 2^{O(\log^{3/4 -c } n)}\le 0.04$. We set $\lmin=2^{\Theta (\log^{3/4-c}n)}$ and $\lmax\le 1.2 \lmin$, $H=38^h = 2^{-O(\log^{1/2}n)}$, $\phi =\frac {\phi'} {38^h}= 2^{-\Theta (\log^{3/4}n)}$, $\rho= 2^{\Theta(\log ^{1/2}n)}$ and $ \alpha=\frac 1 {\poly \log n}.$

Note that $\frac \alpha \phi \ge 1$. We also set $\gamma = 6$.

\begin{algo}[Static Algorithm]\label{alg:StaticAlgorithm}
The static algorithm works as follows:
\item Input:
\begin{enumerate}
    \item An unweighted undirected graph $G$ with possibly parallel edges.
    \item An integer $\lmin$ such that the minimum \proper cut of $G$ is no less than $\lmin$.
    \item An integer $\lmax$ with $\lmax \le 1.2 \lmin = 2^{\Theta (\log^{3/4-c}n)}$ with $c>0$.
\end{enumerate}

\item \AAA Data Structure:

If $G$ has volume $\Omega(\frac \lmax \phi)$:
\begin{enumerate}
   \item \label{staticapprox} An approximation value $ \beta = 8$.
   \item \label{staticexpansion} An expansion parameter $\phi=\frac {\phi'} {38^h}= = 2^{-\Theta (\log ^{3/4} n)}$
    \item \label {staticpartition} A two-level partition $\P_1, \P_2$ of $G$ defined by a labeling of the edges as ``intercluster", ``intracluster'' or ``fragmented''.
    \item \label{staticlabels} A labeling of each cluster $C$ as either ``well-connected'', ``poorly-connected'', or ``fragmented''.
    \item A labeling of each node as ``checked'' or ``unchecked''.
    \item \label{staticdatastructure} An instance of \Cref{alg:datastructure} which maintains the mirror graph $G_{\P_1}$ and the contracted graph $G/{\P_2}$. (See \Cref{thm:datastructure}).
    \item \label{staticlocakcut} An instance of LocalKCut (\Cref{alg:detlocalkcut}) on $G$ where ``intercluster'' edges have been removed, and $\lmax, \nu=4\frac \lmax \phi, \beta$. (See \Cref{thm:deterministic-dynamic})
    \item \label{staticmirrors} An instance of \Cref{alg:maintainmirrors}, the Mirror Cuts Data Structure, on $G_{\P_1}$ and $\lmax, \nu=4\frac \lmax \phi, \beta$. (See \Cref{thm:mirrorcuts})
    \item \label{staticboringrecursivecall} A recursive call of the dynamic algorithm on $G$ where ``intercluster'' edges have been removed, with $\lmin = \ceil{1.2^i}$ and $\lmax = \floor{1.2^{i+1}}$ for all $i\in \N_0$ satisfying $1.2^{i} \le  \frac \lmax 8$.
    \item \label{staticrecursivecontract} A recursive call of \Cref{alg:StaticAlgorithm} on the contracted graph $G/{\P_2}$.
\end{enumerate}

\item Rules for cluster classifications:
\begin{enumerate}
    \item Initially, every cluster is ``well-connected''. 
    \item Any cluster $C$ that satisfies $\boundary C \le 3\lmax$ becomes ``poorly-connected''
    \item In $\P_2$, any cluster that is created by  the fragmenting algorithm (\Cref{alg:fragmenting}) becomes ``fragmented'' (note that a ``poorly-connected'' cluster that goes through the fragmenting algorithm untouched still becomes ``fragmented''). This overrides other rules.
\end{enumerate}
   
\item Processing:
\begin{enumerate}[label=(\arabic*),noitemsep]
    \item  If the graph has volume $O(\frac \lmax \phi)$, run a connectivity algorithm then a static minimum cut algorithm (\cite{DBLP:conf/soda/HenzingerLRW24}) on all connected components. If the graph has no edges, output ``$\lambda > \lmax$''. In any case, stop this call to the algorithm.

    \item \label{step:static1} Compute an $(\alpha, \phi')$-boundary-linked expander decomposition of $G$, using the algorithm in~\cite{expanderhierarchy}.

    \item \label{staticupdatepartition} Run \textsc{UpdatePartition} (given below) to instantiate the \AAA  data structure.

    \item \label{step:static2} Run \Cref{alg:decomposingexpanders} on each expander, which computes a pre-cluster decomposition $\P_1$.
    Every time \Cref{alg:decomposingexpanders} splits a cluster it calls \UP 
    to guarantee that the \AAA data structure reflects the current $\P_1$.

    \item \label{step:staticfragmenting} Run the fragmenting algorithm (\Cref{alg:fragmenting}) on each ``poorly connected'' cluster from $\P_1$ to get a cluster decomposition $\P_2$. Update the data structure accordingly.

     \item \label{step:static4}Call \Cref{alg:StaticAlgorithm} recursively on the contracted graph $G/{\P_2}$. 

    \item \label{step:static6}
    Return the smallest cut value found among the recursive call in Step~\ref{step:static4} and the request on the Mirror Cuts Data Structure (\Cref{alg:maintainmirrors}), if this value is smaller than $\lmax$.
    Store the pointer  to the cut that achieved this value (if the algorithm is required to return cuts).
    
       If either no cut is found or the smallest found cut is strictly larger than $\lmax$, then report ``$\lambda > \lmax$''. 
\end{enumerate}
\item \textsc{UpdatePartition:}
\begin{enumerate}
    \item If not yet instantiated, instantiate $\P_1$ by setting the label of every edge to ``intracluster".
    \item Update $\P_1$, by changing the labels of the intercluster edges to ``intercluster''. Update or instantiate  Items~\ref{staticlabels} and~\ref{staticdatastructure}. Mark the vertices adjacent to a newly ``intercluster'' edge as ``unchecked''.
    \item Update or instantiate the recursive calls of the \AAA algorithm that maintains  \Cref{staticboringrecursivecall} of the \AAA data structure in increasing order of $i$, i.e. for all suitable $i$ starting from 0.
    
    Stop this call to \textsc{UpdatePartition} whenever a recursive call returns a cut, cut the partition along this cut and run \textsc{UpdatePartition} from scratch.
    \item Update or instantiate every other structure in the \AAA data structure.
\end{enumerate}

\end{algo}

\subsection{Correctness of Algorithm \ref{alg:StaticAlgorithm}}

First, let's argue about the base case.
For a graph with volume $O(\frac \lmax \phi)$, the result holds by correctness of the algorithms we call. This trivially satisfies the requirements of our algorithm.

Let us now argue in the general case.
Note that if $\lambda > \lmax$, the algorithm cannot find a cut smaller than $\lmax$ (as such a cut does not exist), and will report that $\lambda > \lmax$.

If $\lambda \le \lmax$, then we need to show that for any minimum proper cut $S$ in $G$, we can find a $(1+2\delta)$-approximate mincut $S'$  in $G$ with either the calls to the Mirror Cuts Data Structure (\Cref{alg:maintainmirrors}) in Step~\ref{staticmirrors} or the recursive call of Step~\ref{step:static4}. For this we rely on the following lemma.

\begin{lemma}
    In \Cref{alg:StaticAlgorithm}, the contracted graph of~\Cref{step:static4} and the mirror clusters of \Cref{staticmirrors} preserve the cut values of the corresponding non-contracted cuts of $G$ and given a cut in one of them, one can reconstruct the corresponding cut in $G$.
\end{lemma}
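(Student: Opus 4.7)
The plan is to reduce the statement to the well-known property that contracting a vertex subset $W$ into a single super-node $w$ in a graph preserves the cut value of any cut $T$ such that either $W \subseteq T$ or $W \cap T = \varnothing$, because in both cases the set of edges with exactly one endpoint in $T$ is in bijection with the set of edges with exactly one endpoint in the corresponding set of the contracted graph. This observation, applied once per cluster in $\P_2$ (for the contracted graph) and once per cluster in $\P_1$ (for each mirror cluster), will yield both the cut-value preservation and an explicit reconstruction procedure.

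First, I would handle the contracted graph $G/\P_2$ of \Cref{step:static4}. By construction, $G/\P_2$ is obtained from $G$ by contracting each cluster $C \in \P_2$ into a single super-node $v_C$. Given a cut $\widehat T \subsetneq V(G/\P_2)$, define the corresponding cut in $G$ to be $T = \bigcup_{v_C \in \widehat T} C$. For every cluster $C$, either $C \subseteq T$ or $C \cap T = \varnothing$, so the edges of $G$ crossing $(T, V\setminus T)$ are exactly those edges whose endpoints lie in distinct clusters $C, C'$ with $v_C \in \widehat T$ and $v_{C'} \notin \widehat T$; these are precisely the edges of $G/\P_2$ crossing $(\widehat T, V(G/\P_2)\setminus \widehat T)$. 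Hence $\boundary_G T = \boundary_{G/\P_2} \widehat T$. Reconstruction is immediate: for each super-node $v_C \in \widehat T$, expand to the stored vertex set of $C$ (which the data structure of \Cref{staticdatastructure} maintains explicitly).

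Next, I would handle the mirror clusters of \Cref{staticmirrors}. For each cluster $C \in \P_1$, the mirror cluster is $G/(V \setminus C)$, i.e.~the graph obtained by contracting $V \setminus C$ into a single super-node $v_{\text{out}}$; see \Cref{def:mirrorclusters}. Let $\widehat S$ be a cut in this mirror cluster. If $v_{\text{out}} \notin \widehat S$, then $\widehat S \subseteq C$ and the corresponding cut in $G$ is simply $S := \widehat S$; the edges crossing $(S, V\setminus S)$ in $G$ are (i) edges from $S$ to $C \setminus S$ and (ii) edges from $S$ to $V\setminus C$, and these correspond respectively to edges from $\widehat S$ to $C \setminus \widehat S$ and edges from $\widehat S$ to $v_{\text{out}}$ in the mirror cluster, which together are exactly the boundary of $\widehat S$ there. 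Symmetrically, if $v_{\text{out}} \in \widehat S$, set $S := (\widehat S \setminus \{v_{\text{out}}\}) \cup (V \setminus C)$, and the same bijection of boundary edges applies. In both cases $\boundary_G S = \boundary_{G/(V\setminus C)} \widehat S$, and reconstruction is obtained by replacing $v_{\text{out}}$ (when present) by $V\setminus C$, which is either explicit from the complement of $C$ or, in the recursive setting, obtained by transitively un-contracting using the stored partition.

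There is no genuine obstacle here: the entire argument is a straightforward bookkeeping of edges under contraction, and the only point requiring a small amount of care is the case split in the mirror-cluster argument according to whether the super-node $v_{\text{out}}$ lies in $\widehat S$. The fact that our data structure (\Cref{thm:datastructure}) explicitly stores the cluster memberships ensures that the reconstruction can actually be carried out, and in particular that one can output the smaller side of the reconstructed cut in time proportional to its size.
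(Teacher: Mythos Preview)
Your proposal is correct and takes the same approach as the paper, which dispatches the lemma in a single sentence: ``This is immediate as they are contractions of the original graph, and thus preserve cuts.'' You have simply spelled out the standard contraction-preserves-cuts argument in full detail, including the case split for mirror clusters, which the paper leaves implicit.
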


\begin{proof}
    This is immediate as they are contractions of the original graph, and thus preserve cuts.
\end{proof}

As we prove in  \Cref{cor:3.15} in Section~\ref{subsec:clusterdecomposition}, the graph is decomposed into a valid cluster pre-decomposition $\P_1$.
As we show in Section~\ref{sec:mirrorcuts}, \Cref{alg:maintainmirrors} in Item~\ref{staticmirrors} of our data structure correctly finds all local cuts
in the mirror graph $G_{\P_1}$ at this level of the cluster hierarchy. 
Step~\ref{step:static4} of the \AAA algorithm calls itself recursively, which ensures the full cluster hierarchy is built until only one or two nodes are left. In the case where $\lmin\le \lambda \le \lmax$, by \Cref{prop:clusterhierarchy}, the minimum proper cut can be approximated by a local cut in a mirror graph or in the most-collapsed graph of the cluster hierarchy. Since all such local cuts are computed in Item~\ref{staticmirrors}, and their minimum is returned in Step~\ref{step:static6}, this completes the correctness proof if $\lambda \le \lmax$.
Note that \AAA algorithm only returns a value in Step~\ref{step:static6} and that it only returns the value of an existing cut in $G$.
In the case where $\lambda > \lmax$, since the algorithm only returns the value of an existing  cut, it either outputs a value strictly larger than $\lmax$, or reports that $\lambda > \lmax$. This concludes correctness.

\subsection{Decomposing Expanders}\label{subsec:clusterdecomposition}
In this subsection, we present an algorithm, called \Cref{alg:decomposingexpanders}, that implements Step~\ref{step:static2} in Algorithm~\ref{alg:StaticAlgorithm}. 
It takes as input an expander decomposition, and outputs a decomposition such that 
every cluster $C$ with $\boundary C \ge 3\lmax$ contains no $(1-\delta)$-boundary-sparse cut $S$  with $\boundary_G(S) \le \lmax$ and $\vol(S) \le \lmax/\phi$.
It also maintains or instantiates an instance of the Mirror Cuts Data Structure (\Cref{alg:maintainmirrors}) for $G_{\P_1}$ by calling \UP when needed.

This algorithm is very similar to the one in~\cite{DBLP:conf/soda/El-HayekH025}, but works with a different data structure, described in \Cref{sec:datastructure}. 
We include the new analysis here for completeness.

Algorithm~\ref{alg:decomposingexpanders} takes as input an expander decomposition, initializes the data structures, and then repeatedly calls Algorithm~\ref{alg:subroutine} on each cluster until no cluster with an unchecked vertex exists. Algorithm~\ref{alg:subroutine} performs the actual cut finding and graph cutting.

Each node stores a boolean value, which represents whether it  is \emph{checked} or \emph{unchecked}. Intuitively, node $v$ being checked means that LocalKCut has already been executed at $v$ and the cut-size of the returned cuts has already been taken into account, while being unchecked means that LocalKCut needs to be executed at $v$.
Algorithm~\ref{alg:decomposingexpanders} runs as subroutine Algorithm~\ref{alg:subroutine}  that takes a cluster decomposition, a cluster $C$ of that decomposition together with a set of unchecked vertices in $C$ as input and
(a) either cuts the cluster along a $(1-\delta)$-boundary sparse cut or (b) certifies that no cut is $(1-\delta)$-boundary sparse.
Algorithm~\ref{alg:decomposingexpanders} repeats this procedure until no cluster with minimum cut at least $\lmin$ containing a $(1-\delta)$-boundary-sparse cut exists. 
Our algorithm will maintain the following invariant:

\begin{restatable}{invariant}{cuttingisgood}\label{inv:cuttingisgood}
For every cluster $C$ 
    and every cut $S \subsetneq C$ such that $S$ is $(1-\delta)$-boundary-sparse in $C$, $\vol (S) \le \frac \lmax \phi$, and $\lmin \le w(S, C\setminus S) \le \lmax$, there exists a node $v \in S$ such that $v$ is unchecked.
\end{restatable}

The goal of the algorithm is not to  miss any $(1-\delta)$-boundary-sparse cut. 
We use procedure LocalKCut to find such cuts. \emph{Checking node $v$} refers to executing LocalKCut with $v$ as starting point.
As we will show, it suffices to make sure that for every cut that might be boundary-sparse there is at least one node in the cut on which the subroutine LocalKCut is requested to guarantee that the cut is found.

Note that if a set $S \subseteq C$ has no edges leaving $C$, then $w(S, V \setminus C) = 0 \le w(S, C\setminus S)$ and, thus, $S$ cannot be $(1-\delta)$-boundary-sparse in $C$. Thus, vertices in $C$ that are not incident to a boundary edge of $C$ do not need to be checked in $C$, i.e., we do not need to mark them as unchecked. Hence, only vertices that are incident to a boundary edge are marked as unchecked, all others are marked as checked.

We call the decomposition of the vertex set after the algorithm terminates the \emph{pre-cluster decomposition}.

\begin{algo}[Decomposing Expanders]\label{alg:decomposingexpanders}
    Input: 
    \begin{itemize}
        \item an $(\alpha, \phi')$-boundary-linked expander decomposition $V_1, \dots, V_\ell$, specified by edge labels.
        \item Access to a LocalKCut instance on the graph where intercluster edges have been removed.
        \item The parameters $\lmax$, $\lmin$, $\nu$, $\delta$.
    \end{itemize}

\begin{enumerate}[noitemsep]
    \item set $\C \leftarrow \{V_1, \dots, V_\ell\}$.
    \item \label{step:de2}   Mark every node incident to an inter-expander edge as unchecked, and all other nodes as checked.
\item While there exists in $\C$ a cluster $C$ satisfying $\boundary C \ge 3\lmax$ and containing an unchecked vertex, we run the Find and Cut subroutine (\Cref{alg:subroutine}) on $C$, which might change $\C$.
\end{enumerate}
\end{algo}

\begin{algo}[Find and Cut Subroutine]\label{alg:subroutine}
Input: 
\begin{itemize}
    \item a pointer to a graph $G$, with a partition $\P_1$ of $G$ defined by a labeling of the edges.
    \item a pointer to a set $C\in \P_1$
    \item a set of unchecked vertices satisfying \Cref{inv:cuttingisgood}
    \item an instance of LocalKCut on $G$ where intercluster edges have been removed.
\end{itemize}

    \item\label{step:fc2}  For every unchecked vertex $v$ in $C$:
\begin{enumerate}[noitemsep]
    \item \label{step:fcl1} Make a request to LocalKCut  on $v$. Let $\mathcal{S}$ be the set of outputs from all LocalKCuts.
    \item For every $S \in \mathcal{S}$:

    \label{step:fcl2b}
    Check if $S$ is $(1-\delta)$-boundary-sparse.
    If it is the case, cut the cluster along $S$,
    and run \textsc{UpdatePartition} thereby removing $C$ and creating two new clusters $S$ and $C \setminus S$ in $\P_1$, and mark any node incident to a newly cut edge as unchecked. The node $v$ remains unchecked. End the current call to \Cref{alg:subroutine}.
    \item \label{step:fcl3} 
    Mark the node $v$ as checked and move on to the next unchecked vertex in $C$.

\end{enumerate}
\end{algo}

\subsubsection{Correctness}

We need to show that 
 Invariant~\ref{inv:cuttingisgood} holds at the termination of \Cref{alg:decomposingexpanders}.
For that, we start by showing that the invariant holds before the while loop in Algorithm~\ref{alg:decomposingexpanders}, and that each step of Algorithm~\ref{alg:subroutine} which is called in the while loop of Algorithm~\ref{alg:decomposingexpanders} maintains   Invariant~\ref{inv:cuttingisgood}. 
Thus, the invariant holds after Algorithm~\ref{alg:decomposingexpanders} has terminated. As there are no unchecked nodes left in clusters $C$ with $\boundary C \le 3\lmax$ at that point in time,  it follows that every cluster $C$ with $\boundary C \le 3\lmax$ contains no $(1-\delta)$-boundary-sparse cut $S$  with $\boundary_G(S) \le \lmax$ and $\vol(S) \le \lmax/\phi$.

\begin{lemma}\label{lem:3.12}
    Right after Step~\ref{step:de2} in \Cref{alg:decomposingexpanders},  \Cref{inv:cuttingisgood} holds.
\end{lemma}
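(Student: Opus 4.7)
The plan is to show that any cut $S$ satisfying the hypotheses of \Cref{inv:cuttingisgood} at this point in the algorithm must contain a vertex incident to an inter-expander edge, and hence an unchecked vertex by Step~\ref{step:de2}.

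First I would fix an arbitrary expander $V_i$ from the decomposition (these are the only clusters in $\C$ just after Step~\ref{step:de2}) and an arbitrary cut $S\subsetneq V_i$ that is $(1-\delta)$-boundary-sparse in $V_i$, with $\vol(S)\le\lmax/\phi$ and $\lmin\le w(S,V_i\setminus S)\le\lmax$. By \Cref{def:sparse}, the boundary-sparsity hypothesis gives
\[
 w(S, V_i \setminus S) \;<\; (1-\delta)\min\bigl\{w(S,V\setminus V_i),\; w(V_i\setminus S,V\setminus V_i)\bigr\}.
\]
Since $w(S,V_i\setminus S)\ge \lmin>0$, the right-hand side is strictly positive, which forces $w(S,V\setminus V_i)>0$.

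Therefore $S$ contains at least one endpoint $v$ of an edge going from $S$ to $V\setminus V_i$, and this edge is by construction an inter-expander edge (since $V_i$ and the expander containing the other endpoint are distinct clusters of the expander decomposition). By Step~\ref{step:de2} of \Cref{alg:decomposingexpanders}, every such vertex is marked as unchecked. Hence $S$ contains an unchecked vertex, establishing the invariant.

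There is no real obstacle here: the statement reduces directly to the observation that boundary-sparsity (being defined relative to edges leaving the cluster) cannot hold for a cut that has no outgoing inter-cluster edges, combined with the fact that Step~\ref{step:de2} marks precisely the endpoints of inter-expander edges as unchecked.
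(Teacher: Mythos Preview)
Your proof is correct and follows essentially the same approach as the paper's: use the boundary-sparsity inequality to force $w(S,V\setminus V_i)>0$, so $S$ contains an endpoint of an inter-expander edge, which is marked unchecked in Step~\ref{step:de2}. The only minor difference is that you invoke the hypothesis $w(S,V_i\setminus S)\ge\lmin>0$ to get positivity of the right-hand side, whereas the paper just uses $w(S,V_i\setminus S)\ge 0$ together with the strict inequality from boundary-sparsity; both are fine.
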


\begin{proof}
    Right  after Step~\ref{step:de2} every cluster is an expander in the expander decomposition. Thus,
    any cut $S$ in a cluster $C$  can only be $(1-\delta)$-boundary-sparse if it contains a node $v$ that is incident to an inter-expander edge, that is, if $w(S, V\setminus V_i) \neq 0$ as we have $(1-\delta) w(S, V\setminus V_i) > w(S, V_i\setminus S) \ge 0$, by the definition of boundary-sparsity.
     As all the nodes incident to all inter-expander edges were marked as unchecked in Step~\ref{step:de2}, \Cref{inv:cuttingisgood} holds.
\end{proof}

\begin{lemma}\label{lem:bsgood}
    Cutting a cluster $C$ along a $(1-\delta)$-boundary-sparse cut $C'$, then marking every node adjacent to an edge in $E(C', C\setminus C')$ as ``unchecked'' maintains \Cref{inv:cuttingisgood}.
\end{lemma}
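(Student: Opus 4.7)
} My plan is to fix a new cluster resulting from the cut (say $C'$, with the case $C\setminus C'$ being symmetric), pick an arbitrary cut $S\subsetneq C'$ satisfying the hypotheses of \Cref{inv:cuttingisgood} (that is, $S$ is $(1-\delta)$-boundary sparse in $C'$, $\vol(S)\le\lmax/\phi$, and $\lmin\le w(S,C'\setminus S)\le\lmax$), and then show that $S$ contains an unchecked node after the cut. The argument splits into two cases depending on whether $S$ sends an edge across the new cut $E(C',C\setminus C')$.

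\emph{Case (a): $E(S, C\setminus C')\neq\varnothing$.} This case is trivial: the algorithm explicitly marks every endpoint of a newly cut edge as unchecked, so any endpoint of such an edge that lies in $S$ is now unchecked, discharging the invariant directly.

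\emph{Case (b): $w(S, C\setminus C')=0$.} The key step is to show that $S$ was already $(1-\delta)$-boundary sparse in $C$ \emph{before} the cut, so that \Cref{inv:cuttingisgood} applied to the old cluster $C$ yielded an unchecked node in $S$; since the cutting step only adds to the set of unchecked nodes, that node remains unchecked afterwards. To derive the boundary-sparseness of $S$ in $C$, I will use the three identities that follow from $w(S,C\setminus C')=0$, namely $w(S,C\setminus S)=w(S,C'\setminus S)$ and $w(S,V\setminus C)=w(S,V\setminus C')$, together with the decompositions
\[
w(C'\setminus S,V\setminus C')=w(C'\setminus S,V\setminus C)+w(C'\setminus S,C\setminus C'),
\]
\[
w(C\setminus S,V\setminus C)=w(C'\setminus S,V\setminus C)+w(C\setminus C',V\setminus C).
\]
Combining $(1-\delta)$-boundary sparseness of $S$ in $C'$ (which bounds $w(S,C'\setminus S)$ against $w(C'\setminus S,V\setminus C')$) with $(1-\delta)$-boundary sparseness of $C'$ in $C$ (which bounds $w(C'\setminus S,C\setminus C')=w(C',C\setminus C')$ against $w(C\setminus C',V\setminus C)$), I can substitute to obtain
\[
w(S,C'\setminus S)<(1-\delta)\bigl[w(C'\setminus S,V\setminus C)+w(C\setminus C',V\setminus C)\bigr]=(1-\delta)\,w(C\setminus S,V\setminus C),
\]
and symmetrically the bound against $(1-\delta)w(S,V\setminus C)$ is immediate from the first identity. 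This gives $(1-\delta)$-boundary sparseness of $S$ in $C$. The volume and cut-size bounds transfer unchanged because $\vol(S)$ is intrinsic and $w(S,C\setminus S)=w(S,C'\setminus S)\in[\lmin,\lmax]$, so the invariant hypothesis is met for $C$.

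The only subtle point will be case~(b) and in particular verifying that the two sparseness inequalities can be chained without losing the $(1-\delta)$ factor; this is essentially an algebraic manipulation using $(1-\delta)^2\le(1-\delta)$. Once that inequality is established, the rest is bookkeeping: the old invariant for $C$ produces an unchecked node in $S$, and since cutting only adds unchecked labels (never removes them), that node witnesses the invariant for the new cluster $C'$. Applying the same argument symmetrically to $C\setminus C'$ completes the proof.
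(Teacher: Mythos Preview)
Your proposal is correct and follows essentially the same approach as the paper's proof: the same case split on whether $S$ touches the new inter-cluster edges, and in Case~(b) the same chain of inequalities using the sparseness of $S$ in $C'$ together with the sparseness of $C'$ in $C$ to deduce sparseness of $S$ in the old cluster $C$, so that the old invariant furnishes an unchecked vertex. One minor remark: the ``$(1-\delta)^2\le(1-\delta)$'' manipulation you anticipate is actually simpler than that---the paper just uses $w(C',C\setminus C')<(1-\delta)\,w(C\setminus C',V\setminus C)\le w(C\setminus C',V\setminus C)$ (i.e., $(1-\delta)\le 1$) before substituting, so no squared factor ever appears.
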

\begin{proof}
     To show \Cref{inv:cuttingisgood} we need to show that 
every $(1-\delta)$-boundary sparse cut $S$ in $C'$  with 
 $\vol(S) \le \frac \lmax \phi$ and $w(S, C'\setminus S) \le \lmax$ contains an unchecked vertex. 
 By symmetry the same claim follows for $C \setminus C'$.
 
    Assume by contradiction that there  exists a $(1-\delta)$-boundary sparse cut $S$  with 
 $\vol(S) \le \frac \lmax \phi$ and $w(S, C'\setminus S) \le \lmax$ such that no vertex in  $S$ is unchecked
   after marking every node adjacent to an edge in $E(C', C\setminus C')$ as ``unchecked'' , i.e. every vertex in $S$ is checked. As cutting does not change the marking of the vertices, this implies that (a) every vertex in $S$ was also checked in $C$, and (b) no edge incident  to $S$ was cut, i.e., became a new inter-cluster edge as the endpoints of such edges are marked ``unchecked'' in Step 2 of \Cref{alg:subroutine}. Specifically, there exists no edge from a node in $S$ to a node in $C\setminus C'$.
    To achieve a contradiction we will show that $S$ must have been (i) a $(1-\delta)$-boundary-sparse cut in $C$, (ii) that $S$ has volume at most $\lmax/\phi$, and (iii)  $\lmin \le w(S, C\setminus S) \le \lmax$. As, by assumption, \Cref{inv:cuttingisgood} was satisfied in $C$ before Step~\ref{step:fcl2b}, it follows that $S$ must have contained an unchecked vertex in $C$, leading to a contradiction. 
    
    It remains to show (i) to (iii).
    We first show (i). Recall that $C' \subset C$.
    Then we know that $w(S, C'\setminus S) = w(S, C\setminus S)$, as otherwise there would be an edge from a node in $S$ to a node in $C\setminus C'$. Thus $w(S, C \setminus S) = w(S, C' \setminus S)$ and $w(S, V\setminus C') = w(S, V\setminus C)$.
    Since $C'$ is a $(1-\delta)$-boundary-sparse cut in $C$, we have that $w(C', C\setminus C') \le (1-\delta) w(C\setminus C', V\setminus C) < w(C\setminus C', V\setminus C)$. 
    Also since $S$ is $(1-\delta)$-boundary sparse in $C'$, it holds that  $w(S, C'\setminus S)\le(1-\delta) w(C'\setminus S, V\setminus C')$.
    
    Therefore, 
    \begin{align*}
        w(S, C\setminus S) = w(S, C'\setminus S)&\le(1-\delta) w(C'\setminus S, V\setminus C')\\
        & \le (1-\delta)\PAR{w(C'\setminus S, V\setminus C) + w(C'\setminus S, C\setminus C')}\\
        &\le(1-\delta)\PAR{w(C'\setminus S, V\setminus C) + w(C', C\setminus C') } \\
        &< (1-\delta)\PAR{w(C'\setminus S, V\setminus C) + w(C\setminus C', V\setminus C)}\\
        &=(1-\delta)w(C\setminus S, V\setminus C) 
    \end{align*}
    Furthermore, since $S$ is $(1-\delta)$-boundary sparse in $C'$, it also holds that
    \begin{align*}
        w(S, C\setminus S) = w(S, C'\setminus S)&\le(1-\delta) w(S, V\setminus C') = w(S, V\setminus C)
    \end{align*}
    Thus $S$ is $(1-\delta)$-boundary sparse in $C$.

    We next show (ii). As $S \subset C' \subset C$, it follows that the volume of $S$ is the same, and thus, it is at most $\lmax/\phi$. 

    Finally we show (iii). As there exists no edge from a node in $S$ to a node in $C \setminus C'$ it follows that $w(S, C'\setminus S) = w(S, C \setminus S)$ and, thus,
    $w(S, C\setminus S) =w(S, C'\setminus S)\le \lmax$.

    Thus (i) to (iii) hold, completing the proof of the lemma.
\end{proof}

\begin{lemma}\label{lem:updateissparse}
    If \textsc{UpdatePartition} cuts along a cut $C'$ in a cluster $C$, $C'$ is $(1-\delta)$-boundary sparse in $C$.
\end{lemma}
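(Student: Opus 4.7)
The plan is to enumerate the two call sites at which \textsc{UpdatePartition} can actually create a new cut in $\P_1$, and to verify $(1-\delta)$-boundary sparsity at each. Tracing through \Cref{alg:StaticAlgorithm}, those sites are: (a) the invocation from step~\ref{step:fcl2b} of the Find-and-Cut subroutine \Cref{alg:subroutine}, immediately after the subroutine cuts a cluster along a candidate $S$; and (b) the inner step of \textsc{UpdatePartition} itself where a recursive \AAA call maintained in item~\ref{staticboringrecursivecall} of the data structure returns a proper cut along which \textsc{UpdatePartition} then cuts the partition ``from scratch''.

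Case (a) is immediate, since the Find-and-Cut subroutine only invokes \textsc{UpdatePartition} after explicitly checking that $S$ is $(1-\delta)$-boundary sparse in its cluster. The real content lies in case (b). Fix such a cut $C' \subsetneq C$ returned by the $i$-th recursive \AAA call, where $1.2^i \le \lmax/8$. That call runs on $G' := G$ with all ``intercluster'' edges removed, so its connected components coincide with the (connected) pre-clusters of $\P_1$; hence $C'$ lies inside a single pre-cluster $C$ and $w(C', C\setminus C') = \boundary_{G'} C'$. From the range of the $i$-th call, $\boundary_{G'} C' \le \lfloor 1.2^{i+1}\rfloor \le 3\lmax/20$, which combined with $\lmax \le 1.2\lmin$ gives $w(C', C\setminus C') \le 9\lmin/50$.

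To lower-bound the two $V\setminus C$ boundary terms, I will invoke the standing hypothesis of \Cref{thm:static} that the minimum proper cut of $G$ is at least $\lmin$. Since $C'$ is connected in $G'$ hence in $G$, and its $G$-boundary is strictly positive, it is a proper cut of $G$, so $\boundary_G C' \ge \lmin$ and therefore $w(C', V\setminus C) \ge \lmin - 9\lmin/50 = 41\lmin/50$. For the opposite side I will decompose $C \setminus C'$ into its connected components $T_1, \dots, T_k$ in $G[C]$: the connectedness of $C$ in $G[C]$ together with $C' \ne \varnothing$ forces $w(T_j, C') > 0$ for every $j$, so each $T_j$ is a proper cut of $G$ and $\boundary_G T_j \ge \lmin$. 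Summing the per-component identity $\boundary_G T_j = w(T_j, C') + w(T_j, V\setminus C)$ yields $w(C\setminus C', V\setminus C) \ge \lmin - w(C', C\setminus C') \ge 41\lmin/50$. Finally, $9/50 < (1-\delta)\cdot 41/50$ for every $\delta \le 0.04$, so $w(C', C\setminus C') < (1-\delta)\min\{w(C', V\setminus C), w(C\setminus C', V\setminus C)\}$, establishing $(1-\delta)$-boundary sparsity.

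The step I expect to be the main obstacle is the lower bound on $w(C\setminus C', V\setminus C)$ when $C\setminus C'$ is disconnected in $G[C]$: the argument must route through the per-component decomposition and use connectivity of the pre-cluster $C$ to guarantee that each component of $C\setminus C'$ is itself a proper cut of $G$, so that the $\lmin$ lower bound from the minimum-proper-cut hypothesis transfers cleanly to the sum of component boundaries.
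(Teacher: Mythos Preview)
Your proof is correct and follows essentially the same approach as the paper's: both bound $w(C',C\setminus C')$ by a small fraction of $\lmax$ using the range of the recursive call in item~\ref{staticboringrecursivecall}, then invoke the minimum-proper-cut hypothesis $\lambda\ge\lmin$ to lower-bound both $w(C',V\setminus C)$ and $w(C\setminus C',V\setminus C)$, and compare.

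The one substantive difference is exactly the point you flagged as the main obstacle. The paper handles the $C\setminus C'$ side with a bare ``similarly'', implicitly relying on the fact that any set with positive $G$-boundary contains a connected component that is itself a proper cut (hence has boundary $\ge\lmin$, which then lower-bounds the boundary of the whole set). Your component-by-component decomposition makes this step explicit and arguably cleaner; it also yields the slightly stronger inequality $\sum_j\boundary_G T_j\ge k\lmin$, though you only use $k\ge 1$. Your case~(a) is harmless but strictly speaking outside the scope of the lemma: the cut along $S$ in step~\ref{step:fcl2b} of \Cref{alg:subroutine} is performed by Find-and-Cut itself before \textsc{UpdatePartition} is invoked, so the lemma as used in the paper (e.g.\ in the proof of \Cref{lem:3.13}) concerns only the cuts produced in step~3 of \textsc{UpdatePartition}, i.e.\ your case~(b).
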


\begin{proof}
    \textsc{UpdatePartition} cuts along cuts that are returned by a recursive call from \Cref{staticboringrecursivecall}, that is, where the cut-size is at most $\frac \lmax 8$. 
    More specifically, in a cluster $C$, it cuts along a cut $C'$ which satisfies $w(C', C\setminus C') \le \frac \lmax 8$. However, $w(C', C\setminus C') + w(C', V\setminus C) \ge \lmin$ by the assumptions of \Cref{thm:static}, and therefore, $w(C', V\setminus C) \ge \lmin - \frac \lmax 8 \ge 0.7\lmax$ (where we use that $\lmax \le 1.2\lmin$).
    Similarly, $w(C\setminus C', V\setminus C) \ge 0.7 \lmax$. 
    This proves that $C'$ is $(1-\delta)$-boundary-sparse in $C$ for $\delta \le 0.82$, which holds for our choice of $\delta$.
\end{proof}

\begin{lemma}\label{lem:3.13}
    Each execution of~\Cref{alg:subroutine}
    maintains \Cref{inv:cuttingisgood} if  there are no cuts of cut-size at most $\frac \lmax 8$ in the cluster $C$ given as second parameter.
\end{lemma}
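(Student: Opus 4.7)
The plan is to split the analysis of \Cref{alg:subroutine} into three types of events that can happen during a single execution, namely (i) a direct cut in \sref{step:fcl2b} of \Cref{alg:subroutine} along some $(1-\delta)$-boundary-sparse $S\in\mathcal S$, (ii) any further cuts performed by the \textsc{UpdatePartition} call triggered inside \sref{step:fcl2b}, and (iii) the marking of an unchecked vertex $v$ as checked in \sref{step:fcl3}. Events of type (i) immediately preserve \Cref{inv:cuttingisgood} by \Cref{lem:bsgood}, since we cut along a boundary-sparse cut and mark every node incident to a newly-cut edge as unchecked. Events of type (ii) preserve the invariant by combining \Cref{lem:updateissparse} (which certifies that every cut made by \textsc{UpdatePartition} is itself $(1-\delta)$-boundary-sparse in its parent cluster, using the hypothesis $\lmax\le 1.2\lmin$ and $\delta\le 0.04$) with \Cref{lem:bsgood} applied separately to each such cut.

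The remaining, and essentially only, case is event (iii). The plan is to argue by contradiction: suppose \Cref{inv:cuttingisgood} fails after some execution of \sref{step:fcl3} on vertex $v$ inside $C$. Then there exists $S\subsetneq C$ that is $(1-\delta)$-boundary-sparse in $C$, with $\vol(S)\le \lmax/\phi$ and $\lmin\le w(S,C\setminus S)\le \lmax$, such that every vertex of $S$ is now checked. Because the invariant held at the start of the call (by the induction on the execution of \Cref{alg:decomposingexpanders}) and the only status change in \sref{step:fcl3} is $v$ becoming checked, the unique previously-unchecked vertex of $S$ must have been $v$; in particular $v\in S$.

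Now the hypothesis ``no cut of cut-size $\le\lmax/8$ in $C$'' is applied to the LocalKCut instance of \Cref{staticlocakcut}, which runs on the graph $G$ with intercluster edges deleted, whose connected component containing $v$ is exactly $C$. The hypothesis says that the minimum proper cut of that component is strictly greater than $\lmax/8$, so any cut $S\subseteq C$ with $w(S,C\setminus S)\le\lmax$ is an $8$-approximate mincut, i.e.\ a $\beta$-approximate mincut with $\beta=8$. Since also $\vol(S)\le\lmax/\phi\le\nu=4\lmax/\phi$, the guarantee of \Cref{thm:deterministic-dynamic} yields that the LocalKCut query on $v$ in \sref{step:fcl1} must have returned $S$, provided $S$ is connected. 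Then in \sref{step:fcl2b} the algorithm would have detected that $S$ is $(1-\delta)$-boundary-sparse and cut along it, ending the call to \Cref{alg:subroutine} before \sref{step:fcl3}---a contradiction.

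The main obstacle is the case in which the offending $S$ is \emph{disconnected}. The plan is to reduce it to the connected case by replacing $S$ with $S_v$, the connected component of $G[S]$ containing $v$: since $S_v$ is connected, satisfies $\vol(S_v)\le\vol(S)\le\lmax/\phi$ and $w(S_v,C\setminus S_v)=w(S_v,C\setminus S)\le w(S,C\setminus S)\le\lmax$ (because there are no edges between $S_v$ and $S\setminus S_v$ in $G[S]$), the same LocalKCut argument forces $S_v$ into the output $\mathcal S$. One then needs to check that $S_v$ is also $(1-\delta)$-boundary-sparse in $C$, which is the one nontrivial verification; this uses the fact that the boundary-sparsity of $S$ in $C$ redistributes across its connected components via $w(S,C\setminus S)=\sum_i w(S_i,C\setminus S)$ and the minimality of the cut-size $w(S_v, C\setminus S_v)\ge \lmin$, together with $\delta\le0.04$ and $\lmax\le 1.2\lmin$, to derive the required inequalities against $w(S_v,V\setminus C)$ and $w(C\setminus S_v,V\setminus C)$. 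Once $S_v$ is shown boundary-sparse, the contradiction in \sref{step:fcl2b} proceeds exactly as in the connected case.
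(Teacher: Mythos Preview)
Your decomposition into events (i)--(iii) and your handling of (i), (ii), and the connected case of (iii) are correct and match the paper's proof exactly; in fact the paper simply asserts that LocalKCut ``is guaranteed to find $S$'' and never addresses connectedness, so you are right to flag the disconnected case as the remaining obstacle.

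Unfortunately your resolution of that case does not go through. The inequality $w(S_v,C\setminus S_v)\ge\lmin$ you invoke is not what the minimum-cut hypothesis yields (it only gives $\boundary_G S_v\ge\lmin$), and more importantly $S_v$ need not be boundary-sparse. For instance take $\lmin=100$, $\lmax=120$, $\delta=0.04$, and $S=S_1\cup S_2$ with no edges between $S_1$ and $S_2$, with $w(S_1,C\setminus S)=w(S_2,C\setminus S)=50$, $w(S_1,V\setminus C)=55$, $w(S_2,V\setminus C)=50$, and $w(C\setminus S,V\setminus C)$ large. Then $S$ is $(1-\delta)$-boundary-sparse with $w(S,C\setminus S)=100\in[\lmin,\lmax]$, the displayed cuts in $G[C]$ exceed $\lmax/8$, and $\boundary_G S_i\ge\lmin$; yet if $v\in S_2$ then $S_v=S_2$ has $w(S_2,C\setminus S_2)=50\not<0.96\cdot 50$, so $S_2$ is not boundary-sparse and nothing is cut in \sref{step:fcl2b}. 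What \emph{does} hold (by summing $w(S_i,C\setminus S_i)$ against $(1-\delta)w(S_i,V\setminus C)$) is that \emph{some} component of $S$ must be boundary-sparse---but not necessarily the one containing $v$, nor one with cut-size at least $\lmin$. The clean fix is to restate \Cref{inv:cuttingisgood} for connected $S$ only (all supporting lemmata, in particular \Cref{lem:bsgood}, go through unchanged), prove \Cref{lem:3.13} exactly as the paper does, and only at the end of \Cref{cor:3.15} use this summation argument to deduce that the absence of connected boundary-sparse cuts forces the absence of disconnected ones as well.
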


\begin{proof}

\textbf{Step~\ref{step:fcl1}:}
Step~\ref{step:fcl1} of \Cref{alg:subroutine} neither affects the marking of vertices nor the cluster decomposition and, thus, does not affect \Cref{inv:cuttingisgood}.

\textbf{Step~\ref{step:fcl2b}:}
We will next show that the invariant is maintained in Step~\ref{step:fcl2b}. 
In Step~\ref{step:fcl2b} 
a cluster $C$ is cut along a  $(1-\delta)$-boundary-sparse cut $C' \subsetneq C$ into a new cluster $C'$ and into $C \setminus C'$ and the endpoints of new inter-cluster edges are marked as unchecked, whether this happens in \Cref{alg:subroutine} itself or the call to \textsc{UpdatePartition} (\Cref{lem:updateissparse}), which by \Cref{lem:bsgood} maintains the invariant.

\textbf{Step~\ref{step:fcl3}:}
It remains to show that the invariant is maintained in Step~\ref{step:fcl3}. If this step is executed, (a) the status of only one node can change, namely the one of $v$ and (b) the execution of the for-loop for $v$ modified neither the cluster decomposition nor the status of any other unchecked node. Thus~\Cref{inv:cuttingisgood} continues to hold for all clusters that do not contain $v$. We need to argue that it also holds for the (unique) cluster $C_v$ that contains $v$.
For that, we will start by making the following claim, which is crucial to ensure that LocalKCut finds all the cuts necessary for our analysis:

\begin{claim}\label{claim:4.1}
    For any cluster $C$ and any subset $\varnothing \subsetneq S\subsetneq C$, at any point outside of \textsc{UpdatePartition}, we have that $\boundary_CS \ge \frac \lmax 8$.
\end{claim}

\begin{proof}
    Note that apart from inside \UP, the only two steps in the algorithm where $\P_1$ changes is in Steps~\ref{step:static1} and~\ref{step:static2}. 
    Any change is immediately followed by a call to \UP, which then runs until $\boundary_CS \ge \frac \lmax 8$ for all $C\in \P_1$, $S\subseteq C$, hence the claim.
\end{proof}

We now continue arguing about Step~\ref{step:fcl3}.
Assume by contradiction that a cut $S$ exists in  cluster $C_v$ at the end of Step~\ref{step:fcl3}
such that $S$
(a) is $(1-\delta)$-boundary sparse in $C_v$,
(b) has volume at most $\lmax/\phi$,
(c) satisfies $w(S, C_v\setminus S) \le \lmax$, and (d)
 contains only checked vertices. As Invariant~\ref{inv:cuttingisgood} held at the beginning of 
Step~\ref{step:fcl3}, $S$ must have had an unchecked vertex then and as $v$ is the only vertex whose status changed since then, $v$ must belong to $S$.

By \Cref{claim:4.1}, in $C$ there are no cuts of cut-size at most $\frac \lmax 8$,  and we have that $\boundary_CS \le \lmax$.
Hence $S$ is a $8$-approximate minimum cut in $C$, and a call to LocalKCut with vertex $v$ and $\beta=8$ is guaranteed to find $S$.
and that LocalKCut is run with approximation parameter $\beta = 8$ by~\Cref{thm:deterministic-dynamic}.
But then, as $S$ is found, \Cref{alg:subroutine} would have cut it instead in Step 2 instead of proceeding to Step 3 and of unchecking $v$, and thus we have a contradiction.
\end{proof}

Now we can finish the correctness proof.
\begin{lemma}\label{cor:3.15}
 It holds that
 (a) at the termination of \Cref{alg:decomposingexpanders} \Cref{inv:cuttingisgood} holds; 
 (b) the partition $\P_1$ computed by \Cref{alg:decomposingexpanders} satisfies that no  cluster $C$ with $\boundary C \ge 3\lmax$ contains a $(1-\delta)$-boundary-sparse cut of cut-size at most $\lmax$ and volume at most $\frac \lmax \phi$, and thus $\P_1$ is a pre-cluster decomposition.
\end{lemma}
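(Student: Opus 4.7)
For part (a), I would proceed by induction on the number of iterations of the while loop in \Cref{alg:decomposingexpanders}. The base case is handled directly by \Cref{lem:3.12}: right after \sref{step:de2} the invariant holds. For the induction step, each iteration calls \Cref{alg:subroutine} on some cluster $C$ with $\boundary C \ge 3\lmax$ that still contains an unchecked vertex. The hypothesis of \Cref{lem:3.13} -- namely that $C$ contains no cut of cut-size at most $\lmax/8$ -- is delivered by \Cref{claim:4.1}, which applies at every point outside of \textsc{UpdatePartition} and in particular at the start of every call of \Cref{alg:subroutine}. Thus \Cref{lem:3.13} preserves \Cref{inv:cuttingisgood} through the iteration, so the invariant still holds when the loop terminates.

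For part (b), I would combine (a) with the termination condition: upon exit, no cluster $C$ with $\boundary C \ge 3\lmax$ contains any unchecked vertex. Suppose for contradiction some such cluster $C$ contains a $(1-\delta)$-boundary-sparse cut $S$ with $\boundary_G S \le \lmax$ and $\vol(S) \le \lmax/\phi$. Since $S$ is boundary-sparse, the definition forces $w(S, V\setminus C) > 0$, so $S$ is a proper cut of $G$; hence the minimum proper cut assumption of \Cref{thm:static} gives $\boundary_G S \ge \lmin$, and clearly $w(S, C\setminus S) \le \boundary_G S \le \lmax$ as well. Applying \Cref{inv:cuttingisgood} via part (a) would then produce an unchecked vertex of $S$, contradicting termination. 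Therefore every cluster $C$ with $\boundary C \ge 3\lmax$ satisfies Condition~\ref{easycondition} of \Cref{def:clusterdecomposition}, and every other cluster has $\boundary C < 3\lmax \le 6\lmax$ and satisfies Condition~\ref{hardcondition}; together with connectivity and containment in an expander (inherited from the initial expander decomposition and preserved by cutting along proper sub-cuts), this establishes that $\P_1$ is a pre-cluster decomposition.

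\textbf{Expected subtlety.} The invariant requires the lower bound $w(S, C\setminus S) \ge \lmin$, whereas the hypothesis of (b) naturally supplies $\boundary_G S \ge \lmin$; for a cut that routes most of its weight across $\boundary C$ rather than inside $C$ these are not equivalent. The cleanest resolution is to note that the proof of \Cref{lem:3.13} relies on \Cref{claim:4.1} rather than on the stated lower bound, and therefore in fact establishes the stronger statement in which the lower bound on $w(S, C\setminus S)$ is $\lmax/8$: since LocalKCut is invoked with $\beta=8$ and the minimum cut of $G[C]$ is at least $\lmax/8$ by \Cref{claim:4.1}, any boundary-sparse $S$ with $w(S, C\setminus S) \le \lmax$ is an $8$-approximate minimum cut of $G[C]$ and is returned by \Cref{thm:deterministic-dynamic}. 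Using this strengthened form of the invariant in the contradiction argument above covers every regime of $w(S, C\setminus S)$ and completes (b).
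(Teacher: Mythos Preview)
Your proposal is correct and follows essentially the same approach as the paper: establish part~(a) by combining \Cref{lem:3.12} for the base case with \Cref{lem:3.13} for preservation across the while loop, and then derive part~(b) by contradiction from the invariant together with the termination condition. The paper's proof is considerably terser and simply asserts that the invariant plus termination ``concludes the proof'' without spelling out the contradiction or the pre-cluster decomposition conditions.

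You have in fact gone beyond the paper by isolating a genuine subtlety that the paper's proof glosses over: \Cref{inv:cuttingisgood} as stated carries the lower bound $w(S,C\setminus S)\ge\lmin$, whereas the hypothesis of~(b) only supplies $\boundary_G S\le\lmax$ (hence $w(S,C\setminus S)\le\lmax$) together with $\boundary_G S\ge\lmin$, and since $\lmax/8<\lmin$ there is an intermediate regime not covered by a black-box application of the invariant. Your resolution---observing that the proof of \Cref{lem:3.13} actually relies on \Cref{claim:4.1} rather than on the stated lower bound, and therefore establishes the strengthened invariant with lower bound $\lmax/8$---is exactly right and closes the gap cleanly. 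This is a more careful treatment than the paper provides.
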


\begin{proof}

    Since we only uncheck a vertex if it is contained in no cuts that are $(1-\delta)$-boundary-sparse with cut value at most $\lmax$ and volume at most $\frac \lmax \phi$, this lemma relies on \Cref{inv:cuttingisgood}.
    By \Cref{lem:3.12}, the invariant holds at the beginning of the while loop in \Cref{alg:decomposingexpanders}.

     We now have to show that the invariant holds until the end of the while loop.
     This is now immediate as the while loop is simply a call to \Cref{alg:subroutine}, and \Cref{alg:subroutine} does not affect the invariant by \Cref{lem:3.13}.
     
       The fact that the while loops in \Cref{alg:decomposingexpanders} ends when each cluster $C$ either satisfies $\boundary C <3 \lmax$ or contains no unchecked vertex, together with \Cref{inv:cuttingisgood}, concludes the proof.
      \end{proof}

\subsubsection{Bounding the number of inter-cluster edges}
This subsection is dedicated to the analysis of the number of inter-cluster edges in our final cluster decomposition, i.e., it proves \Cref{lem:numedges}.

\begin{proposition}
    \label{lem:numedges}
    Let $M$ be the number of inter-expander edges after the expander decomposition in \Cref{alg:StaticAlgorithm}.
    Then there are $\tilde O\PAR{\frac M {\delta^3}}$ many inter-cluster edges in the cluster decomposition after the execution of Step~\ref{step:staticfragmenting} of \Cref{alg:decomposingexpanders}.
\end{proposition}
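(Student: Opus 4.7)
The plan is to bound the inter-cluster edges contributed by each of the two cutting phases of \Cref{alg:StaticAlgorithm} separately: the pre-cluster decomposition in Step~\ref{step:static2} (via \Cref{alg:decomposingexpanders}) and the fragmenting in Step~\ref{step:staticfragmenting} (via \Cref{alg:fragmenting}). I will show that the first phase contributes $\tilde O(M/\delta)$ new inter-cluster edges and the second contributes an additional $\tilde O(M/\delta^3)$; summing yields the claim.

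For the pre-cluster phase, my plan is a charging argument built on the following monotonicity. At any cut of a cluster $C$ (with $\boundary C \ge 3\lmax$) into $S$ and $C\setminus S$, where WLOG $w(S, V\setminus C) \le w(C\setminus S, V\setminus C)$, the $(1-\delta)$-boundary-sparseness of $S$ gives $w(S, C\setminus S) \le (1-\delta)\, w(S, V\setminus C)$. A direct computation then yields
\begin{align*}
\boundary(C\setminus S) &= w(C\setminus S, V\setminus C) + w(S, C\setminus S) \le \boundary C,\\
\boundary S &= w(S, V\setminus C) + w(S, C\setminus S) \le (2-\delta)\, w(S, V\setminus C) \le (1-\delta/2)\boundary C.
\end{align*}
Hence the boundary of the cluster containing any fixed edge $e$ is monotonically non-increasing over the entire phase, and strictly shrinks by a factor $(1-\delta/2)$ whenever $e$ lies on the smaller boundary-out side of a cut. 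I then charge each new inter-cluster edge in $E(S, C\setminus S)$ uniformly to the edges of $E(S, V\setminus C)$, placing at most $w(S, C\setminus S)/w(S, V\setminus C) \le 1-\delta$ charge per such edge per event. Because further cuts require boundary at least $3\lmax$ and initial boundaries are at most $2m=\poly(n)$, the monotone shrinkage bounds the number of smaller-side events per edge by $O(\log(n)/\delta)$, so each of the $M$ initial inter-expander edges accrues total charge $\tilde O(1/\delta)$, giving $\tilde O(M/\delta)$ inter-cluster edges after Step~\ref{step:static2}.

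For the fragmenting phase I invoke \Cref{thm:fragmenting}: each fragmented cluster $C$ (which has $\boundary C \le 3\lmax$) is decomposed into at most $k \le \tilde O(1/\delta^2)$ sub-clusters with $\boundary C_i \le \boundary C$. Writing $f(C)$ for the number of new inter-cluster edges created inside $C$, double counting gives $\boundary C + 2 f(C) = \sum_i \boundary C_i \le k\,\boundary C$, so $f(C) \le \tilde O(\boundary C/\delta^2)$. Summing over all fragmented clusters and using $\sum_C \boundary C \le \tilde O(M/\delta)$ from the previous phase yields
\[ \sum_C f(C) \le \tilde O(1/\delta^2)\sum_C \boundary C \le \tilde O(M/\delta^3). \]
Combining both phases gives a total of $\tilde O(M/\delta) + \tilde O(M/\delta^3) = \tilde O(M/\delta^3)$ inter-cluster edges.

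The main obstacle is the pre-cluster monotonicity: while each individual cut can raise the \emph{sum} of boundaries by a factor up to $2-\delta$, the crucial observation is that no cluster ever acquires a larger boundary than it had before the cut, which avoids the $2^{O(1/\delta)}$ blow-up of~\cite{DBLP:conf/soda/El-HayekH025}. Once this is in hand the charging argument and the fragmenting bound from \Cref{thm:fragmenting} plug together cleanly.
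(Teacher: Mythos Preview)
Your fragmenting-phase argument is correct and essentially identical to the paper's (Lemma~\ref{lem:inter-clusteredges}).

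Your pre-cluster-phase argument has a genuine gap. The charging step distributes the $w(S,C\setminus S)$ new edges among the edges of $E(S,V\setminus C)$, but $E(S,V\setminus C)$ consists not only of original inter-expander edges: it also contains inter-cluster edges created by \emph{earlier} cuts of $V_i$ (edges from $S$ to previously peeled-off pieces). Hence bounding the charge that lands on the $M$ original edges by $\tilde O(M/\delta)$ does not bound the total charge, which is what equals the number of new edges. Concretely, one can peel off small pieces $S_1,S_2,\dots$ from $V_i$ so that for $j\ge 2$ the set $E(S_j,V\setminus C_{j-1})$ consists almost entirely of edges created at step $j-1$; then essentially no charge reaches original edges, yet $\Theta(\boundary V_i/\delta)$ new edges are produced. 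The inequality you would need to close the recursion, namely (new edges) $\le$ (original + new) $\cdot \tilde O(1/\delta)$, does not solve for (new edges) when $\delta$ is small.

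The paper's proof avoids this by exploiting two facts you do not use. First, from $\boundary S,\boundary(C\setminus S)\ge\lmin$ together with boundary-sparseness one gets the \emph{two-sided additive} drop
\[
\boundary S,\ \boundary(C\setminus S)\ \le\ \boundary C-\tfrac{\delta\lmin}{2}
\]
(Lemma~\ref{lem:boundariessmaller}); your one-sided multiplicative drop $\boundary S\le(1-\delta/2)\boundary C$ is correct but insufficient. Second, the paper combines this with the constraint $w(S,C\setminus S)\le\lmax$ (which holds because all cuts come from LocalKCut or from \textsc{UpdatePartition}) via the potential $\Phi(C)=\max\{0,\boundary C-2.1\lmax\}$: one checks $\Phi(S)+\Phi(C\setminus S)\le\Phi(C)-\Omega(\delta\lmin)$ at every split, which bounds the number of splits below each expander $V_i$ by $O(\boundary V_i/(\delta\lmin))$ and hence the sum of leaf boundaries by $O(\boundary V_i/\delta)$. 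Summing over $i$ gives $O(M/\delta)$, and then your fragmenting step finishes the proof.
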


    Consider the following rooted forest $F$ in which each node represents a cluster of $G$ at \emph{some point} during the computation: the roots are the expanders output by the expander decomposition.
    For each node representing a cluster, its two children (if any) are the clusters formed by splitting the corresponding cluster. By slight abuse of notation we do not distinguish between clusters of the cluster decomposition and nodes in the forest.

    Note that a cluster is always split along a $(1-\delta)$-boundary-sparse cut, whether it is in \textsc{UpdatePartition}, \Cref{alg:subroutine} or \Cref{alg:fragmenting}.

    We first show that the boundary-size of a node is always smaller than the one of its parent by at least $\delta {\lmin}/2$.

\begin{lemma}\label{lem:boundariessmaller}
    Let $C$ be a node in $F$ and let $S, C\setminus S$ be its children.
    Then we have that $\boundary C \ge \max\{\boundary S, \boundary (C\setminus S)\}+(\delta \lambda_{\min})/2$.
\end{lemma}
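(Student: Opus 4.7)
The plan is to use the $(1-\delta)$-boundary-sparsity of the splitting cut, combined with the hypothesis that every proper cut of $G$ has cut-size at least $\lmin$.

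First I would set up notation. Writing $a = w(S, V\setminus C)$, $b = w(C\setminus S, V\setminus C)$, and $c = w(S, C\setminus S)$, we have $\boundary C = a+b$, $\boundary S = a+c$, and $\boundary(C\setminus S) = b+c$, so $\boundary C - \boundary S = b-c$ and $\boundary C - \boundary(C\setminus S) = a-c$. Every split recorded in $F$ is performed along a $(1-\delta)$-boundary-sparse cut --- in \Cref{alg:subroutine} by its Step~\ref{step:fcl2b}, in \textsc{UpdatePartition} by \Cref{lem:updateissparse}, and in the fragmenting subroutine \textsc{Trim} by its pruning step~\ref{trim0} --- so $c < (1-\delta)\min(a,b)$. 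In particular, $a > 0$ and $b > 0$.

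Next I would lower bound $a$ and $b$ by at least $\lmin/2$. Because $S$ is a connected set (the splitting cuts are either returned by LocalKCut, which only outputs connected sets, or respect the existing connectivity of the parent cluster) and $a > 0$, the set $S$ is a proper cut of $G$, so $\boundary S \ge \lmin$, i.e.\ $a + c \ge \lmin$. Combined with $c < (1-\delta)a$ this gives $(2-\delta)\,a > \lmin$, hence $a > \lmin/(2-\delta) \ge \lmin/2$. For $C\setminus S$, even if it is disconnected, each of its connected components is itself a connected set with an outgoing edge (either to $S$ via $c$ or to $V\setminus C$ via $b$), hence a proper cut of $G$ with boundary at least $\lmin$; since no edges run between components of $C\setminus S$, these component boundaries partition the edges counted by $\boundary(C\setminus S)$, so $\boundary(C\setminus S) \ge \lmin$ nonetheless. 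Combining with $c < (1-\delta)b$ yields $b > \lmin/2$ by the identical calculation.

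Finally I would plug these bounds into the differences to conclude: $\boundary C - \boundary S = b - c > \delta\, b \ge \delta\lmin/2$, and symmetrically $\boundary C - \boundary(C\setminus S) = a - c > \delta\, a \ge \delta\lmin/2$. Taking the maximum over the two children on the right-hand side yields the lemma. I do not expect any genuine obstacle here; the only mildly delicate point is the $\lmin$ lower bound on $\boundary(C\setminus S)$ when that child happens to be disconnected, which the component-wise argument above resolves cleanly.
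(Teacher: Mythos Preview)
Your proof is correct and follows the same approach as the paper: use $(1-\delta)$-boundary-sparsity to get $c<(1-\delta)\min(a,b)$, combine with $\boundary S,\boundary(C\setminus S)\ge\lmin$ to deduce $a,b\ge\lmin/2$, and conclude $\boundary C-\boundary S=b-c>\delta b\ge\delta\lmin/2$ (and symmetrically). The paper's proof is terser and simply asserts $\boundary(C\setminus S)\ge\lmin$ without the component-wise justification you supply; your extra care there is valid (and in fact makes your connectivity claim for $S$ unnecessary, since the same component argument covers both children).
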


\begin{proof}
    Since $S$ is a $(1-\delta)$-boundary-sparse cut, we have that $$w(S, C\setminus S) \le (1-\delta)\min\{ w(S, V \setminus C), w(C\setminus S, V\setminus C)\},$$
    which implies that $w( C\setminus S, V\setminus C) > w( C\setminus S, S)$.
    As it holds that $w( C\setminus S, V\setminus C) + w( C\setminus S, S) = \boundary (C\setminus S) \ge \lambda_{\min}$, it follows that $w( C\setminus S, V\setminus C) \ge \frac {\lambda_{\min}} 2 $.
    
    The boundary-sparseness of $S$ also implies that $$\boundary S = w(S, V\setminus C) + w(S, C\setminus S) \le w(S, V\setminus C) + (1-\delta) w( C\setminus S, V\setminus C) = \boundary C - \delta w( C\setminus S, V\setminus C)$$ i.e., $$\boundary C \ge \boundary S +  \delta w( C\setminus S, V\setminus C) \ge \boundary S + \delta \lmin/2.$$

    The same bound holds by symmetry for $\boundary (C\setminus S)$ and the result follows.    
\end{proof}
Nonte that the lemma implies that $\boundary C > \boundary S$ and $\boundary(C \setminus S)$.
Thus, along any path in $F$ from a root to a leaf the boundary-size of the clusters strictly decreases. Specifically, if a cluster has boundary-size less than some value, let's say $3\lmax$, then all its descendants in $F$ has boundary-size less than $3 \lmax$.
We now look only at the subforest $F'$ of $F$ created from $F$ by deleting from $F$ all internal nodes with boundary at most $3\lmax$. More formally, $F'$ is defined as follows: All roots of $F$ are also in $F'$, and moreover, for every $C\in F'$ with $\boundary C \ge 3\lmax$, all its children in $F$ are also in $F'$. However, for $C \in F'$ with $\boundary C < 3 \lmax$, their children do not belong to $F'$, i.e., $C$ is a leaf in $F'$.
As discussed, $F \setminus F'$ contains no node of boundary larger than $3\lmax$.

\begin{lemma}
    The total number of inter-cluster edges of the leaves of $F'$ is at most $O(\frac M \delta)$, where $M$ is the number of inter-expander edges output by the expander decomposition.
\end{lemma}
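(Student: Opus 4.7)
The plan is to reduce the claim to bounding the total number of cut-edges in splits that take place in $F'$. For each split of an internal cluster $C$ into $(S, C\setminus S)$, writing $a = w(S, V\setminus C)$, $b = w(C\setminus S, V\setminus C)$, and $c = w(S, C\setminus S)$, we have the identity $\boundary S + \boundary(C\setminus S) = \boundary C + 2c$. Telescoping this identity over the whole forest $F'$ gives
$$\sum_{\text{leaves of }F'}\boundary v \;=\; \sum_{\text{roots}}\boundary C_0 \;+\; 2\!\sum_{\text{splits in }F'}\!c \;=\; 2M + 2G,$$
where $G$ is the total number of edges cut by splits in $F'$. Thus the lemma reduces to showing $G = O(M/\delta)$.

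To bound $G$, I would exploit the $(1-\delta)$-boundary-sparse condition, which gives $c \le (1-\delta)\min(a,b)$ at every split, together with Lemma~\ref{lem:boundariessmaller} which says that each time an edge sits on the ``smaller external side'' of a split, the cluster's boundary shrinks multiplicatively by a factor $\le (1-\delta/2)$. Set up a charging scheme in which each cut edge at a split distributes a total charge of $1$ uniformly over the $a$ (=smaller-side external) boundary edges of that split. Then $G$ equals the total charge received by all historical boundary edges.

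The key estimate is that each historical boundary edge receives total charge at most $O(1/\delta)$. Indeed, each single charging event puts $c/a \le (1-\delta)$ charge on the edge, and between consecutive charging events the boundary of the cluster containing the edge shrinks multiplicatively by $(1-\delta/2)$. Since boundaries are always $\ge \lmin$ (because every split involves proper cuts, cf.~the proof of Lemma~\ref{lem:boundariessmaller}) and the root boundary is $\le 2M \le \poly(n)$, the number of charging events per edge is $O(\log(B_0/\lmin)/\delta)$, which combined with the parameter regime $\lmin = 2^{\Theta(\log^{3/4-c}n)} \gg \log n$ collapses to $O(1/\delta)$ usable charge per historical edge. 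Summing over the $M$ original inter-expander edges and the $G$ newly-created cut edges (which themselves become boundary edges), one obtains $G \le O(1/\delta)\,(M + G)$, and absorbing the self-referential term using the smallness of $O(\log n /\lmin)$ yields $G = O(M/\delta)$, as required.

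The main obstacle is the self-referential nature of the charging: a cut edge produced by a split immediately becomes a boundary edge of both children and can subsequently receive further charge. Controlling this feedback requires exploiting the parameter regime $\lmin \gg \log n$ to ensure the charge per edge is genuinely $o(1/\delta)$ times the ``pool'' size, so that the inequality $G \le O(1/\delta)(M+G)$ can be solved for $G$; this is precisely what forces the lemma's dependence on $1/\delta$ rather than a weaker polynomial in $1/\delta$.
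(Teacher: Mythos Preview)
Your telescoping reduction is clean and correct: the identity $\boundary S + \boundary(C\setminus S) = \boundary C + 2c$ does give $\sum_{\text{leaves}}\boundary v = 2M + 2G$, so the lemma is indeed equivalent to $G = O(M/\delta)$. The charging observation is also sound: when an edge $e$ lies on the smaller external side (with $a=\min\{a,b\}$), the child containing $e$ has boundary at most $\partial C - \delta b \le (1-\delta/2)\partial C$, so between charging events the boundary of $e$'s cluster shrinks multiplicatively, bounding the number of charging events for $e$ by $O\bigl(\log(B_0/\lambda_{\min})/\delta\bigr)$.

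The gap is in the final step. The number of charging events per edge is $O(\log(B_0/\lambda_{\min})/\delta)$, and here $B_0$ is the boundary of a \emph{root} expander, which can be as large as $\Theta(m\phi)$; hence $\log(B_0/\lambda_{\min}) = \Theta(\log n)$, regardless of how large $\lambda_{\min}$ is. So the total charge per historical boundary edge is $\Theta(\log n/\delta)$, not $O(1/\delta)$. Your claimed collapse ``using $\lambda_{\min}\gg \log n$'' does not happen: making $\lambda_{\min}$ large shrinks the denominator inside the log but the numerator $B_0$ is polynomial in $n$. The resulting inequality is $G \le C_0\,(M+G)$ with $C_0 = \Theta(\log n/\delta)\gg 1$ (even $C_0 = \Theta(1/\delta)$ would already exceed $1$ since $\delta \le 0.04$), and such an inequality places no bound on $G$ whatsoever. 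The ``smallness of $O(\log n/\lambda_{\min})$'' you invoke simply does not appear as a coefficient anywhere in the charging.

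The paper avoids this feedback entirely by a potential-function argument that exploits the specific structure of $F'$: every \emph{internal} node of $F'$ has boundary at least $3\lambda_{\max}$. With the shifted potential $\Phi(C)=\max\{0,\partial C - 2.1\lambda_{\max}\}$, one checks directly that every split drops the total potential by at least $\delta\lambda_{\min}/2$ (the three cases in the paper's proof). Since the initial potential below a root $V_i$ is at most $\partial V_i$, the number of leaves below $V_i$ is $O(\partial V_i/(\delta\lambda_{\min}))$, and then the sum of leaf boundaries is bounded by splitting into leaves with $\partial C\le 2.2\lambda_{\max}$ (bounded by count $\times\, O(\lambda_{\max})$) and leaves with $\partial C> 2.2\lambda_{\max}$ (bounded by $O(\Phi)\le O(\partial V_i)$). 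The threshold at $2.1\lambda_{\max}$ is precisely what breaks the self-reference your charging runs into.
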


\begin{proof}
We count the number of inter-cluster edges by bounding the sum of the boundary-sizes for every cluster that is a descendant of a root cluster $V_i$. 
Any inter-cluster edge that is created at some descendant of $V_i$  is also an inter-cluster edge at a leaf descendant of $V_i$ in $F'$ since clusters are never merged. Thus, it suffices to bound the sum of the boundary-sizes of the leaf descendants of $V_i$ in $F'$. Let $\C_f$  contain exactly these clusters.
We bound this sum by splitting $\C_f$ into two sums that we will analyze separately:
 \begin{align*}
        \sum_{C \in \C_f} \boundary C &= \sum_{\substack{C \in \C_f, \boundary C \le 2.2\lmax}} \boundary C+\sum_{\substack{C \in \C_f, \boundary C > 2.2\lmax}} \boundary C
        \intertext{and we define} S_1 :&= \sum_{\substack{C \in \C_f, \boundary C \le 2.2\lmax}} \boundary C \intertext{ and } S_2 :&= \sum_{\substack{C \in \C_f, \boundary C > 2.2\lmax}} \boundary C.
\end{align*}
To proceed we  will analyze the potential function $\Phi(\C) := \sum_{C \in \C} \Phi(C)$ were $\Phi(C) = \max\{ 0 , \boundary C - 2.1 \lmax\}$. 
Right after the computation of the expander decomposition the cluster decomposition of $V_i$ only consists of  $V_i$ itself (that is, an expander from the expander decomposition). While computing the cluster decomposition, the algorithm might partition a descendant  $C$ of $V_i$, replacing it by two new clusters $C_1$ and $C_2$ and making $C_1$ and $C_2$ its children in $F'$. We show next that the sum of the potentials of the two children clusters is at least $(\delta \lmin)/2$ smaller than the potential of $C$.
To do so we analyze three cases: 
    \begin{enumerate}
        \item If both $C_1$ and $C_2$ satisfy $\boundary C_1, \boundary C_2 \ge 2.1\lmax$, then the total potential changes by:
        \begin{multline*}
        \Phi(C_1) + \Phi (C_2) - \Phi (C) = \boundary C_1 - 2.1 \lmax + \boundary C_2 - 2.1 \lmax -\boundary C + 2.1\lmax \\= -2.1 \lmax + 2 w(C_1, C_2) \le -2.1 \lmax + 2 \lmax \le -0.1 \lmax
        \end{multline*}

        \item If wlog $C_1$ and $C_2$ satisfy $\boundary C_1\le 2.1\lmax \le\boundary C_2 $, then the total potential changes by:
        \begin{multline*}
        \Phi(C_1) + \Phi (C_2) - \Phi (C) = 0 + \boundary C_2 - 2.1 \lmax -\boundary C + 2.1\lmax = \boundary C_2 - \boundary C \le -(\delta \lambda_{\min})/2
        \end{multline*}
        where the last inequality follows from \Cref{lem:boundariessmaller}.
        \item If both $C_1$ and $C_2$ satisfy $\boundary C_1, \boundary C_2 \le 2.1\lmax$, then the total potential changes by:
        $$
        \Phi(C_1) + \Phi (C_2) - \Phi (C) = 0 + 0 -\boundary C + 2.1\lmax \le -0.9 \lmax
        $$
    \end{enumerate}

    It follows that the potential drops by at least $(\delta \lambda _{\min})/2$ each time when a cluster is split into two smaller clusters. 
    
    We now bound the number of clusters in $\C_f$ as follows: Since the potential drops by $\delta \lambda_{\min}/2$ each time a cluster is split, and splitting a cluster only increases the number of clusters by one, we know that the total number of clusters in $\C_f$ is at most $\frac {2\Phi(\{V_i\})} {\delta \lmin} \le \frac{2\boundary V_i}{\delta \lmin}$.
    As every cluster contributing to $S_1$ contributes at most $2.2\lmax$, it follows that $S_1$ is at most $\lmax \frac {4.4 \boundary V_i} {\delta \lmin} = O\PAR{\frac{\boundary V_i \lmax}{\delta \lmin}}.$
   
    For the second sum, $S_2$, note that if $\boundary C \ge 2.2\lmax$, then $\boundary C = O(\Phi(C))$. 
    Recall that the sum of potentials of the two children of a node in $F'$ is less than the potential of the parent. It follows by induction that for any set of descendants of $V_i$ such that none is an ancestor of the other, the sum of their potentials is less than the potential of $V_i$. As the clusters in $C_f$ fulfill this condition, it follows that
    $S_2 = \sum_{\substack{C \in \C_f\\\boundary C > 2.2\lmax}} O(\Phi( C)) = O(\Phi(\{V_i\})) = O\PAR{\frac{\boundary V_i \lmax}{\delta \lmin}}$.    

    Summing over all expanders, and using that $\lmax = O(\lmin)$, we get the result.
\end{proof}

We now look at the forest $F''=(F\setminus F')\union \mathrm{leaves}(F')$. 
The leaves of $F'$ are the roots of $F''$.
By the lemma above, we know that the sum of the boundaries of the roots of $F''$ is at most $O\PAR{\frac M \delta}$.
We moreover know that any cluster $C$ with children in $F''$ satisfies $\boundary C < 3\lmax$, and thus the decomposition is made by fragmenting (\Cref{alg:fragmenting}). This gives the following lemma, which is a direct consequence of \Cref{thm:fragmenting}:

\begin{lemma}\label{lem:inter-clusteredges}
    The sum of the boundaries of the leaves of $F''$ is at most $\tilde O\PAR{\frac M {\delta^3}}$, where $M$ is the number of inter-expander edges output by the expander decomposition.
\end{lemma}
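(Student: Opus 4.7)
The plan is to apply the fragmenting guarantee (Theorem~\ref{thm:fragmenting}) to each root of $F''$ and then plug in the bound from the previous lemma. By construction, any internal node of $F''$ has $\boundary C < 3\lmax \le 6\lmax$, so in particular every root $R$ of $F''$ that still gets split has $\boundary R \le 6\lmax$, which is exactly the precondition for Theorem~\ref{thm:fragmenting}. The other precondition, that the minimum cut inside the cluster is at least $\lmax/\beta$, follows from the standing assumption that the minimum proper cut of $G$ is at least $\lmin$ together with $\beta=8$ and $\lmax\le1.2\lmin$, which gives $\lmax/\beta \le 1.2\lmin/8 < \lmin$.

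Fix a root $R$ of $F''$. If $R$ is already a leaf of $F''$ (i.e.\ the fragmenting algorithm did not split it further), then the contribution of $R$'s subtree to the sum of leaf boundaries is just $\boundary R$. Otherwise, Theorem~\ref{thm:fragmenting}\pref{fragprop2} says the fragmenting subroutine partitions $R$ into at most $O(\delta^{-2}\log^{O(1)}|R|) = \tilde O(\delta^{-2})$ clusters $C_1,\dots,C_k$, each satisfying $\boundary C_i \le \boundary R$. Since these $C_i$ are precisely the leaves of $F''$ lying below $R$, the total boundary of the leaves of $F''$ under $R$ is at most
\[
\sum_{i=1}^k \boundary C_i \;\le\; k\cdot\boundary R \;=\; \tilde O(\delta^{-2})\cdot\boundary R.
\]

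Summing over all roots of $F''$ and invoking the previous lemma, which bounds $\sum_R \boundary R = O(M/\delta)$, yields
\[
\sum_{L\text{ leaf of }F''} \boundary L \;\le\; \tilde O(\delta^{-2})\sum_R \boundary R \;=\; \tilde O(\delta^{-2})\cdot O(M/\delta) \;=\; \tilde O\!\left(\frac{M}{\delta^3}\right),
\]
as claimed. I do not expect a real obstacle here since Theorem~\ref{thm:fragmenting} already does the heavy lifting; the only mild subtlety is verifying that its two hypotheses hold for every cluster it is invoked on, which the construction of $F''$ ensures automatically.
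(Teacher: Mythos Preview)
Your approach is exactly the paper's: it states the lemma as ``a direct consequence of \Cref{thm:fragmenting}'' and leaves the details implicit, whereas you spell out the per-root bound and the summation. The structure is correct.

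One correction to your hypothesis check: the claim that the minimum cut of $G[C]$ is at least $\lmin$ does \emph{not} follow from the minimum proper cut of $G$ being at least $\lmin$. A cluster $C$ can easily contain an internal cut $S$ with $w(S,C\setminus S)$ very small while $\boundary_G S$ is large, since $S$ may have many edges to $V\setminus C$. The reason the precondition actually holds is algorithmic: before fragmenting is invoked, \textsc{UpdatePartition} has repeatedly cut along all internal cuts of size at most $\lmax/8$ (this is exactly the role of the recursive calls in Item~\ref{staticboringrecursivecall} and is recorded as Claim~\ref{claim:4.1}). So the hypothesis is guaranteed by the state of the algorithm at the moment \Cref{alg:fragmenting} is called, not by the global assumption on $G$.
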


\subsubsection{Running Time Analysis}
Now we are ready to analyze the running time of the static algorithm.
\begin{definition}
    Let $u$ be a vertex. 
    We say that $u$ is \emph{responsible} for a cluster $S$ if, in \Cref{alg:subroutine}, $S$ was cut from its parent cluster after being found by LocalKCut requested on vertex $u$.
\end{definition}

\begin{lemma}\label{lem:nottoomanykargers}
    Let $u$ be a vertex. Vertex $u$ is responsible for at most $O(\frac 1 \phi)$ many clusters in $F$.
\end{lemma}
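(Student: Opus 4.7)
The plan is to track the cluster containing $u$ through the forest $F$, exploiting the volume bound built into LocalKCut. Any cluster $S$ for which $u$ is responsible enjoys two key properties: (i) $u \in S$, since $S$ was returned by a LocalKCut query on $u$, and by \Cref{thm:deterministic-dynamic} every returned cut contains the query vertex; and (ii) $\vol(S) \le \nu := 4\lmax/\phi$, which is the volume parameter with which the LocalKCut instance is instantiated in Item~\ref{staticlocakcut} of the \AAA data structure.

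Let $C_0 \supsetneq C_1 \supsetneq C_2 \supsetneq \cdots$ be the chain of ancestors in $F$ of $u$'s final cluster, where $C_0$ is the root expander containing $u$ and each $C_{i+1}$ is the unique child of $C_i$ in $F$ that contains $u$. Since every split in $F$ is nontrivial, strict containment forces $|C_{i+1}| < |C_i|$. By property~(i), any cluster for which $u$ is responsible equals some $C_{i+1}$ on this chain, so it suffices to bound the number of indices $i$ such that $u$ is responsible for $C_{i+1}$.

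Property~(ii) does the heavy lifting. At the first such index $i_1$ we have $|C_{i_1+1}| \le \vol(C_{i_1+1}) \le \nu$ (using that clusters are connected by definition and hence contain no isolated vertex). Thereafter the strictly decreasing vertex-count chain can extend for at most $\nu - 1$ more levels, so $u$ can be responsible for at most $\nu - 1$ additional clusters. In total $u$ is responsible for at most $\nu = O(\lmax/\phi)$ clusters, which under the parameter choice of \Cref{thm:static} (where $\lmax = 2^{O(\log^{3/4-c}n)}$ and $1/\phi = 2^{\Theta(\log^{3/4}n)}$) simplifies to $O(1/\phi)$ since $\lmax/\phi = \Theta(1/\phi)$. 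The argument is a clean counting exercise, so I do not anticipate any real obstacle beyond verifying property~(i), which is immediate from the LocalKCut specification.
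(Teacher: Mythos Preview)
Your argument is correct up to the bound $O(\lmax/\phi)$, but the final simplification is wrong: $\lmax/\phi$ is \emph{not} $\Theta(1/\phi)$. For that you would need $\lmax = O(1)$, whereas here $\lmax = 2^{O(\log^{3/4-c}n)}$ grows with $n$. The fact that $\log^{3/4-c}n = o(\log^{3/4}n)$ only tells you $\lmax/\phi = (1/\phi)^{1+o(1)}$, which is strictly weaker than $O(1/\phi)$. So as written you have not proved the stated bound.

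The missing idea is to track \emph{volume} rather than vertex count, and to use that each responsible split removes at least $\lmin$ volume, not just one vertex. Concretely, when $u$ is responsible for the split of $C$ into $S \ni u$ and $C\setminus S$, the volume of $u$'s cluster drops by $\vol(C\setminus S) \ge \boundary(C\setminus S) \ge \lmin$ (each connected component of $C\setminus S$ is a proper cut of $G$ and hence has boundary at least $\lmin$). After the first responsible event $u$'s cluster has volume at most $\nu = 4\lmax/\phi$; each subsequent responsible event decreases it by at least $\lmin$; so there are at most $1 + 4\lmax/(\phi\lmin) = O(1/\phi)$ responsible events, using $\lmax/\lmin = O(1)$. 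This is exactly the paper's argument, and the $\lmin$-per-step decrease is what buys you the extra factor of $\lmax$ that your vertex-count argument loses.
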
 

\begin{proof}
    The first cut $u$ is responsible for will reduce the volume of the cluster $u$ is in to at most $\frac \lmax \phi$.
    Then every further cut will reduce this volume by at least $\lmin$ as it cuts a at least $\lmin$ edges out of the cluster.
    As $\lmax/\lmin = O(1)$ the lemma follows.
\end{proof}

This lemma is important, as after checking a node, it is not guaranteed to become unchecked, as if a boundary-sparse cut of cut-size at least $\lmin$ is found, a cut is made, but the node stays unchecked. 
Hence this bounds how many times a vertex is being checked, that is, how many times we need to run LocalKCut on any individual vertex. It follows that we can bound the total number of call to LocalKCut.

\begin{lemma}\label{lem:nooflocalkcuts}
    The total number of calls to LocalKCut in \Cref{alg:decomposingexpanders} is $\Tilde O\PAR{\frac {M}{\delta^3} \frac 1 {\phi}}$, where $M=O(m \phi)$ is the number of inter-expander edges.
\end{lemma}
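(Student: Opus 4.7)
I would bound the total number of LocalKCut calls by classifying each call according to its outcome in \Cref{alg:subroutine}. Every call is made on some currently unchecked vertex $v$. After the call, either $v$ becomes responsible for a newly-cut cluster (Step~\ref{step:fcl2b}, with $v$ remaining unchecked and the call to \Cref{alg:subroutine} ending) or $v$ is marked checked (Step~\ref{step:fcl3}). I count the two kinds of calls separately.

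For the ``checking'' calls, each one transitions a vertex from unchecked to checked, and once $v$ is checked it stays checked until it is re-unchecked, which only happens when $v$ becomes incident to a newly-created inter-cluster edge (either an inter-expander edge in Step~\ref{step:de2} of \Cref{alg:decomposingexpanders}, or a newly cut edge produced by Step~\ref{step:fcl2b} of \Cref{alg:subroutine} or by \textsc{UpdatePartition}). Each new inter-cluster edge unchecks at most its two endpoints, so the total number of checking calls is at most twice the total number of inter-cluster edges ever created during the cluster decomposition. Since clusters only split and never merge, this count is the same as the number of inter-cluster edges in the final decomposition, which by \Cref{lem:inter-clusteredges} is $\tilde O(M/\delta^3)$.

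For the ``responsible'' calls, the key fact is \Cref{lem:nottoomanykargers}: any single vertex is responsible for at most $O(1/\phi)$ distinct clusters across the entire execution. Hence the total number of responsible calls is bounded by (number of vertices ever unchecked) $\cdot\, O(1/\phi)$. A vertex must be unchecked at some point in order to be processed, so the count of ever-unchecked vertices is at most the total number of unchecking events, which as above is $\tilde O(M/\delta^3)$. Multiplying yields at most $\tilde O(M/(\delta^3\phi))$ responsible calls.

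Adding the two contributions gives the desired $\tilde O(M/(\delta^3\phi))$ bound. The main subtlety is verifying that \Cref{lem:inter-clusteredges} truly accounts for \emph{every} inter-cluster edge ever produced---in particular, those created by \textsc{UpdatePartition} through its recursive calls at lower $\lmax$ indices---so that it correctly controls all unchecking events simultaneously; this follows because the forest analysis underlying the bound tracks every split of every cluster, regardless of which subroutine performs the split.
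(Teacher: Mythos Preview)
Your proof is correct and follows essentially the same approach as the paper's. The paper bounds the number of unchecking events by the inter-cluster edge count from \Cref{lem:inter-clusteredges} and then argues that each unchecking of a vertex can lead to at most $O(1/\phi)$ LocalKCut calls before the vertex becomes checked (via \Cref{lem:nottoomanykargers}); you make the same argument with a slightly more explicit split into ``checking'' versus ``responsible'' calls, which is a perfectly fine and arguably cleaner presentation of the same idea.
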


\begin{proof}
    A vertex only becomes unchecked when it is incident to an inter-cluster edge. By \Cref{lem:inter-clusteredges} we have $O\PAR{\frac M {\delta^3}}$ inter-cluster edges, and, thus, this bounds the total number of times that a vertex becomes unchecked. From each unchecked vertex $u$
    we run a call to LocalKCut requested on $u$ in \Cref{step:fcl1} of \Cref{alg:subroutine}. 
    As \Cref{lem:nottoomanykargers} shows, after $O(\frac 1 \phi)$ such executions of \Cref{step:fcl1} vertex $u$ must become checked.
    Thus the total number of calls to LocalKCut is as stated in the lemma.
\end{proof}

We remind the reader of the following constants: $c>0$, $ \lmax \le 1.2\lmin = 2^{O(\log ^{3/4-c}n)}$, $\delta \le \frac 1 {2\lmax}$ with $\delta \le  0.04.$, $\phi = 2^{-\Theta (\log^{3/4}n)}$ and $ \alpha=\frac 1 {\poly \log n}.$

Let us now discuss the number of recursive calls. Remember that $G$ is the original graph with $m$ edges. 
    We say that an instance of the algorithm is on level $\ell$ if it is run on a graph $G'$ that is obtained from $G$ by contracting it $\ell$ times in \Cref{staticrecursivecontract}.
    For example, the instance  $A_0$ of original call contracts the graph and calls recursively the instance $A_1$, which calls instance $A_2$ on the same contracted graph with smaller $\lmin$ and $\lmax$ in \Cref{staticboringrecursivecall}. This instance then calls instance $A_3$ on another contracted graph, and $A_3$ calls $A_4$ in \Cref{staticboringrecursivecall}. $A_0$ is on level $0$, $A_1$ and $A_2$ are on level $1$, $A_3$ and $A_4$ are on level $2$.

We also assign label $i$ to each node in that tree where its $\lmax$ parameter is $\floor{1.2^{i+1}}$.

\begin{lemma}\label{lem:norecursivecallse}
    The number of levels of \Cref{alg:StaticAlgorithm} is $O(\log ^{1/4} n)$.
\end{lemma}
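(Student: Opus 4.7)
The plan is to bound the volume reduction achieved by each contraction in \Cref{staticrecursivecontract} and then solve for how many such contractions are needed before the base case (volume $O(\lmax/\phi)$) is triggered.

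First I would observe that for an instance at level $\ell$, executed on a graph $G_\ell$ with $m_\ell$ edges, the graph $G_{\ell+1} = G_\ell / \P_2$ passed to the next level has a number of edges equal to the number of inter-cluster edges of the cluster decomposition of $G_\ell$. By \Cref{lem:numedges} (really \Cref{lem:inter-clusteredges}), this count is $\tilde O(M_\ell / \delta^3)$, where $M_\ell = O(m_\ell \phi)$ is the number of inter-expander edges produced by the expander decomposition of \Cref{thm:expanderdecomposition} on $G_\ell$. Combining these gives the recursion
\[
m_{\ell+1} \;\le\; \tilde O\!\left(\frac{m_\ell \phi}{\delta^3}\right).
\]

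Next I would plug in the parameter values fixed at the start of the section: $\phi = 2^{-\Theta(\log^{3/4}n)}$ and $\delta^{-3} = 2^{O(\log^{3/4-c}n)}$. Since $\log^{3/4-c}n = o(\log^{3/4}n)$, the factor $\phi/\delta^3$ is still $2^{-\Theta(\log^{3/4}n)}$, with the polylogarithmic slack from the $\tilde O(\cdot)$ absorbed into the $\Theta$. Unrolling the recursion yields
\[
m_\ell \;\le\; m \cdot 2^{-\ell\cdot \Theta(\log^{3/4}n)}.
\]

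Finally I would determine when the base case triggers: the algorithm terminates at \Cref{staticrecursivecontract} once $\vol(G_\ell) = O(\lmax/\phi)$, i.e.\ once $m_\ell \le 2^{O(\log^{3/4}n)}$ (using $\lmax = 2^{O(\log^{3/4-c}n)}$ and $1/\phi = 2^{O(\log^{3/4}n)}$). Using $m \le n^2$, so $\log m = O(\log n)$, the condition $m\cdot 2^{-\ell\cdot \Theta(\log^{3/4}n)} \le 2^{O(\log^{3/4}n)}$ becomes
\[
\ell \cdot \Theta(\log^{3/4}n) \;\ge\; \Omega(\log n),
\]
which is satisfied for $\ell = O(\log n / \log^{3/4} n) = O(\log^{1/4} n)$. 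This proves that the recursion in \Cref{staticrecursivecontract} bottoms out within $O(\log^{1/4}n)$ levels. No step is particularly delicate; the only subtlety is confirming that $\delta^{-3}$ is dominated by $1/\phi$ at the exponent level, which is exactly why the section fixes $\delta = 2^{-\Theta(\log^{3/4-c}n)}$ with a strictly positive $c$.
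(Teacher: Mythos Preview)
Your proof is correct and follows essentially the same route as the paper: bound the per-level edge reduction as $\tilde O(\phi/\delta^3)$ via \Cref{lem:inter-clusteredges} together with the $M=\tilde O(m\phi)$ guarantee of the expander decomposition, observe that this factor is $2^{-\Theta(\log^{3/4}n)}$ because $\delta^{-3}=2^{O(\log^{3/4-c}n)}$ is dominated by $1/\phi$, and solve for the number of levels. The only cosmetic difference is that you spell out the base-case threshold $O(\lmax/\phi)$ explicitly, whereas the paper simply says the edge count drops to $\tilde O(1)$.
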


\begin{proof}
Let $A$ be the number of edges input to an instance of the algorithm.
Step~\ref{step:static1} of the algorithm creates $M=\tilde O(A\phi)$ many interexpander edges, as proven in~\cite{expanderhierarchy}.
    \Cref{lem:inter-clusteredges} shows that then the number of intercluster edges, which is the number of edges in the input graph of the $\ell+1$-th recursive call is $\tilde O(A \phi \cdot \frac 1 {\delta^3})\le \frac A {2^{a\log ^{3/4}n}}$ for some $a \in \R_+$. %
Hence, the number of edges at the $r$-th level of the cluster decomposition is at most $\frac m {2^{ar \log ^{3/4}n}}$.
This shows that after $r = \frac 1 a \log^{1/4}n $ levels, the number of edges drops to less than $\tilde O(1)$.
\end{proof}

\begin{lemma}\label{lem:countingcalls}
    Let $\alg_i$ be an algorithm indexed on $i$ that calls itself recursively, where $\alg_i$ calls $\alg_j$ each at most once for every $0\le j<i$. 
    Overall, a call to $\alg_i$ runs at most $2^{i}$ many instances of the algorithm.
\end{lemma}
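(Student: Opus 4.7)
The plan is a straightforward induction on the index $i$. Let $T(i)$ denote the total number of algorithm instances executed (counted with multiplicity) during a single call to $\alg_i$. The hypothesis of the lemma gives the recurrence
\[
T(i) \;\le\; 1 + \sum_{j=0}^{i-1} T(j),
\]
where the $1$ accounts for the call to $\alg_i$ itself, and each term $T(j)$ accounts for the (at most one) nested call to $\alg_j$ and everything that call transitively spawns.

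Base case: $T(0) \le 1 = 2^0$, since $\alg_0$ makes no recursive calls (there is no $j$ with $0 \le j < 0$).

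Inductive step: assuming $T(j) \le 2^j$ for all $j < i$, the recurrence gives
\[
T(i) \;\le\; 1 + \sum_{j=0}^{i-1} 2^{j} \;=\; 1 + (2^{i} - 1) \;=\; 2^{i},
\]
using the standard geometric sum. This establishes the bound and completes the induction.

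There is no real obstacle here; the only thing to be slightly careful about is making the counting precise, namely that a call to $\alg_j$ is counted once at that level and all of its further recursive calls are already absorbed into the quantity $T(j)$, so no term is double-counted in the recurrence.
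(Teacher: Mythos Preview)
Your proof is correct and follows essentially the same approach as the paper: induction on $i$, with the recurrence $T(i)\le 1+\sum_{j<i}T(j)$ and the geometric-sum bound $1+(2^i-1)=2^i$. The paper's argument is identical in structure and detail.
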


\begin{proof}
    We prove the statement by induction on $i$. The case $i=0$ is straightforward. For the inductive step, assume that a call to $\alg_j$ runs $2^{j}$ many instances of the algorithm for all $j \le i$, and let us prove the statement for the case $i+1$.

    Let us consider a call to $\alg_{i+1}$. 
    We thus have a call to all of $\alg_j$ for $0\le j <i+1$.
    By the induction hypothesis, we have that the total number of instances called recursively is at most $\sum_{0\le j < i+1} 2^j = 2^{i+1}-1$. 
    Adding the instance $\alg_{i+1}$ to that total concludes the proof.
\end{proof}

\begin{lemma}\label{lem:samelevel}
    Let us consider an instance labeled $i$ on level $\ell$. This instance calls overall at most $2^i$ many instances on level $\ell$.
\end{lemma}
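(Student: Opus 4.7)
The plan is to reduce this to \Cref{lem:countingcalls} by identifying the ``indexed recursive calls'' of that lemma with the calls that keep the level the same. First I would observe that the algorithm makes exactly two kinds of recursive calls: those in \Cref{staticboringrecursivecall} (the ``smaller $\lmax$'' instances), which run on the same graph and therefore stay on level $\ell$; and the single call in \Cref{staticrecursivecontract} on the contracted graph $G/\P_2$, which by definition of level moves to level $\ell+1$. Hence only the \Cref{staticboringrecursivecall}-calls contribute to the count of level-$\ell$ instances reached from the given instance.

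Next I would check that the \Cref{staticboringrecursivecall}-calls launched by an instance labeled $i$ are exactly calls to instances labeled $j$ for certain $j < i$, each launched at most once. By definition, the instance labeled $i$ uses $\lmax = \floor{1.2^{i+1}}$ and its \Cref{staticboringrecursivecall} iterates over all integers $j \ge 0$ with $1.2^{j} \le \lmax/8$, instantiating one copy with $\lmin=\lceil 1.2^{j}\rceil$ and $\lmax = \floor{1.2^{j+1}}$, i.e.\ exactly one instance labeled $j$. The constraint $1.2^{j}\le \floor{1.2^{i+1}}/8$ forces $j < i$ (since $1.2^{i}\cdot 1/8 < 1.2^{i}$), and no label is repeated in the loop. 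So within level $\ell$ the recursive structure is precisely that of \Cref{lem:countingcalls}, with $\alg_i$ being the instance labeled $i$ on level $\ell$.

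Applying \Cref{lem:countingcalls} directly then yields that an instance labeled $i$ on level $\ell$ spawns at most $2^{i}$ level-$\ell$ instances in total (including itself), which is the claim.

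I do not expect any real obstacle here: the main point is the clean separation between ``staying on the same level'' (\Cref{staticboringrecursivecall}) and ``going one level deeper'' (\Cref{staticrecursivecontract}), plus the fact that the label strictly decreases along any chain of \Cref{staticboringrecursivecall}-calls. The only thing to be careful about is that the same level-$\ell$ instance is not reached twice through different routes, which is immediate because \Cref{staticboringrecursivecall} creates a fresh instance per index $j$ at the moment the parent is instantiated, and these instances are owned by distinct parents.
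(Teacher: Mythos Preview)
Your proposal is correct and follows essentially the same approach as the paper: separate the level-preserving calls (\Cref{staticboringrecursivecall}) from the level-increasing one (\Cref{staticrecursivecontract}), verify that the label strictly decreases along same-level calls, and then apply \Cref{lem:countingcalls}. The only cosmetic difference is that the paper first derives the sharper inequality $j+10\le i$ from $1.2^{j}\le \lfloor 1.2^{i+1}\rfloor/8$ before relaxing it to $j+1\le i$, whereas you argue $j<i$ directly; both suffice for invoking \Cref{lem:countingcalls}.
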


\begin{proof}
    Remember that a call on such an instance calls recursively on the same graph with $\lmin = \ceil{1.2^i}$ and $\lmax=\floor{1.2^{i+1}}$ for all $i \in \N_0$ with $1.2^{i} \le \lmax/8$. 
    We can ignore all recursive calls on the contracted graphs, as those will only create calls on higher levels.

    Let us look at the tree of recursive calls, 
    Note that any children labeled $j$ of a node labeled $i$ must satisfy $j+10 \le i$ as we have $1.2^j \le \frac {\floor {1.2^{i+1}}}{8} \le \frac { {1.2^{i+1}}}{8} \le  \frac { {1.2^{i+1}}}{1.2^{11}} $. 
    We relax this condition and note that $j+1 \le i$. Then, by \Cref{lem:countingcalls}, this creates at most $2^i$ many recursive calls.   
\end{proof}

\begin{lemma}\label{lem:norecursivecalls}
    The total number of recursive calls is $n^{o(1)}$.
\end{lemma}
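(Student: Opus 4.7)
The plan is to combine the two preceding lemmas multiplicatively. \Cref{lem:norecursivecallse} bounds the depth of the recursion caused by contraction (\Cref{staticrecursivecontract}) by $O(\log^{1/4} n)$, while \Cref{lem:samelevel} bounds the fan-out \emph{within} a single level (caused by the \Cref{staticboringrecursivecall} sub-recursion over the $O(\log n)$ many $(\lmin,\lmax)$ ranges) by $2^i$, where $i$ is the label of the instance. Since by the standing parameter assumption $\lmax \le 1.2\,\lmin = 2^{O(\log^{3/4-c}n)}$, every label $i$ satisfies $1.2^{i+1}\le \lmax$, hence $i = O(\log^{3/4-c}n)$. Thus each instance on a given level triggers at most $2^{O(\log^{3/4-c}n)}$ further instances on that same level.

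Concretely, I would set up a recurrence on the number $N(\ell)$ of calls at level $\ell$: an instance at level $\ell-1$ produces $2^{O(\log^{3/4-c}n)}$ same-level descendants via \Cref{lem:samelevel}, each of which may spawn \emph{at most one} child at level $\ell$ via \Cref{staticrecursivecontract}. Therefore
\[
N(\ell) \le 2^{O(\log^{3/4-c}n)}\cdot N(\ell-1),
\]
with $N(0)=1$. Iterating over the $r=O(\log^{1/4}n)$ levels from \Cref{lem:norecursivecallse} gives
\[
\sum_{\ell=0}^{r} N(\ell) \le r \cdot \bigl(2^{O(\log^{3/4-c}n)}\bigr)^{O(\log^{1/4}n)} = 2^{O(\log^{1-c}n)} = n^{o(1)},
\]
which is exactly the desired bound.

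The main (and essentially only) subtlety is bookkeeping: one must verify that the fan-out in \Cref{lem:samelevel} indeed counts every same-level descendant (including those produced indirectly through a chain of \Cref{staticboringrecursivecall} calls with successively smaller $i$), so that no further multiplicative factor appears at a given level; this is already established by \Cref{lem:samelevel} via \Cref{lem:countingcalls}. One then just has to notice that the recursive call in \Cref{staticrecursivecontract} is unique per instance, so all multiplicativity between levels is $\times 1$ and only the within-level fan-out and the level depth contribute to the product above.
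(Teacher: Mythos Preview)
Your proposal is correct and follows essentially the same approach as the paper: combine the level-depth bound from \Cref{lem:norecursivecallse} with the within-level fan-out bound from \Cref{lem:samelevel} multiplicatively, using $i=O(\log^{3/4-c}n)$, to obtain $2^{O(\log^{1-c}n)}=n^{o(1)}$. The only cosmetic wrinkle is the initialization $N(0)=1$: if $N(\ell)$ is meant to count \emph{all} instances at level $\ell$ (as your final sum suggests), then already $N(0)\le 2^i$ by \Cref{lem:samelevel}, but this is absorbed into the $O(\cdot)$ and changes nothing.
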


\begin{proof}
    Note that every instance on level $\ell$ calls immediately exactly one instance on level $\ell +1$ in the limit of $\ell = O(\sqrt[4]{\log n})$ many levels, by \Cref{lem:norecursivecallse}.
    Moreover, every call on level $\ell$ and index $i$ generates overall $2^i$ many instances on level $\ell$, by \Cref{lem:samelevel}.

    Therefore, the initial call on level $0$ and label $i=O(\log \lambda)$ creates at most $2^i$ many instances on level $0$. By induction, it is straightforward to see that there are at most $2^{i(\ell +1)}$ many instances on level $\ell$. 
    Plugging in $1.2^i = O(\lmax)$ we have that the total number of recursive calls is equal to $\lmax^{O(\sqrt[4]{\log n})} = 2^{O(\log^{1-c} n)}$.
\end{proof}

\begin{theorem}
    The total running time of \Cref{alg:StaticAlgorithm} is $O(m^{1+o(1)})$.
\end{theorem}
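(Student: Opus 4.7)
The plan is to bound the total running time by showing that one call of \Cref{alg:StaticAlgorithm} on an input graph with $A$ edges does $A\cdot n^{o(1)}$ work in its own body, and then summing over all recursive instances using the bounds from \Cref{lem:norecursivecallse} and \Cref{lem:norecursivecalls}. Throughout, the parameter choices give $1/\phi=2^{\Theta(\log^{3/4}n)}$, $1/\delta^{3}=2^{O(\log^{3/4-c}n)}$, and $\lmax\cdot\nu=2^{\Theta(\log^{3/4}n)}$, so each of these factors (and any constant power) is $n^{o(1)}$ and will be absorbed uniformly.

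First I would bound the work of a single call, excluding the recursive invocations of \Cref{step:static4} and \Cref{staticboringrecursivecall}. The expander-decomposition preprocessing in \Cref{step:static1} costs $\tilde O(A/\phi)=A\cdot n^{o(1)}$ by \Cref{thm:expanderdecomposition}; the preprocessings of the LocalKCut, Mirror Cuts, and auxiliary structures in \Cref{staticlocakcut}, \Cref{staticmirrors}, and \Cref{staticdatastructure} are each $A\cdot n^{o(1)}$ using $\beta=O(1)$. By \Cref{lem:nooflocalkcuts} the invocation of \Cref{alg:decomposingexpanders} issues $\tilde O(M/(\delta^3\phi))=A\cdot n^{o(1)}$ many LocalKCut queries (where $M=\tilde O(A\phi)$ is the number of inter-expander edges), each of cost $n^{o(1)}$ by \Cref{thm:deterministic-dynamic}. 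The calls to fragmenting in \Cref{step:staticfragmenting} are one per poorly-connected cluster, of which there are $\tilde O(M/(\delta^3\lmax))$, each costing $n^{o(1)}$ by \Cref{thm:fragmenting}. Finally, each edge relabeling triggered by \Cref{alg:subroutine} or by \UP entails an amortized $n^{o(1)}$ update to each of the dynamic data structures, and by \Cref{lem:inter-clusteredges} the total number of such relabelings in a single call is $\tilde O(A/\delta^3)=A\cdot n^{o(1)}$. Summing gives $A\cdot n^{o(1)}$ work per call.

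Next I would aggregate across the recursion tree. By \Cref{lem:norecursivecallse} the recursion into contracted graphs reaches depth $O(\log^{1/4}n)$, and the edge-count argument there shows that a graph at level $\ell$ has $A_\ell\le m\cdot 2^{-\Omega(\log^{3/4}n)\cdot\ell}$ edges. Multiplying by the number of calls at level $\ell$, which is at most $n^{o(1)}$ by \Cref{lem:norecursivecalls} and \Cref{lem:samelevel}, and by the per-call overhead $n^{o(1)}$ established above, the total work at level $\ell$ is at most $m\cdot 2^{-\Omega(\log^{3/4}n)\cdot\ell}\cdot n^{o(1)}$. Summing this geometric series over the $O(\log^{1/4}n)$ levels is dominated by the $\ell=0$ term and yields $m^{1+o(1)}$.

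I expect the main obstacle to be justifying that the amortized update cost of the dynamic data structures really is $n^{o(1)}$ across the whole execution of a single call: each pre-cluster split induced by \Cref{alg:subroutine} or by \UP causes a \emph{batch} of edge relabelings inside the subgraph maintained in \Cref{staticlocakcut} and in \Cref{staticmirrors}, and one must verify that summing the amortized costs of all these batches stays within the per-call budget. This is handled by pairing the amortized bounds of \Cref{thm:deterministic-dynamic} and \Cref{thm:mirrorcuts} with the $\tilde O(A/\delta^3)$ cap on the total number of relabelings from \Cref{lem:inter-clusteredges}. A secondary subtlety is \Cref{staticboringrecursivecall}, where the same graph is nested through $O(\log\lmax)$ instances with smaller $\lmax$; \Cref{lem:samelevel} already absorbs this blowup into the $n^{o(1)}$ per-level count, so it does not contribute extra loss.
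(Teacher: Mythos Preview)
Your proposal is correct and follows essentially the same approach as the paper: bound the work of one call (excluding recursion) by $A\cdot n^{o(1)}$ via the expander-decomposition cost, the LocalKCut query count from \Cref{lem:nooflocalkcuts}, the fragmenting cost from \Cref{thm:fragmenting}, and the data-structure maintenance, then aggregate over the recursion tree using \Cref{lem:norecursivecallse} and \Cref{lem:norecursivecalls}. Your treatment is in fact more careful than the paper's own proof in two respects: you explicitly run the geometric sum over levels rather than just multiplying by the number of levels, and you explicitly account for the same-level instances coming from \Cref{staticboringrecursivecall} via \Cref{lem:samelevel}, whereas the paper's proof simply cites ``number of levels is $n^{o(1)}$'' and leaves the per-level multiplicity implicit.
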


\begin{proof}
    Let us first analyze one level of the recursion: Step~\ref{step:static1} takes $O(m^{1+o(1)})$ as proven in~\cite{expanderhierarchy}.
    It creates $M=m \phi$ inter-expander edges.

Step~\ref{staticupdatepartition} takes $O(m)$ time.

    By \Cref{lem:nooflocalkcuts}, Step~\ref{step:static2} runs  LocalKCut a total of $\Tilde O\PAR{\frac {M}{\delta^3} \frac 1 {\phi}}$ times, each instance taking $n^{o(1)}$, for a total of $O(m^{1+o(1)})$ time overall.

    By \Cref{thm:fragmenting}, Step~\ref{step:staticfragmenting} takes $O(n^{o(1)})$ per intercluster edge at most.
    Steps~\ref{step:static4}, and~\ref{step:static6} take $O(m)$ time (excluding the recursive call).

By \Cref{thm:datastructure}, the rest of the data structure can be maintained in $O(m^{1+o(1)})$, as the number of updates is upper-bounded by $m$ and the volume is $n^{o(1)}$

   By \Cref{lem:norecursivecallse}, the number of levels is $n^{o(1)}$. Hence the claim follows.
\end{proof}

\section{The Dynamic Algorithm}\label{sec:dynamic}
The dynamic algorithm works as follows: at the beginning, the algorithm runs the static algorithm, which sets up the data structure. In particular, on each layer, it provides a pre-cluster decomposition $\P_1$ and a corresponding cluster decomposition $\P_2$.

We then maintain this data structure on each layer, by realizing that each intracluster edge insertion and intercluster edge deletion cannot create any $(1-\delta)$-boundary-sparse cut, while for any other update, only a limited number of calls to LocalKCut suffice to find all newly created $(1-\delta)$-boundary-sparse cuts.

In fact, we will simply mark a few nodes after each update as unchecked, so as to make sure that \Cref{inv:cuttingisgood} holds, and then run the corresponding part of the algorithm again.
The goal is to have after each update a cluster decomposition as in \Cref{lem:clusterdecomposition}.

Remember that we set $c>0$, $ \lmax \le 1.2\lmin = 2^{O (\log^{3/4-c}n)}$, $\delta \le \frac 1 {2\lmax}$ with $\delta \le 0.04.$ $\phi = 2^{-\Theta (\log^{3/4}n)}$ and $ \alpha=\frac 1 {\poly \log n}.$
Note that $\frac \alpha \phi \ge 1$ and that we assume $\delta \le 0.04$.
We further set $\psi=2^{\Theta(\log^{1/2} n)}, h= \Theta(\log^{1/2}m),$ and $\rho = 2^{\Theta(\log^{1/2} n)}$.

For this section, we assume that the data structure we present in \Cref{sec:datastructure} supports our cluster decomposition.

We describe below the main structure of the dynamic algorithm:

\begin{algo}\label{alg:dynamicestimate}
Data Structure: The data structure is the same as the one of the static algorithm (\Cref{alg:StaticAlgorithm}). 

Preprocessing:
\begin{enumerate}[noitemsep]
    \item Run \Cref{alg:StaticAlgorithm}
\end{enumerate}

Processing updates:

\begin{enumerate}[noitemsep]
  \item If the graph has $O(\frac \lmax \phi)$ nodes, run a connectivity algorithm then a static algorithm (\cite{DBLP:conf/soda/HenzingerLRW24}) on all connected components, and return the minimum cut found. If the graph has no edges, output ``$\lambda > \lmax$''. In any case, stop this call to the algorithm.
    \item  Run \Cref{alg:dynclusterdecompose}.
    \Cref{alg:dynclusterdecompose} is the dynamic algorithm that maintains the pre-cluster decomposition and the cluster decomposition
    \item  Maintain the contracted graph where each cluster output by \Cref{alg:dynclusterdecompose} is contracted into a node, and call recursively \Cref{alg:dynamicestimate} on that contracted graph (as described in \Cref{sec:datastructure})
    \item Output the value of the minimum \proper cut found among the cut returned by the recursive call and the mirror clusters, if that value falls in $[\lmin, \lmax]$.  Store pointers to the corresponding cut. If the smallest cut found is larger than $\lmax$ (or no cut is found), report ``$\lambda > \lmax$''.
\end{enumerate}

\end{algo}

Since the cluster decomposition satisfies the conditions of \Cref{lem:clusterdecomposition} at every point, this proves correctness.

The subsections below will prove the following lemma:

\begin{restatable}{lemma}{lemrecourse}\label{lem:clusterdec}
In \Cref{alg:dynclusterdecompose}:
\begin{enumerate}[noitemsep]
    \item  The amortized recourse\footnote{the recourse is the number of modifications to the edges of the contracted graph} per updated edge is $\Tilde O\PAR{\frac \rho {\delta^3} \lmax}$
    \item At every time step, and every level, if $\tilde m$ is the number of edges input on that level, the number of intercluster edges in $\P_2$ is at most $\Tilde{O} \PAR{ \frac {\tilde m\phi} {\delta^3}} $, and thus the recursive call will have as input at most $\Tilde{O} \PAR{ \frac {\tilde m\phi} {\delta^3}} $ edges.
    \item The amortized update time is $n^{o(1)}$ on each level -- excluding recursive calls.
\end{enumerate}
\end{restatable}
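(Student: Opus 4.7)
The plan is to analyze the three decomposition layers (expander decomposition, pre-cluster decomposition, fragmenting) separately and then compose their recourse/update costs multiplicatively. All three claims will essentially follow from tracking how a single input edge update propagates through each of these layers.

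\textbf{Part 2 (intercluster edge count).} I will first establish this part because the static bound of \Cref{lem:inter-clusteredges} already does the heavy lifting. At any moment in time, let $M$ denote the current number of inter-expander edges, which \Cref{thm:expanderdecomposition} guarantees is at most $\tilde O(\tilde m \phi)$. The static proof of \Cref{lem:inter-clusteredges} is purely combinatorial: it relies only on the structural property that every cut performed while constructing $\P_2$ is $(1-\delta)$-boundary-sparse, together with the bound on fragmenting from \Cref{thm:fragmenting}. Since the dynamic \Cref{alg:dynclusterdecompose} maintains exactly the same two structural invariants at every time step (the pre-cluster cuts come from \Cref{alg:decomposingexpanders}/\UP, which \Cref{lem:updateissparse} and \Cref{lem:bsgood} show are $(1-\delta)$-boundary-sparse, and the fragmentation is done using \Cref{alg:fragmenting}), the same accounting applies snapshot-by-snapshot, giving $\tilde O(M/\delta^3) = \tilde O(\tilde m \phi/\delta^3)$ intercluster edges.

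\textbf{Part 1 (amortized recourse).} This is the main technical part. I will bound the recourse multiplicatively across the three layers. First, each input edge update triggers amortized recourse $\rho$ in the expander decomposition by \Cref{thm:expanderdecomposition}. Each such inter-expander edge change can be handled by conceptually \emph{removing} the affected expander and \emph{reinserting} it, so I can charge any downstream changes to a single inter-expander edge flip. Next, for a single inter-expander edge change, I will apply the static bound for the pre-cluster layer: re-running the static analysis of Section~\ref{subsec:clusterdecomposition} inside the affected expander, this single change induces at most $\tilde O(1/\delta)$ new pre-cluster boundary edges (this is the $S_1 + S_2$ bound from the proof of \Cref{lem:inter-clusteredges}, but with $\boundary V_i$ replaced by the $O(1)$ boundary change caused by one flip). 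Finally, each pre-cluster cluster that is touched has boundary at most $6\lmax$, so when the fragmenting algorithm is re-run from scratch on it via \Cref{thm:fragmenting}, it outputs $\tilde O(1/\delta^2)$ fragmented clusters, each with boundary at most $6\lmax$, contributing at most $\tilde O(\lmax/\delta^2)$ intercluster edges to $\P_2$. Multiplying the three factors gives amortized recourse $\tilde O(\rho \cdot (1/\delta) \cdot (\lmax/\delta^2)) = \tilde O(\rho \lmax/\delta^3)$ per input edge update, as claimed.

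\textbf{Part 3 (amortized update time).} Here I will simply add up the per-operation cost of each layer multiplied by the number of times that layer is invoked. The expander decomposition costs $\tilde O(\psi 38^{2h}/\phi^2) = n^{o(1)}$ per input update. Each of the $\tilde O(\rho/\delta)$ induced pre-cluster events requires a constant number of calls to the LocalKCut data structure (to recheck nodes unchecked by \Cref{alg:subroutine}) and a constant number of calls to the Mirror Cuts data structure, each of which costs $n^{o(1)}$ by \Cref{thm:deterministic-dynamic} and \Cref{thm:mirrorcuts}. Each of the $\tilde O(\rho/\delta)$ invocations of the fragmenting algorithm costs $\tilde O(\delta^{-2}\beta^2(\nu\lmax)^{O(\beta)}) = n^{o(1)}$ time by \Cref{thm:fragmenting}. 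Putting these together with our parameter choices ($\rho, \lmax, 1/\delta, \beta, \nu$ all $n^{o(1)}$) gives amortized update time $n^{o(1)}$ on each level, excluding recursive calls.

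\textbf{Expected main obstacle.} The hard part is Part 1: specifically, justifying the ``charge per inter-expander flip'' step. Naively, after an inter-expander flip one might have to recompute the entire pre-cluster decomposition inside the affected expander, which would be far too costly. The cleanest way I see around this is to argue, as in the static proof, that the $(1-\delta)$-boundary-sparse cuts made inside an unaffected region remain valid after the flip, so only a ``sliver'' of the expander around the flipped edge needs to be re-examined; this is exactly the role of the ``unchecked vertices'' mechanism in \Cref{alg:subroutine} combined with \Cref{inv:cuttingisgood}. Setting up the amortization carefully — so that each unit of work and each unit of recourse is paid for by exactly one inter-expander flip, and each flip is in turn paid for by one input edge update using the $\rho$ recourse from \Cref{thm:expanderdecomposition} — is the most delicate part and is where I would spend the most care in the write-up.
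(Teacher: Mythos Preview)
Your high-level multiplicative picture ($\rho \times 1/\delta \times \lmax/\delta^2$) is exactly right, and Part~3 is fine. But Parts~1 and~2 share a genuine gap that you have not closed: the ``snapshot'' reasoning does not survive the fact that the graph changes \emph{during} an epoch between rebuilds.

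Concretely, the static bound of \Cref{lem:inter-clusteredges} works because in the forest of cuts, every split was $(1-\delta)$-boundary-sparse \emph{in the graph at the moment of the split}, and the potential telescopes cleanly because nothing else changes boundaries. In the dynamic setting, between the moment a cluster $C$ is created and the moment it is split, the adversary (via the expander-decomposition recourse) may have inserted or deleted edges incident to $C$, so $\boundary C$ at split time can exceed $\boundary C$ at creation time by an amount you have not accounted for. Your snapshot argument implicitly assumes these are equal. Worse, some of the cuts the algorithm performs are dictated by the expander decomposition's recourse and are \emph{not} $(1-\delta)$-boundary-sparse at all, so the appeal to \Cref{lem:updateissparse}/\Cref{lem:bsgood} does not cover them.

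The paper handles this by redoing the potential argument over the \emph{time-expanded} forest of all clusters that ever existed in the epoch. The potential $\Phi(C)=\max\{0,\boundary C(t(C))-2.1\lmax\}$ is evaluated at each cluster's own split time, and the telescoping inequality picks up correction terms $u(S)+u(C\setminus S)+2u(C)$ counting the number of edge updates incident to each piece between the parent's split and the child's split; an extra case analysis covers the non-sparse cuts coming from the expander recourse. Summing these corrections over the whole forest contributes exactly the total recourse $R=O(\tilde m\phi)$, giving the $\tilde O((M+R)/\delta)$ bound for $\P_1$ and then $\tilde O((M+R)/\delta^3)$ for $\P_2$. The recourse bound in Part~1 then follows globally (total over the epoch is $\tilde O(\tilde m\phi\lmax/\delta^3)$, amortized over $O(\tilde m\phi/\rho)$ updates), rather than by a per-flip charging scheme. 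Your ``expected main obstacle'' paragraph correctly identifies that per-flip charging is delicate; the paper sidesteps it entirely with this global potential-plus-updates argument, and that is the missing ingredient in your plan.
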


Let us now discuss the number of recursive calls. Remember that $G$ is the original graph with $m$ edges. 
As for the static algorithm, we say that an instance of the algorithm is on level $\ell$ if it is run on a graph $G'$ that is obtained from $G$ by contracting it $\ell$ times in \Cref{staticrecursivecontract} of \Cref{alg:StaticAlgorithm}.
    For example, the instance  $A_0$ of original call contracts the graph and calls recursively the instance $A_1$, which calls instance $A_2$ on the same contracted graph with smaller $\lmin$ and $\lmax$ in \Cref{staticboringrecursivecall}. This instance then calls instance $A_3$ on another contracted graph, and $A_3$ calls $A_4$ in \Cref{staticboringrecursivecall}. $A_0$ is on level $0$, $A_1$ and $A_2$ are on level $1$, $A_3$ and $A_4$ are on level $2$.

We also assign label $i$ to each node in that tree where its $\lmax$ parameter is $\floor{1.2^{i+1}}$.

\begin{lemma}\label{lem:norecursivecalls2e}
    The number of levels of \Cref{alg:dynamicestimate} is $O(\log ^{1/4} n)$.
\end{lemma}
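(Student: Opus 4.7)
The plan is to mirror the proof of the static counterpart \Cref{lem:norecursivecallse}, replacing the static per-level edge-count analysis with the dynamic recourse bound from \Cref{lem:clusterdec}. The key point is that the edge-shrinkage ratio per level is dictated by $\phi/\delta^3$, which remains subpolynomially small in $n$ even in the dynamic regime, so the same geometric-decay argument drives the number of levels down to $O(\log^{1/4}n)$.

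More concretely, I would proceed as follows. Let $\tilde m$ denote the number of edges present at some level of the recursion (which, in the dynamic setting, refers to the current number of edges in the input graph of that instance of \Cref{alg:dynamicestimate}). By \Cref{lem:clusterdec}(2), the input to the recursive call on the contracted graph $G/\P_2$ has at most $\tilde O(\tilde m\,\phi/\delta^3)$ edges at all times. Plugging in the chosen parameter values $\phi = 2^{-\Theta(\log^{3/4} n)}$ and $\delta = 2^{-O(\log^{3/4-c} n)}$ with $c>0$, we get $\phi/\delta^3 \le 2^{-\Theta(\log^{3/4} n) + O(\log^{3/4-c} n)} \le 2^{-a\log^{3/4} n}$ for some absolute constant $a>0$, since the $\log^{3/4-c}n$ term is negligible compared to $\log^{3/4}n$.

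Iterating this shrinkage across levels: the number of edges input to the $r$-th level of the recursion is at most $m \cdot 2^{-a r\log^{3/4} n}$ (up to polylogarithmic factors absorbed into the constant $a$). Once this drops below $O(\lambda_{\max}/\phi) = n^{o(1)}$, the base case of \Cref{alg:dynamicestimate} kicks in and halts the recursion. Setting $m\cdot 2^{-a r\log^{3/4} n} = O(1)$ and solving for $r$ gives $r = O(\log n / \log^{3/4} n) = O(\log^{1/4} n)$, as claimed.

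No real obstacle arises beyond verifying that the dynamic intercluster-edge bound in \Cref{lem:clusterdec}(2) matches the static bound from \Cref{lem:inter-clusteredges} used in \Cref{lem:norecursivecallse}; the rest is the same arithmetic. The one subtlety to double-check is that the bound applies uniformly across all time steps (not just after preprocessing), but this is precisely the content of \Cref{lem:clusterdec}(2), which states the bound holds ``at every time step, and every level''. Thus the recursion depth is bounded by $O(\log^{1/4} n)$ throughout the entire update sequence.
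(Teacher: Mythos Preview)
Your proposal is correct and follows essentially the same approach as the paper's own proof: invoke \Cref{lem:clusterdec}(2) to bound the per-level edge shrinkage by $\tilde O(\phi/\delta^3)\le 2^{-a\log^{3/4}n}$, then iterate to conclude the recursion depth is $O(\log^{1/4}n)$. Your write-up is in fact more careful than the paper's (which is a terse three-line sketch), in particular in noting that the bound from \Cref{lem:clusterdec}(2) holds at every time step.
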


\begin{proof}
Let $A$ be the number of edges input to an instance of the algorithm.
By \Cref{lem:clusterdec}, the number of edges on the next level is $\tilde O\PAR{\frac{\tilde m \phi} {\delta^3}}$. %
Hence, the number of edges at the $r$-th level of the cluster decomposition is at most $\frac m {2^{ar \log ^{3/4}n}}$ for some $a \in \R$.
This shows that after $r = \frac 1 a \log^{1/4}n $ levels, the number of edges drops to less than $\tilde O(1)$.
\end{proof}

\begin{lemma}\label{lem:samelevel1}
    Let us consider an instance labeled $i$ on level $\ell$. This instance calls overall at most $2^i$ many instances on level $\ell$.
\end{lemma}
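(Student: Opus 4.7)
The plan is to mirror the proof of the analogous static statement, \Cref{lem:samelevel}, essentially verbatim, since the dynamic algorithm inherits from \Cref{alg:StaticAlgorithm} the same recursive call structure: each instance calls itself on the same (non-contracted) graph for all indices $j$ with $1.2^j \le \lmax/8$ (these are the calls of \Cref{staticboringrecursivecall} passed into the dynamic setting), plus one call on a contracted graph which corresponds to the jump in level.

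First, I would fix the level $\ell$ and restrict attention to the subtree of recursive calls that remain on level $\ell$. Any call that goes to a contracted graph moves to a strictly higher level, so those can be ignored for this counting argument. Thus the same-level subtree consists only of the recursive calls made inside \Cref{staticboringrecursivecall}-style steps of \Cref{alg:dynamicestimate}.

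Next, I would argue as in the static proof that any child of index $j$ of a parent labeled $i$ in this same-level subtree must satisfy $j+1\le i$. Indeed, by construction such a child has $\lmax$-parameter $\lfloor 1.2^{j+1}\rfloor$ and is only produced when $1.2^{j}\le \lfloor 1.2^{i+1}\rfloor/8\le 1.2^{i+1}/1.2^{11}$, so in fact $j+10\le i$, which we relax to $j+1\le i$. This puts us in the setting of \Cref{lem:countingcalls} with index $i$, giving a bound of $2^i$ on the total number of instances created in the same-level subtree.

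The main (minor) obstacle is just to be explicit about the fact that the preprocessing of the dynamic algorithm invokes the static algorithm, and that updates do not create \emph{new} same-level recursive branches beyond those already present in the recursion tree set up at preprocessing time -- updates are propagated down the existing recursion tree rather than creating fresh trees. Once this is noted, the counting argument transfers from Lemma~\ref{lem:samelevel} without change, and \Cref{lem:countingcalls} delivers the bound $2^i$.
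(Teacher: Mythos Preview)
Your proposal is correct and follows essentially the same approach as the paper's proof, which is in fact verbatim identical to the static case (\Cref{lem:samelevel}): ignore higher-level calls, observe that same-level children have index $j$ with $j+10\le i$ (relaxed to $j+1\le i$), and apply \Cref{lem:countingcalls}. Your additional remark that updates propagate through the existing recursion tree rather than spawning new same-level branches is a reasonable clarification that the paper omits.
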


\begin{proof}
    Remember that a call on such an instance calls recursively on the same graph with $\lmin = \ceil{1.2^i}$ and $\lmax=\floor{1.2^{i+1}}$ for all $i \in \N_0$ with $1.2^{i} \le \lmax/8$. 
    We can ignore all recursive calls on the contracted graphs, as those will only create calls on higher levels.

    Let us look at the tree of recursive calls, 
    Note that any children labeled $j$ of a node labeled $i$ must satisfy $j+10 \le i$ as we have $1.2^j \le \frac {\floor {1.2^{i+1}}}{8} \le \frac { {1.2^{i+1}}}{8} \le  \frac { {1.2^{i+1}}}{1.2^{11}} $. 
    We relax this condition and note that $j+1 \le i$. Then, by \Cref{lem:countingcalls}, this creates at most $2^i$ many recursive calls.   
\end{proof}

\begin{lemma}\label{lem:norecursivecalls2}
    The total number of recursive calls is $n^{o(1)}$.
\end{lemma}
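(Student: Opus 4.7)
The plan is to mirror the structure of the proof of \Cref{lem:norecursivecalls} from the static section, which established the exact same type of bound using the analogous lemmas \Cref{lem:norecursivecallse} and \Cref{lem:samelevel}. The two main ingredients are now both available: \Cref{lem:norecursivecalls2e} bounds the depth of the recursion (counted in ``levels'', where a level increases each time the graph is contracted in the recursive call of \Cref{alg:dynamicestimate}) by $O(\log^{1/4} n)$, while \Cref{lem:samelevel1} bounds the number of instances generated on the same level by any single instance of label $i$ by $2^i$.

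First I would combine these two facts by induction on the level. The initial call sits on level $0$ with label $i = O(\log \lmax) = O(\log^{3/4-c} n)$. By \Cref{lem:samelevel1}, it spawns at most $2^i$ instances on level $0$; each of these immediately spawns one instance on level $1$ (the contracted recursive call), and each of those spawns at most $2^i$ instances on level $1$, etc. Hence a clean inductive statement is: the total number of instances on level $\ell$ is at most $2^{i(\ell+1)}$, where $i$ is the label of the root call.

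Plugging in $\ell = O(\log^{1/4} n)$ and $2^i = O(\lmax) = 2^{O(\log^{3/4-c} n)}$, the total count across all levels is bounded by
\[
\sum_{\ell=0}^{O(\log^{1/4} n)} 2^{i(\ell+1)} \;\le\; \lmax^{O(\log^{1/4} n)} \;=\; 2^{O(\log^{3/4-c} n)\cdot O(\log^{1/4} n)} \;=\; 2^{O(\log^{1-c} n)} \;=\; n^{o(1)}.
\]
I do not expect any genuine obstacle here: the two feeder lemmas do all the real work, and what remains is essentially the same bookkeeping as in \Cref{lem:norecursivecalls}. The only point worth being careful about is making sure the induction correctly separates ``calls on the same level through \Cref{staticboringrecursivecall}'' (which are counted by \Cref{lem:samelevel1}) from ``calls that move to the next level via contraction in \Cref{staticrecursivecontract}'' (each instance triggers at most one such call), so that the two multiplicative factors $2^i$ (per level) and $O(\log^{1/4} n)$ (depth) do not get double-counted.
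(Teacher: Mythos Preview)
Your proposal is correct and follows essentially the same approach as the paper's own proof: both invoke \Cref{lem:norecursivecalls2e} for the level bound $O(\log^{1/4} n)$ and \Cref{lem:samelevel1} for the per-level count $2^i$, then set up the same induction giving at most $2^{i(\ell+1)}$ instances on level $\ell$, and finish with the identical calculation $\lmax^{O(\log^{1/4} n)} = 2^{O(\log^{1-c} n)} = n^{o(1)}$.
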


\begin{proof}
    Note that every instance on level $\ell$ calls immediately exactly one instance on level $\ell +1$ in the limit of $\ell = O(\sqrt[4]{\log n})$ many levels, by \Cref{lem:norecursivecalls2e}.
    Moreover, every call on level $\ell$ and index $i$ generates overall $2^i$ many instances on level $\ell$, by \Cref{lem:samelevel1}.

    Therefore, the initial call on level $0$ and label $i=O(\log \lambda)$ creates at most $2^i$ many instances on level $0$. By induction, it is straightforward to see that there are at most $2^{i(\ell +1)}$ many instances on level $\ell$. 
    Plugging in $1.2^i = O(\lmax)$ we have that the total number of recursive calls is equal to $\lmax^{O(\sqrt[4]{\log n})} = 2^{O(\log^{1-c} n)}$.
\end{proof}

\begin{lemma}
    The amortized recourse computed over all recursion levels is $n^{o(1)}$.
\end{lemma}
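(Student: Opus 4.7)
The plan is to propagate the per-level amortized recourse bound from \Cref{lem:clusterdec} multiplicatively down the recursion tree whose depth is controlled by \Cref{lem:norecursivecalls2e}.

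First, I would observe that the first two bullets of \Cref{lem:clusterdec} together encode a clean cascading phenomenon: each edge update processed at a given recursion level incurs amortized recourse $\tilde O(\rho \lmax / \delta^3)$, and, crucially, bullet~2 identifies these recourse edges with the precise set of edge updates delivered to the next level's instance through the contracted-graph interface. Consequently, if $R = \tilde O(\rho \lmax / \delta^3)$ denotes the per-level blow-up factor, a single top-level input update produces amortized at most $R$ updates at level~1, at most $R^2$ updates at level~2, and so on; summing over all $O(\log^{1/4} n)$ levels (by \Cref{lem:norecursivecalls2e}) gives a total amortized recourse of at most
\[
\sum_{\ell=0}^{O(\log^{1/4} n)} R^{\ell} \;\le\; \PAR{\tilde O\PAR{\frac{\rho \lmax}{\delta^3}}}^{O(\log^{1/4} n)}.
\]

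Second, I would substitute the chosen parameter values. With $\rho = 2^{\Theta(\log^{1/2} n)}$, $\lmax = 2^{\Theta(\log^{3/4-c} n)}$ and $1/\delta = 2^{\Theta(\log^{3/4-c} n)}$, one has $R = 2^{O(\max(\log^{1/2} n,\, \log^{3/4-c} n))}$. Raising to the $O(\log^{1/4} n)$ power yields
\[
R^{O(\log^{1/4} n)} \;=\; 2^{O(\log^{1/4} n) \cdot O(\max(\log^{1/2} n,\, \log^{3/4-c} n))} \;=\; 2^{O(\log^{1-c'} n)}
\]
for some $c' > 0$ depending on $c$ (namely $c' = c$ when $c \le 1/4$ and $c' = 1/4$ otherwise); in either regime this is $n^{o(1)}$ for every $c > 0$, as required.

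The only nontrivial step is the cascading claim, i.e.\ that the per-level amortized recourse of \Cref{lem:clusterdec} genuinely composes multiplicatively rather than merely bounding an abstract local quantity. This requires that the edges counted as recourse at a given level are exactly the sequence of edge updates the next-level algorithm receives, which is precisely the content of bullet~2 of \Cref{lem:clusterdec}. With this identification, no additive overhead or potential-function bookkeeping is needed and the product bound above applies verbatim.
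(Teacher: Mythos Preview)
Your argument is correct and follows essentially the same route as the paper: cascade the per-level amortized recourse bound of \Cref{lem:clusterdec} through the $O(\log^{1/4} n)$ levels of \Cref{lem:norecursivecalls2e}, then plug in the parameter values to get $2^{O(\log^{1-c'} n)} = n^{o(1)}$. One small correction: the identification ``recourse at level $\ell$ $=$ input updates at level $\ell+1$'' is justified not by bullet~2 of \Cref{lem:clusterdec} (which bounds the static \emph{size} of the contracted graph), but by the definition of recourse itself (the footnote: recourse is the number of modifications to the edges of the contracted graph), together with the fact that the contracted graph is precisely the input to the next level; bullet~1 then gives the multiplicative factor.
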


\begin{proof}
    The total recourse on the $i$-th level is $\Tilde O\PAR{(\frac \rho {\delta^3}\lmax)^{i-1}}$, as per \Cref{lem:clusterdec}.
    As $i= O(\log^{\frac 1 4} n)$ by the lemma above, we get that $\Tilde O\PAR{(\frac \rho {\delta^3}\lmax)^{i-1}} = 2^{O(\log^{1-c}n)}=n^{o(1)}$.
    Summing over all levels of recursion gives us the desired result.
\end{proof}

\begin{lemma}
    The amortized running time over all levels is $n^{o(1)}$ in \Cref{alg:dynclusterdecompose}.
\end{lemma}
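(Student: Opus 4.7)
The plan is to combine the per-level running time bound from \Cref{lem:clusterdec}(3) with the per-level recourse bound from \Cref{lem:clusterdec}(1) and the bound on the number of levels from \Cref{lem:norecursivecalls2e}, by tracking how a single update to the top-level graph cascades through the hierarchy.

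First I would argue that the recourse cascades multiplicatively across levels. Consider a single edge update at level $0$. By \Cref{lem:clusterdec}(1), this update generates amortized $\tilde O(\frac{\rho \lmax}{\delta^3})$ many edge updates to the contracted graph that is fed into the recursive call at level $1$. Iterating this argument, the total number of edge updates processed at level $\ell$ (amortized over a single top-level update) is at most $\tilde O\PAR{\frac{\rho \lmax}{\delta^3}}^{\ell}$. By \Cref{lem:clusterdec}(3), each such update at level $\ell$ is handled in amortized $n^{o(1)}$ time \emph{excluding} the further recursive call. Summing the work over all levels $\ell=0,1,\dots,O(\log^{1/4} n)$ (the depth bound from \Cref{lem:norecursivecalls2e}), the total amortized work per top-level update is bounded by
\[
\sum_{\ell=0}^{O(\log^{1/4} n)} n^{o(1)} \cdot \tilde O\PAR{\frac{\rho \lmax}{\delta^3}}^{\ell}
\;\le\; n^{o(1)} \cdot O(\log^{1/4} n) \cdot \tilde O\PAR{\frac{\rho \lmax}{\delta^3}}^{O(\log^{1/4} n)}.
\]

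Then I would substitute the parameter values fixed at the start of \Cref{sec:hierarchy}: $\rho = 2^{\Theta(\log^{1/2} n)}$, $\lmax = 2^{O(\log^{3/4-c} n)}$, and $\delta^{-1} = 2^{O(\log^{3/4-c} n)}$. This gives $\frac{\rho \lmax}{\delta^3} = 2^{O(\log^{3/4-c} n)}$, and raising this to the power $O(\log^{1/4} n)$ yields $2^{O(\log^{1-c} n)} = n^{o(1)}$. Multiplying by the per-update factor of $n^{o(1)}$ and the $O(\log^{1/4} n)$ factor for summing over levels still gives $n^{o(1)}$, which proves the claim.

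The only subtle point I would double-check is that the amortization composes correctly across levels: the amortized recourse bound of \Cref{lem:clusterdec}(1) is stated \emph{per updated edge} at a given level, so when we compose bounds across levels, the amortized totals multiply cleanly. To be careful, I would account separately for the $n^{o(1)}$ cost of also updating the mirror-cut data structure (via \Cref{thm:mirrorcuts}) and the LocalKCut data structure (via \Cref{thm:deterministic-dynamic}) at each level, but these are already absorbed into the per-level $n^{o(1)}$ time asserted in \Cref{lem:clusterdec}(3). The main potential obstacle is verifying that no single update at some intermediate level triggers a super-amortized burst of work that cannot be amortized against the inputs received so far; this follows because each of the three building blocks (the dynamic expander decomposition, the pre-cluster maintenance of \Cref{alg:dynclusterdecompose}, and the fragmenting of \Cref{thm:fragmenting}) are themselves amortized data structures whose amortization is preserved under composition.
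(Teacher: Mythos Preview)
Your proposal is correct and takes essentially the same approach as the paper: both argue that the recourse multiplies by $\tilde O(\frac{\rho\lmax}{\delta^3})$ per level, that the depth is $O(\log^{1/4} n)$, and that each level's per-update cost is $n^{o(1)}$ by \Cref{lem:clusterdec}(3), then combine these with the parameter values to get $n^{o(1)}$. The paper's version is terser because it packages the cascading-recourse computation into the immediately preceding lemma (``The amortized recourse computed over all recursion levels is $n^{o(1)}$'') and then simply multiplies by the per-update $n^{o(1)}$ and sums over levels; you have effectively inlined that preceding lemma into your argument.
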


\begin{proof}
    Each recursion level receives an amortized number of $n^{o(1)}$ updates as per the previous lemma, and requires $n^{o(1)}$ amortized time to process the update, by \Cref{lem:clusterdec}.
    Hence, on each level, after one update on the top level, the time to process that update is $n^{o(1)}$.
    Summing over all levels gives us the desired result.
\end{proof}

\begin{lemma}
    In \Cref{alg:dynclusterdecompose}, the total approximation ratio after all calls is $1$.
\end{lemma}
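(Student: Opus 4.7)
The plan is to reduce the claim to the structural result \Cref{prop:clusterhierarchy} by a level-by-level compounding argument, and then verify that the chosen parameter regime forces the resulting multiplicative error to round down to $1$ on integer cut-sizes.

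First I would unwind what \Cref{alg:dynclusterdecompose} together with \Cref{alg:dynamicestimate} maintain: at every time step the data structure encodes an $\hbar$-cluster hierarchy in the sense of \Cref{def:clusterhierarchy}, with the Mirror Cuts Data Structure (\Cref{thm:mirrorcuts}) active on every level. This is exactly the setup under which \Cref{prop:clusterhierarchy} applies, so for any true minimum proper cut $S$ with $\lmin \le \boundary S \le \lmax$ there is a cut $S'$ discovered on some level (either as a local cut in a mirror cluster or in the fully collapsed final graph) with $\boundary S' \le (1+2\delta)^{\hbar}\boundary S$. The outer loop in Step~4 of \Cref{alg:dynamicestimate} takes the minimum over all such candidates and the recursive calls, so the returned value is at most $(1+2\delta)^{\hbar}\boundary S$.

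Next I would bound $\hbar$ and control the compounded factor. By \Cref{lem:norecursivecalls2e} the recursion depth satisfies $\hbar = O(\log^{1/4} n)$, and our parameter regime gives $\delta = 2^{-\Theta(\log^{3/4-c}n)}$ and $\boundary S \le \lmax = 2^{O(\log^{3/4-c}n)}$. Since $2\delta\hbar \le 1$, we have
\[
(1+2\delta)^{\hbar} \;\le\; 1 + 4\delta\hbar \;=\; 1 + O\bigl(\log^{1/4}n\bigr)\cdot 2^{-\Theta(\log^{3/4-c}n)}.
\]
Choosing the hidden constant in the definition of $\delta$ large enough so that $4\delta\hbar\lmax < 1$ (which is possible because the $\log^{1/4} n$ factor is negligible compared with the exponential gap between $2^{\Theta(\log^{3/4-c}n)}$ and $2^{O(\log^{3/4-c}n)}$), we obtain $(1+2\delta)^{\hbar} < 1 + \tfrac{1}{\boundary S}$. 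This is exactly the hypothesis of the exact-approximation clause of \Cref{prop:clusterhierarchy}, applied with $\hbar$ levels of compounding folded into a single tightened $\delta' := 2\delta\hbar$. Equivalently, $\boundary S' \le (1+2\delta)^{\hbar}\boundary S < \boundary S + 1$, and since both quantities are integers, $\boundary S' = \boundary S$.

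Finally, since the algorithm returns the minimum over a collection of valid cuts (via the Mirror Cuts Data Structure correctness, \Cref{lem:correctmaintainmirrorcuts}, and the recursive calls), the returned value is at least $\boundary S = \lambda$, and the two inequalities together give exactness. I expect the only subtle point to be the dependence of the exponential constants inside $\delta$ and $\lmax$: one must argue that $\delta\cdot\hbar\cdot\lmax < 1$, which requires that the constant in the exponent of $\delta$ strictly exceeds the constant in the exponent of $\lmax$ (the extra $\log^{1/4}n$ factor from $\hbar$ is absorbed into the $\Theta(\cdot)$ in $\delta$ for any $c>0$). This choice is consistent with the parameter block stated at the beginning of \Cref{sec:hierarchy}, so the proof goes through.
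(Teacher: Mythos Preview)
Your argument is correct, but it takes a different and slightly heavier route than the paper's.

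The paper observes that exactness holds \emph{level by level}, so no compounding ever occurs. Concretely: at any single level, \Cref{lem:clusterdecomposition} gives a cut $S'$ with $\boundary S' \le (1+2\delta)\boundary S$. Since $\delta < 1/(2\lmax)$ and $\boundary S \le \lmax$, we get $\boundary S' < \boundary S + 1$, hence $\boundary S' = \boundary S$ by integrality. Thus the cut passed to the next level already has size exactly $\lambda$, and by induction every level is exact. This uses only the stated condition $\delta < 1/(2\lmax)$ and nothing about $\hbar$.

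Your approach instead compounds the $(1+2\delta)$ factor over all $\hbar$ levels and only then applies integrality. That forces you to need $(1+2\delta)^{\hbar}\lmax < \lmax + 1$, i.e.\ essentially $\delta\hbar\lmax < 1$, which is strictly stronger than $\delta\lmax < 1/2$ by a factor of $\hbar = O(\log^{1/4} n)$. You correctly note that this extra factor is absorbable into the $\Theta$ in $\delta = 2^{-\Theta(\log^{3/4-c}n)}$, so the argument closes, but it is an unnecessary detour. Also, your sentence invoking ``the exact-approximation clause of \Cref{prop:clusterhierarchy} with $\delta' := 2\delta\hbar$'' is not quite right: that clause takes the per-level $\delta$ as input, not a compounded one, and in fact its own justification is precisely the level-by-level integrality argument the paper uses. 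Your subsequent direct inequality $\boundary S' < \boundary S + 1$ is what actually carries the proof, so you can simply drop the reference to the clause.
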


\begin{proof}
    Each recursive call degrades the quality of the solution by a factor $(1+2 \delta)$ at most, by \Cref{lem:clusterdecomposition}. 

    Note that since we chose $\delta < \frac 1 {2\lmax}$, all cuts smaller than $\lmax$ are thus preserved by our algorithm.
    This shows that our algorithm is exact on every level, and thus is exact overall.
\end{proof}

\subsection{Proof of \Cref{lem:clusterdec}}
In this subsection, we are given a graph with updates, and our goal is to maintain a cluster decomposition of this graph such that no cluster contains a cut of cut-size at most $\lmax$ that is $(1-\delta)$-boundary sparse.
As in the static setting the main idea is to maintain a dynamic expander decomposition which we refine further. 

\begin{definition}
    We say that an edge $e=(u,v)$ is \emph{incident} to a cluster $C$ if either $u$ or $v$ (or both) are in $C$.
\end{definition}

\begin{algo}[Maintaining the cluster decomposition]\label{alg:dynclusterdecompose}
    Preprocessing: We run the static algorithm on the initial graph, with the preprocessing of \Cref{thm:expanderdecomposition} for the expander decomposition.

    Handling updates: 
    We give the next $O(\frac{m\phi}{\rho})$ updates to the dynamic expander decomposition algorithm.
    Every change output by that algorithm is then processed in the following way:
    \begin{enumerate}[noitemsep]
    \item Run \textsc{UpdatePartition} (from \Cref{alg:StaticAlgorithm}), which updates the partition $\P_1$, and the classification of the clusters as ``well-connected'', ``poorly-connected'' or ``fragmented''.
        \item Mark the endpoints of the updated edges as unchecked.
        \item  Then, for each affected ``well-connected'' cluster $C$ (i.e. containing an endpoint of an affected edge), find $2\lmax$ intercluster edges leaving that cluster whose endpoint in $C$ is checked, and mark their incident vertices in the cluster as unchecked.
        \item While there exists in $\C$ (the cluster decomposition) a ``well-connected'' cluster $C$ with an unchecked vertex, we run the Find and Cut subroutine (\Cref{alg:subroutine}) on $C$, which might change $\C$.
        \item For every affected ``poorly-connected'' or ``fragmented'' cluster $C$, revert any ``fragmented' edges it contains to ``intracluster'', and run \Cref{alg:fragmenting} on $C$.
    \end{enumerate}

    Every $O(\frac{m\phi}{\rho})$ updates, we restart from scratch, i.e., we run the preprocessing step on the full graph.
\item Rules for Cluster Classification:
\begin{enumerate}
    \item By default, every cluster is ``well-connected''
    \item Any cluster $C$ that satisfies $\boundary C \le 3\lmax$ becomes ``poorly-connected''
    \item Any cluster $C$ that satisfies $\boundary C \ge 6\lmax$ becomes ``well-connected''
    \item In $\P_2$, any cluster that is the result of the fragmenting algorithm becomes ``fragmented'' (note that a ``poorly-connected'' cluster that goes through the fragmenting algorithm untouched still becomes ``fragmented''). This overrides other rules.
\end{enumerate}
\end{algo}

We start by showing the correctness of the algorithm.

\subsubsection{Correctness}

\begin{lemma}\label{lem:keepinvariant}
    Let $C\in \P_1$ be a cluster that satisfies \Cref{inv:cuttingisgood} at time $t$.
    Assume that between times $t$ and $t'$, $C$ is subject to updates, and that the Find and Cut subroutine (\Cref{alg:subroutine}) is not run in between $t$ and $t'$ on the cluster\footnote{This can either be because the cluster is poorly-connected, fragmented, or because $t'$ is the first time after $t$ where an update is incident to $C$}.
    If we mark every node incident to an edge insertion or deletion as well as the nodes incident to $2\lmax$ intercluster edges incident to $C$ apart from the edges inserted between $t$ and $t'$  as unchecked, then $C$ satisfies \Cref{inv:cuttingisgood} at time $t'$.
\end{lemma}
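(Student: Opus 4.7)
The plan is to argue by contradiction. Suppose there exists a cut $S \subsetneq C$ at time $t'$ satisfying all three hypotheses of \Cref{inv:cuttingisgood} -- $(1-\delta)$-boundary-sparse in $C$, $\vol(S) \le \lmax/\phi$, and $\lmin \le w(S, C\setminus S) \le \lmax$ -- yet containing no unchecked vertex after the marking. Because every endpoint of every inserted or deleted edge between $t$ and $t'$ is marked unchecked, $S$ cannot contain such an endpoint, so every edge incident to $S$ is identical at the two times. In particular the three quantities $\vol(S)$, $w(S, C\setminus S)$, and $w(S, V\setminus C)$ are unchanged. If $S$ was also $(1-\delta)$-boundary-sparse in $C$ at time $t$, then $S$ satisfied the hypotheses of \Cref{inv:cuttingisgood} at $t$, which by assumption yields an unchecked vertex in $S$; because only the Find-and-Cut subroutine ever marks a vertex as checked and it was not invoked on $C$ in the interval $(t,t']$, that vertex is still unchecked at $t'$, a contradiction.

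It remains to handle the delicate case in which $S$ is sparse at $t'$ but not at $t$. The first sparsity inequality $w(S, C\setminus S) < (1-\delta)\, w(S, V\setminus C)$ involves only quantities that did not change, so it held at $t$ as well; hence the failure at $t$ must come from the second inequality, giving $w(S, C\setminus S) \ge (1-\delta)\, w(C\setminus S, V\setminus C)_t$. Combined with $w(S, C\setminus S) \le \lmax$ and $\delta \le 0.04$, this yields
\[
w(C\setminus S, V\setminus C)_t \;\le\; \frac{\lmax}{1-\delta} \;<\; 2\lmax.
\]
Now look at the intercluster edges of $C$ at time $t'$ that are \emph{not} newly inserted. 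Every one of the $w(S, V\setminus C)$ intercluster edges of $S$ is of this form, since any newly inserted such edge would have an endpoint in $S$; and sparsity at $t'$ forces $w(S, V\setminus C) \ge 1$. On the other hand, the non-newly-inserted intercluster edges on the $C\setminus S$ side number at most $w(C\setminus S, V\setminus C)_t < 2\lmax$. Therefore any selection of $2\lmax$ non-newly-inserted intercluster edges of $C$ must include at least one whose $C$-endpoint lies in $S$: either the total pool has size at most $2\lmax$ and is taken in full, or the $C\setminus S$ side can supply fewer than $2\lmax$ edges and the remainder must come from $S$. In either case a vertex of $S$ is marked unchecked, contradicting our assumption and completing the proof.

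The main obstacle is precisely the last case, in which the sparsity status of $S$ genuinely flips between $t$ and $t'$ because only the $C\setminus S$ side is perturbed by updates. The entire purpose of the ``$2\lmax$ intercluster'' slack marking is to catch exactly these cuts, and the quantitative heart of the argument is the bound $w(C\setminus S, V\setminus C)_t < 2\lmax$, which depends crucially on $\delta$ being bounded away from $1/2$ and on the cut-value of $S$ being at most $\lmax$.
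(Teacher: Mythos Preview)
Your proof is correct and follows essentially the same approach as the paper's: both argue by contradiction, reduce to the case where no edge incident to $S$ was updated so that $\vol(S)$, $w(S,C\setminus S)$, and $w(S,V\setminus C)$ are frozen, deduce that the only way sparsity can flip is via the $C\setminus S$ side, derive the key bound $w(C\setminus S,V\setminus C)_t < 2\lmax$, and conclude that the $2\lmax$ marked intercluster edges must hit $S$. Your write-up is in fact slightly more careful than the paper about distinguishing the time-$t$ and time-$t'$ counts of intercluster edges when arguing that the non-newly-inserted intercluster edges on the $C\setminus S$ side number fewer than $2\lmax$.
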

\begin{proof}
    Let $S$ be, at time $t'$, a $(1-\delta)$-boundary-sparse cut in $C$, with $w(S, C\setminus S) \le \lmax$, and $\vol (S) \le \frac \lmax \phi$.
    We need to show that $S$ contains an unchecked vertex at time $t'$. 
    We have multiple cases:
    
    Case 1: $S$ contains an unchecked vertex at time $t$:

    In that case, the same vertex is still unchecked at time $t'$.
    Indeed, since the Find and Cut subroutine was not run between $t$ and $t'$, no unchecked vertex becomes checked between the two times.

    Case 2: There is an edge with at least one endpoint in $S$ that was inserted or deleted.
    That update causes the corresponding endpoint to become unchecked. 
    Since the Find and Cut subroutine was not run between $t$ and $t'$, this endpoint is still unchecked at time $t'$.

    Case 3: If all vertices of $S$ were checked at time $t$ and no edge update was incident to $S$ between $t$ and $t'$:
    It follows by \Cref{inv:cuttingisgood} that $S$ was either:
    \begin{enumerate}[label=(\alph*), noitemsep]
        \item Not $(1-\delta)$-boundary sparse at time $t$, or
        \item $w(S, C\setminus S) > \lmax$ at time $t$, or
        \item $\vol (S) > \frac \lmax \phi$ at time $t$.
    \end{enumerate}
    As at time $t'$ we have that $w(S, C\setminus S)$ and no edge update was incident to $S$ between $t$ and $t'$, case (b) is impossible.

    As at time $t'$ we have that $\vol (S) \le \frac \lmax \phi$ and no edge update was incident to $S$ between $t$ and $t'$, case (c) is impossible.

    Let us thus look at case (a): At time $t$, $w(S, C\setminus S) \ge (1-\delta)\cdot \min\{w(S,V\setminus C), w(C\setminus S, V\setminus C)\}$.

    Since no edge update was incident to $S$ between $t$ and $t'$, we have that $w(S, V\setminus C)$ and $w(S, C\setminus S)$ remain unchanged between $t$ and $t'$.

    At time $t'$, by boundary sparseness, we have that $w(S, C\setminus S) < (1-\delta) w(S, V\setminus C)$.
    This inequality also holds at time $t$, and since $S$ is not $(1-\delta)$-boundary sparse at time $t$, we must have that at time $t$: $w(S,C\setminus S) \ge (1-\delta) w(C\setminus S, V\setminus C)$.

    But, as we are not in case (b), at time $t$, we have that $\lmax \ge w(S, C\setminus S) \ge (1-\delta) w(C\setminus S, V\setminus C)$.
    Therefore $w(C\setminus S, V\setminus C) < \frac 1 {1-\delta} \lmax<2\lmax$ as $\delta \in [0, \frac 1 2]$. 

    We have two cases: (i): $E(C, V\setminus C)$ contains at least $2\lmax$ many edges.
    As at least $2\lmax$ many edges in $E(C, V\setminus C)$ have their endpoints in $C$ unchecked, and there are strictly less than $2\lmax$ in $E(C\setminus S, V\setminus C)$, at least one edge in $E(S, V\setminus C)$ has its endpoint in $C$ unchecked. 
    That endpoint is in $S$ which concludes the proof.

    (ii): $E(C, V\setminus C)$ contains at most $2\lmax$ many edges.
    Since all of those edges have their endpoints unchecked, it remains to show that one of them has an endpoint in $S$.
    This is straightforward as $S$ is boundary sparse at time $t'$, which implies that $w(S, V\setminus C) > \frac 1 {1-\delta} w(S, C\setminus S) \ge 0$.    
\end{proof}

\begin{remark}
    One can apply \Cref{lem:keepinvariant} for exactly one update between $t$ and $t'$, which is what we do when the cluster is ``well-connected'' both before and after the update.
\end{remark}

\begin{corollary}\label{dynproba}
Assume that \Cref{inv:cuttingisgood} holds at some time $t$.
Assume the graph is subject to $t'-t$ updates.
   Then, in the partition $\P_1$ maintained by \Cref{alg:dynclusterdecompose} at time $t'$ no  ``well-connected'' cluster $C$ contains a cut $S$ that is $(1-\delta)$-boundary-sparse and such that $w(S, C\setminus S) \le \lmax$ and $\vol (S) \le \frac \lmax \phi$. In other words, $\P_1$ is a pre-cluster decomposition.
\end{corollary}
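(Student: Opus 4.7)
The plan is to derive the corollary from \Cref{lem:keepinvariant} by induction on the update sequence, combined with the termination condition of the while loop in \Cref{alg:dynclusterdecompose}.

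First, I would argue by induction on the number of updates processed since time $t$ that \Cref{inv:cuttingisgood} continues to hold after \Cref{alg:dynclusterdecompose} finishes handling each update. The base case holds by assumption at time $t$. For the inductive step, consider a single update and the subsequent processing. Each affected cluster falls into one of three categories. \textbf{(i)} If $C$ is ``well-connected'' both before and after the update, the algorithm marks the endpoints of the modified edge as unchecked and also marks the endpoints (in $C$) of up to $2\lmax$ intercluster edges whose $C$-endpoint was previously checked. By \Cref{lem:keepinvariant} (applied with consecutive times differing by a single update), this marking restores the invariant on $C$ before Find and Cut is called. \textbf{(ii)} If $C$ is ``poorly-connected'' or ``fragmented'' both before and after the update, then the algorithm reverts its ``fragmented'' edges and reruns \Cref{alg:fragmenting} from scratch on $C$; the resulting decomposition trivially satisfies the invariant on each output subcluster (the invariant is vacuous in this case because the corollary only concerns well-connected clusters, but even formally the output of fragmenting is never touched again until the next update). \textbf{(iii)} Transitions between categories (driven by changes to $\boundary C$) are handled the same way since the classification rule only depends on $\boundary C$ at the moment of inspection, and a well-connected cluster after the update still receives the full unchecking protocol before Find and Cut is called.

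Next, I invoke \Cref{lem:3.13} (or its immediate analogue): each call to the Find and Cut subroutine preserves \Cref{inv:cuttingisgood}, because its only actions are (a) issuing LocalKCut queries without modifying markings, (b) cutting along a discovered $(1-\delta)$-boundary-sparse cut and marking newly boundary-incident vertices as unchecked (which preserves the invariant by \Cref{lem:bsgood}), or (c) marking a vertex as checked only after LocalKCut has certified, via \Cref{claim:4.1} and the $\beta$-approximation guarantee of \Cref{thm:deterministic-dynamic}, that no qualifying $(1-\delta)$-boundary-sparse cut of size at most $\lmax$ and volume at most $\lmax/\phi$ containing that vertex exists. Thus, throughout the while loop, the invariant is maintained.

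Finally, I use the termination condition of the while loop: it exits only when no ``well-connected'' cluster contains an unchecked vertex. Combined with \Cref{inv:cuttingisgood} at time $t'$, suppose for contradiction that some well-connected cluster $C$ contains a $(1-\delta)$-boundary-sparse cut $S \subsetneq C$ with $\vol(S) \le \lmax/\phi$ and $w(S, C\setminus S) \le \lmax$. Since the input graph to this level always has minimum proper cut at least $\lmin$, we also have $w(S,C\setminus S) \ge \lmin$ (otherwise the cut $S$ in $G$ would have cut-size at most $w(S,C\setminus S) + w(S,V\setminus C)$; boundary-sparseness forces $w(S,V\setminus C)$ to also be small, so one can bound the total cut to contradict $\lambda\ge\lmin$, exactly as in the proof of \Cref{lem:atmosttwo}). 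Then \Cref{inv:cuttingisgood} guarantees an unchecked vertex in $S\subseteq C$, contradicting termination. Hence no such $S$ exists, and $\P_1$ satisfies \Cref{easycondition} of \Cref{def:clusterdecomposition} on every well-connected cluster, while clusters satisfying \Cref{hardcondition} (boundary-size at most $6\lmax$) are exactly the ``poorly-connected'' and ``fragmented'' ones by the classification rules. Thus $\P_1$ is a pre-cluster decomposition. The main subtlety is verifying case (i) of the inductive step: showing that marking $2\lmax$ boundary edges suffices to catch all newly created sparse cuts, which is exactly what \Cref{lem:keepinvariant} provides.
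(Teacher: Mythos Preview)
Your proposal is correct and follows essentially the same approach as the paper: invoke \Cref{lem:keepinvariant} to show the marking step restores \Cref{inv:cuttingisgood} after each update, invoke \Cref{lem:3.13} to show the Find and Cut subroutine preserves it, and then combine the invariant with the while-loop termination condition (no unchecked vertices remain in any well-connected cluster) to conclude. The paper's own proof is a two-line sketch citing exactly these two lemmata plus the assumption that the minimum proper cut of $G$ is at least $\lmin$; your write-up is simply a more explicit unrolling of the same argument.
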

    
\begin{proof}
This is a direct consequence of \Cref{lem:3.13} and \Cref{lem:keepinvariant}, as since there are no cuts at time $t'$ in $G$ that are strictly smaller than $\lmin$, and \Cref{alg:dynclusterdecompose} ensures no unchecked vertex remains in any ``well-connected'' cluster.
 \end{proof}

 \begin{lemma}
     The partition $\P_2$ maintained by the algorithm is a cluster decomposition associated to the pre-cluster decomposition $\P_1$.
 \end{lemma}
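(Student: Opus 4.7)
The plan is to verify directly that the partition $\P_2$ satisfies the definition of an $(\alpha, \phi, \lmax, H, \delta)$-cluster decomposition given just after \Cref{def:clusterdecomposition}. By \Cref{dynproba}, the partition $\P_1$ maintained by \Cref{alg:dynclusterdecompose} is a valid pre-cluster decomposition at every time step: well-connected clusters satisfy Condition~\ref{easycondition}, and every other cluster will be shown to satisfy Condition~\ref{hardcondition}. Conditions~(i) and~(ii) of the pre-cluster decomposition (connectedness and containment in a boundary-linked expander) are inherited from the dynamic expander decomposition of \Cref{thm:expanderdecomposition} together with the fact that each split in \Cref{alg:dynclusterdecompose} refines an existing cluster.

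To establish $\boundary C \le 6\lmax$ for every non-well-connected cluster, observe the hysteresis built into the classification rules: a cluster enters the ``poorly-connected'' state only when $\boundary C \le 3\lmax$, and exits it only when $\boundary C \ge 6\lmax$; a simple induction on the update sequence therefore shows that any currently poorly-connected cluster has $\boundary C \le 6\lmax$. Fragmented clusters of $\P_2$ are, by construction, outputs of \Cref{alg:fragmenting} applied to some cluster $C'$ of $\P_1$ with $\boundary C' \le 6 \lmax$, so by \pref{fragprop2} of \Cref{thm:fragmenting} each such sub-cluster inherits $\boundary C \le \boundary C' \le 6 \lmax$, and hence satisfies Condition~\ref{hardcondition}.

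It remains to show that $\P_2$ is produced from $\P_1$ exactly as \Cref{def:clusterdecomposition} demands, i.e., by fragmenting via \Cref{alg:fragmenting} every cluster that fails Condition~\ref{easycondition}, and leaving every other cluster intact. For well-connected clusters, \Cref{dynproba} guarantees Condition~\ref{easycondition} holds, and the algorithm never calls \Cref{alg:fragmenting} on them, so $\P_2$ agrees with $\P_1$ there. For poorly-connected or fragmented clusters, the last two steps of update processing first revert any stale ``fragmented'' edges to ``intracluster'' (restoring the $\P_1$-view of the cluster) and then invoke \Cref{alg:fragmenting}, which is exactly the required operation. Since preprocessing runs the static algorithm (whose Step~\ref{step:staticfragmenting} establishes the same invariant), and each subsequent update only needs to refragment the clusters that have changed, the invariant is maintained throughout.

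The main obstacle in making this rigorous is verifying that every cluster currently requiring fragmentation is actually reached by the re-fragmentation step, i.e.\ that no cluster with $\boundary C \le 6\lmax$ is mislabeled as well-connected and skipped. This reduces to checking that \UP refreshes the classification of every cluster whose boundary has changed (via newly relabeled inter-cluster edges). Because each update relabels only finitely many edges and touches only the incident clusters, a routine bookkeeping argument on the edge relabelings processed by \UP suffices.
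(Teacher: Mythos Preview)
Your proof is correct and follows the same two-step skeleton as the paper: invoke \Cref{dynproba} to conclude that $\P_1$ is a pre-cluster decomposition, then observe that the algorithm re-runs \Cref{alg:fragmenting} on every affected cluster so that $\P_2$ is the associated cluster decomposition. The paper's own proof is two sentences to this effect; your version simply unpacks more of what \Cref{dynproba} is implicitly claiming (the hysteresis argument for the $6\lmax$ bound, Conditions~(i)--(ii)), and explicitly flags the bookkeeping needed to ensure every stale fragmentation is refreshed. One minor point: your paragraph on fragmented sub-clusters inheriting $\boundary C \le 6\lmax$ is about $\P_2$-clusters, whereas Condition~\ref{hardcondition} in \Cref{def:clusterdecomposition} is a requirement on $\P_1$-clusters; the relevant use of that bound is as a precondition for \Cref{thm:fragmenting}, not as part of the pre-cluster decomposition definition.
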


 \begin{proof}
     By \Cref{dynproba}, $\P_1$ is a pre-cluster decomposition. 
     As we rerun the fragmenting algorithm on any cluster that has been modified in $\P_1$, this ensures $\P_2$ is the cluster decomposition associated to $\P_1$.
 \end{proof}

This, together with \Cref{lem:clusterdecomposition} shows correctness.

\subsubsection{Bounding the number of intercluster edges}

We now take a look at the number of intercluster edges just before a new rebuild, and aim to bound that number.
For that, as for the static algorithm we build below a forest of all the clusters that existed since the last rebuild, and analyze carefully the number of intercluster edges at each node of this forest.

Formally, let $M$ be the initial number of interexpander edges (after running the first static expander decomposition) and $R$ (for recourse) be the number of changes output by the dynamic expander decomposition.
We aim to bound the number of intercluster edges after all of the updates before the next rebuild.

For that, let $T$ be a time after an update. We build a forest $F$ of all the clusters ever created between the last rebuild before $T$ and time $T$.
The parent of each cluster is the cluster it was cut off from when it was created.
The roots are the expanders as output by the static expander decomposition.
We will denote by $t(C)$ the \emph{time} of cluster $C$, that is, the update at which $C$ was split (if it is an internal node), or $T$ otherwise, and $\boundary C(t)$ the boundary of cluster $C$ after update $t$.
Let $w_t(A,B)$ be the number of edges between $A$ and $B$ at time $t$.
Here, we assume that each update corresponds to one change output by the dynamic expander decomposition.

For each cluster $C$ let $L(C)$ be its boundary-size when it was split into two (if it is an internal cluster) or its boundary-size at time $T$ (if it is a leaf). 

We will then, for the analysis, add marks to each cluster in $\P_1$ as follows: A cluster $C$ that is a root gets mark $\oplus$ if $L(C)\ge 3\lmax$ and $\odot$ otherwise.
Then, in BFS fashion, each cluster $C$ whose parent is marked $\oplus$ gets marked $\oplus$ if $L(C) \ge 3\lmax$, and $\odot$ otherwise. If its parent is marked $\odot$, it gets marked $\oplus$ if $L(C) \ge 6\lmax$, and $\odot$ otherwise

Intuitively, a $\oplus$ labeled node is a cluster that is ``well-connected'' at the time it is split in two or at time $T$, and a $\odot$ is a ``poorly-connected'' cluster at time $T$.

The goal is to show that the sum of the labels of the leaves is at most $O(\frac {M+R} \delta)$.
As $M= O(m\phi)$ and $R=O(m\phi)$, this will yield at most $O(\frac {m\phi} \delta)$ many intercluster edges, which is crucial for bounding the recourse, as well as the running time (which depends on the number of intercluster edges).

We look at the following potential:
$$
\Phi(C) = \max\{0, \boundary C(t(C)) - 2.1 \lmax\}
$$
and the total potential:
$$
\Phi\{\C\} = \sum_{c\in\C} \Phi(C)
$$

Let $C_0$ be a node of $F$ that is marked $\oplus$ that is either a root of $F$ or has a parent labeled $\odot$.
We start with $\C=\{C_0\}$ and at each step, we will replace one element from $\C$ marked $\oplus$ with its two children until all elements of $\C=\C_f$ are either leaves of $F$ or marked $\odot$.

\begin{lemma}\label{lem:dynboundariessmaller}
    Let $C$ be a node in $F$ marked $\oplus$ and $S, C\setminus S$ its children, where $S$ is $(1-\delta)$-boundary sparse in $C$.
    Then we have that $\boundary C(t(C)) \ge \max\{\boundary S (t(C)), \boundary (C\setminus S)(t(C))\}+\frac{\delta \lmin} 2$.
\end{lemma}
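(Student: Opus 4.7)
The plan is to transcribe the static argument of \Cref{lem:boundariessmaller} essentially verbatim, but with every quantity evaluated at the specific update time $t(C)$ at which the split of $C$ into $S$ and $C\setminus S$ occurs. The two structural facts needed are: (a) at time $t(C)$ the cut $S$ is $(1-\delta)$-boundary-sparse in $C$, and (b) at time $t(C)$ every proper cut of $G$ has size at least $\lmin$. Fact (a) holds because every split performed by \Cref{alg:dynclusterdecompose} arises from \Cref{alg:subroutine}, from \textsc{UpdatePartition}, or from the fragmenting algorithm; the first of these only cuts on boundary-sparse cuts by construction, the second does by \Cref{lem:updateissparse}, and the third does by design of \Cref{alg:fragmenting}. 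Fact (b) is the standing assumption on the input to \Cref{pb:boundeddynamic} and hence holds at all times, in particular at $t(C)$.

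Given (a) and (b), I would first write out $(1-\delta)$-boundary-sparseness of $S$ in $C$ at time $t(C)$:
\[
w_{t(C)}(S, C\setminus S) \le (1-\delta)\min\{w_{t(C)}(S, V\setminus C),\; w_{t(C)}(C\setminus S, V\setminus C)\}.
\]
In particular $w_{t(C)}(C\setminus S, V\setminus C) \ge (1-\delta)^{-1} w_{t(C)}(S, C\setminus S) > w_{t(C)}(C\setminus S, S)$. Combined with $\boundary(C\setminus S)(t(C)) = w_{t(C)}(C\setminus S, V\setminus C) + w_{t(C)}(C\setminus S, S) \ge \lmin$ (which uses (b) and the fact that $C\setminus S$ is a proper cut at time $t(C)$, since it is a cluster in the decomposition at that instant), this yields
\[
w_{t(C)}(C\setminus S, V\setminus C) \ge \tfrac{\lmin}{2}.
\]

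Then I would compute the boundary of $S$ at $t(C)$:
\[
\boundary S(t(C)) = w_{t(C)}(S, V\setminus C) + w_{t(C)}(S, C\setminus S) \le w_{t(C)}(S, V\setminus C) + (1-\delta)\,w_{t(C)}(C\setminus S, V\setminus C),
\]
which equals $\boundary C(t(C)) - \delta\, w_{t(C)}(C\setminus S, V\setminus C) \le \boundary C(t(C)) - \delta\lmin/2$. The same reasoning applied with the roles of $S$ and $C\setminus S$ swapped (using the second half of the $\min$ in the sparseness inequality, together with $\boundary S(t(C))\ge\lmin$) gives $\boundary(C\setminus S)(t(C)) \le \boundary C(t(C)) - \delta\lmin/2$. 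Taking the maximum concludes the lemma.

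The only place where the dynamic setting differs from the static one is in justifying (a) and (b) at the exact time $t(C)$, and this is what I expect to be the main (minor) obstacle: one has to keep in mind that $C$ may have undergone edge updates between its creation and $t(C)$, so the relevant boundary and sparseness quantities must be read off at $t(C)$ rather than at the moment $C$ itself was created. Once that is clear, the purely algebraic part of the proof is identical to the static case.
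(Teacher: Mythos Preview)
Your proof is correct and is essentially the paper's own argument: evaluate everything at time $t(C)$, use boundary-sparseness to get $w(C\setminus S,V\setminus C)\ge\lmin/2$, and conclude $\boundary S\le\boundary C-\delta\lmin/2$ (then symmetry). One small correction to your framing: fact~(a) is a \emph{hypothesis} of the lemma, so you need not justify it, and in fact your justification is not quite right in the dynamic setting---the forest $F$ also contains splits coming from the recourse of the expander decomposition that are \emph{not} $(1-\delta)$-boundary-sparse (this is precisely why \Cref{lem:lemtwo} treats those cases separately and why the present lemma carries the sparseness assumption explicitly).
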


\begin{proof}
    In this proof, all quantities are considered at time $t(C)$.
    
    Since $S$ is a $(1-\delta)$-boundary-sparse cut, we have that $w(S, C\setminus S) \le (1-\delta)\min\{ w(S, V \setminus C), w(C\setminus S, V\setminus C)\}$.
    Therefore, $\boundary S = w(S, V\setminus C) + w(S, C\setminus S) \le w(S, V\setminus C) + (1-\delta) w( C\setminus S, V\setminus C) \le \boundary C - \delta w( C\setminus S, V\setminus C)$.

    But $w( C\setminus S, V\setminus C) + w( C\setminus S, S) = \boundary (C\setminus S) \ge \lambda_{\min}$.
    Since $w( C\setminus S, V\setminus C) > w( C\setminus S, S)$ (by boundary-sparsity of $S$), we have that $w( C\setminus S, V\setminus C) \ge \frac {\lambda_{\min}} 2 $ and the result follows by symmetry for $\boundary (C\setminus S)$.    
\end{proof}

\begin{lemma}\label{lem:lemtwo}
    Let $C \in F$ be a set in $\C$ replaced with $S$ and $C\setminus S$.
    Let $u(C)$, $u(S)$ and $u(C\setminus S)$ be the number of updates to $S$ and $C\setminus S$ respectively for updates between $t(C)$ (excluded) and $t(S), t(C\setminus S)$ (included).

    Then $\Phi(S) + \Phi(C\setminus S) \le \Phi(C) - \frac{\delta \lmin}{2} + u(S)+u(C\setminus S)+2u(C)$.
\end{lemma}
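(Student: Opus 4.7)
The plan is to split the analysis into two pieces: first bound how much each child's boundary can grow between time $t(C)$ and its own split time using the update counts, then reduce to a static-style case analysis mirroring \Cref{lem:boundariessmaller} (in the proof of \Cref{lem:numedges}), but now using \Cref{lem:dynboundariessmaller}.

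For the first step, the only edges whose updates can change $\boundary S$ are those in $E(S, V\setminus C)\cup E(S, C\setminus S)$, and each such update changes the boundary by at most $1$. Interpreting $u(S)$ as the updates to $E(S, V\setminus C)$ in $(t(C), t(S)]$, $u(C\setminus S)$ as those to $E(C\setminus S, V\setminus C)$ in $(t(C), t(C\setminus S)]$, and $u(C)$ as the updates to $E(S, C\setminus S)$ over $(t(C), \max\{t(S), t(C\setminus S)\}]$, we obtain $\boundary S(t(S)) \le \boundary S(t(C)) + u(S) + u(C)$ and the analogous bound for $C\setminus S$. Since $x \mapsto \max\{0, x - 2.1\lmax\}$ is $1$-Lipschitz, summing gives
\[
\Phi(S) + \Phi(C\setminus S) \le \overline{\Phi}(S) + \overline{\Phi}(C\setminus S) + u(S) + u(C\setminus S) + 2u(C),
\]
where $\overline{\Phi}(X) := \max\{0, \boundary X(t(C)) - 2.1\lmax\}$. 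The factor $2$ in front of $u(C)$ is exactly the contribution of the edges in $E(S, C\setminus S)$, which sit on the boundary of both children at once.

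For the second step, show $\overline{\Phi}(S) + \overline{\Phi}(C\setminus S) \le \Phi(C) - \delta\lmin/2$ by the same three-case split as in the static argument. Since $C$ is marked $\oplus$, we have $\boundary C(t(C)) \ge 3\lmax$. In Case 1, when both $\boundary S(t(C))$ and $\boundary(C\setminus S)(t(C))$ exceed $2.1\lmax$, use $\boundary S + \boundary(C\setminus S) = \boundary C + 2w(S, C\setminus S)$ together with $w(S, C\setminus S) \le \lmax$ (which holds because every split in $\C$ is along a cut found by LocalKCut or by \UP, both capped at $\lmax$) to obtain a drop of at least $0.1\lmax \ge \delta\lmin/2$. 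In Case 2, exactly one side (WLOG $S$) lies below $2.1\lmax$; then \Cref{lem:dynboundariessmaller} gives $\overline{\Phi}(C\setminus S) - \Phi(C) \le \boundary(C\setminus S)(t(C)) - \boundary C(t(C)) \le -\delta\lmin/2$. In Case 3, both lie below $2.1\lmax$ so $\overline{\Phi}(S) + \overline{\Phi}(C\setminus S) = 0$, and the $\oplus$ status ensures $\Phi(C) \ge 0.9\lmax \ge \delta\lmin/2$.

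The main subtlety is the book-keeping for updates to $E(S, C\setminus S)$: these are precisely the edges internal to $C$ that become cross-cluster after the split, and because they lie on the boundary of both $S$ and $C\setminus S$, a single update to such an edge can inflate both boundaries, producing the factor $2$. Everything else (the bound $w(S, C\setminus S) \le \lmax$ and the applicability of \Cref{lem:dynboundariessmaller}) is immediate from $C$ being $\oplus$ and the splits in $\C$ being along $(1-\delta)$-boundary-sparse cuts.
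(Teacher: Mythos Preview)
Your decomposition into a ``time-shift'' step followed by a static three-case analysis is clean, but it rests on an assumption that is false in the dynamic setting: you assert that ``the splits in $\C$ are along $(1-\delta)$-boundary-sparse cuts'' and hence $w_{t(C)}(S,C\setminus S)\le\lmax$. In the dynamic algorithm, a cluster $C\in\P_1$ can also be split because the \emph{dynamic expander decomposition} (\Cref{thm:expanderdecomposition}) moves some edges from intra-expander to inter-expander. Such a recourse split need not be $(1-\delta)$-boundary-sparse, and $w_{t(C)}(S,C\setminus S)$ can be arbitrarily large. This breaks both your Case~1 (where you need $w(S,C\setminus S)\le\lmax$) and your Case~2 (where you invoke \Cref{lem:dynboundariessmaller}, whose hypothesis is precisely that the split is boundary-sparse).

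The role of the term $2u(C)$ is exactly to absorb this: when the split of $C$ comes from expander recourse, every edge in $E(S,C\setminus S)$ is itself an update charged to $C$, so $u(C)\ge w_{t(C)}(S,C\setminus S)$. Your interpretation of $u(C)$ as ``updates to $E(S,C\setminus S)$ in $(t(C),\max\{t(S),t(C\setminus S)\}]$'' places these edges outside the window and so loses the mechanism that makes the inequality go through. This is why the paper's proof carries five cases rather than three: cases in which the split is boundary-sparse (where $w(S,C\setminus S)\le\lmax$ and \Cref{lem:dynboundariessmaller} applies) are handled as in your argument, while the recourse cases instead use $u(C)\ge w_{t(C)}(S,C\setminus S)$ to pay for the possibly large cut. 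Your reduction to the static picture would be valid \emph{after} separating off the recourse splits and charging their cut-size to $u(C)$; as written it does not cover them.
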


\begin{proof}
    Recall that $\boundary C(t(C)) \ge 3\lmax$, which imply that $\Phi(C) \ge 0.9\lmax$.
    We then have five cases:
    \begin{itemize}
        \item if both $\boundary S(t(S)), \boundary (C\setminus S)(t(C\setminus S))\le 2.1\lmax$:
        $$
        \Phi(S) + \Phi(C\setminus S) = 0 \le \Phi(C)  - \frac {\delta \lmax} {2}
        $$
        as $\Phi(C) - \frac {\delta \lmax}{2} \ge 0.9\lmax - \frac {\delta \lmax}{2} >0$.
        \item if both $\boundary S(t(S)), \boundary (C\setminus S)(t(C\setminus S))> 2.1\lmax$, and $S$ is a cut that stems from the recourse of the expander decomposition or other updates to the partition:
        \begin{multline*}
        \Phi(S) + \Phi(C\setminus S) = \boundary S(t(S)) +\boundary (C\setminus S)(t(C\setminus S)) - 2\times 2.1\lmax\\
        \overset{(i)}\le \boundary S(t(C))+\boundary (C\setminus S)(t(C)) -2\times 2.1\lmax + u(S)+u(C\setminus S)\\
        \le \boundary C(t(C)) +2w_{t(C)}(S, C\setminus S) -2\times 2.1\lmax + u(S) + u(C\setminus S)\\
        \le \Phi(C)  +2\lmax + 2u(C) -2.1\lmax +u(S)+u(C\setminus S)
        \le \Phi(C) - \frac {\delta \lmax}{2} +u(S)+
        u(C\setminus S)+2u(C)
        \end{multline*}
        Where (i) stems from the fact that between times $t(C)$ and $t(S)$, at most $u(S)$ many edges got added to $S$ and therefore $\boundary S$.
         
        \item if both $\boundary S(t(S)), \boundary (C\setminus S)(t(C\setminus S))> 2.1\lmax$, and $S$ is a $(1-\delta)$-boundary-sparse cut of cut-size at most $\lmax$:
        \begin{multline*}
        \Phi(S) + \Phi(C\setminus S) = \boundary S(t(S)) +\boundary (C\setminus S)(t(C\setminus S)) - 2\times 2.1\lmax\\
        \overset{(i)}\le \boundary S(t(C))+\boundary (C\setminus S)(t(C)) -2\times 2.1\lmax + u(S)+u(C\setminus S)\\
        \le \boundary C(t(C)) +2w_{t(C)}(S, C\setminus S) -2\times 2.1\lmax + u(S) + u(C\setminus S)\\
        \le \Phi(C)  +2\lmax -2.1\lmax +u(S)+u(C\setminus S)
        \le \Phi(C) - \frac {\delta \lmax}{2} +u(S)+
        u(C\setminus S)
        \end{multline*}
        Where (i) stems from the fact that between times $t(C)$ and $t(S)$, at most $u(S)$ many edges got added to $S$ and therefore $\boundary S$.
        \item if wlog $\boundary S(t(S))\le 2.1\lmax$ and $\boundary (C\setminus S)(t(C\setminus S))> 2.1\lmax$, and $S$ is a $(1-\delta)$-boundary-sparse cut:

    \begin{align*}
        \Phi(S)+\Phi(C\setminus S)&= 0 + \boundary(C\setminus S)(t(C\setminus S))-2.1\lmax\\
        &\le 0 + \boundary(C\setminus S)(t(C))+u(C\setminus S)-2.1\lmax\\
        &\le \boundary C(t(C))-w_{t(C)}(S, V\setminus C) + w_{t(C)}(S, C\setminus S)+u(C\setminus S)-2.1\lmax\\
        &\overset{(ii)}\le \Phi(C)-\delta w_{t(C)}(S, V\setminus C) +u(C\setminus S)\\
    \end{align*}
    Where $(ii)$ follows from the $(1-\delta)$ boundary-sparsity of $S$ at time $t(C)$.
     We conclude using the following claim:
    \begin{claim}
        For any cluster $C$ and $(1-\delta)$-boundary sparse cut $S\subsetneq C$, we have that $w(S, V\setminus C) \ge \frac \lmin 2$.
    \end{claim}
    \begin{proof}
        Assume by contradiction that we have $w(S, V\setminus C) < \frac \lmin 2$.
        We are going to show that in that case,  $\boundary S < \lmin$ holds, a contradiction.
        By boundary sparsity, we have that $w(S, C\setminus S)<(1-\delta) w(S, V\setminus C)<(1-\delta)\frac \lmin 2 $, and therefore, $\boundary S = w(S, C\setminus S) + w(S, V\setminus C)<\lmin$.       
    \end{proof}

        \item if wlog $\boundary S(t(S))\le 2.1\lmax$ and $\boundary (C\setminus S)(t(C\setminus S))> 2.1\lmax$, and $S$ is a cut that stems from the recourse of the expander decomposition or other updates to the partition, that is not boundary-sparse:

    \begin{align*}
        \Phi(S)+\Phi(C\setminus S)&= 0 + \boundary(C\setminus S)(t(C\setminus S))-2.1\lmax\\
        &\le 0 + \boundary(C\setminus S)(t(C))+u(C\setminus S)-2.1\lmax\\
        &\le \boundary C(t(C))-w_{t(C)}(S, V\setminus C) + w_{t(C)}(S, C\setminus S)+u(C\setminus S)-2.1\lmax\\
        &\overset{(ii)}\le \Phi(C)-\delta \lmax+2u(C) +u(C\setminus S)\\
    \end{align*}
    Where $(ii)$ follows from the the fact that $S$ is non-boundary sparse and thus $u(C) \ge  w_{t(C)}(S, C\setminus S) \ge (1-\delta)w_{t(C)}(S, V\setminus C)$ and thus $u(C) \ge  \frac \lmin 2$.

    \end{itemize}
\end{proof}

\begin{lemma}\label{lem:sumfprime}
    The sum of all the labels of the leaves of $F$ does not exceed $O\PAR{\frac {M+R} {\delta}}$.
\end{lemma}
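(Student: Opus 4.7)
The plan is to telescope the potential inequality from \Cref{lem:lemtwo} along the splits of $F$, then partition the leaves into those with small label ($L(C)\le 2.2\lmax$) and those with large label, and bound each group separately. Concretely, I would start from the set $\C_0$ of roots of $F$ and repeatedly replace a $\oplus$-marked internal cluster $C\in\C$ by its two children $S,C\setminus S$, applying \Cref{lem:lemtwo} at each step. Summing these inequalities telescopes the potential to give
\[
\Phi(\C_f)\;\le\;\Phi(\C_0)\;-\;\tfrac{\delta\lmin}{2}\cdot(\text{\# splits})\;+\;\Sigma_u,
\]
where $\C_f$ is the set of leaves of $F$ and $\Sigma_u$ is the total contribution of the update terms $u(S)+u(C\setminus S)+2u(C)$ over all splits.

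The first step is then to bound $\Phi(\C_0)$ and $\Sigma_u$. For the roots, $\Phi(\C_0)\le\sum_{V_i}\boundary V_i = 2M$, since the roots of $F$ are exactly the expanders output by the initial expander decomposition and the total count of inter-expander edges is $M$. For $\Sigma_u$, the key observation is that, along any root-to-leaf path in $F$, the intervals $[t(\text{parent}),t(\text{child})]$ form a partition of the timeline, so each true edge update (either an original edge update to the graph, or a recourse change from the dynamic expander decomposition) is counted at most once per level in the $u(S)+u(C\setminus S)$ terms; the $2u(C)$ term charges only the specific recourse changes that triggered the split of $C$, which sum to at most $O(R)$ across the whole forest. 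Hence $\Sigma_u = O(R)$, so
\[
\Phi(\C_f)\;+\;\tfrac{\delta\lmin}{2}\cdot(\text{\# splits}) \;=\; O(M+R).
\]
In particular, the number of splits is $O((M+R)/(\delta\lmin))$, and therefore the total number of leaves is also $O((M+R)/(\delta\lmin))$.

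The final step is to convert this into a bound on $\sum_{C\in\C_f}L(C)$. Split the sum into the leaves with $L(C)\le 2.2\lmax$ and those with $L(C)>2.2\lmax$. The first group contributes at most $2.2\lmax\cdot|\C_f|=O(\lmax(M+R)/(\delta\lmin))=O((M+R)/\delta)$, using $\lmax\le 1.2\lmin$. For the second group, $L(C)>2.2\lmax$ implies $\Phi(C)=L(C)-2.1\lmax\ge L(C)/22$, so $L(C)=O(\Phi(C))$, and the contribution of this group is $O(\Phi(\C_f))=O(M+R)$. Adding the two contributions gives the claimed bound $O((M+R)/\delta)$.

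The main obstacle, and the place where care is required, is the bookkeeping of the update terms $\Sigma_u$: one must argue that each individual edge update or expander-decomposition recourse change contributes only $O(1)$ overall to $\Sigma_u$ across all splits, despite appearing in potentially many ancestor-chain intervals. This follows because the intervals $[t(\text{parent}),t(\text{child})]$ along any path from a root to a leaf are disjoint, so each update lies in at most one such interval per path, and a single update is tied to a single current cluster at the time of the update; the additional $2u(C)$ term then contributes only to the unique split that the recourse event directly causes, so the total cost is $O(R)$ rather than $O(R\cdot\text{depth})$.
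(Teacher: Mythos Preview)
Your overall strategy (telescope \Cref{lem:lemtwo}, then split the leaf sum at the threshold $2.2\lmax$) is exactly what the paper does, and your bound $\Sigma_u=O(R)$ via the disjoint-interval argument is essentially correct. However, there is a genuine gap that the paper has to work around and that your writeup glosses over.

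The gap is the claim that $\C_f$ equals ``the set of leaves of $F$.'' \Cref{lem:lemtwo} (and its proof) requires $\boundary C(t(C))\ge 3\lmax$, i.e.\ that $C$ is $\oplus$-marked. You correctly only replace $\oplus$-marked clusters, but then the process stalls whenever it reaches a $\odot$-marked node. Such nodes can be \emph{internal} in $F$: a poorly-connected cluster can still be split in $\P_1$ by \textsc{UpdatePartition} (expander-decomposition recourse or small recursive cuts). So after your telescoping, $\C_f$ contains $\odot$-marked internal nodes, and it does \emph{not} contain the actual leaves of $F$ that lie below them, nor the leaves inside $\oplus$-components re-rooted further down (an $\odot$ node can have a $\oplus$ child once its boundary climbs back above $6\lmax$). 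Your bound on $\sum_{C\in\C_f}L(C)$ therefore does not bound the quantity in the lemma.

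The paper fixes this in two steps. First, the telescoping is not started only from the roots of $F$, but from the set $\C_0$ of all $\oplus$-marked clusters that are either roots or have a $\odot$-marked parent; this captures every $\oplus$-component. Bounding $\Phi(\C_0)$ then needs an extra argument: the non-root members of $\C_0$ have boundary $\ge 6\lmax$ while their $\odot$-parent had boundary $<6\lmax$ at its split time, so the increase is charged to recourse, giving $\Phi(\C_0)=O(M+R)$ (your bound $\Phi(\C_0)\le 2M$ is also slightly off for the same reason: potentials are evaluated at split time, not creation time, so roots already need $M+R$). Second, the paper handles the leaves that sit inside maximal $\odot$-subtrees separately, observing that within such a subtree boundaries only change through recourse, so their total label is again $O(M+R)$. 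Without these two additional pieces your argument does not reach all leaves of $F$.
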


\begin{proof}
    Let us consider one cluster $C_0$ be a node of $F$ that is marked $\oplus$ that is either a root of $F$ or has a parent labeled $\odot$, and we consider all elements of $F$ that are reachable form $C_0$ only through paths with innner nodes marked $\oplus$. We start with $\Phi(C_0)=O(\boundary C_0(t_0)+u(C_0))$, where $t_0$ is the time of the last rebuild.
    By \Cref{lem:lemtwo}, the potential decreases by at least $\frac {\delta\lmin} 2 -u(S) - u(C\setminus S)+2u(C)$ every time we replace a cluster $C$ by its children $S$ and $C\setminus S$.
    Therefore, after $X$ many replacements each of which increases the number of clusters by 1, the potential is at most $\Phi(C_0) -X\frac {\delta \lmin} 2 + 3\sum_{C \in F(C_0)} u(C) \le \Phi(C_0)  +U(C_0)-X\frac {\delta \lmin} 2$, where $F(C_0)$ is the component of $C_0$ with inner nodes marked $\oplus$ in $F$ and $U(C_0)=3\sum_{C \in F(C_0)} u(C)$.
    The potential being positive, we have that $X=O\PAR{\frac{\Phi(C_0) + U(C_0)} {\delta \lmin}}$, and thus we have $O\PAR{\frac{\Phi(C_0) + U(C_0)} {\delta \lmin}}$ many clusters descendents of $C_0$ in $F$ as described above.

    Let $\C_f(C_0)$ be the set of all leaves in the component of $C_0$ in $F$.
    Recall that $L(C) = \boundary C(t(C)$.
    We then have:
    {\renewcommand{\boundary}[1]{L(#1)}
    \begin{align*}
        \sum_{C \in \C_f(C_0)} \boundary C &= \sum_{\substack{C \in \C_f(C_0)\\\boundary C \le 2.2\lmax}} \boundary C+\sum_{\substack{C \in \C_f(C_0)\\\boundary C > 2.2\lmax}} \boundary C\\
        \intertext{In the first sum, we know the that the total number of clusters is at most $O\PAR{\frac{\Phi(\C_0) + U(C_0)} {\delta \lmin}}$. For the second sum, note that if $\boundary C \ge 2.2\lmax$, then $\boundary C = O(\Phi(C))$.}
        &\le  O\PAR{\frac{\Phi(\C_0) + U(C_0)} {\delta \lmin}} 2.2\lmax+\sum_{\substack{C \in \C_f(C_0)\\\boundary C > 2.2\lmax}} O(\Phi( C))\\
        &\le  O\PAR{\frac{\Phi(\C_0) + U(C_0)} {\delta }}+ O(\Phi(\C_f(C_0))) = O\PAR{\frac{\Phi(\C_0) + U(C_0)} {\delta }}\\
    \end{align*}}

    For the last equality, recall that the sum of potentials of the two children of a node in $F$ is less than the potential of the parent to which we add the number of updates made to the children. It follows by induction that for any set of descendants of $C_0$ such that none is an ancestor of the other, the sum of their potentials is less than the potential of $C_0+U(C_0)$.
    As the clusters in $C_f(C_0)$ fulfill this condition, it follows that
    $\sum_{\substack{C \in \C_f(C_0)\\\boundary C > 2.2\lmax}} O(\Phi( C)) = O(\Phi(\{C_0\}+U(C_0))$.

    Summing over all such choices of $C_0$ in $F$, we get that the sum of the boundaries of all leaves of $F$ with a parent marked $\oplus$ does not exceed $O(\Phi(\{\C_0\}+U(C_0))$, where $\C_0$ is the set of all $C_0$ as described in the beginning of this proof.

    Let us now estimate $\Phi(\C_0)$.
    In $\C_0$, we have the roots in $F$ that are an output from the static expander decomposition, as well as nodes marked $\oplus$ whose parents are marked $\odot$.
    The sum of the boundaries of the expanders is $M$ initially, and does not exceed $M+R$ when they are cut.
    The sum of the boudaries of the nodes marked $\oplus$ whose parents are marked $\odot$ is at most $O(R)$, as they have ancestors with boundary-size at most $3\lmax$ while they have boundary-size at least $6\lmax$. This boundary increase can only be due to the recourse $R$.

    Therefore $\Phi(\C_0) = O(M+R)$.

    Finally, notice that all other leaves of $F$ have boundary at most $O(M+R)$ as well, as all nodes that are marked $\odot$ who are roots or have a node marked $\oplus$ have boundary at most $O(M+R)$ by the above analysis.
    The boundary of their children can only increase due to the partition changing, which is due to recourse, and thus the boundary-size of their descendents is at most $O(M+R)$ as well.
    \end{proof}

    \begin{lemma}\label{lem:boundaries}
        The number of intercluster edges in $\P_1$ is $O\PAR{\frac {M+R} \delta}$.

        The number of intercluster edges in $\P_2$ (that is, edges labeled ``intercluster'' or ``fragmented'' in our algorithm) is $\tilde O\PAR{\frac {M+R} {\delta^3}}$.
    \end{lemma}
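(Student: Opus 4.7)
The two bounds follow, respectively, by inspecting the leaves of the forest $F$ and then quantifying how much fragmenting can blow up this count. For the bound on $\P_1$, I would observe that by construction the leaves of $F$ are exactly the clusters in the current pre-cluster decomposition $\P_1$, and that for any leaf $C$ the label $L(C)$ coincides with $\boundary C$ at time $T$. Every intercluster edge of $\P_1$ has two endpoints lying in (at most two) leaves of $F$, so it is counted at most twice in $\sum_{C\in\P_1}\boundary C$. Applying \Cref{lem:sumfprime} then gives
\[
\#\{\text{intercluster edges in }\P_1\}\;\le\;\tfrac12\sum_{C\in\P_1}\boundary C\;=\;O\!\left(\tfrac{M+R}{\delta}\right),
\]
which is the first claim.

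For $\P_2$, the plan is to charge the new ``fragmented'' edges to the boundary-size of the $\P_1$-clusters that get fragmented. Recall that the fragmenting algorithm is only called on poorly-connected or fragmented $\P_1$-clusters, which by definition satisfy $\boundary C\le 6\lmax$. By \Cref{thm:fragmenting}, such a $C$ is decomposed into $k=O(\delta^{-2}\log^{O(1)}|C|)=\tilde O(\delta^{-2})$ sub-clusters $C_1,\dots,C_k$, each satisfying $\boundary C_i\le\boundary C$. Hence $\sum_i \boundary C_i\le\tilde O(\delta^{-2})\cdot\boundary C$.

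Now let $F(C)$ denote the number of edges strictly inside $C$ that become ``fragmented'' during the decomposition of $C$. A standard double-counting gives
\[
\sum_{i=1}^k\boundary C_i \;=\; \boundary C \;+\; 2F(C),
\]
because each edge from $C$ to $V\setminus C$ contributes once to the right-hand side while each fragmented edge inside $C$ contributes twice. Consequently $F(C)\le\tfrac12\bigl(\tilde O(\delta^{-2})-1\bigr)\boundary C=\tilde O(\delta^{-2})\boundary C$.

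Summing over all $\P_1$-clusters that are fragmented, and using the first part of the lemma together with $\sum_C \boundary C = O((M+R)/\delta)$,
\[
\#\{\text{intercluster edges in }\P_2\} \;\le\; \#\{\text{intercluster edges in }\P_1\} + \sum_{C\text{ fragmented}} F(C) \;\le\; O\!\left(\tfrac{M+R}{\delta}\right)+\tilde O(\delta^{-2})\cdot O\!\left(\tfrac{M+R}{\delta}\right)\!,
\]
which is $\tilde O((M+R)/\delta^3)$, giving the second claim. The only non-trivial input is \Cref{lem:sumfprime}, which has already been established; the rest is bookkeeping, and I expect the only subtlety to be making sure that well-connected clusters (which are never fragmented) contribute to $\P_2$ exactly as they do to $\P_1$, so they do not need any extra accounting.
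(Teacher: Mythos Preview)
Your proposal is correct and follows essentially the same approach as the paper: the first bound comes directly from \Cref{lem:sumfprime} via double-counting boundaries of the leaves of $F$, and the second bound follows by applying the $\tilde O(\delta^{-2})$ blowup from \Cref{thm:fragmenting} to each poorly-connected cluster and summing. Your write-up merely makes explicit the bookkeeping (the identity $\sum_i\boundary C_i=\boundary C+2F(C)$ and the handling of well-connected clusters) that the paper's two-sentence proof leaves implicit.
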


    \begin{proof}
        The first statement is immediate from \Cref{lem:sumfprime}.

        For the second statement, by \Cref{thm:fragmenting}, fragmenting a cluster increases its boundary-size by at most a $\frac 1 {\delta^2} \poly \log n$ factor. 
        Applying that on all ``poorly-connected'' clusters in $\P_1$ gives the result.
    \end{proof}

\subsubsection{Running time}

\begin{lemma}
    The total number of times, over the $O(\frac{m\phi}\rho)$ updates to the input of \Cref{alg:dynclusterdecompose}, we process an unchecked vertex is $O(\frac {M+R} {\delta}  \frac \lmax \phi)$.
\end{lemma}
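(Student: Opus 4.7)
The plan is to bound the total number of processings by separately bounding (a) the total number of \emph{unchecking events} that occur during the epoch, and (b) the number of times a single vertex can be processed between two of its consecutive uncheckings. Charging each processing to the unchecking event that most recently flipped its vertex to the unchecked state then yields the bound by multiplication.

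For (a), I would enumerate the three sources of unchecking events in \Cref{alg:dynclusterdecompose}. First, the endpoints of each change produced by the dynamic expander decomposition (including the $O(m\phi/\rho)$ input updates and the $R$ recourse changes) get marked unchecked; this contributes $O(R)$ uncheckings. Second, for each affected well-connected cluster the algorithm marks the endpoints of $2\lmax$ intercluster edges as unchecked; as only $O(1)$ clusters are affected per change, this contributes $O(R\lmax)$. Third, each time \Cref{alg:subroutine} cuts a cluster along a $(1-\delta)$-boundary-sparse cut, endpoints of newly created intercluster edges are unchecked; by \Cref{lem:boundaries} at most $O((M+R)/\delta)$ intercluster edges of $\P_1$ are ever created, so this contributes $O((M+R)/\delta)$. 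Since $\delta\le 1$ and $\lmax\ge 1$, the total number of unchecking events is at most $O((M+R)\lmax/\delta)$.

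For (b), I would mimic the argument of \Cref{lem:nottoomanykargers}. Each processing of $v$ either results in a cut (with $v$ staying unchecked in the smaller side) or marks $v$ as checked. After the very first processing of $v$ that results in a cut, the cluster containing $v$ has volume at most $\nu=4\lmax/\phi$, because LocalKCut only returns cuts of volume at most $\nu$. Every subsequent cut made while $v$ is in its current unchecked period removes the side $C\setminus S$ from $v$'s cluster, and $\vol(C\setminus S)\ge \boundary(C\setminus S)\ge\lmin$ by the assumption on the minimum proper cut. Hence at most $O(\lmax/(\phi\lmin))=O(1/\phi)$ cuts can be charged to a single unchecking of $v$, plus at most one final processing that turns $v$ to checked. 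Multiplying the two bounds gives $O((M+R)\lmax/\delta)\cdot O(1/\phi)=O\!\left(\frac{M+R}{\delta}\cdot\frac{\lmax}{\phi}\right)$, as claimed.

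The main subtlety I expect to verify carefully is that the $2\lmax$-boundary marking in step~(b) above only fires once per change from the dynamic expander decomposition and is not re-triggered by the intermediate cuts executed inside \Cref{alg:subroutine}; if this were not the case, the $O(R\lmax)$ bound would have to be replaced by something driven by the number of cuts. Rereading \Cref{alg:dynclusterdecompose} confirms that step~(3) is executed at the outer level (once per expander-decomposition change), while \Cref{alg:subroutine} only unchecks endpoints of newly cut edges, which is already accounted for in source~(c). With this in hand, the computation above goes through.
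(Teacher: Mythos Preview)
Your proof is essentially the paper's argument, organized more explicitly. The paper also bounds the total number of unchecking events (splitting them into endpoints of updates, endpoints of newly created intercluster edges via \Cref{lem:boundaries}, and the $2\lmax$ extra boundary markings per well-connected update), and then multiplies by the $O(1/\phi)$ bound from \Cref{lem:nottoomanykargers} on how many times a single vertex can be responsible for a cut before it is checked. Your explicit separation into sources (1)--(3) and your restatement of the volume-drop argument for part~(b) match the paper's reasoning; the only cosmetic difference is that you spell out why the three sources combine to $O((M+R)\lmax/\delta)$, whereas the paper leaves that arithmetic implicit.
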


\begin{proof}
    Let us first estimate the number of nodes that get marked as unchecked overall (counting a node that has been unchecked multiple times with multiplicity):

    We divide the nodes that were unchecked at some point in two categories: nodes that were adjacent to an edge update, and nodes that we adjacent to an edge that became an intercluster edge.
    Any edge that was an intercluster edge and then deleted belongs to the first category.

    Nodes that were adjacent to an edge update are at most $R$ many.
    Nodes that were adjacent to an edge that was cut are at most $O(\frac {M+R}{\delta})$ many, as there are $O(\frac {M+R}{\delta})$ many such edges in the decomposition $\P_1$ by \Cref{lem:boundaries}, and any intercluster edge in a prior decomposition is either a deleted edge or an intercluster edge in $\P_1$, as we do not merge clusters.

    In the worst case, all updates are made to a ``well-connected'' cluster in $G$, and thus we need to add $2\lmax$ unchecked nodes.
    Moreover, while processing an unchecked node, we might find a cut that needs to be split from the original cluster.
    In that case, we have to process the node again, but this can happen at most $\frac 1 \phi$ times as per \Cref{lem:nottoomanykargers}.
\end{proof}

\begin{lemma}
    The total number of times, over the $O(\frac{m\phi}\rho)$ updates to the input of \Cref{alg:dynclusterdecompose}, we run the fragmenting algorithm is $O(\frac {M+R} {\delta})$.
\end{lemma}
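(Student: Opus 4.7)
My plan is to classify each fragmenting call by the event that caused the target cluster to be ``affected'' in Step 5 of the update handler in \Cref{alg:dynclusterdecompose}. A cluster is affected in a processing step either because (i) the edge updated by the dynamic expander decomposition in this step has an endpoint in the cluster, or (ii) the cluster was freshly created in this step, either by Find and Cut splitting a well-connected cluster along a $(1-\delta)$-boundary-sparse cut, or by UpdatePartition cutting the partition along a small cut returned by a recursive call.

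Case (i) is easy to bound: each of the $R$ changes output by the dynamic expander decomposition over the interval triggers exactly one processing step, whose updated edge has two endpoints lying in at most two distinct clusters, so case (i) contributes $O(R)$ calls overall.

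For case (ii), the plan is to bound the total number of splits of clusters in $\P_1$ across the interval. Every such split converts at least one edge from ``intracluster'' to ``intercluster'' in $\P_1$. The key structural claim I would establish is monotonicity of the ``intercluster'' labels of $\P_1$ within a rebuild interval: neither UpdatePartition, nor Find and Cut, nor Step 5 of the update handler ever relabels an intercluster edge of $\P_1$ back to intracluster, because Step 5 only reverts ``fragmented'' labels, which pertain to the finer partition $\P_2$. Given monotonicity, the total number of splits is bounded by the maximum number of $\P_1$-intercluster edges at any moment, which by \Cref{lem:boundaries} is $O(\frac{M+R}{\delta})$. Since each split produces at most two new clusters that may trigger fragmenting, case (ii) also contributes $O(\frac{M+R}{\delta})$ calls.

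Summing the two contributions yields $O(R) + O(\frac{M+R}{\delta}) = O(\frac{M+R}{\delta})$ as required. I expect the main obstacle to be the careful verification of the monotonicity claim in case (ii) --- one must inspect every site in the algorithm where edge labels are modified and confirm that no operation implicitly merges clusters of $\P_1$; the remainder of the argument is then just routine counting.
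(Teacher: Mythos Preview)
Your approach is correct and essentially matches the paper's, which charges each fragmenting run either to an intercluster edge of $\P_1$ (at most two runs per edge, one for the cluster at each endpoint) or to an edge update, and then invokes the $O(\frac{M+R}{\delta})$ bound from \Cref{lem:boundaries}. One small refinement: the number of splits is bounded not by the \emph{maximum} number of $\P_1$-intercluster edges at any moment but by the number of edges that \emph{ever} became intercluster, which is the current count plus the number of intercluster edges deleted during the interval; since at most $R$ deletions occur, this extra term is absorbed into $O(\frac{M+R}{\delta})$ and the conclusion stands.
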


\begin{proof}
    Each intercluster edge in $\P_1$ can result in at most two runs of the fragmenting algorithm (one the cluster at each endpoint). 
    Moreover, any edge update can also result in at most two runs of the fragmenting algorithm.
    Since there are at most $O(\frac {M+R} {\delta})$ of these edges, the result follows.
\end{proof}

\begin{corollary}\label{cor:recourse}
    The total recourse of the algorithm over the $O(\frac {m\phi} \rho)$ updates to the input of \Cref{alg:dynclusterdecompose} is $\tilde O\PAR{\frac {m\phi} {\delta^3}\lmax}$.
\end{corollary}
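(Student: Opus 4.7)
The plan is to decompose the recourse into three disjoint contributions, bound each, and verify that the fragmenting step dominates. Recall that recourse counts the number of edges whose label in the contracted graph of $\P_2$ changes (i.e.\ edges that switch between \textit{intracluster} and \textit{intercluster}/\textit{fragmented}), and that each update to the dynamic expander decomposition produces at most $\rho$ amortized changes by \Cref{thm:expanderdecomposition}, so over the $O(m\phi/\rho)$ allowed updates the total number of elementary events we must account for is $M+R$ where $M = \tilde O(m\phi)$ is the initial number of inter-expander edges and $R = \tilde O(m\phi)$ is the total recourse of the expander decomposition.

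First I would account for the recourse coming directly from the dynamic expander decomposition: every change it reports flips the label of one edge, contributing at most $M+R = \tilde O(m\phi)$ to the total. Next, the refinement of the pre-cluster decomposition $\P_1$ (cutting along $(1-\delta)$-boundary sparse cuts in \Cref{alg:subroutine}, together with the splits induced by \textsc{UpdatePartition}) only ever moves labels from \textit{intracluster} to \textit{intercluster}, and the number of intercluster edges ever created in $\P_1$ is bounded by the first statement of \Cref{lem:boundaries} as $O((M+R)/\delta)$. The last, and dominating, contribution comes from fragmenting. By the previous lemma, the fragmenting algorithm (\Cref{alg:fragmenting}) is invoked at most $O((M+R)/\delta)$ times during the window, and by \Cref{thm:fragmenting} each such invocation partitions a cluster of boundary-size $O(\lmax)$ into at most $\tilde O(\delta^{-2})$ clusters of boundary-size $O(\lmax)$ each, thereby (re)labeling at most $\tilde O(\lmax/\delta^{2})$ edges; moreover each re-invocation first reverts every currently ``fragmented'' edge in the affected cluster to \textit{intracluster} before recomputing the fragmentation, so this reversion is also absorbed in the same $\tilde O(\lmax/\delta^{2})$ bound per invocation.

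Combining the three contributions gives a total recourse of
\[
\tilde O(M+R) \;+\; O\!\left(\tfrac{M+R}{\delta}\right) \;+\; \tilde O\!\left(\tfrac{M+R}{\delta}\cdot \tfrac{\lmax}{\delta^2}\right) \;=\; \tilde O\!\left(\tfrac{(M+R)\lmax}{\delta^{3}}\right),
\]
and substituting $M = \tilde O(m\phi)$ and $R = \tilde O(m\phi)$ yields the claimed bound $\tilde O\PAR{\tfrac{m\phi}{\delta^{3}}\lmax}$.

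The main obstacle in making this precise is the bookkeeping for fragmented edges: because \Cref{alg:dynclusterdecompose} reverts all \emph{fragmented} edges of an affected cluster back to \emph{intracluster} before recomputing, the same edge can flip labels many times over the sequence of updates, so a naive count of intercluster edges at termination (as in \Cref{lem:boundaries}) is not enough. The right accounting is to charge each flip to the specific fragmenting invocation that caused it, using the $\tilde O(\lmax/\delta^{2})$ output-size bound of \Cref{thm:fragmenting} per invocation, and then to bound the number of invocations by the number of ``trigger'' events (edges incident to the cluster that became inter-cluster in $\P_1$, or endpoints of an edge update), which is precisely the previous lemma.
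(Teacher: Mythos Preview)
Your proof is correct and follows essentially the same approach as the paper: decompose the recourse into the $\P_1$ refinement (bounded via \Cref{lem:boundaries}) and the fragmenting step (bounded by the number of invocations times the per-invocation output size from \Cref{thm:fragmenting}), then substitute $M,R=\tilde O(m\phi)$. Your accounting is in fact slightly more explicit than the paper's—separating out the raw expander-decomposition recourse and emphasizing the per-invocation charging for reverted fragmented edges—but the structure and the cited lemmas are identical.
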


\begin{proof}
    At any given time, there are at most $O(\frac {m\phi} \delta)$ many intercluster edges in $\P_1$, by \Cref{lem:boundaries}.
    Since $\P_1$ only gets refined over time, this yields $O(\frac {m\phi} \delta)$ recourse.

    Moreover, there are at most $O(\frac {M+R} {\delta})$ calls to the fragmenting algorithm, and as many reversions of whatever changes these calls made. 
    Each of these calls make at most $\tilde O(\frac \lmax {\delta^3})$ changes by \Cref{thm:fragmenting}, hence the result.    
\end{proof}

\begin{corollary}\label{cor:time}
    As $\lmax = n^{o(1)}$ and $\phi = n^{-o(1)}$, the total running time of \Cref{alg:dynclusterdecompose} between two rebuilds, excluding recursive calls, is $(M+R)\cdot n^{o(1)}=m^{1+o(1)}$.
\end{corollary}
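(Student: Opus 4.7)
The plan is to account for every source of running time spent by \Cref{alg:dynclusterdecompose} between two rebuilds and show each sums to at most $(M+R)\cdot n^{o(1)}$. The cost naturally splits into (i) the dynamic expander decomposition itself, (ii) per-update maintenance of the auxiliary data structures (LocalKCut on intracluster edges, the Mirror Cuts data structure from \Cref{alg:maintainmirrors}, and the container data structure of \Cref{sec:datastructure}), (iii) the Find and Cut work triggered by unchecked vertices, and (iv) re-runs of the fragmenting algorithm on affected poorly-connected clusters.

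First I would count the total number of edge updates that flow into the auxiliary data structures. There are $O(m\phi/\rho)$ input updates, the expander decomposition produces recourse $R$, and the two preceding corollaries already established that the further recourse added by the pre-cluster and fragmenting layers is $\tilde O((M+R)\lmax/\delta^3)$. With our parameter choices $\lmax,\,\nu,\,1/\delta,\,1/\phi$ all equal to $n^{o(1)}$ and $\beta=O(1)$, \Cref{thm:deterministic-dynamic}, \Cref{thm:mirrorcuts}, and \Cref{thm:datastructure} each guarantee $n^{o(1)}$ amortized time per update. Multiplying these update counts by $n^{o(1)}$ yields $(M+R)\cdot n^{o(1)}$ for component (ii). For (i), \Cref{thm:expanderdecomposition} bounds the preprocessing by $\tilde O(m/\phi)=m^{1+o(1)}$ and the per-update cost by $\tilde O(\psi 38^{2h}/\phi^2)=n^{o(1)}$; across $O(m\phi/\rho)$ updates this is $m^{1+o(1)}$.

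For component (iii), the preceding lemma bounds the total number of times a vertex is processed by $O((M+R)\lmax/(\delta\phi))$. Each such processing fires a single LocalKCut query, which by \Cref{thm:deterministic-dynamic} takes $\tilde O(\beta^2(\lmax\nu)^{O(\beta)})=n^{o(1)}$ time, followed by $n^{o(1)}$ work to test boundary-sparseness and, if a cut is found, to invoke \textsc{UpdatePartition} (handled again by the bookkeeping bounds of \Cref{thm:datastructure}). Thus the total cost of (iii) is $(M+R)\cdot n^{o(1)}$. For (iv), the preceding lemma says the fragmenting algorithm is invoked $O((M+R)/\delta)$ times, and \Cref{thm:fragmenting} bounds each call by $\tilde O(\delta^{-2}\beta^2(\nu\lmax)^{O(\beta)})=n^{o(1)}$, which again totals $(M+R)\cdot n^{o(1)}$.

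Summing everything gives total time $(M+R)\cdot n^{o(1)}$. Since the expander decomposition guarantees $M=O(m\phi)$ after preprocessing and amortized recourse $\rho$ over the $O(m\phi/\rho)$ updates, we have $R=O(m\phi)$ as well, so $M+R=O(m\phi)\le m$ and the final bound $(M+R)\cdot n^{o(1)}=m^{1+o(1)}$ follows. The main subtlety to be careful about is cascading: a single expander-decomposition change induces a bounded burst of work across all three layers (pre-cluster cutting, re-fragmenting, and mirror-cut maintenance), and one needs the recourse bounds of \Cref{cor:recourse} together with the per-update bounds of the subordinate data structures to close the loop. Since all these multiplicative blowups stay within $n^{o(1)}$ under our parameter choice $\lmax,1/\delta,1/\phi = n^{o(1)}$, no genuine obstacle remains beyond bookkeeping.
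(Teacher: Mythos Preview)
Your proposal is correct and follows essentially the same approach as the paper: count the number of LocalKCut queries, fragmenting invocations, and data-structure updates using the preceding lemmas and \Cref{cor:recourse}, then multiply each by its $n^{o(1)}$ per-operation cost from \Cref{thm:deterministic-dynamic}, \Cref{thm:fragmenting}, \Cref{thm:mirrorcuts}, and \Cref{thm:datastructure}. If anything, your accounting is more explicit than the paper's own proof, which compresses components (i), (ii), and (iv) into a single sentence about data-structure maintenance and does not separately itemize the expander-decomposition or fragmenting costs.
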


\begin{proof}
    As per \Cref{thm:deterministic-dynamic}, amortized update time and request time to LocalKCut is $\tilde{O}\left(\beta^4( \lambda_{\max}\nu)^{O(\beta)}\right)$ with $\beta = O(1)$, and therefore all calls to LocalKCut -- outside of the fragmenting algorithm -- need $\Tilde O(\frac {M+R} \delta \beta^4( \lambda_{\max}\nu)^{O(\beta)}) = (M+R)\cdot n^{o(1)}$ time. 
    This is also an upper bound to the number of nodes that need to be unchecked, as each LocalKCut is ran on an unchecked vertex.
    Furthermore, we need to update the data structure, which takes amortized time $n^{o(1)}$ per update, by \Cref{thm:datastructure}, as the graph has initial volume $m$ and is subject to updates of volume $m\phi$ between two rebuilds.
\end{proof}

This proves the following lemma:

\lemrecourse*

\begin{proof}
    Item 1. is immediate from \Cref{cor:recourse} as we can amortize the recourse over all updates between two rebuilds.

    Item 2. is a direct application of  \Cref{lem:boundaries}, where we note that between two rebuilds, the decomposition is only refined, and thus the number of intercluster edges just before the next rebuild is an upper bound on the number of intercluster edges at any point since the last rebuild.
    
    Item 3. is proven in \Cref{cor:time}.
\end{proof}

\section{Data Structure}\label{sec:datastructure}

This section is dedicated to proving the following theorem:

\begin{theorem}\label{thm:datastructure}
    Consider a fully-dynamic graph $G$, two (dynamic) partitions $\P_1$ and $\P_2$ of $G$. 
    $G$ can be subject to edge insertions or deletions, and $\P_1$ can only be subject to splits (that is, replace a cluster $C \in \P_1$ by $C\setminus S$ and $S$ for some $S \subsetneq C$), while in $\P_2$, in addition to splits, clusters of volume at most $\nu$ (for some parameter $\nu \in \N_+$) are allowed to merge back to larger clusters.
    $\P_1$ and $\P_2$ are encoded by a labeling of the edges as ``intracluster'', ``intercluster'', and ``fragmented'', that is, $\P_1$ is formed by the connected components formed by looking only at the edges labeled ``intracluster'' and ``fragmented'', while $\P_2$, a refinement of $\P_1$, is formed by the connected components formed by looking only at the edges labeled ``intracluster''. Each vertex, if it is a vertex obtained by the contraction of other vertices, has a label representing the inner volume of the contracted cluster it represents.

    There exists a data structure that maintains the mirror graph $G_{\P_1}$ and the contracted graph $G/{\P_2}$ of $G$. $G/{\P_2}$ should have, as vertex labels, the inner volume of the cluster the vertex represents. 
    Over $U$ many edge updates, it has $\tilde O(m+U\nu)$ running time (preprocessing + update handling). 
    Moreover, for any request on a set $S\subseteq C$ for $C \in \P_2$, it can output in time $\tilde O(\vol(S))$ whether or not $S$ is $(1-\delta)$-boundary-sparse.
    It can also store for each node, edge and cluster a set of constant size of parameters accessible in constant time.
\end{theorem}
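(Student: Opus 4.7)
The plan is to represent $G$, both partitions, the mirror graph $G_{\P_1}$, and the contracted graph $G/\P_2$ through one shared pointer-rich structure layered on top of $G$. Concretely, we maintain (i) doubly-linked adjacency lists for $G$ with each edge carrying its label (``intracluster'', ``fragmented'', or ``intercluster'') and cross-pointers between its two endpoint occurrences; (ii) for each cluster $C\in\P_1$ a record storing $\boundary C$, a list of its vertices, and a pointer to a virtual ``outside'' node $v_C$ representing $V\setminus C$ in the mirror cluster; (iii) for each cluster $C\in\P_2$ a record storing its inner volume and a pointer to the node of $G/\P_2$ representing it; and (iv) the contracted graph $G/\P_2$ maintained explicitly as a multigraph whose edges are in bijection with the inter-$\P_2$ edges of $G$ (those labeled ``intercluster'' or ``fragmented''), with each $G/\P_2$-edge pointing back to its original $G$-edge. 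The mirror graph $G_{\P_1}$ is kept implicit: to walk the mirror cluster of $C$ one iterates through the vertices of $C$ and redirects any endpoint outside $C$ to $v_C$. Preprocessing scans the edges of $G$ once and initializes these records in $\tilde O(m)$ time.

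For update handling, an edge insertion or deletion $(u,v)$ costs $\tilde O(1)$: update the two adjacency lists, adjust $\boundary C$ for the endpoints' $\P_1$-clusters if the edge is inter-$\P_1$, and insert or remove the corresponding edge of $G/\P_2$ if the edge is inter-$\P_2$. A $\P_2$-split — realized by relabeling some ``intracluster'' edges as ``intercluster'' and thereby separating a piece $S$ of volume at most $\nu$ from its cluster $C$ — is processed by creating a new $G/\P_2$-node for $S$, walking through the vertices of $S$ in time $\tilde O(\vol(S))=\tilde O(\nu)$, re-routing the $G/\P_2$-edges incident to $S$, and updating the inner-volume labels of $S$ and $C\setminus S$. $\P_2$-merges are handled symmetrically, with the $\vol\le\nu$ restriction keeping the work at $\tilde O(\nu)$ per event. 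A $\P_1$-split (relabeling a ``fragmented'' edge as ``intercluster'') is treated analogously: identify the smaller piece by a bounded BFS over the still-connected labels, create its new cluster record, re-attach the affected vertices, and update the $\boundary$-values of both resulting clusters. Summing over $U$ events gives the claimed $\tilde O(m+U\nu)$ total.

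For a boundary-sparsity query on $S\subseteq C$ with $C\in\P_2$, we iterate over all edges incident to $S$ and use the cluster pointers at each endpoint to classify each edge as internal to $S$, as lying between $S$ and $C\setminus S$, or as leaving $C$. This yields $w(S,C\setminus S)$ and $w(S,V\setminus C)$ in $\tilde O(\vol(S))$ time; $w(C\setminus S,V\setminus C)$ is then recovered from the precomputed $\boundary C$ label via $\boundary C-w(S,V\setminus C)$. Comparing these three quantities against the $(1-\delta)$ threshold implements the test in the stated time. Constant-size auxiliary fields per node, edge, and cluster are stored directly in the corresponding records and are $O(1)$ to access.

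The main technical obstacle is ensuring that $G/\P_2$ can be updated under splits and merges without a full rescan of the large side of the affected cluster. The key observation is that in every such event only the smaller piece must be touched, which by the theorem's merge hypothesis — and by the fact that the algorithm only splits along LocalKCut-discovered cuts of volume $\le\nu$ — has volume at most $\nu$; moreover, each $G/\P_2$-edge is tied bidirectionally to a unique $G$-edge, so rerouting amounts to flipping a constant number of pointers per edge. This is why the ``volume at most $\nu$'' restriction on merges is essential: without it a single merge could touch arbitrarily many edges of $G/\P_2$, breaking the $\tilde O(\nu)$ per-event bound that aggregates to the desired total.
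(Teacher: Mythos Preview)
Your approach has the right skeleton but a genuine gap in the running-time analysis for splits. You claim that every $\P_1$- or $\P_2$-split separates a piece of volume at most $\nu$, and justify this by appealing to the merge hypothesis and to how LocalKCut is used elsewhere in the paper. Neither justification works here. The merge hypothesis in the theorem statement constrains only \emph{merges} in $\P_2$; splits in $\P_1$ (and in $\P_2$) carry no volume restriction at all. And reaching outside the theorem to how the caller happens to produce splits is not permitted: the theorem must hold for any sequence of splits as stated. In particular, a single $\P_1$-split can produce two pieces each of volume $\Theta(m)$, so your per-event $\tilde O(\nu)$ bound is simply false, and ``summing over $U$ events'' does not give $\tilde O(m+U\nu)$.

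The paper closes this gap with an amortization argument (its Lemma~\ref{lem:datastructure}) rather than a per-event bound. The idea is exactly your ``identify the smaller piece by a bounded BFS'': one explores both sides in lockstep, stops when the smaller side $S$ is fully discovered, and does all the re-pointing on that side in time $\tilde O(\vol(S))$. The crucial observation is then that over \emph{all} splits of a partition that only refines, each half-edge lies on the smaller side at most $O(\log(m+U))$ times, so the total split work is $\tilde O(m+U)$ regardless of individual split sizes. Merges of small clusters in $\P_2$ are handled by simulating each merge as deleting the small piece and reinserting its $\le\nu$ half-edges into the larger cluster, which is where the $U\nu$ term comes from. Your boundary-sparsity query and the constant-size auxiliary fields are fine and match the paper; you should also store $\boundary C$ for $\P_2$-clusters (you only list it for $\P_1$), since the query needs it.
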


We need a data structure that enables us to efficiently compute all the values and get the correct pointers at all times.
This includes, for every cluster $C$, access to its boundary edges, and its boundary-size. 
We should also be able to compute efficiently, for any cut $S$ of volume at most $O(\nu)$ inside $C$, whether or not it is boundary sparse, that is we need $w(S, C\setminus S)$ and $w(S, V\setminus C)$.
We also need for each node to know in which cluster it is, and whether it is checked or unchecked.

For that, we have the following algorithm:

\begin{algo}\label{alg:datastructure}
\item Input:
    A fully-dynamic graph $G$ under edge insertions, deletions, update of vertex and edge labels.
    \item Maintained variables: 
    \begin{itemize}
    \item for each node:
    \begin{itemize}
        \item a pointer to the cluster it is currently in
        \item a pointer to each of the edges it is incident to
        \item other node parameters
    \end{itemize}
    \item for each edge:
    \begin{itemize}
        \item a pointer to its endpoints
        \item Whether it is an ``intercluster'', ``intracluster'' or ``fragmented'' edge
        \item other edge parameters
    \end{itemize}
    \item for each cluster $C$ in $\P_1$ or $\P_2$:
    \begin{itemize}
        \item a list of the nodes that constitute this cluster
        \item the boundary-size of the cluster
        \item the inner volume of the cluster
        \item a list of all edges on its boundary
        \item other cluster parameters
    \end{itemize}
    \item for each cluster $C$ in $\P_1$:
    \begin{itemize}
        \item a mirror cluster $C'=G/(G\setminus C)$
    \end{itemize}
     \item an instance of Minimum Spanning Forest (\Cref{thm:dynforest}) on intracluster and fragmented edges which can output whether or not two nodes are in the same $\P_1$ component in $\tilde O(1)$ time.
    \item an instance of Minimum Spanning Forest (\Cref{thm:dynforest}) on intracluster edges which can output whether or not two nodes are in the same $\P_2$ component in $\tilde O(1)$ time.
    \item a contracted graph ${G/\P_2}$
\end{itemize}
\item Preprocessing:
\begin{enumerate}
\item Instantiate the Minimum Spanning Forests
    \item Compute the connected components (or clusters) of the graph $G$ where only edges labeled ``intracluster'' or ``fragmented'' are considered, to get $\P_1$.
    \item Compute the connected components (or clusters) of the graph $G$ where only edges labeled ``intracluster'' are considered, to get $\P_2$.
    \item Traverse each cluster in $\P_1$, creating its mirror cluster.
    \item Create a node to represent each cluster in $\P_1$ or $\P_2$ and add pointers from and to all the nodes it contains.
    \item Traverse each cluster in $\P_2$ to create the contracted graph. 
\end{enumerate}
\item Handling a split (in either $\P_1$ or $P_2$ or both) specified by a set of edges changing their labels: %
\begin{enumerate}
\item \label{split0} Update the Minimum Spanning Forests
    \item \label{split1} Choose an updated edge, explore in lockstep the connected component on each side of that edge (for both/either $\P_1$ and/or $\P_2$). Once one connected component $S$ is completely uncovered, stop the exploration.
    \item \label{split3} Create a new node for the new (smaller) cluster, and update all the pointers to the nodes it contains to the new cluster, as well as the edges on its boundary. This also updates the old cluster into the new bigger cluster.
    \item \label{split4} Update the edges in the mirror graph if $\P_1$ changes.
    \item \label{split5} Update the edges in the contracted graph if $\P_2$ changes.
    \item \label{split6} Compute the boundary and inner volume of the smaller cluster, store them and use it to update the boundary and inner volume of the bigger cluster.
\end{enumerate}
\item Handling an edge deletion:
\begin{enumerate}
    \item Update the inner volume of the cluster it is in, and the corresponding label in the contracted graph.
    \item Update the mirror cluster in which the edge is, if necessary
    \item Update the Minimum Spanning forest
    \item If the two endpoints are now in different connected components of the minimum spanning forest, run the steps in ``Handling a split specified by a set of edges changing their label''.
\end{enumerate}
\item Handling an ``intracluster'' or ``fragmented'' edge insertion in $\P_1$, or an ``intracluster'' edge insertion in $\P_2$:
\begin{enumerate}
    \item In $\P_1$, if one of its endpoints is an isolated vertex, add it to the cluster of the other endpoint, updating the pointers accordingly
    \item In $\P_2$, look if the two endpoints are in the same connected component. If they are not, explore in lockstep the connected components of both endpoints. When one component is completely explored, delete the corresponding cluster and add the vertices to the cluster of the other endpoint. Update the pointers accordingly.
    \item Update the inner volume of the corresponding cluster and that volume in the contracted graph
    \item Update the mirror graph accordingly
    \item Update the Minimum Spanning Forest
\end{enumerate}
\item Handling any other update:
\begin{enumerate}
    \item Trivially update all data structures related to that update.
\end{enumerate}
\end{algo}

We start by arguing that only allowing cluster splits (ignoring merging of components in $\P_2$ for the moment) allows us to ensure the update time can be amortized to the preprocessing time.
We call the \emph{union graph between two updates} $t_1$ and $t_2$ the graph that is the union of all edges that appears at least once between updates $t_1$ and $t_2$, that is, an edge that either exists already at update $t_1$, or is inserted at any point between times $t_1$ and $t_2$.

One main argument that is useful throughout is the fact that if we maintain this data structure on a decomposition of the graph, such that the only allowed operations are to rebuild from scratch the whole decomposition or to split a cluster such that splitting a cluster only takes time proportional to the smaller side of the cut, then the data structure can be maintained efficiently:

\begin{lemma}\label{lem:datastructure}
    Let $\P$ be a (dynamic) partition of $G$, where $G$ has $U$ many updates and initially contains $m$ edges.
    Suppose that $\P$ starts as $\P=\{V\}$, and can be subject to splits during those updates, that is, an element $C$ in $\P$ can be replaced by $C\setminus S$ and $S$, for some $S\subsetneq C$.
    Let $\mathcal{D}$ be a data structure on $G$ and $\P$ with preprocessing time $mT$ such that it takes time $T\cdot\min\{\vol(S), \vol(C\setminus S)\}$ to handle a split, for some $T>0$.

    Then over all the updates, maintaining $\mathcal{D}$ takes $\tilde{O} ((m+U)T)$ time overall.
\end{lemma}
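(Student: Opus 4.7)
The plan is to prove the bound by a standard smaller-half amortized analysis via a potential function, charging each split to the edge-endpoints on its smaller side. The preprocessing cost $mT$ is paid for directly; the rest of the proof bounds the total cost of splits over the entire update sequence.

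I would define
$$
\Phi(t) := \sum_{(e,v)} \log \vol_t(C_t(v)),
$$
where the sum ranges over all pairs $(e,v)$ with $e$ currently an edge of $G$ at time $t$ and $v$ an endpoint of $e$, and $C_t(v)$ denotes the cluster of $\P$ containing $v$ at time $t$. Observe that for any cluster $C$ the number of such pairs with $v\in C$ is exactly $\vol_t(C)$, so $\Phi(t)\ge 0$ at all times, and since $\P(0)=\{V\}$ we have $\Phi(0) \le 2m\log(2m) = \tilde O(m)$.

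The three key observations driving the amortization are:
\begin{itemize}
\item When a cluster $C$ is split into $S$ and $C\setminus S$ with $\vol(S)\le\vol(C\setminus S)$, each of the $\vol(S)$ pairs $(e,v)$ with $v\in S$ has its term drop from $\log\vol(C)$ to $\log\vol(S)\le\log(\vol(C)/2)=\log\vol(C)-1$, while every pair with $v\in C\setminus S$ has $\log\vol(C\setminus S)\le\log\vol(C)$, so its term does not increase. Thus $\Phi$ drops by at least $\vol(S)=\min(\vol(S),\vol(C\setminus S))$, which matches the split's cost $T\cdot\min(\vol(S),\vol(C\setminus S))$ up to the factor $T$.
\item Each edge insertion increases $\Phi$ by at most $O(\log(m+U))$: the two new endpoint terms contribute at most $2\log(m+U)$, and the perturbation of existing terms, caused by the volume(s) of the at most two affected clusters growing by at most $2$, aggregates to $O(1)$ via $\log(\vol+2)-\log\vol=O(1/\vol)$ summed over the $\vol$ endpoints living in that cluster.
\item Edge deletions can only decrease $\Phi$ (both by removing endpoint terms and by shrinking some $\vol_t(C_t(v))$).
\end{itemize}

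Since $\Phi\ge 0$ throughout, the total decrease in $\Phi$ across all splits is bounded by $\Phi(0)$ plus the total increase contributed by the $U$ edge updates, which is $O((m+U)\log(m+U))$. Combined with the first observation, $\sum_{\text{splits}}\min(\vol(S),\vol(C\setminus S)) = O((m+U)\log(m+U))$, so the total cost of splits is $\tilde O((m+U)T)$; adding the preprocessing cost $mT$ yields the stated bound. The main subtle point is the second bullet: one must check that the insertion of a single edge, which nominally perturbs the $\log\vol$ term of every endpoint living in the affected cluster, contributes only $O(\log(m+U))$ in aggregate rather than something proportional to the cluster volume itself. This ultimately boils down to the concavity of $\log$ and is the only nontrivial estimate in the argument.
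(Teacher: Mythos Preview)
Your proof is correct. Both your argument and the paper's rest on the same smaller-half principle, but the execution differs. The paper avoids tracking the dynamic volumes altogether by passing to the \emph{union graph} (the graph containing every edge that ever appears over the $U$ updates, so at most $m+U$ edges): since current volumes are always dominated by union-graph volumes, the cost of each split is at most $T$ times the union-graph volume of its smaller side, and a direct charging to half-edges in this static graph shows each half-edge lands on the smaller side at most $\log(m+U)$ times. This sidesteps your second bullet entirely---no insertion analysis is needed because volumes in the union graph never change. Your potential-function route is equally valid and arguably more self-contained, but pays for that with the extra (correct) concavity estimate showing that a single insertion raises $\Phi$ by only $O(\log(m+U))$ rather than by something proportional to the affected cluster's volume.
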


\begin{proof}
    Let us consider the union graph over the updates. 
    This graph has $m+U$ edges at most.
    For any set $X$, let $\vol^*(X)$ be the volume of $X$ in the union graph.
    Then, at any point in time, $\vol(X) \le \vol^*(X)$.

    Let us now consider a split of $C$ into $S$ and $C\setminus S$, and assume it happens after update $t$. Assume that, w.l.o.g.,  at time $t$, we have that $\vol(S) \le \vol(C\setminus S)$.
    Then, at time $t$, we have that $\vol(S)\le \vol^*(S)$ and $\vol(C\setminus S) \le \vol^*(C\setminus S)$.
    In particular, $\vol(S)\le \vol^*(C\setminus S)$ and thus $\vol(S) \le \min\{\vol^*(S),\vol^*(C\setminus S)\}$.

    Hence, if we look at the union graph, each split costs at most the side of the smallest volume (up to a factor $T$). 
    Whenever a split is made, assign to each edge on the smaller side of the cut a cost of $T$, where the sum of all the costs is thus higher than the time spent processing the split. Consider the half-edges (also called stubs), where each half-edge of an edge is assigned to one of the incident vertices.
    Then the volume of a node is the number of half-edges assigned to it, and the volume of a subset is the number of half-edges assigned to one of its vertices.
    Since each half-edge can only be on the smallest side of the split $\log (m+U)$ times, each half-edge is assigned at most cost $\log(m+U)T$. Summing over all half-edges and adding the preprocessing time concludes the proof.
\end{proof}

\begin{corollary}
    Let $\P$ be a dynamic partition of $G$, where $G$ has updates of total volume $U$ and initially contains $m$ edges. 
    Suppose that $\P$ starts as $\C=\{V\}$ and can be subject to splits during those updates, as well as merges where the volume of the smaller cluster being merged is at most $\nu$. 

    Let $\D$ be a data structure on $G$ and $\P$ that with preprocessing time $mT$ such that it takes time $T\cdot \min\{\vol(S) , \vol(C\setminus S)\}$ to handle a split, for some $T>0$.
    Then, over all updates, maintaining $\D$ takes $\tilde O((m+U\nu)T)$ time overall.
\end{corollary}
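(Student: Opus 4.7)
The plan is to extend the half-edge charging argument of \Cref{lem:datastructure} to accommodate merges, by granting each half-edge a fresh ``split budget'' after each merge that involves its cluster. The key observation will be that each merge touches at most $\nu$ half-edges (since the smaller merged cluster has volume at most $\nu$ by assumption) and that each input update triggers at most one merge in $\P$.

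Concretely, I would proceed as follows. First I would assign to every half-edge $h$ an initial budget of $\lceil \log_2(m+U+1) \rceil$, representing how many times $h$ may lie on the smaller side of a split before a merge refreshes its budget. As in \Cref{lem:datastructure}, a split with smaller side $S$ costs $T \vol(S)$ and at least halves the cluster volume of each half-edge in $S$, so it spends exactly one unit of budget from each such half-edge; thus in the absence of merges the total split cost is $T \sum_h \lceil \log_2(m+U+1) \rceil = \tilde O((m+U)T)$, recovering the previous lemma verbatim. Whenever a merge combines a smaller cluster $C'$ (with $\vol(C')\le \nu$) into a larger one, I would charge the merge itself a cost of $T\vol(C') \le T\nu$, matching the lockstep exploration used by the data structure of \Cref{alg:datastructure}, and simultaneously refresh the budget of every half-edge in $C'$ back up to $\lceil \log_2(m+U+1) \rceil$. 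The additional splitting work that becomes ``available'' because of the refresh is then at most $T \cdot \nu \cdot \log(m+U)$ per merge.

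To bound the total cost it remains to count merges. Merges are disallowed in $\P_1$, and in $\P_2$ a merge is triggered only when an edge insertion or label change reconnects two previously separated clusters, which the data structure detects via the dynamic minimum spanning forest of \Cref{thm:dynforest}. Hence at most one merge happens per input update, for at most $U$ merges total. Summing contributions, the total cost is
\[
\underbrace{mT}_{\text{preprocessing}} \;+\; \underbrace{\tilde O((m+U)T)}_{\text{baseline splits}} \;+\; \underbrace{U \cdot \tilde O(\nu T)}_{\text{merges and post-refresh splits}} \;=\; \tilde O\bigl((m + U\nu)T\bigr),
\]
as desired.

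The main obstacle I expect is rigorously justifying the one-merge-per-update property in $\P_2$ and confirming that the merge cost in the data structure of \Cref{alg:datastructure} is indeed $O(T\nu)$ rather than something worse; both facts follow directly from the lockstep-exploration implementation, but they have to be made explicit for the amortization to close. Once those are in hand, the argument is a straightforward generalization of the half-edge charging scheme in \Cref{lem:datastructure}, with the only new quantitative ingredient being the $\nu$-factor blowup per merge.
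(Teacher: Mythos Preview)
Your half-edge budget argument has a subtle gap. When a merge combines the smaller cluster $C'$ (volume $\le\nu$) into the larger cluster $C''$, you refresh only the budgets of the half-edges in $C'$; but the half-edges in $C''$ now sit in a strictly larger cluster without any budget increase. The invariant you implicitly need is that each half-edge's remaining budget is at least $\log_2$ of its current cluster's volume; this is preserved by splits and by your refresh of $C'$, but a half-edge $h\in C''$ can see its cluster volume as much as double while its budget stays put. If one alternates merges (with $h$ always on the larger side) and splits (with $h$ on the smaller side), $h$ exhausts its $O(\log(m+U))$ initial budget yet keeps incurring split charges, none of which your refresh pays for. The fix is cheap: summing $\log(1+\vol(C')/\vol(C''))\le \vol(C')/\vol(C'')$ over the $\vol(C'')$ half-edges of $C''$ contributes at most $\vol(C')\le\nu$ extra needed budget per merge, which is absorbed by your $\nu\log(m+U)$ term. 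So the final bound is unaffected, but as written the accounting is incomplete.

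The paper's proof avoids this bookkeeping entirely with a one-line reduction: simulate each merge by deleting the (at most $\nu$) edges of the smaller cluster and re-inserting them into the larger one, which turns the problem into the split-only setting of \Cref{lem:datastructure} with the update count inflated from $U$ to $O(U\nu)$ (using, as you correctly observe, that each input update triggers at most one merge). Your direct charging approach is a legitimate alternative and makes the amortization more explicit, but the reduction is both shorter and sidesteps the larger-cluster subtlety.
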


\begin{proof}
    Whenever a merge is observed, simply simulate it by deleting the smaller cluster to merge, and inserting it again inside the cluster it is merged to.
\end{proof}

\begin{lemma}
    \Cref{alg:datastructure} handles a split of a cluster $C$ into $S$ and $C\setminus S$ with $\vol(S) \le \vol(C\setminus S)$ in $\tilde O(\vol(S))$ time.
\end{lemma}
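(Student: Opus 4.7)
The plan is to walk through the six numbered steps of ``Handling a split'' in \Cref{alg:datastructure} and bound the work of each by $\tilde O(\vol(S))$. Two structural observations make each step local to the smaller side. First, because the split is specified by edges changing labels from ``intracluster''/``fragmented'' to ``intercluster'' (or vice versa in $\P_2$), the number of relabeled edges is at most $w(S, C\setminus S) \le \vol(S)$. Second, the lockstep exploration in Step~\ref{split1} is designed exactly to terminate as soon as one side is fully uncovered; by exploring in parallel from both endpoints of a representative relabeled edge and stopping at the moment the smaller component is exhausted, the running time is $O(\vol(S))$ (not $O(\vol(C\setminus S))$), since we only ever touch $O(\vol(S))$ many half-edges before termination.

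With these two observations, Steps~\ref{split0}, \ref{split3}--\ref{split6} all reduce to local work. In Step~\ref{split0}, each of the at most $\vol(S)$ relabeled edges is fed to the dynamic Minimum Spanning Forest data structures of \Cref{thm:dynforest}, which handles each update in $O(\log^4 n)$ amortized time, giving $\tilde O(\vol(S))$ total. In Step~\ref{split3}, since Step~\ref{split1} has already produced the vertex list of $S$, we create a new cluster node, redirect the $|S| \le \vol(S)$ vertex-to-cluster pointers, and scan the incident edges of those vertices to identify and relabel the new boundary edges of $S$; both cost $O(\vol(S))$. Step~\ref{split4} updates the mirror cluster of $C$ in $\P_1$ by splitting it into two mirror clusters; the only edges that change their role in the mirror graph are those incident to $S$ (at most $\vol(S)$ of them) plus the relabeled edges themselves, so this is also $O(\vol(S))$. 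Step~\ref{split5} updates the contracted graph $G/\P_2$ by adding a new contracted vertex for $S$ and moving the $\le \vol(S)$ incident edges, again $O(\vol(S))$. Finally, Step~\ref{split6} computes $\boundary S$ and $\vol^{\textup{in}}(S)$ by iterating over the incident edges of the vertices in $S$, costing $O(\vol(S))$, and then subtracts from the stored values of $C$ to get the updated values for $C\setminus S$ in constant time.

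The only potential obstacle is guaranteeing that Step~\ref{split1} indeed runs in $O(\vol(S))$ rather than $O(\vol(C))$. This is the standard ``smaller-half'' exploration trick: maintain two BFS/DFS frontiers, one from each endpoint of a relabeled edge, and advance them one half-edge at a time alternately, using the minimum spanning forests (restricted to the surviving intra-cluster/fragmented edges) from Step~\ref{split0} to decide which vertex belongs to which side. As soon as one frontier finishes, that side is identified as the smaller component $S$, so the number of half-edges touched is at most $2\vol(S)$. (Care must be taken when several edges are relabeled simultaneously in a batched split: one runs the lockstep exploration from one updated edge at a time on the current component, which is valid since each invocation still terminates upon exhausting the smaller side.) Summing the bounds gives total time $\tilde O(\vol(S))$, as claimed.
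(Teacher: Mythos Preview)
Your proof is correct and follows essentially the same approach as the paper: step-by-step bounding of the six ``Handling a split'' substeps, using the at most $\vol(S)$ relabeled edges for the MSF update in Step~\ref{split0}, the lockstep smaller-side exploration for Step~\ref{split1}, and the subtraction trick for updating $\partial(C\setminus S)$ and $\vol^{\textup{in}}(C\setminus S)$ from the stored values of $C$ in Step~\ref{split6}. Your write-up is in fact more detailed than the paper's (which simply asserts that Steps~\ref{split1}, \ref{split3}--\ref{split5} each take $O(\vol(S))$ time); the only minor addition in the paper is a separate sentence noting that when the split is triggered by a single intracluster edge deletion, the first three steps of that handler take $\tilde O(1)$ before falling through to the split routine.
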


\begin{proof}

    Let us first consider that the split is encoded by a change of labels of the edges: 
    
    Step~\ref{split0} takes $O(\vol(S) \log^4 n)$ by~\Cref{thm:dynforest}, as there are at most $\vol(S)$ many edges deleted from the instance. Steps~\ref{split1},~\ref{split3} \ref{split4} and \ref{split5} take $O(\vol(S))$ time as well.
    In Step~\ref{split6}, computing the boundary and the inner volume of $S$ takes $O(\vol(S))$ time, and the inner volume and boundary-size of the new bigger cluster $C\setminus S$ can be computed in $O(\vol (S))$ time from the values for $C$ using $\invol(C) = \invol(C\setminus S) +\invol (S) + w(S, C\setminus S)$ and $\boundary C = \boundary S + \boundary C\setminus S - w(S, C\setminus S)$.

    If the split is encoded as an intracluster edge deletion, note that the first three steps take $\tilde O(1)$ time, and the fourth step is the same as the ones analyzed above.
\end{proof}

\begin{lemma}
    For $\P_1$, \Cref{alg:datastructure} handles any update other that a split in $\tilde O(1)$ time.
\end{lemma}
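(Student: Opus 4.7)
The plan is to case-split on the type of update and check that each non-split operation touches only $\tilde O(1)$ data-structure entries, invoking Theorem~\ref{thm:dynforest} for the dynamic MSF cost and observing that every other maintained quantity is either local to the single edge or its two endpoints.

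First I would handle edge deletions that do not disconnect the $\P_1$-cluster containing the edge. By the pseudocode, we (i) decrement the inner volume of the cluster and update the volume label in the contracted graph, (ii) remove the corresponding edge in the mirror graph $G/(V\setminus C)$ of the unique cluster $C$ containing it, (iii) feed the deletion to the MSF instance on intracluster+fragmented edges. Step (i) is $O(1)$ via the maintained cluster pointer; step (ii) is $O(1)$ because each original edge corresponds to exactly one edge in one mirror cluster (stored via a direct pointer); step (iii) costs $\tilde O(1)$ by Theorem~\ref{thm:dynforest}. Finally we query the MSF in $\tilde O(1)$ time to check whether the endpoints remain in the same tree; if yes, no split occurs and we are done, which is the case assumed by the lemma.

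Next I would treat the insertion of an intracluster or fragmented edge. This is the symmetric operation: we look up the cluster pointers of both endpoints in $O(1)$, check via the MSF whether they lie in the same $\P_1$-component in $\tilde O(1)$ (if they do, the cluster structure is unchanged; if one endpoint is isolated we append it to the other endpoint's cluster in $O(1)$), update the inner volume and the contracted-graph label in $O(1)$, insert the corresponding edge into the appropriate mirror cluster in $O(1)$, and feed the insertion to the MSF in $\tilde O(1)$. For any remaining update type (a change of a vertex, edge, or cluster parameter stored in the data structure), the pseudocode simply writes the new value through the maintained pointer, which is manifestly $O(1)$.

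The only slightly delicate point, which I would emphasize in the writeup, is that the ``non-split'' hypothesis is certified in $\tilde O(1)$ by the MSF query rather than by any traversal of the cluster: without this we would be forced to scan the cluster to decide whether the edge change merges or splits it. Everything else is bookkeeping on $O(1)$-sized records pointed to directly from the affected edge or its endpoints, so summing the $\tilde O(1)$ MSF cost with the $O(1)$ local updates yields the claimed $\tilde O(1)$ bound.
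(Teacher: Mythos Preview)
Your proposal is correct and follows the same approach as the paper, which simply states that every step of the update-handling pseudocode takes $\tilde O(1)$ time; you have just spelled out the case analysis in more detail, correctly invoking \Cref{thm:dynforest} for the MSF cost and noting that the non-split hypothesis is certified by an MSF query rather than a traversal.
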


\begin{proof}
    This is immediate as every step takes $\tilde O(1)$ time.
\end{proof}

\begin{lemma}
    For $\P_2$, \Cref{alg:datastructure} handles any update other that a split in $\tilde O(\nu)$ time.
\end{lemma}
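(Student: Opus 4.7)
The plan is to case on the type of non-split update and show each is handled in $\tilde O(\nu)$ time, with the merge case being the one that dominates the bound (all other cases being $\tilde O(1)$).

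First I would dispatch the easy cases. An edge deletion that does not disconnect its endpoints in $\P_2$ requires only updating the inner-volume counter of the containing cluster (and its copy in $G/\P_2$), possibly updating the mirror cluster, and updating the minimum spanning forest; each of these steps costs $\tilde O(1)$ by \Cref{thm:dynforest} and by direct inspection. Any ``other update'' that only touches vertex/edge parameters likewise takes $\tilde O(1)$. Note that if an edge deletion disconnects its endpoints in $\P_2$, the algorithm redirects to the split handler, which is explicitly excluded from this lemma's scope.

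The main case is an intracluster edge insertion in $\P_2$ that causes two clusters to merge. The key point is the hypothesis of \Cref{thm:datastructure}: a merge in $\P_2$ is only allowed when the smaller of the two merging clusters has volume at most $\nu$. The algorithm queries the minimum spanning forest in $\tilde O(1)$ time to determine whether the endpoints already lie in the same $\P_2$-cluster; if not, it performs a lockstep exploration of the two connected components starting from each endpoint. Since the smaller component has volume at most $\nu$, the lockstep exploration terminates in $O(\nu)$ steps (it halts as soon as the smaller side is fully traversed). All induced bookkeeping---redirecting pointers of the vertices of the smaller cluster to the surviving cluster, updating the boundary-edge lists, updating the inner-volume counter, updating the corresponding label in $G/\P_2$, and updating the mirror graph---is done by touching each vertex and each incident edge of the smaller cluster at most $\tilde O(1)$ times, for a total of $\tilde O(\nu)$. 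Feeding the new edge to the minimum spanning forest costs $O(\log^4 n)$ by \Cref{thm:dynforest}.

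The only conceptual obstacle is making sure that the lockstep exploration can be stopped in time proportional to the \emph{smaller} side rather than the whole merged cluster; this follows from the standard trick of advancing one step in each BFS/DFS alternately, so the total work is bounded by twice the volume of the smaller cluster, which is $O(\nu)$ by hypothesis. All remaining updates for $\P_2$ that are not splits fall into the $\tilde O(1)$ buckets above, so the total cost per non-split update is $\tilde O(\nu)$.
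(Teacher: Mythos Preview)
Your proposal is correct and follows essentially the same approach as the paper's own proof: case on the update type, observe all non-merge updates cost $\tilde O(1)$, and bound the merge case by the volume of the smaller cluster, which is at most $\nu$ by the hypothesis of \Cref{thm:datastructure}. The paper's version is a two-sentence sketch of this same argument, whereas you spell out the lockstep-exploration detail and the bookkeeping; your version is more thorough but not different in substance.
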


\begin{proof}
    This is immediate as every step takes $\tilde O(1)$ time, apart from when the update merges two clusters back together, which takes time $O(\bar \nu)$ where $\bar \nu$ is the volume of the smaller of the two clusters.
    as $\bar \nu \le \nu$, the result follows.
\end{proof}

These three lemmata together with \Cref{lem:datastructure} prove the running time claimed in~\Cref{thm:datastructure}. It remains to show that we can compute whether a cut is $(1-\delta)$boundary sparse efficiently.

\begin{lemma}
    With our data structure, checking whether a cut $S$ in cluster $C$ is $(1-\delta)$-boundary sparse takes $O(\vol(S))$ time.
\end{lemma}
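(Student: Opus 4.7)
The plan is to compute the three quantities appearing in the definition of $(1-\delta)$-boundary sparseness directly by traversing the edges incident to $S$. Recall the definition requires us to check
\[ w(S, C\setminus S) < (1-\delta)\min\{w(S, V\setminus C),\, w(C\setminus S, V\setminus C)\}, \]
so we need access to all three quantities $w(S, C\setminus S)$, $w(S,V\setminus C)$, and $w(C\setminus S, V\setminus C)$.

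First I would iterate over every vertex $v \in S$ and, using the maintained per-vertex adjacency pointers, examine each edge incident to $v$. For each such edge $(v,u)$ we look up in constant time the cluster of $u$ (via the cluster pointer maintained for each vertex) and we also look up in constant time whether $u$ belongs to $S$ (using a temporary hash-set or marker array for $S$, initialized in $O(|S|)\le O(\vol(S))$ time at the start of the query and cleared at the end). Depending on the answer, we increment one of two counters: one for $w(S, C\setminus S)$ (when $u\in C\setminus S$) and one for $w(S, V\setminus C)$ (when $u\notin C$). Edges with $u\in S$ are ignored for this purpose. The total work is proportional to the number of edge-endpoints we visit, which is exactly $\vol(S)$.

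For the third quantity $w(C\setminus S, V\setminus C)$, we use the identity
\[ \partial C = w(S, V\setminus C) + w(C\setminus S, V\setminus C), \]
together with the fact that the data structure explicitly stores $\partial C$ for every cluster $C$. So $w(C\setminus S, V\setminus C) = \partial C - w(S, V\setminus C)$ can be obtained in $O(1)$ time once $w(S, V\setminus C)$ has been computed by the traversal above. Having all three quantities, evaluating the inequality is $O(1)$.

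The only potential obstacle is ensuring the auxiliary marker for membership in $S$ can be set and cleared in $O(\vol(S))$ total time without spurious polylogarithmic factors; this is handled by using a hash table (or by allocating a flag field on each vertex that is reset only on the $|S|\le\vol(S)$ vertices we touch). All pointer accesses to cluster IDs, edge endpoints, and $\partial C$ are $O(1)$ by construction of the data structure, so the overall running time is $O(\vol(S))$ as claimed.
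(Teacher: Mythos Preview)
Your proof is correct and follows essentially the same approach as the paper: compute $w(S,C\setminus S)$ and $w(S,V\setminus C)$ by iterating over edges incident to $S$, then recover $w(C\setminus S,V\setminus C)$ from the stored value $\partial C$ via the identity $\partial C = w(S,V\setminus C)+w(C\setminus S,V\setminus C)$. You simply spell out the implementation details (membership markers, pointer lookups) that the paper leaves implicit.
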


\begin{proof}
    We need to compute three quantities: $w(S, C\setminus S)$, $w(S, V\setminus C)$ and $w(C\setminus S, V\setminus C)$.
    The first two ones can be trivially computed in $O(\vol(S))$. The third one can be computed using the value stored at $C$ as $\boundary C$: $\boundary C = w(C\setminus S,V\setminus C) + w(S, V\setminus C) $.
\end{proof}

\section{Dealing with Weighted Graphs}
\label{sec:weighted}
In this section, we will present a way to sample a graph where edges have integer weights.
For that, we extend the weight function described above so that $w(e)$ is the weight of edge $e$.
Again, we will rely on the edge-sampling approach of Karger~\cite{kargersparse} to solve the problem.

The idea is as follows: If  we estimate the minimum cut to be between $\lambda_i = 1.1^i$ and $\lambda_{i+1} = 1.1^{i+1}$, sampling for each edge $e$ a binomial random variable $X(e)$ with parameters $w(e)$ and ${p_i}=\frac{54 \ln n}{\epsilon^2 \lambda_{i}}$, and define $\tilde G_i$ to be the graph obtained from $G$ by replacing each edge with $X(e)$ parallel edges.
Then, with high probability, every cut in $\tilde G_i$ has a value that is at most a $(1-\epsilon)$ factor away from its expectation.

However, a binomial distribution cannot trivially be sampled efficiently.
The naive strategy, that is, to sample $w(e)$ many Binomial random variables with probability ${p_i}$, takes $O(w(e))$ time, which can be too large.
Instead, we note that we only need to preserve the cuts whose value is at most $\lmax$ in $ \tilde G_i$.
Indeed, for any cut that has value strictly larger than $\lmax$ in $\tilde G_i$, we do not mind that its value is a $(1-\epsilon)$ factor away from its expected value, we only need that it stays above $\lmax$ so that our algorithm discards it. 
Therefore, we only need to exactly compute $Y(e)$, the random variable defined as follows: $Y(e) = X(e) \1 [X(e) \le \lmax] + (\lmax +1) \1 [X_e > \lmax]$, and define $G_i$ to be the graph obtained from $G$ by replacing each edge with $Y(e)$ parallel edges.

The rest of this section formalizes these ideas.

\begin{definition}
    In the case of integer weights, the sampled graph $G_i$ is defined as follows. 
    Let ${p_i} =\frac{54 \ln n}{\epsilon^2 \lambda_{i}}$ and $\lmax = 1.1(1+\epsilon) \frac{54\ln n} {\epsilon^2}$. 
    For every edge $e \in E(G)$, let $X(e)$ be an independent Binomial random variable of parameters $w(e), {p_i}$.
    Define $Y(e)$ as follows:
    $$
    Y(e) = \begin{cases}
        X(e) & \text{if } X(e) \le \lmax\\
        \lmax +1 & X(e) > \lmax
    \end{cases}
    $$
    Let $\tilde G_i$ be the graph obtained from $G$ by replacing each edge with $X(e)$ parallel edges, and $G_i$ to be the graph obtained from $G$ by replacing each edge with $Y(e)$ parallel edges.
\end{definition}
Note that we will only compute $Y(e)$ for every edge $e$, and thus only compute $G_i$, not $\tilde G_i$. 
However, $\tilde G_i$ and $X$ will be useful for our analysis.

\begin{lemma}\label{lem:stillmincut}
    In $G_i$, let $S$ be a cut of value $c \le \lmax$.
    Then, with high probability, its value in $G$ is in $[(1-\epsilon) \frac c {p_i}, (1+\epsilon) \frac c {p_i}]$.
    Moreover, if $S$ also is the minimum cut of $G_i$, then with high probability, it is a minimum cut of $G$.
\end{lemma}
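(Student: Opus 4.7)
The plan is to reduce everything to Karger's sampling theorem (\Cref{thm:kargersparsify}) applied to the \emph{untruncated} sampled graph $\tilde G_i$, plus a simple coupling that transfers the guarantees back to $G_i$ on cuts that are small in $G_i$. The key observation is the following one-line coupling: since $Y(e) = X(e)$ whenever $X(e) \le \lmax$ and $Y(e) = \lmax+1 \le X(e)$ otherwise, we have $Y(e) \le X(e)$ pointwise, with equality as long as $X(e) \le \lmax$. Therefore, if $S$ is any cut of $G_i$ with $c_{G_i}(S) = c \le \lmax$, then $Y(e) \le \lmax$ for every $e \in \partial S$, which forces $X(e) = Y(e)$ on all of $\partial S$, and hence $c_{\tilde G_i}(S) = c_{G_i}(S) = c$ as well.

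For the first statement, I would then apply \Cref{thm:kargersparsify} to $\tilde G_i$, which is exactly the graph obtained from $G$ by independently keeping each unit-weight parallel copy of an edge with probability $p_i$; since $p_i = 54\ln n/(\eps^2\lambda_i) \ge 54\ln n/(\eps^2\lambda)$ whenever $\lambda \ge \lambda_i$ (the regime we are in), the theorem gives with high probability that every cut of $\tilde G_i$ has value within a $(1\pm\eps)$-factor of its expectation $p_i\cdot w_G(\cdot)$. Applied to our $S$ and combined with the coupling, this yields $c = c_{\tilde G_i}(S) \in [(1-\eps),(1+\eps)]\cdot p_i\,w_G(S)$, which inverts to $w_G(S) \in [(1-\eps)\,c/p_i,(1+\eps)\,c/p_i]$ after absorbing the standard $1/(1-\eps)$ vs.\ $1+\eps$ slack by tweaking $\eps$.

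For the moreover part, let $S^*$ be a minimum cut of $G$. Since $\lambda = w_G(S^*) \le \lambda_{i+1} = 1.1\lambda_i$, \Cref{thm:kargersparsify} gives with high probability that $c_{\tilde G_i}(S^*) \le (1+\eps)\,p_i\cdot 1.1\lambda_i = 1.1(1+\eps)\cdot 54\ln n/\eps^2 = \lmax$, so in particular $X(e) \le \lmax$ for every $e \in \partial S^*$ and therefore $c_{G_i}(S^*) = c_{\tilde G_i}(S^*) \le \lmax$. Hence the $G_i$-minimum cut $S$ also satisfies $c_{G_i}(S) \le c_{G_i}(S^*) \le \lmax$, and applying the first statement to both $S$ and $S^*$ gives $w_G(S) \le (1+\eps)\,c_{G_i}(S)/p_i \le (1+\eps)\,c_{G_i}(S^*)/p_i \le (1+\eps)^2 \lambda$, i.e., $S$ is a $(1+O(\eps))$-approximate minimum cut of $G$ — which matches the sense in which \Cref{thm:kargersparsify} itself speaks of minimum cuts being preserved under sampling. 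The only thing to be careful with is the $(1\pm\eps)$-to-$(1\pm O(\eps))$ bookkeeping in the two inversions; I do not anticipate a genuine technical obstacle beyond that.
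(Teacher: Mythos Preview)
Your first part matches the paper's proof exactly: the coupling $Y(e)=X(e)$ whenever $X(e)\le\lmax$ transfers the cut value from $G_i$ to $\tilde G_i$, and then Karger's theorem (\Cref{thm:kargersparsify}) does the rest, with the same $(1\pm\eps)$ inversion.

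For the ``moreover'' part you take a genuinely different route. You start from a true minimum cut $S^*$ of $G$, push it forward into $G_i$ (using $\lambda\le\lambda_{i+1}$ to bound $c_{\tilde G_i}(S^*)\le\lmax$), and then sandwich $w_G(S)$ against $w_G(S^*)$ via the first part. The paper instead argues directly that $S$ is also a minimum cut of $\tilde G_i$: for any competing cut $S'$, either every cut edge has $Y(e)\le\lmax$, in which case $\partial_{\tilde G_i}S'=\partial_{G_i}S'\ge\partial_{G_i}S=\partial_{\tilde G_i}S$, or some edge has $Y(e)=\lmax+1$, in which case $\partial_{\tilde G_i}S'\ge X(e)>\lmax\ge\partial_{\tilde G_i}S$. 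Then Karger's theorem is invoked once on $\tilde G_i$. The paper's argument is slightly cleaner in that it does not need your extra assumption $\lambda\le\lambda_{i+1}$ (the lemma statement does not include it), whereas your argument does need it to control $c_{\tilde G_i}(S^*)$. Your approach is perfectly fine for the regime the lemma is actually used in, but it proves a marginally weaker statement than the paper's. Both arguments ultimately yield only a $(1+O(\eps))$-approximate minimum cut of $G$, since that is all \Cref{thm:kargersparsify} delivers---your caveat on this point is accurate, and the lemma's wording is a bit loose.
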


\begin{proof}
    If $c\le \lmax$, then for every cut-edge $e$, we have that $Y(e)\le \lmax$, and thus $X(e) \le \lmax$.
    Therefore, the value of the cut is the same in $G_i$ and $\tilde G_i$.
    By \Cref{thm:kargersparsify}, we thus have that if $c'$ is the value of the cut $S$ in $G$, then with high probability, $(1-\epsilon) {p_i}c'\le c\le (1+\epsilon) {p_i}c'$.
    This translates to $c'\le \frac c {{p_i}(1-\epsilon)}\le \frac c {p_i} (1+\epsilon)$ and $c'\ge \frac c {{p_i}(1+\epsilon)} \ge \frac c {p_i} (1-\epsilon) $.

    In the case where $S$ is also the minimum cut of $G_i$, note that any other cut $S'$ falls in one of two cases:
    Either all its cut edges $e$ satisfy $Y(e) \le \lmax$ and thus $\boundary_{\tilde G_i}S' = \boundary_{G_i} S' \ge \boundary_{G_i} S = \boundary_{\tilde G_i} S$.
    Or there exists a cut edge $e$ such that $Y(e) = \lmax+1$ and thus $ \boundary_{\tilde G_i} S' \ge X(e) > \lmax \ge \boundary_{G_i} S = \boundary_{\tilde G_i} S$.
    Therefore $S$ is also a minimum cut in $\tilde G_i$, and by \Cref{thm:kargersparsify}, it corresponds to the minimum cut in $G$.
\end{proof}

\begin{lemma}\label{lem:shorttime}
    We can sample $G_i$ in $O(m \lmax ^2)$ time.
\end{lemma}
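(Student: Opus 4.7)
Plan: The key observation is that since $Y(e)$ is truncated at $\lmax+1$, the full distribution of $Y(e)$ is supported on only $\lmax+2$ atoms, regardless of how large $w(e)$ is. So for each edge we need only a size-$O(\lmax)$ table to describe and sample from the exact distribution of $Y(e)$.

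For each edge $e$, I would tabulate $P_k := \Pr[X(e)=k]$ for $k=0,1,\ldots,\lmax$ and then set $P_{\lmax+1} := 1 - \sum_{k=0}^{\lmax} P_k = \Pr[X(e)>\lmax]$. Starting from $P_0 = (1-p_i)^{w(e)}$ (computable by repeated squaring in $O(\log w(e))$ time), I use the standard Binomial recurrence
\[
P_{k+1} \;=\; P_k \cdot \frac{(w(e)-k)\,p_i}{(k+1)\,(1-p_i)}
\]
to fill in the table with $O(\lmax)$ arithmetic operations. Given the table, $Y(e)$ is sampled by drawing a single uniform on $[0,1]$ and doing an inverse-CDF lookup in $O(\lmax)$ time. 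By construction, the output has exactly the distribution specified by the definition of $G_i$, so the only issue is the per-operation cost.

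The main obstacle is bit-precision. Because $p_i$ can be as small as $\Theta(\log n / \lmax)$ and $w(e)$ can be polynomially large in $n$, the probabilities $P_k$ themselves can be doubly exponentially small and must be represented carefully (e.g., as a mantissa together with an explicit binary exponent, or as a log-probability). I would use $O(\log n)$ bits of precision, which is sufficient to preserve the sampled distribution up to inverse-polynomial total variation error (and hence to leave every high-probability statement of Section~\ref{sec:weighted} intact). Each arithmetic operation on such numbers runs in $\tilde O(1) \subseteq O(\lmax)$ time; combined with the $O(\lmax)$ table entries per edge, the per-edge cost is $O(\lmax^2)$. Summing over the $m$ edges yields the claimed $O(m\lmax^2)$ bound.

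As a sanity check on correctness and as an alternative route if precision bookkeeping is awkward: one can instead generate $X(e)$ by simulating the Bernoulli process with geometric gaps, drawing i.i.d.\ geometric$(p_i)$ waiting times $T_1, T_2, \ldots$ and stopping as soon as either $T_1 + \cdots + T_j > w(e)$ (in which case $Y(e) = j-1$) or $j = \lmax+1$ (in which case $Y(e) = \lmax+1$). At most $\lmax+1$ geometric samples are drawn per edge; each one costs $O(\lmax)$ time with the same precision argument, giving again $O(m\lmax^2)$ overall. This alternative produces an exact draw from the distribution of $Y(e)$ without maintaining a CDF table, at the cost of essentially the same precision analysis.
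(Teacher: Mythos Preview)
Your proposal is correct and takes essentially the same approach as the paper: both tabulate the $O(\lmax)$ atoms of the truncated distribution of $Y(e)$ and then sample by inverse CDF. The paper computes each $P_k$ from scratch in $O(k)$ arithmetic operations (giving $O(\lmax^2)$ directly), whereas you use the binomial recurrence to get $O(\lmax)$ operations and then budget an extra $O(\lmax)$ factor for bit precision; you are in fact more careful than the paper, which silently assumes unit-cost exact real arithmetic and does not address the precision issue at all.
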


\begin{proof}
    For each edge $e$ of weight $w(e)$, we will show that we can sample $Y(e)$ in $\lmax^2$ time.
    For that, note that: $$\Proba\PAR{Y(e) = k} = \begin{cases}
        {w(e) \choose k}{p_i}^k(1-{p_i})^k & \text{if } k \le \lmax\\
        \lmax +1 & \text{otherwise}
    \end{cases}$$

    Writing ${w(e) \choose k} = \frac{w(e)\times \dots \times (w(e)-k+1)}{k\times \dots \times 1}$, we can see that every  ${w(e) \choose k}{p_i}^k(1-{p_i})^k$ can be computed in $O(k) = O(\lmax)$ time for every $k \le \lmax$. 
    Hence, computing every probability $\Proba(Y(e) = k)$ for $k \le \lmax$ takes $O(\lmax^2)$ time, and $\Proba(Y(e) = \lmax +1)$ can be deduced in $O(\lmax)$ time.

    To sample $Y(e)$, all we have to do is thus sample uniformly at random $y(e)$ in $[0,1]$, and set $Y(e)$ to be the smallest integer $j$ such that $\sum_{i=0}^j \Proba(Y(e) = k) > y(e)$.
\end{proof}

We now discuss how to sample $G_i$ in the case where real weights are allowed, but no parallel edges are.
First note that any edge of weight at most $\frac {\epsilon \lambda_i} {n^2}$ can be discarded.
Indeed, for any cut $S$, at most $n^2$ edges cross the cut, and thus at most $\epsilon \lambda_i$ weight could have been discarded. 
This preserves the $(1+\epsilon)$-approximation.

We then replace each weight with the closest integer multiple of $\frac 1 {x_i}$, where ${x_i}= \ceil{\frac{n^2}{\epsilon \lambda_i}}$.
Indeed, this ensures that the difference between any old weight and newly assigned weight is at most $\frac {\epsilon \lambda_i}{n^2}$.
Similarly to before, this ensures that it only affects the cut-size of the cut by a $(1+\epsilon)$ factor.

Finally, we multiply each edge weight by ${x_i}$ to get integer weights, sample as above with ${p_i}=\frac{54\ln n}{\epsilon^2 \lambda_i {x_i}}$.
Notice how we reduce ${p_i}$ to correct for the fact that the weights got multiplied by ${x_i}$.

\begin{definition}\label{def:Gi}
    In the case of real weights, the sampled graph $G_i$ is defined as follows. 
    Define ${x_i}=\ceil{\frac{n^2}{\epsilon \lambda_i}}$ and $\hat w_i(e)\in \N$ such that $ \frac {\hat w_i(e)} {x_i} \le w(e)< \frac {\hat w_i(e)+1} {x_i}$.
    Let ${p_i} =\frac{54 \ln n}{\epsilon^2 \lambda_{i}{x_i}}$ and $\lmax = 1.1(1+\epsilon) \frac{54\ln n} {\epsilon^2}$. 
    For every edge $e \in E(G)$, let $X(e)$ be an independent Binomial random variable of parameters $\hat w_i(e), {p_i}$.
    Define $Y(e)$ as follows:
    $$
    Y(e) = \begin{cases}
        X(e) & \text{if } X(e) \le \lmax\\
        \lmax +1 & X(e) > \lmax
    \end{cases}
    $$
    Let $\hat G_i$ be $G$ with $\hat w_i$ as a weight function.
    Let $\tilde G_i$ be the graph obtained from $G$ by replacing each edge with $X(e)$ parallel edges, and $G_i$ to be the graph obtained from $G$ by replacing each edge with $Y(e)$ parallel edges.
\end{definition}

\begin{lemma}\label{lem:goodrange}
    In $G_i$, let $S$ be a cut of value $\lmin \le c \le \lmax$.
    Then, with high probability, its value in $G$ is in $[(1-\epsilon) \frac c {{p_i}{x_i}}, (1+\epsilon)^3 \frac c {{p_i}{x_i}}]$.
    Moreover, if $S$ also is the minimum cut of $G_i$, then with high probability, it is a $(1+\epsilon)^2$-approximation of the minimum cut of $G$.
\end{lemma}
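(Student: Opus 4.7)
The plan is to closely mirror the proof of Lemma~\ref{lem:stillmincut}, but with an extra ``rounding layer'' introduced by replacing $w$ with $\hat w_i/x_i$. I will work with four quantities for each cut $S$: the true weight $w_G(S)$, the rounded weight $w_{\hat G_i}(S)/x_i$, the sampled weight $w_{\tilde G_i}(S)=\sum_{e\in S}X(e)$, and the observed (capped) weight $c=w_{G_i}(S)=\sum_{e\in S}Y(e)$. The first step is to observe that since $c\le \lmax$, every cut edge has $Y(e)\le \lmax$, hence $Y(e)=X(e)$ on the cut, and therefore $c=w_{\tilde G_i}(S)$. So the capping can be ignored exactly as in Lemma~\ref{lem:stillmincut}.

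Next, I apply Theorem~\ref{thm:kargersparsify} to the integer-weighted graph $\hat G_i\cdot x_i$ sampled at probability $p_i$: with high probability, for every cut $S$,
\[
(1-\epsilon)\,p_i\,w_{\hat G_i}(S)\;\le\;w_{\tilde G_i}(S)\;\le\;(1+\epsilon)\,p_i\,w_{\hat G_i}(S).
\]
Rearranging and dividing by $x_i$ yields $w_{\hat G_i}(S)/x_i\in\bigl[c/((1+\epsilon)p_ix_i),\,c/((1-\epsilon)p_ix_i)\bigr]$. The rounding from Definition~\ref{def:Gi} gives $0\le w_G(S)-w_{\hat G_i}(S)/x_i< n^2/x_i\le \epsilon\lambda_i$, since at most $n^2$ edges cross any cut. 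Combining these gives
\[
(1-\epsilon)\,\tfrac{c}{p_ix_i}\;\le\;\tfrac{c}{(1+\epsilon)p_ix_i}\;\le\;w_G(S)\;\le\;\tfrac{c}{(1-\epsilon)p_ix_i}+\epsilon\lambda_i.
\]
To convert the additive $\epsilon\lambda_i$ into a multiplicative error, I will use $c\ge \lmin$ together with the fact that $\lmin$ is chosen (as for the integer-weight case) so that $\lmin/(p_ix_i)\ge\lambda_i$; hence $\epsilon\lambda_i\le \epsilon c/(p_ix_i)$. Then $w_G(S)\le (\tfrac{1}{1-\epsilon}+\epsilon)\tfrac{c}{p_ix_i}\le (1+3\epsilon)\tfrac{c}{p_ix_i}\le (1+\epsilon)^3\tfrac{c}{p_ix_i}$, establishing the first statement.

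For the second statement, suppose $S$ is also a minimum cut in $G_i$. As in Lemma~\ref{lem:stillmincut}, any other cut $S'$ either has all of its edges satisfy $X(e)\le\lmax$ (so $w_{\tilde G_i}(S')=w_{G_i}(S')\ge w_{G_i}(S)=w_{\tilde G_i}(S)$), or contains some edge with $X(e)>\lmax$ (so $w_{\tilde G_i}(S')>\lmax\ge w_{\tilde G_i}(S)$). Thus $S$ is also a minimum cut in $\tilde G_i$. By the second part of Theorem~\ref{thm:kargersparsify} applied to $\hat G_i\cdot x_i$, $S$ is a $(1+\epsilon)$-approximate minimum cut of $\hat G_i$, i.e.\ $w_{\hat G_i}(S)\le(1+\epsilon)w_{\hat G_i}(T^*)$ where $T^*$ is a true minimum cut of $G$. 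Using $w_{\hat G_i}(T^*)/x_i\le w_G(T^*)$ and $w_G(S)\le w_{\hat G_i}(S)/x_i+\epsilon\lambda_i$, we get $w_G(S)\le (1+\epsilon)w_G(T^*)+\epsilon\lambda_i\le (1+\epsilon)w_G(T^*)+\epsilon w_G(T^*)\le (1+\epsilon)^2 w_G(T^*)$, again using $w_G(T^*)\ge\lambda_i$ (otherwise the estimate $\lambda_i$ is not in the correct range and this case is handled by a different $\lambda_j$).

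The main obstacle I anticipate is checking the precise range of $\lmin$ and $\lambda_i$ needed so that $\epsilon\lambda_i\le\epsilon\cdot c/(p_ix_i)$ and $\epsilon\lambda_i\le\epsilon\cdot w_G(T^*)$ both hold in the regime we are interested in; these reduce to confirming that $\lmin\ge 54\ln n/\epsilon^2$ (which matches the integer-weight setup) and that the algorithm only outputs a cut when the correct $\lambda_i$ lower-bounds the true minimum cut, so that the additive rounding error is absorbed into a multiplicative $(1+\epsilon)$ slack. Everything else is a routine unfolding of Karger's sampling guarantee composed with the rounding bound.
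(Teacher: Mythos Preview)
Your proof is correct and follows essentially the same route as the paper: uncap (since $c\le\lmax$ forces $Y(e)=X(e)$ on all cut edges), apply Theorem~\ref{thm:kargersparsify} between $\tilde G_i$ and $\hat G_i$, then absorb the rounding error between $\hat G_i$ and $G$ using $c\ge\lmin$. One minor correction: with the paper's choice $\lmin=(1-\epsilon)^2\cdot 54\ln n/\epsilon^2$ and $p_ix_i=54\ln n/(\epsilon^2\lambda_i)$ you actually get $\lmin/(p_ix_i)=(1-\epsilon)^2\lambda_i$ rather than $\ge\lambda_i$, but this only introduces an extra $(1-\epsilon)^{-2}$ factor that is still swallowed by the $(1+\epsilon)^3$ bound (the paper's own constant-tracking is equally loose here).
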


\begin{proof}
    
    If $c\le \lmax$, then for every cut-edge $e$, we have that $Y(e)\le \lmax$, and thus $X(e) \le \lmax$.
    Therefore, the value of the cut is the same in $G_i$ and $\tilde G_i$.
    By \Cref{thm:kargersparsify}, we thus have that if $c'$ is the value of the cut $S$ in $\hat G_i$, then with high probability, $(1-\epsilon) {p_i}c'\le c\le (1+\epsilon) {p_i}c'$.
    This translates to $c'\le \frac c {{p_i}(1-\epsilon)}\le \frac c {p_i} (1+\epsilon)$ and $c'\ge \frac c {{p_i}(1+\epsilon)} \ge \frac c {p_i} (1-\epsilon) $.

    Let us now compare $c'$ with $c''$, the value of the cut $S$ in $G$. 
    There are at most $n^2$ cut edges in $S$, and thus:
    $$
    \frac {c'} {x_i} = \sum_{e \in E(S, V\setminus S)}\frac {\hat w_i(e)} {x_i} \le \sum_{e \in E(S, V\setminus S)}w(e)  = c'' \le  \sum_{e \in E(S, V\setminus S)}\frac {\hat w_i(e)+1} {x_i} \le \frac {c' + n^2} {x_i}\le \frac {c'}{{x_i}}+\epsilon \lambda_i
    $$
    However, $c' \ge \frac c {p_i} (1-\epsilon) \ge \frac {\lmin \epsilon ^2 \lambda_i {x_i} }{ 54 \ln n} \ge \frac {(1-\epsilon)54\ln n\epsilon ^2\lambda_i{x_i}}{\epsilon^254\ln n}\ge (1-\epsilon)\lambda_i {x_i}$ and thus $\frac {c'} {x_i} \ge (1-\epsilon )\lambda_i$.
    Hence $\frac {c'}{{x_i}} \le c'' \le \frac {c'} {x_i} (1+\frac \epsilon{1-\epsilon})$.
    We therefore have:
    $$
    \begin{cases}
        c''\ge \frac {c'} {x_i} \ge \frac c {{p_i}{x_i}} (1-\epsilon)\\
        c''\le \frac {c'} {x_i} (1+\frac \epsilon{1-\epsilon}) \le \frac c {{p_i}{x_i} (1-\epsilon)} (1+ \frac \epsilon {1-\epsilon})\le \frac c {{p_i}{x_i}} (1+\epsilon)^3
    \end{cases}
    $$

    In the case where $S$ is also the minimum cut of $G_i$, note that any other cut $S'$ falls in one of two cases:
    Either all its cut edges $e$ satisfy $Y(e) \le \lmax$ and thus $\boundary_{\tilde G_i}S' = \boundary_{G_i} S' \ge \boundary_{G_i} S = \boundary_{\tilde G_i} S$.
    Or there exists a cut edge $e$ such that $Y(e) = \lmax+1$ and thus $ \boundary_{\tilde G_i} S' \ge X(e) > \lmax \ge \boundary_{G_i} S = \boundary_{\tilde G_i} S$.
    Therefore $S$ is also a minimum cut in $\tilde G_i$, and by \Cref{thm:kargersparsify}, it corresponds to the minimum cut in $\hat G_i$.
    Let $a'$ be the value of the minimum cut $C$ in $\hat G_i$, $a''$ the value of $C$ in $G$, $c'$ the value of $S$ in $\hat G_i$, and $c''$ the value of $S$ in $G$.
    We then have that $c' \le a'$ with high probability as $S$ is the minimum cut of $\hat G_i$, $c'' \le \frac{c'}{{x_i}}(1+\frac{\epsilon}{1-\epsilon}) \le \frac {c'} {{x_i}}(1+\epsilon)^2\le \frac {a'}{{x_i}}(1+\epsilon)^3$.
    We also have that $a'' \ge \frac {a'}{{x_i}}$ as we round down the weights to get $\hat G_i$, which concludes the proof.
\end{proof}

\begin{observation}
    Note that \Cref{lem:shorttime} still holds in the weighted case, as we simply add a rounding step before applying the same algorithm, and where the value of $w(e)$ for each edge $e$ does not intervene in the proof.
\end{observation}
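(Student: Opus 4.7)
The plan is to port the argument from the integer-weight proof of Lemma~\ref{lem:shorttime} with only minor modifications, showing that the rounding preprocessing adds only $O(m)$ work and that the sampling subroutine is agnostic to the numerical magnitudes of the weights.

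First, I would describe the preprocessing: discard every edge of weight at most $\epsilon \lambda_i / n^2$ (which, as argued just before Definition~\ref{def:Gi}, changes any cut by at most a $(1+\epsilon)$ factor), and for every surviving edge $e$ compute the integer $\hat w_i(e) = \lfloor w(e) \cdot x_i \rfloor$ from Definition~\ref{def:Gi}. Each operation is $O(1)$, so producing the integer-weighted graph $\hat G_i$ takes $O(m)$ time total, which is dominated by the $O(m \lmax^2)$ target.

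Next I would re-invoke the argument of Lemma~\ref{lem:shorttime} applied to $\hat G_i$ with the rescaled probability $p_i = 54\ln n / (\epsilon^2 \lambda_i x_i)$. The crucial point is that for each edge $e$ the algorithm computes, for every $k \le \lmax$, the probability $\Pr[Y(e)=k] = \binom{\hat w_i(e)}{k} p_i^k (1-p_i)^{\hat w_i(e)-k}$ by expanding the binomial coefficient as $\prod_{j=0}^{k-1}(\hat w_i(e)-j)/(j+1)$ in $O(k) = O(\lmax)$ arithmetic operations, never iterating over $\hat w_i(e)$ itself. Computing the full distribution thus costs $O(\lmax^2)$ per edge, the catch-all probability $\Pr[Y(e) = \lmax+1]$ is $1$ minus the sum of the others in $O(\lmax)$ additional time, and inverse-CDF sampling of $Y(e)$ takes $O(\lmax)$. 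Multiplying by $m$ edges yields $O(m \lmax^2)$.

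The only subtlety to verify is that switching $p_i$ from $\frac{54 \ln n}{\epsilon^2 \lambda_i}$ to the scaled $\frac{54 \ln n}{\epsilon^2 \lambda_i x_i}$, and replacing $w(e)$ by $\hat w_i(e)$, does not break the per-edge accounting. Since the bound depended only on the count of arithmetic operations (with $O(1)$ cost per operation) and not on the specific magnitude of the weights or of $p_i$, the same $O(m \lmax^2)$ bound carries over, completing the proof proposal.
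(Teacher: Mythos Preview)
Your proposal is correct and follows exactly the approach the paper sketches in the observation itself: the paper gives no separate proof, merely noting that the rounding step is an $O(m)$ preprocessing and that the per-edge sampling cost in Lemma~\ref{lem:shorttime} depends only on $\lmax$, not on the magnitude of the (now integer) weight. Your elaboration spells out precisely these two points, so it matches the paper's reasoning.
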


\begin{observation}
    Note that after sampling, our graph is unweighted, and thus, in all other sections, the graph can be assumed to be unweighted.
\end{observation}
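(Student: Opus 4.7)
The plan is a direct verification from the definitions, in two parts. First, I would unpack \Cref{def:Gi}: the sampled graph $G_i$ is constructed by replacing each edge $e \in E(G)$ with $Y(e)$ parallel copies, where $Y(e)$ is a non-negative integer random variable taking values in $\{0, 1, \dots, \lmax+1\}$. Since the Remark in \Cref{sec:prelims} explicitly allows parallel edges and counts them with multiplicity in every volume and edge-count used throughout the paper, $G_i$ is a well-defined unweighted multigraph. In particular, it fits the input format required by every algorithm and data structure built in \Cref{sec:localkcut} through \Cref{sec:datastructure}, and by \Cref{thm:boundeddyn}.

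Second, for the assertion that \emph{all other sections} may assume unweighted input, I would trace the top-level algorithm sketched right after \Cref{thm:weighted}: on a weighted input $G_w$, it maintains $O(\log_{1.1}(n^2 U/L))$ sampled graphs $G_i$, one per geometric estimate $\lambda_i = L \cdot 1.1^i$, and runs the unweighted algorithm of \Cref{thm:boundeddyn} on each. The output is then the best cut-value found among these calls (and the correctness of selecting the ``correct'' $i$ is what \Cref{lem:goodrange} provides). Thus every invocation of the machinery from the earlier sections sees only an unweighted $G_i$, and no subsequent section needs to touch a real-valued weight. The claim that ``the graph can be assumed to be unweighted in all other sections'' is exactly this encapsulation.

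I do not anticipate any genuine mathematical obstacle. The only bookkeeping subtlety is to confirm that the running-time and recourse analyses of \Cref{sec:localkcut}--\Cref{sec:datastructure} respect the multi-edge representation of $G_i$ (i.e.~$m$ counts parallel copies); this is precisely the convention fixed by the Remark in \Cref{sec:prelims}, so nothing further is required. Consequently the observation reduces to an immediate consequence of \Cref{def:Gi} together with the reduction strategy already articulated for \Cref{thm:weighted}.
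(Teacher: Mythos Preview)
Your proposal is correct and matches the paper's approach: the paper states this observation without proof, treating it as immediate from \Cref{def:Gi} (each edge is replaced by an integer number $Y(e)$ of parallel copies, yielding an unweighted multigraph compatible with the parallel-edge convention of \Cref{sec:prelims}). Your write-up simply makes explicit the justification the paper leaves implicit.
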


We now discuss how to find a dynamic algorithm that finds an approximate minimum cut in weighted graphs, where $U$ and $L$ are respectively the maximum and minimum weight allowed for the algorithm:

\begin{algo}[Global Weighted Dynamic Algorithm]\label{alg:weighted}
    Define $O(\log_{1.1}(n^2\cdot \frac U L))$ many estimates of $\lambda$: $\lambda_0, \dots, \lambda_{\log_{1.1}(n^2\cdot \frac U L)}$ where $\lambda_i = L\cdot 1.1^i$ for every $i \in [\log_{1.1}(n^2\cdot \frac U L)]$.
    Define for every $i$, $G_i$, $p_i$ and $x_i$ as described in \Cref{def:Gi}.
    Run \Cref{alg:dynamicestimate} on every $G_i$ with $\lmin = (1-\epsilon)^2\frac{54\ln n} {\epsilon^2}, \lmax = 1.1(1+\epsilon) \frac{54\ln n} {\epsilon^2}$.
    
    For initialization, initialize the instances in increasing order of index $i$, and stop whenever an algorithm reports a value in their range.
    
    Feed every update to every instance in increasing order of index $i$, initializing new instances whenever needed. Stop whenever an algorithm reports a value in their range.
    
    Let $j$ be the smallest $i$ such that the return value of \Cref{alg:dynamicestimate} on $G_i$ is in $[\lmin, \lmax]$. 
    Return this value multiplied by $\frac 1 {p_jx_j}$.
\end{algo}

\begin{lemma}
   For every $i$ such that $\lambda_i \le \lambda$, if the return value of \Cref{alg:dynamicestimate} on $G_i$ is smaller than $\lmax$, then this value multiplied by $\frac 1 {p_i x_i}$ is at most a $(1+\epsilon)^4$ multiplicative factor away from the minimum cut value in $G$.
\end{lemma}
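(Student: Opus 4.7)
The plan is to combine the correctness guarantee of \Cref{thm:boundeddyn} with the sampling analysis in \Cref{lem:goodrange}. First, I would verify that \Cref{alg:dynamicestimate} is being run on $G_i$ with its precondition satisfied, namely that the minimum proper cut of $G_i$ is at least $\lmin$ with high probability. Because $\lambda_i \le \lambda$, the sampling probability $p_i = \frac{54\ln n}{\epsilon^2 \lambda_i x_i}$ satisfies the hypothesis of \Cref{thm:kargersparsify}, so every cut of $\hat G_i$ has value in $\tilde G_i$ within a $(1\pm\epsilon)$ factor of its expectation; in particular the minimum cut of $\tilde G_i$ is at least $(1-\epsilon)p_i\cdot(\lambda x_i)\ge (1-\epsilon)^2\frac{54\ln n}{\epsilon^2}=\lmin$, where the extra $(1-\epsilon)$ factor absorbs the weight rounding from $G$ to $\hat G_i$. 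Since $G_i$ and $\tilde G_i$ agree on all cuts of value at most $\lmax$, and every cut has value at most the corresponding cut in $\tilde G_i$, the lower bound transfers to $G_i$.

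Next, I would invoke \Cref{thm:boundeddyn}: the algorithm returns the exact value of the minimum proper cut whenever this value lies in $[\lmin, \lmax]$, and otherwise reports ``$\lambda > \lmax$''. Since by hypothesis the returned value $c$ satisfies $c<\lmax$ (so in particular it is not the ``$\lambda>\lmax$'' signal), we conclude that $c$ equals the minimum proper cut of $G_i$ and $\lmin\le c\le\lmax$. Let $S$ be a witness attaining this minimum.

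With $c\in[\lmin,\lmax]$ in hand, I would apply \Cref{lem:goodrange} to $S$. Writing $c''$ for the value of $S$ in $G$, the first part of the lemma gives $(1-\epsilon)\tfrac{c}{p_i x_i}\le c''\le (1+\epsilon)^3\tfrac{c}{p_i x_i}$, while the argument behind its second part (tracking $a' \le x_i \lambda$ because rounding down weights only decreases cut values, and then $c'' \le (1+\epsilon)^3 a'/x_i$) yields $\lambda \le c''\le (1+\epsilon)^3\lambda$. Chaining the two intervals,
\[
\frac{\lambda}{(1+\epsilon)^3}\ \le\ \frac{c''}{(1+\epsilon)^3}\ \le\ \frac{c}{p_i x_i}\ \le\ \frac{c''}{1-\epsilon}\ \le\ \frac{(1+\epsilon)^3\lambda}{1-\epsilon}\ \le\ (1+\epsilon)^4\lambda,
\]
where the last step uses $\tfrac{1+\epsilon}{1-\epsilon}\le (1+\epsilon)^2$ for the small $\epsilon$ regime of \Cref{thm:weighted}. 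This shows $\tfrac{c}{p_i x_i}$ is within a $(1+\epsilon)^4$ multiplicative factor of $\lambda$, as required.

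The only real obstacle is careful bookkeeping of the $(1\pm\epsilon)$ factors contributed by (i) Karger's edge sampling, (ii) the rounding from $G$ to $\hat G_i$, and (iii) the truncation from $\tilde G_i$ to $G_i$; nothing technically deep is needed once the constants in \Cref{def:Gi} are matched so that $(1-\epsilon)^2\tfrac{54\ln n}{\epsilon^2}=\lmin$ and $1.1(1+\epsilon)\tfrac{54\ln n}{\epsilon^2}=\lmax$, which is exactly how \Cref{alg:weighted} selects them.
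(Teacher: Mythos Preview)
Your plan is essentially the paper's own proof: establish that the minimum proper cut of $G_i$ is at least $\lmin$ (so the precondition of \Cref{thm:boundeddyn} holds), deduce that the returned value $c$ is the exact minimum cut of $G_i$ lying in $[\lmin,\lmax]$, and then appeal to \Cref{lem:goodrange} to translate $c/(p_ix_i)$ back to an approximation of $\lambda$. The paper does exactly this, only more tersely (it does not explicitly chain the two intervals but just says ``by \Cref{lem:goodrange}'').

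There is, however, one genuine arithmetic slip in your final chain. You claim $\tfrac{1+\epsilon}{1-\epsilon}\le(1+\epsilon)^2$, which is equivalent to $1\le(1+\epsilon)(1-\epsilon)=1-\epsilon^2$ and is false for every $\epsilon>0$; consequently $\tfrac{(1+\epsilon)^3}{1-\epsilon}\le(1+\epsilon)^4$ also fails. The fix is simple: use the $(1+\epsilon)^2$ bound from the \emph{statement} of \Cref{lem:goodrange} (i.e., $c''\le(1+\epsilon)^2\lambda$) rather than the $(1+\epsilon)^3$ you extract from its proof. Then $c/(p_ix_i)\le c''/(1-\epsilon)\le(1+\epsilon)^2\lambda/(1-\epsilon)$, and $\tfrac{1}{1-\epsilon}\le(1+\epsilon)^2$ does hold for $\epsilon\le(\sqrt5-1)/2$, giving the claimed $(1+\epsilon)^4$ factor. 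Everything else in your outline is correct.
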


\begin{proof}
The argument is as follows: we will show that if $\lambda_i \le \lambda$ and the return value of \Cref{alg:dynamicestimate} is smaller than $\lmax$, then in $G_i$ the minimum cut is of value between $\lmin$ and $\lmax$.
This in turn ensures that the cut found by \Cref{alg:dynamicestimate} is a $(1+\epsilon)$-approximation of the minimum cut of $G_i$, and the fact that $\lambda_i \le \lambda$ ensures that $p_i$ is large enough -- that is, we sampled enough edges -- so that we can ensure with high probability that the found cut in an approximate minimum cut of $\hat G_i$, which in turn is an approximate minimum cut of $G$.
Let $\mu$ be the value of the minimum cut in $\hat G_i$.
We have that $\mu \le \lambda x_i \le \mu + n^2$ and thus $\mu \ge \lambda x_i - n^2 \ge \lambda x_i - \epsilon \lambda_i x_i\ge (1-\epsilon)\lambda_i x_i$.
    We have that $p_i = \frac {54 \ln n} {\epsilon^2 \lambda_ix_i} \ge \frac {54 \ln n} {\epsilon^2 \lambda x_i}$, and thus by \Cref{thm:kargersparsify}, with high probability the minimum cut in $G_i$ is at most a $(1+\epsilon)$ or $(1-\epsilon)$ factor away from $p_i \mu$, and thus is at least $(1-\epsilon) p_i \mu  \ge (1-\epsilon)^2\lambda x_i \frac { 54 \ln n} {\epsilon^2 \lambda_i x_i}\ge \lmin$. 

    If moreover \Cref{alg:dynamicestimate} finds a cut of value at most $\lmax$, this ensures that the minimum cut of $G_i$ is of cut-size at most $\lmax$, and thus ensures that \Cref{alg:dynamicestimate} found a $(1+\epsilon)$-aproximate minimum cut of $G_i$. 
    Multiplying the cut value by $\frac 1 {p_ix_i}$ gives the desired result, by \Cref{lem:goodrange}.
\end{proof}

\section{Acknowledgements}

Funded by the European union. Views and opinions expressed are however those of the author(s) only and do not necessarily reflect those of the European Union or the European Research Council Executive Agency. Neither the European Union nor the granting authority can be held responsible for them.

This project has received funding from the European Research Council (ERC) under the European Union's Horizon 2020 research and innovation programme (MoDynStruct, No. 101019564)  \includegraphics[width=1.6cm]{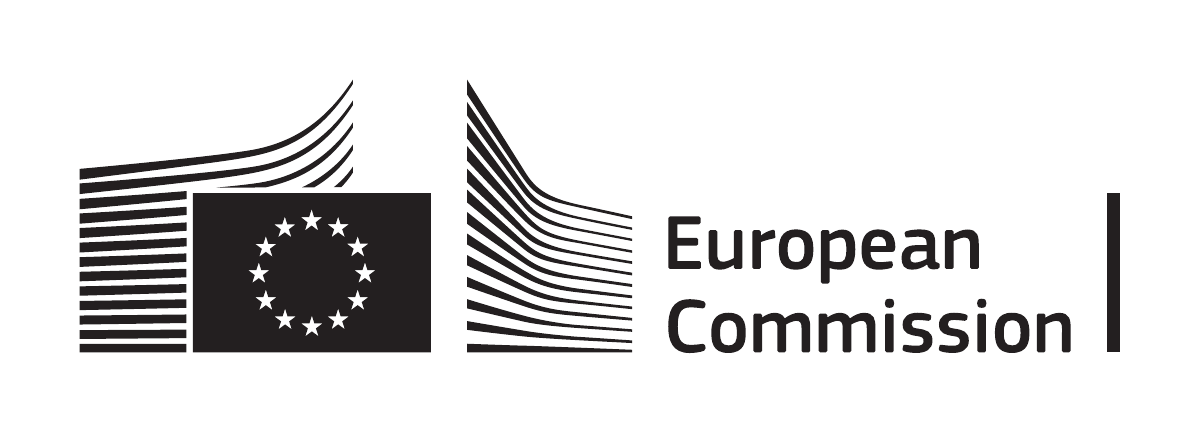} and the Austrian Science Fund (FWF) grant  \href{https://www.doi.org/10.55776/I5982}{DOI 10.55776/I5982}. 
For open access purposes, the author has applied a CC BY public copyright license to any author-accepted manuscript version arising from this submission.

\bibliographystyle{siamplain}
\bibliography{99-bibliography}

\end{document}